\documentclass[10pt,a4paper]{scrreprt}
\usepackage[utf8]{inputenc} 

\usepackage[dvipsnames]{xcolor}

\usepackage{scrhack} 

\usepackage{helvet} 
\usepackage{palatino} 

\setkomafont{disposition}{\normalfont\sffamily\normalcolor\bfseries} 

\usepackage[onehalfspacing]{setspace} 

\usepackage{amsmath,amsfonts,amsthm}
\theoremstyle{plain}
\newtheorem{theorem}{Theorem}
\usepackage{booktabs}
\usepackage{bm}

\usepackage{multirow}
\usepackage{geometry}
\usepackage{tabularx}
\usepackage{makecell}

\usepackage{fancyvrb}
\usepackage{graphicx}
\usepackage{url}
\usepackage{hyperref}
\hypersetup{
    colorlinks,
    linkcolor={red!50!black},
    citecolor={blue!50!black},
    urlcolor={blue!80!black}
}

\usepackage{listings}
\usepackage{longtable}
\usepackage{pdflscape}
\usepackage{color}
\usepackage{colortbl}

\usepackage{placeins}

\usepackage{abstract}

\definecolor{gray}{rgb}{0.4,0.4,0.4}
\definecolor{darkblue}{rgb}{0.0,0.0,0.6}
\definecolor{cyan}{rgb}{0.0,0.6,0.6}

\definecolor{lighttuerkis}{HTML}{F7FCFC}
\definecolor{tuerkis}{HTML}{D9EEEE}
\definecolor{RoyalBlue}{RGB}{22,111,191} 
\definecolor{RoyalBluedark}{RGB}{10,80,191}
\definecolor{lightgray}{RGB}{240,240,240}
\definecolor{darkgray}{RGB}{225,225,225}

\definecolor{lightantituerkis}{HTML}{FFFCFA}
\definecolor{antituerkis}{HTML}{FFE5CF}
\definecolor{darkantituerkis}{HTML}{FF9A0C} 

\usepackage{tikz}
\usepackage{subcaption}
\usepackage{pgfplots}
\pgfplotsset{compat=newest} 
\usetikzlibrary{
    automata, 
    arrows.meta, 
    calc, 
    shapes.geometric, 
    graphs, 
    positioning, 
    calligraphy, 
    decorations.pathreplacing, 
    angles, 
    external,
    quotes
}

\usepackage[
  table-parse-only,
  detect-all,
  list-separator = \text{;\ },
  list-units = single,
  range-units = single
]{siunitx}

\lstdefinelanguage{XML} 
{
  numbers=none,
  captionpos=b, 
  basicstyle=\footnotesize\singlespacing,
  showstringspaces=false,
  morestring=[b]",
  morestring=[s]{>}{<},
  morecomment=[s]{<?}{?>},
  stringstyle=\color{black},
  identifierstyle=\color{darkblue},
  keywordstyle=\color{cyan},
  morekeywords={timestep, index, group, part, byte_order, file ,version, type}
}

\lstdefinelanguage{bash} 
{
  numbers=none,
  captionpos=b, 
  basicstyle=\small\ttfamily,
  showstringspaces=false,
  keywordstyle=\color{RoyalBlue}\bfseries, 
  morekeywords={convert,ffmpeg}
}

\lstdefinelanguage{myc++} 
{
  language=C++,
  captionpos=b, 
  keywordstyle=\color{RoyalBlue}\bfseries\em, 
  keywordstyle=[2]\color{RoyalBlue}\bfseries\em,
  basicstyle=\footnotesize\ttfamily,
  commentstyle=\color{gray}\ttfamily, 
  stringstyle=\rmfamily, 
  numbers=left, 
  numberstyle=\scriptsize, 
  stepnumber=1, 
  numbersep=8pt, 
  showstringspaces=false, 
  breaklines=true, 
  lineskip=1pt, 
  frame=tB, 
  belowcaptionskip=.75\baselineskip, 
}

\lstdefinestyle{floater}{
    float=tb, 
    frame=tB, 
    language=myc++
}

\lstdefinestyle{intext}{
    frame=leftline, 
    numbers=none, 
    xleftmargin=10pt, 
    language=myc++
}

\usepackage[T1]{fontenc}

\usepackage[
  backend       = biber,
  citetracker   = true,
  defernumbers  = true, 
  sorting      = nyt,
  natbib        = true,
  style         = numeric-comp,    
  giveninits    = true,
  maxnames      = 2,
  minnames      = 1,
  maxbibnames   = 10,
  url           = false,
  isbn          = false,
  doi           = true,
  eprint        = false
  ]
  {biblatex}

\addbibresource[datatype=bibtex]{olb_ug_mathias.bib}
\addbibresource[datatype=bibtex]{olb_ug_external.bib}

\makeatletter 
\let\blx@rerun@biber\relax
\makeatother

\apptocmd{\UrlBreaks}{\do\f\do\m}{}{}
\setcounter{biburllcpenalty}{7000} 
\setcounter{biburlucpenalty}{7000} 
\setcounter{biburlnumpenalty}{7000} 

\newcommand{\bigO}[1]{\mathbb{O}(#1)}
\newcommand{\field}[1]{\mathbb{#1}}

\newcommand{\RR}{\field{R}}
\newcommand{\Rplus}{\RR^+}
\newcommand{\EOC}{\text{EOC}}
\newcommand{\Err}{\text{Err}}
\newcommand{\class}[1]{\sloppy\lstinline[language=myc++]{#1}}
\renewcommand{\boldsymbol}{\bm}

\let\OldTexttt\texttt 
\renewcommand{\texttt}[1]{\sloppy{\footnotesize\OldTexttt{#1}}}

\newcommand{\eg}{e.g.\ }
\newcommand{\ie}{i.e.\ }

\usepackage{tocloft} 

\setcounter{secnumdepth}{3} 
\setcounter{tocdepth}{2} 

\begin{document}

\setlength\cftchapnumwidth{2em}
\setlength\cftsecnumwidth{3em}
\setlength\cftsubsecnumwidth{4em}

\begin{titlepage}
	\begin{tikzpicture}[remember picture,overlay]
	\node[text width=1.0\textwidth] at ($ (current page.north) + (0mm,-40mm) $){
		\begin{center}
			{\Huge\sffamily\textbf{OpenLB User Guide}}\\[0.5cm]
			{\Large\sffamily Associated with Release 1.6 of the Code}
		\end{center}
	};
		\node[text width=1.0\textwidth] at ($ (current page.center) + (0mm,15mm) $){
		\begin{center}
		\includegraphics[width=\textwidth,angle=-90]{./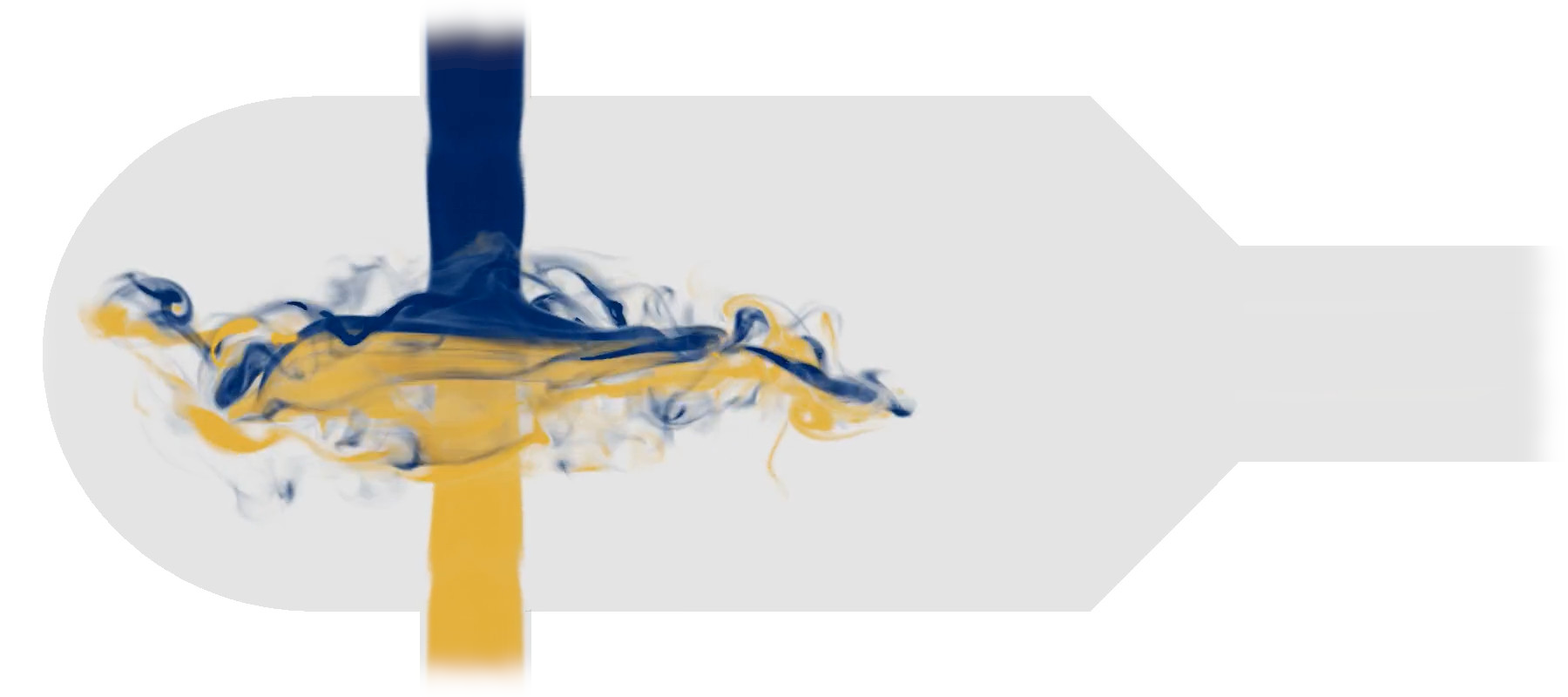}
		\end{center}
	};
	\node[text width=1.0\textwidth] at ($ (current page.south) + (0mm,+70mm) $){
		\begin{center}
		{\Large \textbf{\sffamily{Authors:}}
                Adrian Kummerl\"ander,
                Samuel J. Avis,
                Halim Kusumaatmaja,
                Fedor Bukreev,
                Michael Crocoll,
                Davide Dapelo,
                Simon Großmann,
                Nicolas Hafen,
                Shota Ito,
                Julius Jeßberger,
                Eliane Kummer,
                Jan E. Marquardt,
                Johanna Mödl,
                Tim Pertzel,
                František Prinz,
                Florian Raichle,
                Martin Sadric,
                Maximilian Schecher,
                Dennis Teutscher,
                Stephan Simonis,
                Mathias J. Krause
			}
		\end{center}
	};
	\node[text width=1.0\textwidth] at ($ (current page.south) + (0mm,+35mm) $){
    \begin{flushright}
		\includegraphics[width=5cm]{./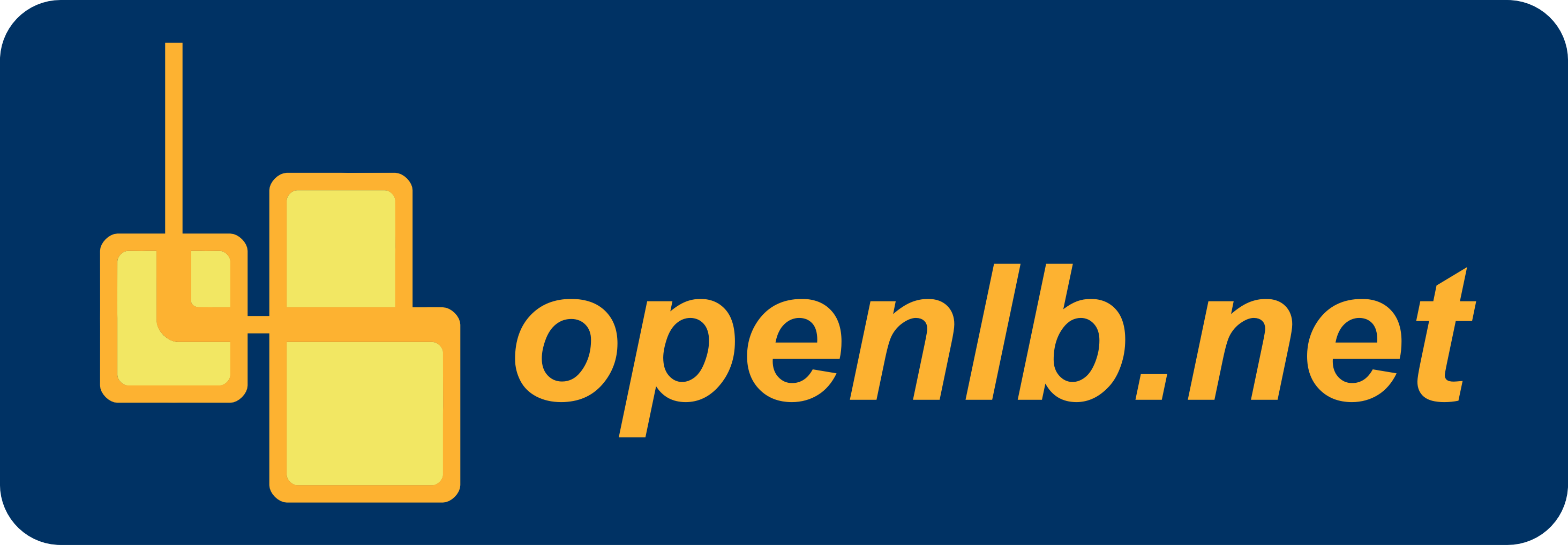}
	\end{flushright}
	};
	\end{tikzpicture}
\end{titlepage}

\centerline{Copyright \copyright{} 2006-2008 Jonas Latt}
\centerline{Copyright \copyright{} 2008-2023 Mathias J.\ Krause}
\centerline{\href{mailto:info@openlb.net}{\nolinkurl{info@openlb.net}}}\vspace{1cm}

Permission is granted to copy, distribute and/or modify this document
under the terms of the GNU Free Documentation License, Version 1.2
or any later version published by the Free Software Foundation;
with no Invariant Sections, no Front-Cover Texts, and no Back-Cover
Texts.  A copy of the license is included in the Section entitled ``GNU
Free Documentation License'' (Section~\ref{sec:license}).

\begin{abstract}

OpenLB is an object-oriented implementation of LBM. 
It is the first implementation of a generic platform for LBM programming, which is shared with the open source community (GPLv2). 
Since the first release in 2007 \cite{olb03}, the code has been continuously improved and extended which is documented by thirteen releases~\cite{olb16,olb15,olb14,olb13,olb12,olb11,olb10,olb09,olb08,olb07,olb06,olb05,olb04} as well as the corresponding release notes which are available on the OpenLB website \url{https://www.openlb.net}. 
The OpenLB code is written in C++ and is used by application programmers as well as developers, with the ability to implement custom models OpenLB supports complex data structures that allow simulations in complex geometries and parallel execution using MPI, OpenMP and CUDA on high-performance computers. 
The source code uses the concepts of interfaces and templates, so that efficient, direct and intuitive implementations of the LBM become possible. 
The efficiency and scalability has been checked and proved by code reviews. 
This user manual and a source code documentation by DoxyGen are available on the OpenLB project website.

\end{abstract}

\tableofcontents

\chapter{Introduction}

\section{Lattice Boltzmann Methods}

In general, the lattice Boltzmann method (LBM) can be interpreted as a bottom-up procedure to numerically approximate a given partial differential equation (PDE).
In contrast to the conventional top-down discretization of the target equation (TEQ), the LBM emerges from the space and time discretization of a discrete velocity Boltzmann-type equation.
Subsequently, the solution to the TEQ is recovered in a limiting process.
Different PDEs can be approximated by altering certain features of the method.
Exemplary LBM building blocks which induce the recovery of the aimed at transport equation are discussed in the following sections.
Thorough derivations of specific LBM approaches are summarized more comprehensively for example in~\cite{krause:21,kruger2017lattice}.

The LB algorithm is typically divided into two steps: collision and streaming.
During the collision step every distribution function $f_i$ at a grid node $\bm{x} \in \Omega_{\triangle x} \subseteq \mathbb{R}^{d}$ receives a collision term $J_{i} $, \ie 
\begin{equation}
f_i^*(t,\bm{x}) = f_i(t,\bm{x}) + \triangle t J_i  (t, \bm{x}) ~,
\end{equation}
where \(\triangle t\) is the time step size and $t \in I_{\triangle t} \subseteq \mathbb{R}_{\geq 0}$.
The most prominent collision operator introduced by Bhatnager, Gross and Krook (BGK)~\cite{bhatnagar1954model} reads
\begin{equation}
J_i (t, \bm{x})  = - \frac{1}{\tau}\left[ f_i (t, \bm{x})-f_i^{eq} (t, \bm{x}) \right] ~.
\end{equation}
The relaxation time $\tau > 0 $ determines the speed at which the populations approach equilibrium and can for example be related to the viscosity in the hydrodynamic limit.
The local equilibrium function $f_i^{eq}$ approximates the Maxwell\textendash Boltzmann distribution.
At the streaming step all populations are shifted to the next grid point
\begin{equation}
f_i \left( t + \triangle t, \bm{x} + \triangle t \bm{c}_i \right) = f_i^*(t, \bm{x}) ~,
\end{equation}
where \(i=0,1, \ldots, q-1\) enumerates the discrete velocities \(\bm{c}_i\) of dimension \(d\) contained in the underlying set \(DdQq\). 
For the purpose of illustration, the \(D2Q9\) and the \(D3Q19\) discrete velocity sets are shown in Figures~\ref{fig:D2Q9} and~\ref{fig:D3Q19}, respectively. 
\begin{figure} 
  \centering
  \subfloat[\(D2Q9\)]{
      \includegraphics[scale=1]{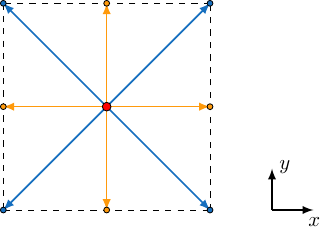}
      \label{fig:D2Q9}
  } \hspace{1em}
  \subfloat[\(D3Q19\)]{ 
      \includegraphics[scale=1]{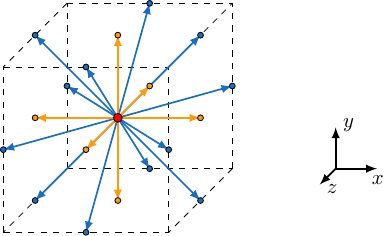}
      \label{fig:D3Q19}
  }
  \caption{Exemplary discrete velocity sets used for the descriptors implemented in OpenLB.}
  \label{fig:discreteVelocitySets}  
\end{figure}
As a linkage from the LBM to the TEQ, the macroscopic conservative variables are obtained by respective moment summation over the populations and typically yield an approximation up to second order in space and time. 

The following references are suggested for further insight into LBM.
\begin{itemize}
\item The book \textit{The Lattice Boltzmann Method: Principles and Practice} [2017] by Kr\"uger \textit{et al.}~\cite{kruger2017lattice} delivers a clear and complete introduction for beginners.
\item A concise introduction is given by Mohamad \cite{mohamad:11}.
The book \textit{Lattice Boltzmann Method} [2011] documents for example the formal derivation of macroscopic limit equations using Chapman--Enskog expansion.
\item An LBM predecessor\textemdash the Lattice-Gas Cellular Automata (LGCA)\textemdash is extensively described in Wolf-Gladrow's book \textit{Lattice-Gas Cellular Automata and Lattice Boltzmann Models} [2000].
Starting with the derivation of LGCA, the author derives the LBM step by step.
Furthermore, a helpful interpretation of LBM is given in the beginning of the book.
\item A quick overview of LBM, can be obtained from the paper \textit{Lattice Boltzmann Method for Fluid Flows} [1998] by Chen and Doolen \cite{chen-doolen:98}.
\end{itemize}

\section{OpenLB Project}

\subsection{Scope and Overview}
\label{sec: openlb overview}

OpenLB is a generic implementation of LBM written using the C++ programming language and is used by application programmers as well as method developers. It is the first implementation of a comprehensive platform for LBM, that is freely shared with the open source community. Since its conception, OpenLB is developed as open source under the GPL2 license. Therefore, everyone with access to the source code has the right to use, adapt and publish the software under the same license. Since the first release in 2007 ~\cite{olb03}, the code has been continuously improved and extended across thirteen major releases. User guides and source code documentation for developers are available on the project website. In summary, OpenLB is a technically complex and feature-rich CFD software library with an extensive set of example cases also written in modern C++. 

\begin{figure}[ht]
    \centering
    \includegraphics[width=\textwidth]{./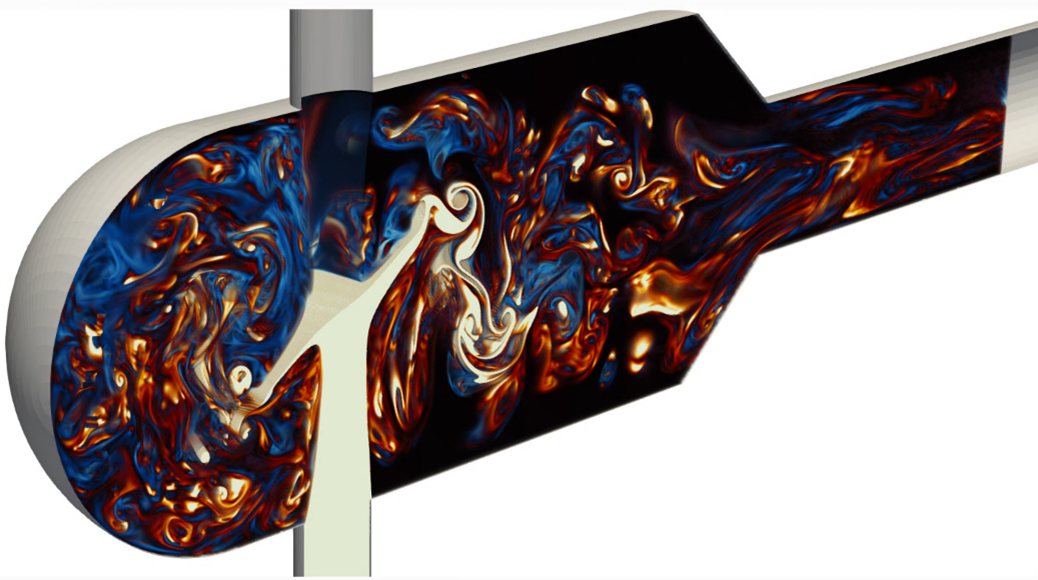}
    \caption{Resolved simulation with 250 million grid cells of turbulent mixing and chemical reactions done by OpenLB on 24 GPUs on the HoreKa supercomputer funded by the Ministry of Science, Research and the Arts Baden-Württemberg and by
the Federal Ministry of Education and Research.}
    \label{fig:turbulent_mixer}
\end{figure}

OpenLB is actively used in both developing novel methodologies and their applications. Developers of OpenLB, especially researchers from mathematics and computer science, focus on the methodology-oriented approach. Here a selection of recent (after 2016) publications is given:

\begin{itemize}
    \item General LBM:~\cite{mink:16, hoecker:18, haussmann:20b, simonis:20, simonis:21, simonis:23}
    \item HPC LBM:~\cite{mohrhard:19,kummerlaender:23,kummerlander2022advances} 
    \item Particle LB -- Methods:~\cite{trunk:16, maier:17, krause:17, trunk:18, dapelo:21, maier:21, trunk:21}
    \item Turbulence LB -- Methods:~\cite{nathen:17, haussmann:19a, haussmann:20, simonis:21, simonis:22}
    \item Optimal Control LB -- Methods:~\cite{klemens:18, klemens:18a, zarth:21, reinke:22}
\end{itemize}

The application-oriented approach is chosen by esp. engineering researchers. At the top level, each simulation in OpenLB currently consists of a fully parallel mesh construction (without any need for external tools ), physical parametrization, assignment of dynamics encoding the cell-specific models , configuration of operators for multi-physics coupling, non-local boundary treatment and a selection of post-processing functions for extracting the relevant flow features for further evaluation~\cite{krause:21}. OpenLB enables application in a broad set of areas (physical characterizations) such as incompressible Newtonian and non-Newtonian fluid flows, multi–phase and multi–component flows (cf.\ Figure~\ref{fig:turbulent_mixer}), light and thermal radiation as well as thermal flows, particulate flows using both Euler–Euler and Euler–Lagrange with resolved and sub-grid scale models, turbulent flow models (LES models and wall function approaches) and porous media models~\cite{krause:21}. A selection of recent (after 2016) application-oriented publications is listed here:
\begin{itemize}
    \item Particle LB -- Applications:~\cite{henn:16, augusto:17, maier:18, bretl:21, marquardt:21, trunk:21a, , simonis:22a, ditscherlein:22, hafen:22, mink:22, thieringer:22, bukreev:23, hafen:23a}
    \item Turbulence LB -- Applications:~\cite{mirzaee:16, nadim:16, gaedtke:18, gaedtke:19, haussmann:19, gaedtke:19a, haussmann:20, haussmann:21, siodlaczek:21}
    \item Optimal Control LB -- Application:~\cite{klemens:18a, rossjones:19, klemens:20, jessberger:22}
\end{itemize}

OpenLB is not only an open source code, but also a community that aims to share resources for software development, produce reproducible scientific results, and provide a good start for students in LBM. Through the years, there have been about over 150 scientific articles published, over 90 with and over 60 without the OpenLB developers’ participation (2022). Until January 2023, more than 50 individuals from Germany, Switzerland, France, United Kingdom, Brazil, Czech Republic and USA have contributed to OpenLB.

\subsection{What Makes OpenLB Special?}
OpenLB is a numerical framework for lattice Boltzmann simulations, created by students and researchers with different backgrounds in computational fluid dynamics. The code can be used by application programmers to implement specific flow geometries in a straightforward way, and by developers to formulate new models. For this first group of users, OpenLB offers a neat interface through which it is possible to set up a simulation with little effort. For the second group, the structure of the code is kept conceptually simple, implementing basic concepts of the lattice Boltzmann theory step-by-step. Thanks to this, the code is an excellent framework for programmers to develop pieces of reusable code that can be exchanged in the community.

One key aspect of the OpenLB code is genericity in its many facets. The core concept of generic programming is to offer a single code that can serve many purposes. On one hand, the code implements dynamic genericity through the use of object-oriented interfaces. One use of this is that the behavior of lattice sites can be modified during program execution, to distinguish for example between bulk and boundary cells, or to modify the fluid viscosity or the value of a body force dynamically. Furthermore, C++ templates are used to achieve static genericity. As a result, it is sufficient to write a single generic code for various 3D lattice structures, such as D3Q15, D3Q19, and D3Q27 (for more information on lattice structures, see Section~\ref{ssec:Descriptor}).

\subsection{Which Features are Currently Implemented?}
An excerpt of the the current features is given below.
Note that an extended list is provided in~\cite{krause:21}. 
For detailed description of individual features, see the respective sections below. 

\subsubsection{Pre-processing}
\begin{table}[ht!]
\begin{tabular}{|l|l|l|}
\hline
Simulation domain construction with volume meshes (VTI) & \\
Simulation domain construction with surface meshes (STL) & Section~\ref{sec:geometry_creation}\\
Simulation domain construction with geometry primitives  & Section~\ref{sec:geometry_creation}\\
Automatic parallel meshing  & Section~\ref{sec:geometry_creation}\\
Automatic load balancing  & \\
Segmentation for boundaries setting & Section~\ref{sec:materialnumbers_setting}\\
Assigning boundary and initial values & Section~\ref{sec:functors},~\ref{sec:functor_aplication} \\
\hline
\end{tabular}
\end{table}
\FloatBarrier

\subsubsection{Post-processing}
\begin{table}[ht!]
\begin{tabular}{|l|l|l|}
\hline
Interfaces to parallel  VTK formats  & Section~\ref{sec:outPut-vtk}   \\
Interface CSV & Section~\ref{sec:csvWriter}  \\
Writing images build-in& Section~\ref{sec:writing_image}  \\
Evaluating with Gnuplot & Section~\ref{sec:gnuplot} \\
Analytical functors for \eg error norms, integrals, ...& Section~\ref{sec:functors},~\ref{sec:functor_aplication} \\
\hline
\end{tabular}
\end{table}
\FloatBarrier

\subsubsection{Lattice Boltzmann Models}
\begin{table}[ht!]
\begin{tabular}{|l|l|l|}
\hline
BGK model for fluids               & Section~\ref{sec:dynamics} & Reference~\cite{chen-doolen:98} \\
Regularized model for fluids       & Section~\ref{sec:dynamics} & Reference~\cite{latt-regularized:06} \\
Multiple Relaxation Times (MRT)    & Section~\ref{sec:dynamics} & References~\cite{mrt-3d:02,yu:03}\\
Entropic Lattice Boltzmann         & Section~\ref{sec:dynamics} & Reference~\cite{ansumali-thesis}\\
BGK with adjustable speed of sound & Section~\ref{sec:dynamics} & References~\cite{tr2,chopard:02}\\
BGK and MRT with Smagorinsky model & Section~\ref{sec:dynamics} & References~\cite{nathen:13}\\
Porous media model                 & Section~\ref{sec:dynamics} &\\
Power law model                    & Section~\ref{sec:dynamics} &\\
\hline
\end{tabular}
\end{table}
\FloatBarrier

\subsubsection{Multiphysics Coupling}
See Section~\ref{sec:couplings}. 
\begin{table}[ht!]
\begin{tabular}{|l|l|l|}
\hline
Shan-Chen two-component fluid & Section~\ref{sec:couplings} & Reference~\cite{shan_chen:93} \\
Free energy model for multicomponent fluids & Section~\ref{sec:couplings} & Reference~\cite{semprebon:16} \\
Thermal fluid with Boussinesq approximation & Section~\ref{sec:couplings} & Reference~\cite{guo:02c} \\
\hline
\end{tabular}
\end{table}
\FloatBarrier

\subsubsection{Lattice Structures}
See Section~\ref{ssec:Descriptor}. 
Exemplary discrete velocity sets (lattices) available in OpenLB are \(D2Q5\), \(D2Q9\), \(D3Q7\), \(D3Q13\), \(D3Q15\), \(D3Q19\), \(D3Q27\). 

\subsubsection{Boundary Conditions for Straight Boundaries (Including Corners)}
See Section~\ref{sec:defineBoundaryMethod}.
\begin{table}[ht!]
\begin{tabular}{|l|l|l|}
\hline
Regularized & local & Default choice in examples\\
Finite difference (FD) velocity gradients & non-local & Default choice in examples \\
Inamuro & local & \\
Zou/He & local & \\
Non-linear FD velocity gradients & non-local & \\
\hline
\end{tabular}
\end{table}
\FloatBarrier

\subsubsection{Boundary Conditions for Curved Boundaries}
See Section~\ref{sec:defineBoundaryMethod}.
\begin{table}[ht!]
\begin{tabular}{|l|l|l|l|}
\hline
Bouzidi & non-local & References~\cite{bouzidi:01}\\
\hline
\end{tabular}
\end{table}
\FloatBarrier

\subsubsection{Input / Output}
The basic mechanism behind I/O operations in OpenLB is the serialization and unserialization of a \class{BlockLatticeXD}. 
This mechanism is used to save the state of a simulation and to produce VTK output for data post-processing with external tools. 
In both cases, the data is saved in the binary Base64 format, which ensures compact and (relatively) platform-independent data storage. 
For operational details, see for example Section~\ref{sec:vti_pvd-format}.

\subsection{Project Participants}

The OpenLB project was initiated in 2006. 
Between 2006 and 2008 Jonas Latt was the project coordinator. 
As of 2009, Mathias J.\ Krause is coordinating the project. 
Since 2006 more than 45 persons contributed to OpenLB. 
A list is provided in Section~\ref{sec: List of Project Participants} of the Appendix.

\subsection{Getting Help with OpenLB}\label{sec:getting-help}
The following resources are available for OpenLB users:
\begin{description}
\item[Web site:] Most recent releases of the code and documentation, including this user guide, can be found on the website \url{https://www.openlb.net/} .
\item[Forum:] If you experience troubles with OpenLB, you may wish to post your concerns to the Lattice Boltzmann community in the forum on the OpenLB homepage.
\item[Bug reports:] If you think you found a bug in OpenLB, we encourage you to submit a report to \href{mailto:bug@openlb.net}{\nolinkurl{bug@openlb.net}}. Useful bug reports include the full source code of the program in question, a description of the problem, an explanation of the circumstances under which the problem occurred, and a short description of the hardware and the compiler used.
Moreover, other makefile switches, such as buildtype and mode of parallelization found in \path{config.mk} can provide useful information too.
\item[Spring School:] To lower the hurdle and the assistance of people new to this field is the motivation in organizing yearly spring schools, namely \textbf{Lattice Boltzmann Methods with OpenLB Software Lab}. The first spring school was organized in \href{https://www.openlb.net/spring-school-2017/}{2017 in Hammamet (Tunisia)}, followed by \href{https://www.openlb.net/spring-school-2018/}{Karlsruhe in 2018}, \href{https://www.openlb.net/spring-school-2019/}{Mannheim in 2019}, \href{https://www.openlb.net/spring-school-2020/}{Berlin in 2020}, \href{https://www.openlb.net/spring-school-2022/}{Krakow (Poland) in 2022} and \href{https://www.openlb.net/spring-school-2023/}{London (UK) in 2023}. All spring schools are organized as open international workshops. The intention of creating an international platform with courses for beginners in LBM, which are researchers from academia and industry, was fulfilled with a response of 40 participants from 15 countries, averaged over the years. The event is based on the interlaced educational concept of comprehensive and methodology-oriented courses offered by the core developer team of OpenLB together with the local partner group as well as professional guest lecturers within the field of LBM. The first half of the week is dedicated to the theoretical fundamentals of LBM up to ongoing research on selected topics. Followed by mentored training on case studies using OpenLB in the second half, where the participants gain deep insights into LBM and its various applications in different disciplines. More information about the next spring school can be found on the OpenLB website. 
\item[Consortium:] If you want more or deeper support or get involved in further development of OpenLB, become a member of the OpenLB consortium. If you are interested, please send an email to \href{mailto:consortium@openlb.net}{\nolinkurl{consortium@openlb.net}}. The consortium promotes and aims at continuous development and maintenance of OpenLB, especially for (1) improvement of the general usage according to GNU GPL2 License and regular publication, (2) maintenance of the executability on current standard HPC-Hardware and (3) conservation of the current status of the LBM research in OpenLB.

\end{description}

\subsection{How to Cite OpenLB}

For references to OpenLB in general, we recommend to cite the most recent overview paper~\cite{krause:21}.
\begin{Verbatim}[fontsize=\small]
@article{openlb-2020,
    author    = "Krause, M. J. and Kummerl{\"a}nder, A. and Avis, S. J.
                 and Kusumaatmaja, H. and Dapelo, D. and Klemens, F.
                 and Gaedtke, M. and Hafen, N. and Mink, A.
                 and Trunk, R. and Marquardt, J. E. and Maier, M.-L.
                 and Haussmann, M. and Simonis, S.",
    title     = "OpenLB—Open source lattice Boltzmann code",
    year      = "2020",
    journal   = "Computers \& Mathematics with Applications",
    doi       = "10.1016/j.camwa.2020.04.033"
}
\end{Verbatim}
For references to a specific release of the code, each version is associated with a citable Zenodo publication. 
The latest release 1.6~\cite{olb16} is covered by the present user guide and citable via:
\begin{Verbatim}[fontsize=\small]
@software{olbRelease16,
  title        = {{OpenLB Release 1.6: Open Source Lattice Boltzmann Code}},
  author       = {Kummerl\"{a}nder, Adrian
                  and Avis, Sam
                  and Kusumaatmaja, Halim
                  and Bukreev, Fedor
                  and Crocoll, Michael
                  and Dapelo, Davide
                  and Hafen, Nicolas
                  and Ito, Shota
                  and Je\ss{}berger, Julius
                  and Marquardt, Jan E.
                  and M\"{o}dl, Johanna
                  and Pertzel, Tim
                  and Prinz, Franti\v{s}ek
                  and Raichle, Florian
                  and Schecher, Maximilian
                  and Simonis, Stephan
                  and Teutscher, Dennis
                  and Krause, Mathias J.},
  year         = 2023,
  month        = apr,
  publisher    = {Zenodo},
  doi          = {10.5281/zenodo.7773497},
  url          = {https://doi.org/10.5281/zenodo.7773497},
  version      = {1.6}
}
\end{Verbatim}
Starting with release 1.6, metadata is also defined in the standardized \emph{Citation File Format} and included as a \path{CITATION.cff} in the release tarball.

\chapter{Core Concepts} 

\label{sec:modelsAndCoreDataStructure}

The basic data structure of any \emph{Lattice Boltzmann} code is the eponymous \emph{Lattice} or grid.
For the most widely accepted formulations, such a grid is a regular, homogeneous lattice $\Omega_h$ with equal spacing $h\in\mathbb{R}_{>0}$ in all directions.
Each spatial location on this lattice can be referred to as a \emph{Cell} which is the core unit of the LB algorithm.

One of the main aspects motivating the use of LBM is its unique suitability for massive parallel processing.
This parallelization aspect naturally leads to the need for further spatial decomposition of the lattice data structure in order to be able to distribute the processing to multiple independent processing units.
In OpenLB, this spatial decomposition into so called \emph{Blocks} motivates the core naming convention of distinguishing between \emph{Super}- and \emph{Block}-level structures.

This approach respects the spirit of LBM well and leads to elegant and efficient implementations.
For complex geometries, a multi-block approach provides another advantage:
A given domain can be represented by a certain number of regular blocks, which delivers a good compromise between highly efficient block-local processing and sparse memory consumption.

For most practical applications, the lattice data structure not only manages the population values assigned to the individual cells and declared by the descriptor (cf.\ Section~\ref{ssec:Descriptor}), but also associated per-cell data such as force fields or reaction coefficients.

\section{Descriptor}\label{ssec:Descriptor}

Descriptors are the structure that is used to define and access LB model specific information such as the number of dimensions and discrete velocities as well as weights and declarations of additional fields.
As such, a descriptor is a central choice in any OpenLB application and passed as a template argument throughout the code base.
\begin{lstlisting}[language=myc++]
using T = FLOATING_POINT_TYPE;
using DESCRIPTOR = descriptors::D2Q9<>;

// number of spatial dimensions
const int d = descriptors::d<DESCRIPTOR>(); // == 2
// number of discrete velocities
const int q = descriptors::q<DESCRIPTOR>(); // == 9

// second discrete velocity vector
const Vector<int,2> c1 = descriptors::c<DESCRIPTOR>(1); // == {-1,1}

// weight of the first discrete velocity
const T w = descriptors::t<T,DESCRIPTOR>(0); // == 4./9.
\end{lstlisting}
OpenLB provides a rich set of such descriptors, most of which are defined in \path{src/dynamics/latticeDescriptors.h}. 
Despite the central role of this concept, the concrete definitions of for example the \class{D2Q9} descriptor, are quite compact. 
To illustrate this point, Listing~\ref{lst:d2q9} provides the full definition of this descriptor including all of its data.
\begin{lstlisting}[language=myc++,caption=Definition of the \class{D2Q9} descriptor,label=lst:d2q9]
template <typename... FIELDS>
struct D2Q9 : public LATTICE_DESCRIPTOR<2,9,POPULATION,FIELDS...> {
  D2Q9() = delete;
};

namespace data {

template <>
constexpr int vicinity<2,9> = 1;

template <>
constexpr int c<2,9>[9][2] = {
  { 0, 0},
  {-1, 1}, {-1, 0}, {-1,-1}, { 0,-1},
  { 1,-1}, { 1, 0}, { 1, 1}, { 0, 1}
};

template <>
constexpr int opposite<2,9>[9] = {
  0, 5, 6, 7, 8, 1, 2, 3, 4
};

template <>
constexpr Fraction t<2,9>[9] = {
  {4, 9}, {1, 36}, {1, 9}, {1, 36}, {1, 9},
  {1, 36}, {1, 9}, {1, 36}, {1, 9}
};

template <>
constexpr Fraction cs2<2,9> = {1, 3};

}
\end{lstlisting}
Many LBM based solutions to practical problems require the ability to store not just the populations of each cell but also additional data such as an external force (see \eg Section~\ref{sec:externalForces}). For this reason every descriptor may explicitly declare such fields.
\begin{lstlisting}[language=myc++]
using DESCRIPTOR = descriptors::D2Q9<descriptors::FORCE>;

// Check whether DESCRIPTOR contains the field FORCE
DESCRIPTOR::provides<descriptors::FORCE>(); // == true
// Get cell-local memory location of the FORCE field
const int offset = DESCRIPTOR::index<descriptors::FORCE>(); // == 9
// Get size of the descriptor's FORCE field
const int size = DESCRIPTOR::size<descriptors::FORCE>(); // == 2
\end{lstlisting}
A list of various predefined fields can be found in the \path{src/dynamics/descriptorField.h} header.
Note that one may add an arbitrary list of such fields to a given descriptor.
Even more so it is fully supported to add fields on a per-app basis.
One only needs to make sure that the type is defined prior to any custom descriptor that depends on it, \ie a user could write inside of a case.
\begin{lstlisting}[language=myc++]
struct MY_CUSTOM_FIELD: public FIELD_BASE<42,0,0> { };
using DESCRIPTOR = D2Q9<FORCE,MY_CUSTOM_FIELD>;
\end{lstlisting}
This custom descriptor can then be used in the same way as any other descriptor.
This might even be preferable for very specific fields that are only used by a small number of apps or specific user-provided features.

\section{(Super,Block)Lattice}\label{sec:superblocklattice}

The \class{SuperLattice} is OpenLB's main class for representing and maintaining a concrete lattice with all associated data, cell local models and non-local operators.
It is constructed from a load balanced geometry, \ie given a \class{SuperGeometry} instance which is in turn obtained from a \class{CuboidGeometry} and its associated \class{LoadBalancer}.

\subsection{Context}

The \class{CuboidGeometry} decomposes a given indicator geometry into individual cuboids.
By design, there is a bijection between the set of cuboids and the set of block lattices resp.\ block geometries.
Given a cuboid geometry, the \class{LoadBalancer} assigns the contained cuboids to the available number of MPI processes.
Based on this structure, each spatial cell location of each lattice location encompassed by the cuboid geometry is assigned a so called \emph{material number} managed by the \class{SuperGeometry} (cf.\ Chapter~\ref{sec:geometry} on meshing).
Finally, a concrete \class{BlockLattice} is instantiated by the \class{SuperLattice} instance of the responsible \class{LoadBalancer}-assigned MPI process.
This class maintains the actual data associated with all its lattice cells.
As such, each process only has access to the data of \emph{its own} blocks extended by a small overlap for inter-block communication.
This communication (\ie synchronization and overlap data) is performed by \class{SuperCommunicator} instances maintained in \emph{stages} of the \class{SuperLattice}.

\subsection{BlockLattice}

The \class{BlockLattice} class is the high-level interface abstracting the platform-specific implementation details of the actual core structures managed by \class{ConcreteBlockLattice} class template specializations.

Specifically, there are specializations for each supported platform: \texttt{CPU\_SISD}, \texttt{CPU\_SIMD}, and \texttt{GPU\_CUDA}. At its core, adding support for a new platform consists of specializing all platform-templated classes.

\subsubsection{Core Data}\label{sec:coredata}

The very core data structure of OpenLB is the per-field (cf.~Section~\ref{ssec:Descriptor}) column or \emph{Structure-of-Arrays} (SoA) storage \class{FieldArrayD}.
This \class{FieldArrayD} maintains \class{(Cyclic)Column} instances for each component of its \texttt{FIELD} template argument in a fashion specific to the concrete platform.
In other words, the same \texttt{FIELD} is stored differently for each block / platform, while sharing the same interface of column-wise access.
This fundamental structure allows for exchanging the entire core data structure implementation without touching the higher level interfaces against which for example cell specific models are implemented (cf.~Section~\ref{sec:dynamics}).

On the next level, all \class{FieldArrayD} instances of a block are maintained within the \class{Data} class.
This class provides an on-demand allocating \texttt{getFieldArray} method for access to the field array specific to a given \texttt{FIELD}.

The details of rendering this most performance-critical structure efficient across all platforms exceeds the scope of this guide, but you can be assured that the developer team invests lots of effort in preserving and increasing OpenLB's performance.

\subsection{Communication}

While the cuboid decomposition maintained by the \class{CuboidGeometry} is disjoint, block geometries and lattices overlap by a configurable overlap layer.
Synchronizing this overlap layer, that is communicating the data owned by a given block to its neighbor's overlap layer, is how any inter-block communication is handled in OpenLB.

The most frequent type of overlap communication is the synchronization of a single layer of populations during the \emph{PostCollide} stage of the core \emph{collideAndStream} iteration.

Communication of given stages can be manually triggered using the following.
\begin{lstlisting}[language=myc++]
sLattice.getCommunicator(PostCollide{}).execute();
\end{lstlisting}
Similarly, communication stages may be configured w.r.t.\ to their extent and included fields.
For example, we can consider the setup of the default \emph{PostCollide} stage.
\begin{lstlisting}[language=myc++]
auto& communicator = sLattice.getCommunicator(PostCollide{});
communicator.template requestField<descriptors::POPULATION>();
communicator.requestOverlap(1);
communicator.exchangeRequests();
\end{lstlisting}

\subsection{Cell}

Users familiar with previous versions of OpenLB may remember the original hierarchy of the \class{SuperLattice} containing \class{BlockLattice} instances containing a D-dimensional array of \class{Cell} instances containing a Q-dimensional array of population values.

This data-holding \class{Cell} was refactored by OpenLB 1.4~\cite{olb14} to only provide a view of the data maintained in field arrays (cf.\ Section~\ref{sec:coredata}).
This provided an essential ingredient for the implementation of both SIMD and GPU support in OpenLB 1.5~\cite{olb15} (cf.\ Talk~\cite{kummerlaender:deRSE:23} for a summary of the refactoring journey).

Today, there are multiple per-platform and use case versions of the cell interface.
Core structures such as Dynamics and non-local operators are templated against the \emph{concept of a cell} instead of a specific implementation thereof. This is essential for being able to instantiate the same abstract model implementations on diverse hardware, \eg both in vectorized CPU collision loops and CUDA GPU kernel functions.

\section{Dynamics}\label{sec:dynamics}

In the context of OpenLB, the term \emph{Dynamics} refers to the implementation of cell-local LB models. \ie equilibrium distribution, collision step and momenta computation.
Each \emph{cell} of a lattice is assigned a \emph{dynamics} describing its local behavior.
This way, it is easy to implement inhomogeneous fluids which use a different type of physics from one cell to another.

The non-local streaming step between cells is modeled independent of the choice of cell-local models and remains invariant.
OpenLB's block-local propagation is implemented using the \emph{Periodic Shift} pattern~\cite{kummerlaender:23} on all platforms.

This concept of per-cell dynamics is tied together in the \class{Dynamics} interface.
Every dynamics implementation describes the various components required to model the local behavior of the assigned cell locations.
Specifically, this includes the set of momenta, the equilibrium distribution as well as the actual collision operator.
Fittingly, most dynamics are declared as a tuple of those three components.
\begin{lstlisting}[language=myc++]
template <typename T, typename DESCRIPTOR>
using TRTdynamics = dynamics::Tuple<
  T, DESCRIPTOR,
  momenta::BulkTuple,
  equilibria::SecondOrder,
  collision::TRT
>;
\end{lstlisting}
In this example declaration of the \class{TRTdynamics} we can tell at a glance that its assigned cells will expose bulk momenta reconstructed directly from the population values and relax towards the second order equilibrium using the TRT collision operator.
The dynamics tuple concept is quite powerful, \eg we may apply a forcing scheme by adding a single line.
\begin{lstlisting}[language=myc++]
template <typename T, typename DESCRIPTOR>
using ForcedTRTdynamics = dynamics::Tuple<
  T, DESCRIPTOR,
  momenta::BulkTuple,
  equilibria::SecondOrder,
  collision::TRT,
  forcing::Guo
>;
\end{lstlisting}
This \class{forcing::Guo} \emph{combination rule} is the fourth and optional component of the dynamics tuple.
Combination rules may arbitrarily manipulate the previous momenta, equilibrium and collision components.
In the case of Guo, forcing this means that the momenta argument is shifted via \class{momenta::Forced} and the TRT collision operator is wrapped to force the post collision populations according to the Guo scheme.

OpenLB offers a comprehensive library of momenta, equilibria, collision operators and combination rules that can be easily combined into many different dynamics tuples. See 
\begin{itemize}
    \item \path{src/dynamics/momenta/aliases.h}
    \item \path{src/dynamics/equilibrium.h}
    \item \path{src/dynamics/collision.h}
    \item \path{src/dynamics/forcing.h}
\end{itemize}
for some examples.

While this framework covers most collision steps, a fallback option is provided via \class{dynamics::CustomCollision}.
This class enables the combination of momenta with arbitrary collision and equilibrium implementations.
One example for this are the \class{CombinedRLBdynamics} that are used as the foundation for various boundary conditions.

The full definition of the interface is available in \path{src/dynamics/interface.h}.
Note that due to recent large refactoring to support execution on GPUs and SIMD CPUs, \class{Dynamics} currently contains
various legacy methods that will be deprecated in future releases. New dynamics implementations should be formulated either as a \class{dynamics::Tuple} or \class{dynamics::CustomCollision} template.

\subsection{Collision Operators}

A partial summary of the currently supported collision operators can be found in the recent publication on OpenLB~\cite{krause:21}.
Furthermore, five commonly used collision schemes are compared in \cite{haussmann:19a} for a typical 3D benchmark case of homogeneous isotropic turbulence.
For derivations, analysis and theoretical comparisons the interested reader is referred to \cite{kruger2017lattice}.

\subsubsection{Implementation in Dynamics}

As was touched upon in Section~\ref{sec:dynamics}, local collision operators are expressed as \class{Dynamics} in the context of OpenLB. Specifically, the common dynamics tuple concept expresses collision operators as reusable elements alongside equilibria and momenta.

\begin{lstlisting}[language=myc++]
struct BGK {
  using parameters = typename meta::list<descriptors::OMEGA>;

  static std::string getName() {
    return "BGK";
  }

  template <typename DESCRIPTOR, typename MOMENTA, typename EQUILIBRIUM>
  struct type {
    using EquilibriumF = typename EQUILIBRIUM::template type<DESCRIPTOR,MOMENTA>;

    template <typename CELL, typename PARAMETERS, typename V=typename CELL::value_t>
    CellStatistic<V> apply(CELL& cell, PARAMETERS& parameters) any_platform {
      V fEq[DESCRIPTOR::q] { };
      const auto statistic = EquilibriumF().compute(cell, parameters, fEq);
      const V omega = parameters.template get<descriptors::OMEGA>();
      for (int iPop=0; iPop < DESCRIPTOR::q; ++iPop) {
        cell[iPop] *= V{1} - omega;
        cell[iPop] += omega * fEq[iPop];
      }
      return statistic;
    };
  };
};
\end{lstlisting}

This is the complete listing of the well known BGK collision operator that is used by many different dynamics. Each collision operator consists of three elements: A \texttt{parameters} type list of fields that is used to parametrize the collision, a \texttt{getName} method that is used to generate human readable names for dynamics tuples and a nested \texttt{type} template that contains the actual \texttt{apply} method specific to each operator.

The nested \texttt{type} template is used to enable composition into dynamics tuples and will be automatically instantiated for the required descriptor, momenta and equilibrium types. Additionally, this is used as a place for injecting partial specialization which enable usage of autogenerated CSE-optimized kernels.

As is the case for all other elements, the \texttt{apply} template method follows a fixed signature for all collision operators. Each call is provided an instance of some platform-specific implementation of the cell concept alongside a parameters structure containing all requested values. Using these two inputs the method can perform the local collision and return the computed density and velocity magnitudes for usage in lattice statistics.
This pattern is repeated at various places of the library. Examples for other instances are equilibria and momenta elements as well as post processors. Any implementations of this style are usable on any of OpenLB's target platforms (currently this means the scalar and vectorized CPU code as well as GPU support). They are also amenable to automatic code generation.

For an introduction on how to write your own dynamics, see the FAQ in Section~\ref{faq:dynamics}.

\section{Post Processors}\label{sec:postprocessor}

While the basic concept of dynamics assigned to cells in a block lattice is conceptually close to the theory of LBM, it is not sufficiently general to address all possible requirements arising in complex applications.
As a case in point, some boundary conditions are non-local and need to access neighboring nodes. 
Their execution is taken care of by a \emph{post processing} step, which instead of traversing the entire lattice a second time, applies to selected cell locations only.

While collision steps are easily parallelized due to their local nature, this doesn't hold once neighborhood access is required.
For this reason, two adjacent concepts are used to group post processors. Each post processor is assigned to a \emph{stage} and a \emph{priority} within this stage.
For example both \class{OuterVelocityCornerProcessor3D} and \class{OuterVelocityEdgeProcessor3D} are processed in the \texttt{PostStream} stage, but the former is executed after the latter to avoid access conflicts.
In turn, post processors within the same stage and priority may be executed in parallel depending on the execution platform.
Note that both the number of priorities and stages may be freely customized -- \eg the free surface code introduces a number of additional stages to interleave post processing and custom communications steps.

Each post processor consists of scope and priority declarations in addition to an \texttt{apply} template method.
For the aforementioned \class{OuterVelocityCornerProcessor3D} this is declared as follows (see \path{src/boundary/boundaryPostProcessors3D.hh} for the full implementation).
\begin{lstlisting}[language=myc++]
template<typename T, typename DESCRIPTOR,
         int xNormal, int yNormal, int zNormal>
struct OuterVelocityCornerProcessor3D {
  static constexpr OperatorScope scope = OperatorScope::PerCell;

  int getPriority() const {
    return 1;
  }

  template <typename CELL>
  void apply(CELL& cell) any_platform;
};
\end{lstlisting}
Here, the \class{OperatorScope::PerCell} declares that the apply function will be provided a cell implementation with neighborhood access for each assigned location. Other scopes such as \class{OperatorScope::PerBlock} (used for example for statistics computation) enable different access patterns.
In any case, the assigned cell locations are maintained by OpenLB's post processor framework.
For example
\begin{lstlisting}[language=myc++]
sLattice.addPostProcessor<PostStream>(indicator, meta::id<SomePerCellPostProcessor>{});
\end{lstlisting}
schedules a post processor for application to all indicated cells during the \texttt{PostStream} stage.

Note that this describes the new post processor concept adopted by OpenLB 1.5.
Many existing \emph{legacy} post processors use a different paradigm.
Specifically, they are derived from \class{PostProcessor(2,3)D} and override virtual methods such as 

\class{void PostProcessor(2,3)D::process(BlockLattice<T,DESCRIPTOR>\&)}.

All these post processors will be ported to the new approach in time.
For CPU targets, legacy post processors can be used without restrictions but they are not supported on the GPU platform.
For an introduction on how to write your own post processors, see the FAQ in Section~\ref{faq:postprocessor}.

\section{Functors}\label{sec:functors}
In general, a \emph{functor} is a class that behaves like a function.
Objects of a functor class perform computations by overloading the \class{operator()} in a standardized fashion, allowing for arithmetic combination.
One big advantage of functors over functions is that they allow the creation of a hierarchy and bundle ''classes of functions''.

\subsection{Basic Functor Types}
The functor concept is a user friendly and efficient technique to process lattice data and extract relevant data for post-processing.
In the meantime, OpenLB deploys the functors also for the geometry, which is a very intuitive and powerful choice.

Basically, functors are applications that operate either on the lattice $\mathbb{N}^3$ or more generally on $\mathbb{R}^3$.
The values of such a functor may be three dimensional, as for example for the velocity, where the mapping is defined as
\begin{equation}
\text{\class{Functor}}: \Omega \to \mathbb{R}^d, \quad \text{for } d \in\mathbb{N} ~.
\end{equation}
The nomenclature is based on the dimension of the domain.
Let's say the functor acts on a $3d$ (super) lattice, the the functor is called \class{SuperLatticeF3D}.
If the functor value is density, then this functor is called \class{SuperLatticeDensity3D}.

\subsubsection{GenericF}
The GenericF functor stands at the top of the hierarchy and is a virtual base class that provides interfaces.
Template parameter \texttt{S} defines the input data type and template parameter \texttt{T}, the output.
The essential interface is the unwritten (pure virtual function) \texttt{operator()}.
Commonly, this $()-$operator is used as an evaluation of a certain functor, \eg pressure at position $\bm{x}$.

\subsubsection{AnalyticalF}
This a subclass of \class{GenericF}, for functions that live in SI-units, \eg for setting velocities in $m/s$.
Parts of this class are, for example, constant, linear, interpolation and random functors, which can be evaluated by the $()$-operator.
There is an \class{AnalyticalCalc} class, which inherits from \class{AnalyticalF} and establishes arithmetic operations $(+,-,\cdot,/)$ between every type of 
\begin{equation}
\text{\class{AnalyticalF3D}}: \mathbb{R}^3 \to \mathbb{R}^d, \quad d\in\mathbb{N} ~.
\end{equation}

\subsubsection{IndicatorF}
This is another subclass of \class{GenericF} that returns a vector with elements \(0\) or \(1\), \ie
\begin{equation}
\text{\class{IndicatorF3D}}: \mathbb{R}^3 \to \{0,1\} ~.
\end{equation}
These are used to construct geometries, \eg \class{IndicatorSphere3D} creates a sphere using an origin and radius.
Evaluation returns \(1\), if the vector is inside the sphere and \(0\) elsewise.
In analogy to the \class{AnalyticalF}, there are arithmetic operations as well, but with a slightly different definition.
The returned object of an addition is the union, multiplication returns the intersection, and subtraction represents the relative complement.

\subsubsection{SmoothIndicatorF}
Smooth indicators define a \emph{small} epsilon region around the object such that it has a smooth transition form $0$ to $1$. 
In general, the mapping is defined as 
\begin{equation}
\text{\class{SmoothIndicatorF3D}}: \mathbb{R}^3 \to [0,1] ~.
\end{equation}

\subsubsection{SuperLatticeF}
These functors are defined on the lattice via 
\begin{equation}
\text{\class{SuperLatticeF3D}}: \mathbb{N}^3 \to \mathbb{R}^d, \quad d\in\mathbb{N} ~,
\end{equation}
and commonly represent the raw simulation data, \eg macroscopic moments such as pressure and velocity.
\class{SuperLattice} functors are part of the parallelism layer and they delegate the calculations to the corresponding BlockLattice functors. 

\subsubsection{InterpolationF} Interpolation functors establish conversion between the analytical and lattice functors.
They are very important in setting analytical boundary conditions, by evaluating the given analytical function on the lattice points.
The reverse direction -- from lattice to analytical functors -- is where this functor receives its name, as the conversion is achieved by interpolation between the lattice points.

\subsection{Functor Arithmetic}

Functor arithmetic expressions are based on functor instances wrapped in \texttt{std::shared\_ptr<F>} smart pointers.
This layer is necessary to enable automatic memory management for trees of interdependent functors.

\begin{lstlisting}[language=myc++,caption={Basic showcase for \texttt{std::shared\_ptr} based functor arithmetic},label=lst:sharedPtrFunctorArithmetic]
std::shared_ptr<SuperF3D<T>> aF(
  new SuperConst3D<T>(superStructure, {1.0, 2.0}));
std::shared_ptr<SuperF3D<T>> bF(
  new SuperConst3D<T>(superStructure, {2.0, 1.0}));
std::shared_ptr<SuperF3D<T>> cF = aF + bF;
// cF->operator() returns {3.0, 3.0}
\end{lstlisting}

Note that \texttt{cF} can be passed out of scope without any regard for \texttt{aF} and \texttt{bF}, as managed pointers are stored internally.
At first glance this new functor arithmetic may seem unnecessarily verbose for basic usage such as simply adding two functors and directly using the result.
As such legacy functor arithmetic is still available for basic use cases.
Usage of \texttt{std::shared\_ptr} functor arithmetic is supported by both \class{FunctorPtr} and a DSL to ease development of more complex functor compositions.

The \class{FunctorPtr} helper template is used throughout the functor codebase to transparently accept functors independently of how their memory is managed. This means that functors managed by \texttt{std::shared\_ptr} are accepted as arguments in any place where raw functor references were used previously. As a nice benefit \texttt{FunctorPtr} transparently forwards any calls of its own operator function to the operator of the underlying functor.

\begin{lstlisting}[language=myc++,caption={\texttt{FunctorPtr} and \texttt{std::shared\_ptr} based functor arithmetic},label=lst:sharedPtrFunctorArithmeticFunctorPtr]
T error(FunctorPtr<SuperF3D<T>>&& f, T reference) {
  T         output[1] = { };
  const int origin[4] = {0,0,0,0};
  f(output, origin);
  return fabs(output[0] - reference);
}

std::shared_ptr<SuperF3D<T>> managedF(
  new SuperConst3D<T>(superStructure, 1.0));
SuperConst3D<T> rawF(superStructure, 1.0);

// error(managedF, 1.1) == error(rawF, 1.1) == 0.1
\end{lstlisting}

Functor arithmetic expressions may also contain constants in addition to functors.
Any scalar constant used in the context of managed functor arithmetic is implicitly wrapped into a \class{SuperConst(2,3)D} instance.

\begin{lstlisting}[language=myc++,caption={Constant scalars in managed functor arithmetic},label=lst:sharedPtrFunctorArithmeticConstants]
std::shared_ptr<SuperF3D<T>> aF(/*...*/);
auto bF = 0.5 * aF + 2.0; // scalar multiplication and addition
\end{lstlisting}

\begin{sloppypar}
Constant vectors are also supported if they are explicitly passed to the \class{SuperConst(2,3)D} constructor.
Note that arithmetic operations of equidimensional functors are performed componentwise (\ie \texttt{aF * aF} is not the scalar product).
\end{sloppypar}

\begin{lstlisting}[language=myc++,caption={Constant vectors in managed functor arithmetic},label=lst:sharedPtrFunctorArithmeticConstantVector]
std::shared_ptr<SuperF3D<T>> vectorF(
  new SuperConst3D<T>(superStructure, {1.0, 2.0, 3.0}));
auto cF = aF / vectorF; // componentwise division
\end{lstlisting}

\subsubsection{Functor Composition}
\label{sec:functorComposition}

Composing multiple managed functors which in turn need multiple arguments by themselves such as when calculating error norms in a reusable fashion can quickly lead to expressions that are fairly hard to read.
For this reason, the \texttt{functor\_dsl} namespace offers a set of conveniently named helper functions in order to deobfuscate such functor compositions.

Consider for example the following snippet which constructs and evaluates a relative error functor based on the L2 norm.
\begin{lstlisting}[language=myc++,caption={\texttt{functor\_dsl} supported functor composition},label=lst:functorCompositionDSL]
using namespace functor_dsl;
// decltype(wantedF) == std::shared_ptr<AnalyticalF3D<double,double>>
// decltype(f)       == std::shared_ptr<SuperF3D<double>>
// decltype(indicatorF) == std::shared_ptr<SuperIndicatorF3D<double>>
auto wantedLatticeF = restrictF(wantedF, sLattice);
auto relErrorNormF  = norm<2>(wantedLatticeF - f, indicatorF))
                    / norm<2>(wantedLatticeF, indicatorF);
const int input[4];
double result[1];
relErrorNormF->operator()(result, input);
std::cout << "Relative error: " << result[0] << std::endl;
\end{lstlisting}
Note that lines~5--7 contain the full implementation of the expression 
\begin{align}
\frac{\|\texttt{wantedF} - \texttt{f}\|_2}{\|\texttt{wantedF}\|_2} ~,
\end{align}
\ie the L2-normed relative error of an arbitrary functor \texttt{f} as compared to the analytical solution \texttt{wantedF}.
This simply allows for moving even basically one-off functor compositions into reusable and easily verifiable functors whose implementation is as close to the actual mathematical definition as is reasonably possible. Correspondingly a more developed version of Listing~\ref{lst:functorCompositionDSL} can be found in \class{SuperRelativeErrorLpNorm3D} which is used extensively by the \path{poiseuille3d} example to compare simulated and analytical solutions.

\begin{lstlisting}[language=myc++,caption={Functor composition in \texttt{SuperRelativeErrorLpNorm3D}'s constructor},label=lst:relErrorLpNormConstructor]
template <typename T, typename W, int P>
template <template <typename U> class DESCRIPTOR>
SuperRelativeErrorLpNorm3D<T,W,P>::SuperRelativeErrorLpNorm3D(
  SuperLattice<T,DESCRIPTOR>&      sLattice,
  FunctorPtr<SuperF3D<T,W>>&&        f,
  FunctorPtr<AnalyticalF3D<T,W>>&&   wantedF,
  FunctorPtr<SuperIndicatorF3D<T>>&& indicatorF)
  : SuperIdentity3D<T,W>([&]()
{
  using namespace functor_dsl;

  auto wantedLatticeF = restrictF(wantedF.toShared(), sLattice);

  return norm<P>(wantedLatticeF-f.toShared(), indicatorF.toShared())
         / norm<P>(wantedLatticeF, indicatorF.toShared());
}())
{
  this->getName() = "relErrorNormL" + std::to_string(P);
}
\end{lstlisting}

Disregarding the addition of \texttt{FunctorPtr} as well as further templatization, lines 12--15 are equivalent to the ad-hoc error norm in Listing~\ref{lst:functorCompositionDSL}. Also note how the actual composition happens inside of a lambda expression and is then returned to be stored by \class{SuperIdentity3D}. This allows for assigning composed functors their own name and renders them indistinguishable from \emph{primitive} functors.

\section{Parallelization}\label{sec:mpi-intro}

As applications in computational fluid dynamics require a large amount of resources, it is essential to have the flexibility to switch to a parallel platform easily.
This section concentrates on parallelism on distributed memory machines using MPI, as distributed memory is the most common model on large-scale, parallel machines.
All example cases in the OpenLB distribution can be compiled with MPI and executed in parallel.
Data which is spatially distributed, such as lattice fields, is handled through a data-parallel paradigm.
The data space is partitioned into smaller regions that are distributed over the nodes of a parallel machine.
In the following, these types of structures are referred to as data-parallel structures.
Other data types that require a small amount of storage space are duplicated on every node. These are referred to as duplicated data.
All native C++ data types are automatically duplicated by virtue of the Single-Program-Multiple-Data model of MPI.
These types should be used to handle scalar values, such as the parameters of the simulation, or integral values over the solution (\eg the average energy).

For output on the console it is recommended to use \class{OstreamManager} since it transparently manages output in case of parallel execution (cf Chapter~\ref{sec:consoleOutput}).

\subsection{Supported Platforms}

OpenLB models heterogeneous parallelization using its predominant block decomposition architecture. Each individual block of a super lattice is assigned one of currently three possible target plaforms: \texttt{Platform::CPU\_SISD}, \texttt{Platform::CPU\_SIMD} or \texttt{Platform::GPU\_CUDA}.
The availability of each platform is controlled by adding its name to the \texttt{PLATFORMS} variable in the \path{config.mk} build configuration.
Note that further system specific changes to the compiler settings are almost certainly required for \texttt{CPU\_SIMD} and \texttt{GPU\_CUDA}.
See the \path{config/} folder for some examples.

Note that \texttt{CPU\_SISD} must always be enabled as some host-side data structures rely on this platform.
In the absence of heterogeneous load balancing, the \emph{most efficient} enabled platform is used by default for all blocks. Specifically, \texttt{CPU\_SIMD} if only the two CPU platforms are enabled or \texttt{GPU\_CUDA} (even if \texttt{CPU\_SIMD} is also enabled at the same time).

\subsubsection{SIMD}

Modern CPUs offer vector instructions with a width of up to 512 bytes in the case of AVX-512. This means that 8 respectively 16 individual scalar values can be processed in a single instruction. In some situations this can significantly speed up the bulk collision step, which is why OpenLB supports this option for processing its \class{Dynamics}.
This option is at its most powerful when combined with the \texttt{HYBRID} parallelization mode s.t. OpenMP is used to further parallelize the vectorized collision on each shared memory node. However, setting up the hybrid mode correctly on a HPC system is non trivial, which is why we suggest to stick to the MPI-only mode for users unexperienced in working with HPC systems.

Once enabled in the build configuration, vectorization is applied to the dominant collision of each block lattice transparently without requiring any additional code changes.

\subsubsection{GPU}

General purpose graphics processing units (GPGPUs) are an ideal platform for many LBM-based simulations due to their high memory bandwidth and high degree of parallelization. OpenLB currently supports transparent usage of Nvidia GPUs via CUDA for almost all dynamics and a core set of boundary post processors and coupling operators.
Similarly to both other available platforms, enabling GPU support requires only adding \texttt{GPU\_CUDA} to the \texttt{PLATFORMS} variable in \path{config.mk} and some system-specific updates to the compiler settings. MPI parallelization is fully supported for GPU blocks, enabling simulations on multi-GPU clusters.

Note that the resolution commonly needs to be increased significantly compared to the CPU-accommodating default values in order to take full advantage of GPU acceleration.
Especially on desktop-grade GPUs changing the fundamental floating point type \class{T} to \class{float} is advisable.
Also see Section~\ref{sec:nvidiaGpuOlb} for guidance on how to install, compile and run GPU-aware OpenLB cases.


\chapter{Geometry Creation and Meshing} 

\label{sec:geometry}
This chapter presents how geometry data can be loaded in or created by OpenLB.
Furthermore, the concept of material numbers is shown.

\section{Creating a Geometry}\label{sec:geometry_creation}
OpenLB provides an interface for STL based geometry data and generates fully automated a structured regular mesh.
On the other hand, geometries can be build out of primitive shapes such as cuboids, spheres and cylinders.
By the implemented arithmetic that includes intersection, union and complement, those primitives can be assembled very generally.
A computational domain such as the \class{SuperLattice} is created in $6$ simple steps (see also Fig~\ref{fig:geometryCreation}):
\begin{description}
\item[Step 1:] Create an \class{Indicator3D} instance by
  \begin{enumerate}
    \item Reading an STL file, see example aorta3d~\ref{sec:aorta3d} and section~\ref{sec:readstl}.
    \item Pre-defined primitive shapes (cuboid, circle, cylinder, sphere) and their combinations $(+,-,\cdot,/)$, as described in the example \path{venturi3d}, see Section~\ref{sec:venturi3d}.
  \end{enumerate}
\item[Step 2:] Construct \class{CuboidGeometry3D}. During construction, the geometry from step 1 is divided into the predefined number of cuboids that are thereafter automatically removed, shrunk and weighted for a good load balance. By weighting user can choose between \textit{weight} and \textit{volume} strategies. For complex shapes the last option is preferable. A larger cuboids number removes more unnecessary nodes, but implies higher communication costs. 
\item[Step 3:] Construct \class{LoadBalancer} that assigns cuboids to threads. The standard option is the HeuristicLoadBalancer, whereby there are also other variants. 
\item[Step 4:] Construct \class{SuperGeometry3D} that links material numbers to voxels (see Section~\ref{sec:materialnumbers_setting}).
\item[Step 5:] Set \emph{material numbers} to different simulation space regions, where afterwards dynamics and boundaries are defined.
\item[Step 6:] Construct \class{SuperLattice} to perform stream and collide algorithm.
\end{description}

\begin{lstlisting}[language=myc++, caption={Create geometry based on STL or primitive shapes. All six steps are presented briefly as source code.}]
// Step 1: Create Indicator
STLreader<T> stlreader("filename.stl", voxelSize, stlSize, method);
IndicatorCuboid3D<T> indicator( extend, origin );
// Step 2: Construct cuboidGeometry.
CuboidGeometry3D<T> cuboidGeometry(stlReader / indicator, voxelSize, noOfCuboids, weightingStrategy);
// Step 3: Construct LoadBalancer.
HeuristicLoadBalancer<T> loadBalancer(cuboidGeometry);
// Step 4: Construct SuperGeometry.
SuperGeometry<T,3> superGeometry(cuboidGeometry, loadBalancer);
// Step 5: Set material numbers.
// set material number 2 for whole geometry
superGeometry.rename(0,2);
// change material number from 2 to 1 for inner (fluid) cells, so that only boundary cells have material nunmer 2
superGeometry.rename(2,1,{1,1,1});
// or simply use an indicator that changes its lattices to one
superGeometry.rename(2,1,fluidIndicator);
// additional material numbers for other boundary conditions, the 3rd argument in the brackets is the material number which the boundary cells should face
superGeometry.rename(2,3,1,cylinderInFLow);
superGeometry.rename(2,4,1,outflowIndicator0);
superGeometry.rename(2,5,1,outflowIndicator1);
// Step 6: Construct SuperLattice.
SuperLattice<T,DESCRIPTOR> sLattice(superGeometry);
\end{lstlisting}

The powerful application of the geometry generation of OpenLB can be demonstrated on the example \path{aorta3d}.
This example is based on a very complex geometry and illustrates the highly user friendly and automated process from STL to computation grid the \class{SuperLattice}, see Figure~\ref{fig:geometryCreation}.
\begin{figure}[ht]
  \centering
  \includegraphics[width=.8\textwidth]{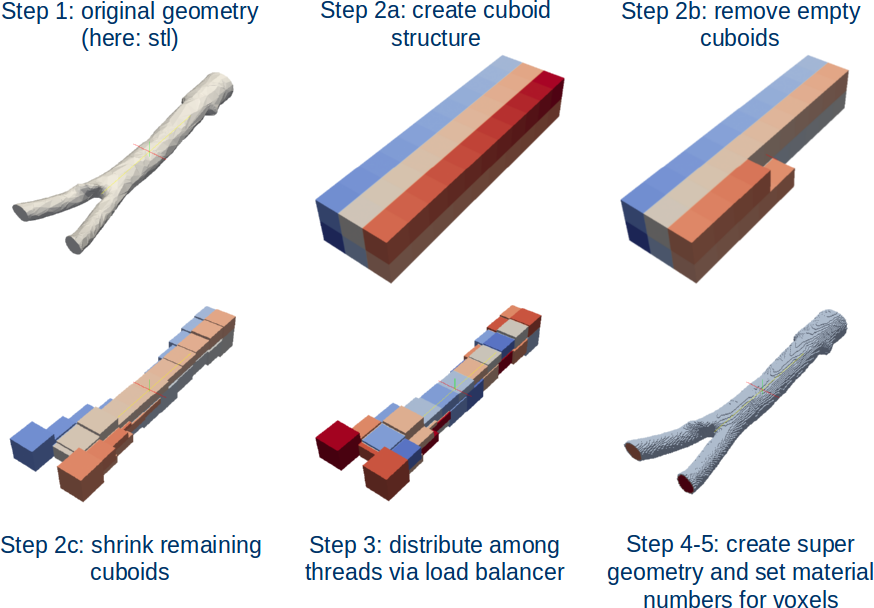}
  \caption{
  Six steps to create a Geometry.
  It starts by reading an STL file with the help of an \class{STLreader} and ends with the creation of a \class{SuperLattice}.}
  \label{fig:geometryCreation}
\end{figure}

\section{Setting Material Numbers} \label{sec:materialnumbers_setting}
OpenLB has a general concept for representation of a geometry.
A specific number called the \emph{material number} is assigned to each cell, defining whether that cell belongs to the boundary or to the fluid domain or whether it is superfluous in the computations.
Figure~\ref{fig:materialNumbers} illustrates this using the example of a channel flow with an obstacle.
The different collision and streaming steps on the boundary and the fluid are defined with respect to the material number.
\begin{figure}[ht]
  \center
  \includegraphics[width=0.8\textwidth]{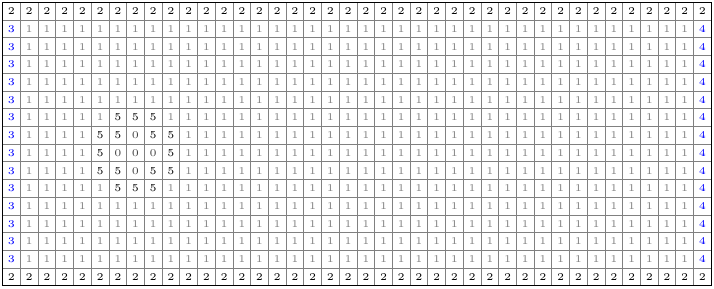}
  \caption{
  Lattice nodes of the geometry are associated with material numbers.
  Material number zero, one, two, three, four and five correspond to external region, fluid, bounce back boundary, inflow, outflow and obstacle cells, (1=fluid, 2=no-slip boundary, 3=velocity boundary, 4=constant pressure boundary, 5=curved boundary, 0=do nothing).}
  \label{fig:materialNumbers}
\end{figure}
The benefit of using material numbers in CFD simulations is the automatic determination of streaming directions on boundary nodes, as this is not always practical by hand e.\,g.\ if material numbers of a complex geometry are obtained from a STL file.

Besides creating the domain, \class{IndicatorFXD} functions can be used to set \emph{material numbers} with the help of one of the \texttt{rename} functions in \class{SuperGeometryXD}.
\begin{lstlisting}[language=myc++, caption=Different rename functions to set material numbers.]
/// replace one material with another
void rename(int fromM, int toM);
/// replace one material that fulfills an indicator functor condition with another
void rename(int fromM, int toM, IndicatorF3D<bool,T>& condition);
/// replace one material with another respecting an offset (overlap)
void rename(int fromM, int toM, { unsigned offsetX, unsigned offsetY, unsigned offsetZ });
/// renames all voxels of material fromM to toM if the number of voxels given by testDirection is of material testM
void rename(int fromM, int toM, int testM, std::vector<int> testDirection);
/// renames all voxels of material fromM to toM if two neighbour voxels in the direction of the discrete normal are voxels with material testM in the region where the indicator function is fulfilled
void rename(int fromM, int toM, int testM, IndicatorF3D<bool,T>& condition);
\end{lstlisting}

\section{Building Simulation Domains with Geometry Primitive Functors}
\label{subsec:indic}
For the purpose of setting up simulation domains (here called geometries), OpenLB provides several functors (see Section~\ref{sec:functors}) for the creation of basic geometric primitives such as cuboids, circles, spheres, cones etc. 
These can be combined using the mathematical operators ($+$ union, $-$ set difference, $\cdot$ intersection) to create more complex domains. 
It can be done in the application setup (see below) or in the XML interface (see Section~\ref{sec:venturi3d}).
\begin{lstlisting}[language=myc++, caption=Geometry operations.]
Vector<T,3> C0(0,50,50);
Vector<T,3> C1(5,50,50);
Vector<T,3> C2(40,50,50);
///...
Vector<T,3> C13(115,7,50);

T radius1 = 10 ;
T radius2 = 20 ;
T radius3 = 4 ;

IndicatorCylinder3D<T> inflow(C0, C1, radius2);
std::shared_ptr<IndicatorCylinder3D<T>> cyl1 (
 new IndicatorCylinder3D<T> ( (C1, C2, radius2));
std::shared_ptr<IndicatorCone3D<T>> co1 (
 new IndicatorCone3D<T> ( (C2, C3, radius2, radius1));
///...
IndicatorCylinder3D<T> outflow1(C12, C13, radius1);
/// IndicatorIdentity3D collects indicator functors in one object
IndicatorIdentity3D<T> venturi(cyl1 + co1 + others );
\end{lstlisting}

\section{Reading STL-files}
\label{sec:readstl}
For a correct STL representation, the STL reader should be properly set up.
\begin{lstlisting}[language=myc++, caption={STL reader}]
STLreader<T> stlreader("filename.stl", voxelSize, stlSize, method, verbose);
\end{lstlisting}
The STL file can be stored not only in the current application folder, but also somewhere else. The path there can be written in the first argument of the reader. The scaling factor \textit{stlSize} should be set to the units of the STL part. If it is exported in meters, the scaling factor is $1$, if in millimeters then it is $0.001$. The reading methods can be chosen depending on the geometry complexity. For easy geometries the option $0$ can be chosen, for the complex and possibly untight shapes the option $1$ is to be set, whereby it is slower. The \textit{verbose} argument can be true or false. It prints the information about the STL file in the terminal.

\section{Excursus: Creating STL-files}
The general process chain assumes that the geometry is already given in form of an STL file, if not created by the \class{IndicatorFXD}-functions.
Simple geometries can be created using a CAD tool like FreeCAD (\url{www.freecad.org}).
An introduction to modeling with FreeCAD can be found for example in
\url{http://www.youtube.com/watch?v=geIrH1cOCzc}.
The general procedure is mostly similar to the following description.

Firstly, a 2D sketch is created on a selected plane (\eg the \(xy\)-plane) using different bi-dimensional shapes.
In the next step the sketch is extruded in the third dimension. Several such 3d objects can be combined using operations like union, cut, intersection, rotation, trace, etc.\ to obtain the target geometry.
Creating a square and a circle for the example \path{cylinder3d} in Figure~\ref{fig:freecad-cylinder3d} is an easy task. Also, complex geometries as that of a filter or a porous media can be set up easily with OpenLB's indicator approach.
\begin{figure}[ht]
  \centering
  \includegraphics[width=10cm]{./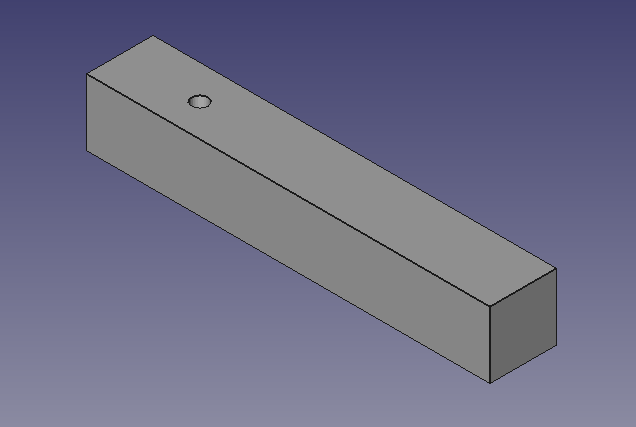}
  \caption{
    The geometry of the example \protect\path{cylinder3d} from Section~\ref{sec:cylinder2dAndCylinder3d} opened in FreeCAD.
    }
  \label{fig:freecad-cylinder3d}
\end{figure}


\chapter{Simulation Models}

\section{Non-dimensionalization and Choice of Simulation Parameters}

Basically, to describe a physical quantity you need a reference scale. By dividing a quantity by a variable of the same dimension, a dimensionless quantity is derived. The result is a number, which is called "the lattice value of the quantity" or the value of the amount "in lattice units" (denoted with \(\cdot_{\mathrm{LB}}\)), if you consider the Lattice Boltzmann Method. The reference scale is named conversion factor between two reference systems. Furthermore, the conversion factor for the quantity $\phi$ is called $C_\phi$ and the non-dimensional quantity gets the symbol $\phi^*$.

To get into this topic, the book \textit{The Lattice Boltzmann Method}~\cite{kruger2017lattice}, especially Chapter 7.1 and 7.2 therein, and \url{https://en.wikipedia.org/wiki/Dimensional_analysis} are strongly recommended.

\section{Porous Media Model}

The permeability parameter $K$ is a physical parameter
that describes the macroscopic drag in a porous media model.
For laminar flows it is defined by Darcy's law
\begin{equation}
  K = - \frac{Q \mu L}{\triangle P}~,
\end{equation}
where $Q = UA$ is the flow rate, $U$ is a characteristic velocity, $A$ is a cross-sectional area, $\mu$ denotes the dynamic viscosity, $L$ is a characteristic length, and $\triangle P$ specifies the pressure difference in between starting point and endpoint of the volume.

The porosity-value $d \in [0,1]$ is a lattice-dependent value, where $d = 0$ means the medium is solid and $d = 1$ denotes a liquid. 
According to Brinkman~\citet{brinkman-permeability:49, brinkman-viscousforce:1949},
\citet{borrvall:2003} and~\citet{pingen:2007}, the Navier-Stokes equation is transformed (see~\citet{dornieden:2013} and~\citet{stasius:2014}).
The discrete formulation of $d$ describes a flow region by its permeability
\begin{equation}
  d = 1 - h^{dim-1} \frac{\nu_{\mathrm{LB}} \tau_{\mathrm{LB}}}{K}~, 
\end{equation}
where $\tau_{\mathrm{LB}}$ is the relaxation time, $\nu_{\mathrm{LB}}$ is the discrete kinematic viscosity and $h$ is the length.
Therefore $K \in [ \nu_{\mathrm{LB}} \tau_{\mathrm{LB}} h^{dim-1}, \infty]$.
To describe the porosity or permeability of a medium, a descriptor for porosity must be used, such as the one below.
\begin{lstlisting}[language=myc++]
using DESCRIPTOR = descriptors::PorousD3Q19Descriptor;
\end{lstlisting}
Be aware that the porous media model only works in the generic compilation mode.
In the function \class{prepareLattice}, dynamics for the corresponding number of the porous material are defined
for example as follows.
\begin{lstlisting}[language=myc++]
void prepareLattice(..., Dynamics<T, DESCRIPTOR>& porousDynamics, ...){
  /// Material=3 --> porous material
  sLattice.defineDynamics(superGeometry, 3, &porousDynamics);
  ...
}
\end{lstlisting}
In the function \class{setBoundaryValues}, the initial porosity value and external field are defined.
\begin{lstlisting}[language=myc++]
void setBoundaryValues(..., T physPermeability, int dim, ...){
  // d in [0,1] is a lattice-dependent porosity-value
  // depending on physical permeability K = physPermeability
  T d = converter->latticePorosity(physPermeability);
  AnalyticalConst3D<T,T> porosity(d);
  sLattice.defineField<descriptors::POROSITY>(superGeometry, 3, porosity);
  ...
}
\end{lstlisting}
In the \class{main} function, the required parameters as well as the porous media dynamics are defined.
\begin{lstlisting}[language=myc++]
int main(int argc, char* argv[]) {
  ...
  T physPermeability = 0.0003;
  ...
  PorousBGKdynamics<T, DESCRIPTOR> porousDynamics(converter->getOmega(),
    instances::getBulkMomenta<T, DESCRIPTOR>());
  ...
}
\end{lstlisting}
Additionally, the class \class{UnitConverter} in \path{src/core/units.h} provides useful functions for conversion between physical and lattice values:
\begin{lstlisting}[language=myc++]
/// converts a physical permeability K to a lattice-dependent porosity d
/// (a velocity scaling factor depending on Maxwellian distribution
/// function), needs PorousBGKdynamics
T latticePorosity(T K) const
{ return 1 - pow(physLength(),getDim()-1)*getLatticeNu()*getTau()/K; }

/// converts a lattice-dependent porosity d (a velocity scaling factor
/// depending on Maxwellian distribution function) to a physical
/// permeability K, needs PorousBGKdynamics
T physPermeability(T d) const
{ return pow(physLength(),getDim()-1)*getLatticeNu()*getTau()/(1-d); }
\end{lstlisting}

\section{Power Law Model}

The two most common deviations from Newton's Law observed in real systems are pseudo-plastic fluids and dilatant fluids. 
For pseudo-plastic fluids the viscosity of the system decreases as the shear rate is increased. 
On the other hand, as the shear rate by dilatant fluids is increased, the viscosity of the system also increases. 
The simplest model, that describes these two types of deviations, is called the (Ostwald--de Waele) Power Law model and defines the viscosity as
\begin{equation}
\label{eq:powerLaw}
\mu=m\dot{\gamma}^{n-1}~,
\end{equation}
where $m$ is the flow consistency index, $\dot{\gamma}$ is the shear rate, and $n$ denotes the flow behavior index. 
Then
\begin{itemize}
\item $n < 1$: pseudoplastic fluids,
\item $n = 1$: Newtonian fluids,
\item $n > 1$: dilatant fluids.
\end{itemize}

To simulate a power law fluid, a descriptor for dynamic \class{omega} must be used, such as:
\begin{lstlisting}[language=myc++]
using DESCRIPTOR = descriptors::DynOmegaD2Q9Descriptor;
\end{lstlisting}
In the function \class{setBoundaryValues}, the initial same \class{omega}-argument is defined.
\begin{lstlisting}[language=myc++]
AnalyticalConst2D<T,T> omega0(converter.getOmega());
sLattice.defineField<descriptors::OMEGA>(
  superGeometry.getMaterialIndicator({1,2,3,4}), omega0);
\end{lstlisting}
In the \class{main} function, the power law dynamics are defined.
\begin{lstlisting}[language=myc++]
int main(int argc, char* argv[]) {
  ...
  PowerLawBGKdynamics<T, DESCRIPTOR> bulkDynamics(converter.getOmega(), instances::getBulkMomenta<T, DESCRIPTOR>(), m, n, converter.physTime());
}
\end{lstlisting}
To recover a nonconstant kinematic viscosity in the targeted PDE, also the \class{omega}-argument has to be no longer a constant. 
With using the power law model~\eqref{eq:powerLaw}, the kinematic viscosity is computed in each step as
\begin{equation}
\label{eq:powerLawKinematicViscosity}
\nu=\dfrac{1}{\rho}m\dot{\gamma}^{n-1}~.
\end{equation}
The shear rate $\dot{\gamma}$ is computable via using the second invariant of the strain rate tensor $D_{II}$, \ie
\begin{align}
\label{yan}
\dot{\gamma}=\sqrt{2D_{II}}~,
\end{align}
where
\begin{align}
\label{D}
D_{II}=\sum_{\alpha,\beta=1}^{d}E_{\alpha \beta}E_{\alpha \beta}~,
\end{align}
and 
\begin{align}
\label{strainRateTensor1}
E_{\alpha \beta} = - \Bigg(1-\dfrac{1}{\tau}\Bigg)\dfrac{1}{2\varrho \nu}\sum_{i=0}^{q-1}f_{i}^{h}\boldsymbol{\xi}_{i\alpha}\boldsymbol{\xi}_{i\beta}~.
\end{align}
This concept is very significant, since $f_{i}^{h}\boldsymbol{\xi}_{i\alpha}\boldsymbol{\xi}_{i\beta}$ is usually computed during the collision process and therefore, in comparison to other CFD methods, adds almost no additional computational cost. 
The computation of a new \class{omega}-argument is done in \path{src/dynamics/powerLawBGKdynamics.h} via: 
\begin{lstlisting}[language=myc++]
T computeOmega(T omega0_, T preFactor_, T rho_, T pi_[util::TensorVal<DESCRIPTOR >::n] );.
\end{lstlisting}

\section{Multiphysics Couplings}\label{sec:couplings}

\subsection{Shan--Chen Model}
For the simulation of both multiphase and multicomponent flow the Shan--Chen model is implemented in OpenLB. Since its first introduction~\cite{shan_chen:93}, many variants of the model have been developed. In this implementation, there are several forcing schemes~\cite{shan_doolen:95,guo-forcing:02} and interaction potentials to choose from.

\subsubsection{Implementation of Shan--Chen Two-phase Fluid}
The two phases can be simulated on the same lattice instance.
\begin{lstlisting}[language=myc++]
SuperLattice<T, DESCRIPTOR> sLattice(superGeometry);
\end{lstlisting}
Then the dynamics are chosen, which have to support external forces.
\begin{lstlisting}[language=myc++]
ForcedShanChenBGKdynamics<T, DESCRIPTOR> bulkDynamics1 (
omega1, instances::getExternalVelocityMomenta<T,DESCRIPTOR>() );
\end{lstlisting}
Possible choices for the dynamics are \class{Forced\-BGK\-dynamics} and \class{ForcedShanChen\-BGK\-dynamics}.

Then the interaction potential is chosen.
\begin{lstlisting}[language=myc++]
ShanChen93<T,T> interactionPotential;
\end{lstlisting}
Viable interaction potentials for one component multiphase flow are \class{ShanChen93},  \class{ShanChen94},  \class{CarnahanStarling} and  \class{PengRobinson}. In this model \class{PsiEqualsRho} should not be used, because this would make all the mass gather in the same place.

To enable interaction between the fluid, they have to be coupled, so the kind of coupling has to be chosen (here: \class{ShanChenForcedSingleComponentGenerator3D}) and the material numbers to which it applies. Since in the case of single component flow there is only one lattice, it is coupled with itself.
\begin{lstlisting}[language=myc++]
const T G      = -120.;
ShanChenForcedSingleComponentGenerator3D<T,DESCRIPTOR> coupling(
 G,rho0,interactionPotential);
sLattice.addLatticeCoupling(superGeometry, 1, coupling, sLattice);
\end{lstlisting}
The interaction strength \(G\) has to be negative and the correct choice depends on the chosen interaction potential. When using \texttt{PengRobinson} or \texttt{CarnahanStarling} interaction potential, \(G\) is canceled out during computation, so the result is not affected by it (though it still has to be negative).

Finally, during the main loop the lattices have to interact with each other (or in the case of only one fluid component the lattice with itself).
\begin{lstlisting}[language=myc++]
sLattice.communicate();
sLattice.executeCoupling();
\end{lstlisting}
These steps are placed immediately after the \class{collideAndStream} command.

Examples for the implementation of a LB simulation using the Shan--Chen model for two-phase flow are \path{examples/phaseSeparation2d} and \path{examples/phaseSeparation3d}.

\subsubsection{Implementation of Shan--Chen Two-component Fluid}
Two lattice instances are needed, one for each component (though operating on one geometry).
\begin{lstlisting}[language=myc++]
SuperLattice<T, DESCRIPTOR> sLatticeOne(superGeometry);
SuperLattice<T, DESCRIPTOR> sLatticeTwo(superGeometry);
\end{lstlisting}
Then the dynamics are chosen, which have to support external forces.
\begin{lstlisting}[language=myc++]
ForcedShanChenBGKdynamics<T, DESCRIPTOR> bulkDynamics1 (
omega1, instances::getExternalVelocityMomenta<T,DESCRIPTOR>() );
ForcedShanChenBGKdynamics<T, DESCRIPTOR> bulkDynamics2 (
omega2, instances::getExternalVelocityMomenta<T,DESCRIPTOR>() );
\end{lstlisting}
Possible choices for the dynamics are \class{ForcedBGKdynamics} and \class{ForcedShanChenBGKdynamics}. 
One should keep in mind that tasks like definition of dynamics, external fields and initial values and the collide and stream execution have to be carried out for each lattice instance separately. 
The same is true for data output.
Then the interaction potential is chosen.
\begin{lstlisting}[language=myc++]
PsiEqualsRho<T,T> interactionPotential;
\end{lstlisting}
In the multicomponent case, the most frequently used interaction potential is \class{PsiEqualsRho}, but \class{ShanChen93} for example, would also be a viable choice.

To enable interaction between the fluid, they have to be coupled, so the kind of coupling has to be chosen (here: \class{ShanChenForcedGenerator3D})and the material numbers to which it applies.
\begin{lstlisting}[language=myc++]
const T G      = 3.;
ShanChenForcedGenerator3D<T,DESCRIPTOR> coupling(
 G,rho0,interactionPotential);
sLatticeOne.addLatticeCoupling(superGeometry, 1, coupling, sLatticeTwo);
sLatticeOne.addLatticeCoupling(superGeometry, 2, coupling, sLatticeTwo);
\end{lstlisting}
The interaction strength G has to be positive. 
If the chosen interaction potential is \class{PsiEqualsRho}, $G>1$ is needed for separation of the fluids, but it should not be much higher than $3$ for stability reasons.

Finally, during the main loop the lattices have to interact with each other.
\begin{lstlisting}[language=myc++]
sLatticeOne.communicate();
sLatticeTwo.communicate();
sLatticeOne.executeCoupling();
\end{lstlisting}
These steps are placed immediately after the \class{collideAndStream} command.

Examples for the implementation of a LB simulation using the Shan--Chen model for two-component flow are \path{examples/multiComponent2d} and \path{examples/multiComponent3d}.

\subsection{Free Energy Model}\label{sec:freeEnergy}
As an alternative option for simulating multi-component flow, the free energy model has been implemented into OpenLB, and can be used for either two or three fluid components. 
Examples for the binary case are given in \path{youngLaplaceXd} and \path{contactAngleXd}, while an example of the ternary case with boundaries is provided in \path{microFluidics2d}. These are all contained within the \path{examples/multiComponent/} folder.

The approach taken in OpenLB is similar to that given in~\cite{semprebon:16} and assumes equal densities and viscosities for each of the fluids. 
In the next sections the method will be outlined briefly for three components. 
The two component case is identical to taking the third fluid component to be zero and instead only uses two lattices.

\subsubsection{Bulk Free Energy Model}
Three lattices are required to track the density $\rho$, and order parameters $\phi$ and $\psi$. 
These are related to the individual component densities $C_i$ by
\begin{equation}
  \rho = C_1 + C_2 + C_3 ~,  \qquad  \phi = C_1 - C_2 ~,  \qquad  \psi = C_3 ~, 
\end{equation}
respectively. 
By considering the free energy, a force is derived to drive the fluid towards the thermodynamic equilibrium. 
The density therefore obeys the Navier--Stokes equation with this added force. The equation of motion for the order parameters is the Cahn--Hilliard equation. 
The dynamics chosen for the first lattice must therefore include an external force, such as \class{ForcedBGKdynamics}, while for the second and third lattices \class{FreeEnergyBGKdynamics} is required.
\begin{lstlisting}[language=myc++]
ForcedBGKdynamics<T, DESCRIPTOR> bulkDynamics1 (
    omega, instances::getBulkMomenta<T,DESCRIPTOR>() );
FreeEnergyBGKdynamics<T, DESCRIPTOR> bulkDynamics23 (
    omega, gamma, instances::getBulkMomenta<T,DESCRIPTOR>() );
\end{lstlisting}

To compute the force, two lattice couplings are required. The first computes the chemical potentials for each lattice using the equations
\begin{align}
  \mu_{\rho}  = &
      A_1 + A_2 + \frac{\alpha^2}{4} \left[ (\kappa_1 + \kappa_2)
      \left( \bm{\nabla}^2\psi - \bm{\nabla}^2\rho \right) +
      (\kappa_2 - \kappa_1) \bm{\nabla}^2\phi \right] ~,  \label{eq:muRho} \\
  \mu_{\phi}  = &
      A_1 - A_2 + \frac{\alpha^2}{4} \left[ (\kappa_2 - \kappa_1)
      \left( \bm{\nabla}^2\rho - \bm{\nabla}^2\psi \right) -
      (\kappa_1 + \kappa_2) \bm{\nabla}^2\phi \right] ~,  \label{eq:muPhi} \\
  \mu_{\psi} = & - A_1 - A_2 + \kappa_3 \psi (\psi - 1) (2\psi - 1)
        + \frac{\alpha^2}{4} \left[(\kappa_1 + \kappa_2)
        \bm{\nabla}^2\rho \right. \nonumber \\
        & - \left. (\kappa_2 - \kappa_1) \bm{\nabla}^2\phi -
        (\kappa_1 + \kappa_2 + 4\kappa_3) \bm{\nabla}^2 \psi \right] ~,  \label{eq:muPsi}
\end{align}
where $A_1$ and $A_2$ are defined as
\begin{align}
  A_1 = \frac{\kappa_1}{8} (\rho + \phi - \psi)
            (\rho + \phi - \psi - 2) (\rho + \phi - \psi - 1) ~, \\
  A_2 = \frac{\kappa_2}{8} (\rho - \phi - \psi)
            (\rho - \phi - \psi - 2) (\rho - \phi - \psi - 1) ~, 
\end{align}
respectively. 
The $\alpha$ the $\kappa$ parameters are input parameters for the lattice coupling and can be used to tune the interfacial width and surface tensions. 
The interfacial width is given by $\alpha$ and the surface tensions are $\gamma_{mn} = \alpha(\kappa_m + \kappa_n)/6$.

The second lattice coupling then computes the force using
\begin{equation}
  \boldsymbol{F} = - \rho \bm{\nabla} \mu_\rho
                   - \phi \bm{\nabla} \mu_\phi
                   - \psi \bm{\nabla} \mu_\psi ~.
\end{equation}
This depends non-locally upon the result of the first lattice coupling, and so the chemical potential must be communicated in between. To accommodate this, the first coupling is assigned to the $\rho$ lattice and the force coupling is assigned to the $\phi$ lattice.
\begin{lstlisting}[language=myc++]
FreeEnergyChemicalPotentialGenerator3D<T, DESCRIPTOR> coupling1(
    alpha, kappa1, kappa2, kappa3 );
FreeEnergyForceGenerator3D<T, DESCRIPTOR> coupling2;

sLattice1.addLatticeCoupling<DESCRIPTOR>(
    superGeometry, 1, coupling2, {&sLattice2, &sLattice3} );
sLattice2.addLatticeCoupling<DESCRIPTOR>(
    superGeometry, 1, coupling3, {&sLattice1, &sLattice3} );
\end{lstlisting}
The following is then used in the main loop to calculate the force at each timestep.
\begin{lstlisting}[language=myc++]
sLattice1.executeCoupling();
sExternal1.communicate();
sExternal2.communicate();
sExternal3.communicate();
sLattice2.executeCoupling();
\end{lstlisting}

\subsubsection{Boundaries in Free Energy Models}
Using the \class{setFreeEnergyWallBoundary} function, bounce-back wall boundaries with controllable contact angles can be added via the following commands: 
\begin{lstlisting}[language=myc++]
setFreeEnergyWallBoundary<T,DESCRIPTOR>(sLattice1, superGeometry, 2,
    alpha, kappa1, kappa2, kappa3, h1, h2, h3, 1);
setFreeEnergyWallBoundary<T,DESCRIPTOR>(sLattice2, superGeometry, 2,
    alpha, kappa1, kappa2, kappa3, h1, h2, h3, 2);
setFreeEnergyWallBoundary<T,DESCRIPTOR>(sLattice3, superGeometry, 2,
    alpha, kappa1, kappa2, kappa3, h1, h2, h3, 3);
\end{lstlisting}
The final parameter, \class{latticeNumber}, is necessary to change each lattice differently. While the $h_i$ parameters are related to the contact angles at the boundary. The contact angles $\theta_{mn}$, are given by the following relation
\begin{equation}
    \cos\theta_{mn} =
        \frac{(\alpha\kappa_n \! + \! 4h_n)^{3/2} -
              (\alpha\kappa_n \! - \! 4h_n)^{3/2}}
             {2 (\kappa_m + \kappa_n) \sqrt{\alpha\kappa_n}} -
        \frac{(\alpha\kappa_m \! + \! 4h_m)^{3/2} -
              (\alpha\kappa_m \! - \! 4h_m)^{3/2}}
             {2 (\kappa_m + \kappa_n) \sqrt{\alpha\kappa_m}} ~.
\label{eq:contactAngles}
\end{equation}
Notably, to set neutral wetting ($90^\circ$ angles), the values can be set to $h_i=0$.

A demonstration of using these solid boundaries for a binary fluid case is provided in the \path{contactAngle(2,3)d} examples. 
The examples compare the simulated angles to those given by equation~\eqref{eq:contactAngles}, respectively for dimensions \(d=2,3\).

Open boundary conditions can also be implemented using the \class{setFreeEnergyInletBoundary} and \class{setFreeEnergyOutletBoundary} functions. 
These can be used to specify constant density or velocity boundaries. 
The first lattice is used to define the density or velocity boundary condition, while on the second and third lattices $\phi$ and $\psi$ must instead be defined. 
For example, to set a constant velocity inlet, see the code snippet below.
\begin{lstlisting}[language=myc++]
setFreeEnergyInletBoundary(
    sLattice1, omega, inletIndicator, "velocity", 1 );
setFreeEnergyInletBoundary(
    sLattice2, omega, inletIndicator, "velocity", 2 );
setFreeEnergyInletBoundary(
    sLattice3, omega, inletIndicator, "velocity", 3 );

sLattice1.defineU( inletIndicator, 0.002 );
sLattice2.defineRho( inletIndicator, 1. );
sLattice3.defineRho( inletIndicator, 0. );
\end{lstlisting}
However, this alone is insufficient to set a constant density outlet because $\rho$, $\phi$, and $\psi$ are redefined by a convective boundary condition on each time step. 
In this case an additional lattice coupling is required, using \class{FreeEnergyDensityOutletGeneratorXD}.

There are two additional requirements for open boundaries. 
The first is that the velocity must be coupled between the lattices using \class{FreeEnergyInletOutletGeneratorXD}, because this is required for the collision step. 
The second is that the communication of the external field must now include two values. 
This ensures that $\rho$, $\phi$, and $\psi$ are properly set on block edges at the outlet. 
To see a full example of applying these boundary conditions, see the \path{microFluidics2d} example.

\subsection{Coupling Between Momentum and Energy Equations}
As explained in reference \cite{mohamad2010critical}, there are different schemes to couple the momentum and energy equations by means of a buoyancy force (also called Boussinesq approximation). 
Some schemes add an extra force term to the collision term, other methods shift the velocity field according to Newton's second law, and others combine an extra force term and a velocity shift. 
The implementation applied in OpenLB belongs to this last group of schemes.

Once the boundary values for the velocity and temperature fields are set, collision and streaming functions are called. 
The dynamics with an external force \(\bm{F}\) used for the velocity calculation (\eg \class{ForcedBGKdynamics}) shifts the velocity \(\bm{v}\) before executing the collision step. 
The shift follows equation~\eqref{eq:FBD}, 
\begin{equation}
\bm{u}_{\mathrm{shift}}= \bm{u} + \frac{\bm{F}}{2} ~.
\label{eq:FBD}
\end{equation}
The code snippet responsible for this shift is defined in the collision function of the file \path{/dynamics/dynamics.hh} for the class \class{ForcedBGKdynamics}.
\begin{lstlisting}[language=myc++,caption={Velocity shift}]
this->momenta . computeRhoU( cell , rho , u) ;
FieldPtr<T,DESCRIPTOR,FORCE> force = cell.template getFieldPointer<descriptors::FORCE>();
for ( int iVe l=0; iVel<DESCRIPTOR >::d; ++iVel ) {
  u[iVel] += force[iVel] / T{2};
}
\end{lstlisting}
After the corresponding collision step using the shifted velocity, the value of the density distribution functions $f_i$ is modified by the external force with the call to the function.
\begin{lstlisting}[language=myc++]
lbm< Lattice >::addExternalForce( cell , u , omega )
\end{lstlisting}
This function follows
\begin{equation}
\bar{f_i} 
= 
f_i 
+ \left( 1 - \frac{\omega}{2} \right)
  w_i 
  \left( \frac{\bm{c}_i - \bm{u}}{c_s^2} + \frac{\bm{c}_i \cdot \bm{u}}{c_s^4}\bm{c}_i \right) 
\cdot \bm{F} ~, 
\label{eq:dF}
\end{equation}
where $\tilde{f_i}$ represents the new distribution function (see reference~\cite{guo-forcing:02} for the BGK model and \cite{fakhari2010phase} for MRT models), \(w_{i}\) are the weights of the discrete velocities \(\bm{c}_{i}\), and \(\omega = \tau^{-1}\) denotes the relaxation frequency. 
The coupling in the collision step for the temperature field is given by the use of the velocity from the isothermal field.
\begin{lstlisting}[language=myc++]
auto u = cell.template getFieldPointer<descriptors::VELOCITY>();
\end{lstlisting}

The equilibrium density distribution function for the temperature only has terms of first order (see \eg \cite{siodlaczek:21}). 
After the collision step, the coupling function is called \class{NSlattice.executeCoupling()}, where the values of the external force in the \class{NSlattice} and of the advected velocity in the \class{ADlattice} are updated.
\begin{lstlisting}[language=myc++,caption={Velocity coupling}]

auto u = tPartner->get(iX, iY).template getFieldPointer<descriptors::VELOCITY>();
blockLattice.get(iX, iY).computeU(u);
\end{lstlisting}

The new force is computed via the Boussinesq approximation
\begin{equation}
\bm{F} = \rho \frac{T-T_0}{\triangle T} \bm{g}~.
\label{eq:bA}
\end{equation}
The temperature \(T\) is obtained from the \class{ADlattice}, $T_0$ is the average temperature between the defined cold and hot temperatures, whereas $\triangle T$ is the difference between the hot and cold temperatures.
\begin{lstlisting}[language=myc++,caption={Computation of the Boussinesq force}]

auto force = blockLattice.get(iX, iY).template getFieldPointer<descriptors::FORCE>();
T temperature = tPartner->get(iX, iY).computeRho();
T rho = blockLattice.get(iX, iY).computeRho();
for (unsigned iD = 0 ; iD < L :: d ; ++iD) {
  force[iD] = gravity * rho * (temperature - T0) / deltaTemp * dir[iD];
}
\end{lstlisting}

\section{Advection--Diffusion Equation}\label{sec:ade}

The advective and diffusive transport of a macroscopic density, energy or temperature is governed by the advection--diffusion equation
\begin{equation}
  \frac{\partial c}{\partial t} = \bm{\nabla} \cdot (D\bm{\nabla} c) - \bm{\nabla} \cdot (\bm{v}c) \quad \text{in } \Omega \times I ~,
\end{equation}
where $c\colon \Omega \times I \to \mathbb{R}, (\bm{x}, t)\mapsto c(\bm{x}, t)$ is the considered physical quantity (temperature, particle density), $D>0$ is the diffusion coefficient and $\bm{v}$ is a velocity field affecting $c$. 
It is possible to approximate this equation with LBM by using an equilibrium distribution function different from the one for the Navier--Stokes equations~\cite{chopard:02,simonis:20,simonis:23}
\begin{equation}\label{eq:equilibriumfirstOrder}
g_i^{eq} = w_i \rho \bigg(1+\frac{\bm{c}_i \cdot \bm{v}}{c_s^2}\bigg) ~,
\end{equation}
that takes the advective transport into account. 
In equation~\eqref{eq:equilibriumfirstOrder}, $w_i$ is a weighting factor, $\bm{c}_i$ a unit vector along the lattice directions, $c_s$ the speed of sound, and \(i\) denotes the discrete velocity counter. 
To use this implementation the dynamics object has to be replaced by special advection--diffusion dynamics:
\begin{lstlisting}[language=myc++,caption=Advection diffusion dynamics object]
AdvectionDiffusionBGKdynamics<T, DESCRIPTOR> bulkDynamics(
  converter.getOmega(),
  instances::getBulkMomenta<T,DESCRIPTOR>());
\end{lstlisting}
Additionally, a different descriptor with fewer lattice velocities is used~\cite{huang:11}:
\begin{lstlisting}[language=myc++,caption=Advection diffusion descriptor]
using DESCRIPTOR = descriptors::AdvectionDiffusionD3Q7Descriptor;
\end{lstlisting}
In OpenLB, the descriptors \class{D2Q5} and \class{D3Q7} are implemented for the advection--diffusion equation.
Since the advection--diffusion equation simulates different physical conditions than the Navier--Stokes equations, another set of boundary conditions is needed.
A Dirichlet condition for the density is already implemented, for example to simulate a boundary with a constant temperature.
\begin{lstlisting}[language=myc++,caption=Advection diffusion dynamics object]
ParticleAdvectionDiffusionBGKdynamics<T, ADDESCRIPTOR> bulkDynamicsAD ( omegaAD,
      instances::getBulkMomenta<T,ADDESCRIPTOR>() );
\end{lstlisting}
Finally the boundary condition is set to the desired material number.
\begin{lstlisting}[language=myc++,caption=Advection diffusion descriptor]
void prepareLattice(...) {
...
/// Material=3 -> boundary with constant temperature
setAdvectionDiffusionTemperatureBoundary<T,ADDESCRIPTOR>(
      sLatticeAD, omegaAD, superGeometry, 3);
...
}
\end{lstlisting}
To apply convective transport, a velocity vector has to be passed. 
This can either be done individually on each cell by using the following.
\begin{lstlisting}[language=myc++,caption=Add advective velocity on a cell]
T velocity[3] = {vx,vy,vz};
...
cell.defineField<descriptors::VELOCITY>(velocity);
\end{lstlisting}
Alternatively, it can be passed to the whole \class{SuperLattice} using:
\begin{lstlisting}[language=myc++,caption=Add advective velocity on a superlattice]
AnalyticalConst3D<T,T> velocity(vel);
...
/// sets advective velocity for material 1
superLattice.defineField<descriptors::Velocity>(superGeometry, 1, velocity);
\end{lstlisting}
Here, \class{vel} is a \class{std::vector<T>}.

\subsection{AdvectionDiffusion Boundary Conditions}

\subsubsection{Dirichlet Boundary Condition}
At the boundaries of a lattice, only the outgoing directions of the distribution functions are known, while those towards the domain need to be computed. 
Several types of implementations for Dirichlet boundary conditions are summarized in Section~\ref{sec:defineBoundaryMethod}. 
At a Dirichlet boundary for the advection--diffusion equation, the observable, \eg temperature, is set to a constant value. 
This boundary condition can be applied to flat walls, corners and edges (for three-dimensional
domains). 
The algorithm to set a certain temperature on a wall is defined in the dynamics class \class{AdvectionDiffusionBoundariesDynamics} in the file \path{boundary/advectionDiffusionBoundaries.hh} and works as described below.

First, the index \(i\) of the unknown distribution function $g_i$ incoming to the fluid domain is determined.
\begin{lstlisting}[language=myc++,caption={Collision step for a temperature boundary}]
int missingNormal = 0 ;
constexpr auto missingDiagonal = util::subIndexOutgoing<L,direction,orientation>();
std::vector<int> knownIndexes = util::remainingIndexes<L>(missingDiagonal);
for (unsigned iPop = 0; iPop < missingDiagonal.size(); ++iPop)
{
  int numOfNonNullComp = 0;
  for (int iDim = 0 ; iDim < L::d; ++iDim)
    numOfNonNullComp += abs (L::c[missingDiagonal[iPop]][iDim]);

  if (numOfNonNullComp == 1)
  {
    missingNormal = missingDiagonal[iPop];
    missingDiagonal.erase(missingDiagonal.begin()+iPop);
    break;
  }
}
\end{lstlisting}
Then, the sum of the rest of the populations is computed.
\begin{lstlisting}[language=myc++]
T sum = T( );
for (unsigned iPop = 0 ; iPop < knownIndexes.size(); ++iPop)
{
  sum += cell[knownIndexes[iPop]];
}
\end{lstlisting}

The difference between the desired temperature value (given when setting the boundary condition) and this sum is the value assigned to the unknown distribution.
\begin{lstlisting}[language=myc++]
T temperature = this->momenta.computeRho( cell );
cell[missingNormal] = temperature - sum -(T)1;
\end{lstlisting}

After that, all distribution functions are determined and a regular collision step is performed.
\begin{lstlisting}[language=myc++]
boundaryDynamics.collide(cell, statistics);
\end{lstlisting}

As an example, take the case of a left wall in 2D. 
After the streaming step, all populations are known except for $g_3$. 
Once the desired temperature $T_{wall}$ at the wall is known, the value of the unknown distribution is computed via  
\begin{align}
T_{wall} & = \sum \limits_{i=0}^{4} g_i  \\
\Leftrightarrow 
g_3 & = T_{wall} - (g_0+g_1+g_2+g_4) ~.
\end{align}

\subsubsection{Neumann Boundary Condition}
For flat walls there is also an opportunity to prescribe the flux through the wall as a boundary condition which is 
\begin{align}
\bm{\nabla} c \cdot \bm{n} = \text{flux} ~.
\end{align} 
Here $\bm{n}$ stands for the outer normal vector of the boundary. 
The flux multiplied with the size of the length discretization has to be written in the field \class{BOUNDARY}. 
\begin{lstlisting}[language=myc++]
AnalyticalConst2D<T,T>flux_ (converter.getLatticeDensity(flux)*converter.getPhysDeltaX());
sLatticeAD.defineField<descriptors::BOUNDARY>(boundaryMaterialIndicator,flux_); \end{lstlisting}
The setter \class{setAdvectionDiffusionNeumannBoundary} for \class{AdvectionDiffusionNeumannBoundary} adds a preprocessing step which calculates according to the direction of the normal the Dirichlet value which should be at the wall in order to satisfy the flux condition. 
Therefore a simple backward or forward difference quotient is used as suggested in \cite{kruger2017lattice}. 
As an example, we take again a left wall in 2D and have that
\begin{align}
c_\text{wall} = \triangle x \, \text{flux} + c \left( x_\text{wall}+\triangle x \right) ~.
\end{align}

\begin{lstlisting}[language=myc++]
T analyticalBoundaryValue = cell. template getField<descriptors::BOUNDARY>();

            if constexpr(normal1 !=0 && normal2 == 0 && normal3 == 0){
              if constexpr(normal1 >0){
             // right boundary, difference quotient with x-1
             neumannBoundary = analyticalBoundaryValue+cell.neighbor({-1,0,0}).computeRho();
             }
              else if constexpr(normal1<0){
              //left boundary, difference quotient with x+1
              neumannBoundary = cell.neighbor({1,0,0}).computeRho() - analyticalBoundaryValue;
            }
          }
\end{lstlisting}
Finally we have to set the calculated value for the cell with the \class{defineRho} command. 
After this precalculation step the Dirichlet Boundary Condition is applied.

\subsubsection{Adiabatic Boundary Condition}
Additionally, thanks to the simplified lattice velocity sets used in the thermal descriptors (\class{D2Q5} and \class{D3Q7}, cf.\ the discrete velocities in Figure~\ref{fig:discreteVelocitySets} colored in red and orange), it is possible to implement an adiabatic boundary using bounce-back dynamics \cite{mezrhab2010double}. 

An adiabatic boundary condition requires no heat conduction in the normal direction of the boundary. 
In a general situation the adiabatic boundary is set on a solid wall, meaning that the normal velocity to the wall is zero. 
To implement an adiabatic wall, take a 2D south wall as example. 
The distribution function $g_4$ and the temperature at the wall are undetermined. 
The population $g_4$ can be computed from the distribution function in the opposite direction, in order to ensure that at the macroscopic level there is no heat conduction, \ie
\begin{equation}
g_4=g_2 ~.
\end{equation}
This procedure corresponds to the the bounce-back scheme. 
With all the distribution functions known, the temperature at the wall can be determined from its definition
\begin{equation}
T_{wall}= \sum \limits_{i=0}^{4} g_i ~.
\end{equation}

\subsection{Convergence Criterion}
\label{subsec:convergenceCrit}
For thermal applications, the following convergence criterion can be applied to one of the computed fields or to both of them. 
Generally, a value tracer on the average energy is used, which is also available for any other quantity and in turn applicable for any TEQ approximated with LBM in OpenLB (see Section~\ref{sec:lessonConvergenceCheck}). 
Here, the average energy is defined proportional to the velocity squared, which makes it independent to use the \texttt{NSlattice} or the \texttt{ADlattice}, since both share the same macroscopic velocity field \(\bm{u}\).

The parameters to initialize the tracer object are the characteristic velocity of the system \class{converter.getU( )}, the characteristic length of the system \class{converter.getNy( )}, and the desired precision of the convergence \class{eps}. 
The listing~\ref{lst:C C} shows how the object is defined in the main function, and how its value is updated and checked at each time step.
\begin{lstlisting}[language=myc++,caption={Convergence check},label={lst:C C}]
util::ValueTracer<T> converge( converter.getU( ), converter.getNy( ), eps );
...
for ( iT=0; iT<maxIter ; ++iT) {
  converge.takeValue( ADlattice.getStatistics( ).getAverageEnergy( ), true );
  ...
  if ( converge.hasConverged() ) {
    break;
  }
}
\end{lstlisting}

\subsection{Creating an Application with AdvectionDiffusionDynamics}
When creating a program \eg for a thermal applications, the following points should be regarded.

\subsubsection{Lattice Descriptors}
Lattice descriptors used for the \class{AdvectionDiffusionDynamics} are \class{D2Q5} and \class{D3Q7}, which have less degrees of freedom for the velocity space then the classical discrete velocity sets for the Navier--Stokes equations (see Figure~\ref{fig:discreteVelocitySets}).
With the Chapman--Enskog expansion it can be shown that approximating the advection--diffusion equation as a target does not require fourth order isotropic lattice tensors (see for example \cite{zheng2006lattice}), therefore descriptors with less discrete velocities can be used without loss of accuracy.

To approximate the Navier--Stokes velocity field \(\bm{u}\), a first descriptor should be defined that can include an external force, \eg \class{ForcedD3Q19Descriptor}. 
Another descriptor is necessary to approximate the temperature \(T\) governed by an advection--diffusion equation, \eg \class{AdvectionDiffusionD3Q7Descriptor}.

\subsubsection{Preparing the Geometry}
This step is similar to the isothermal procedure (without the second lattice for \(T\)). 
With help of indicators and STL files, the desired geometry can be created. 
Several material numbers are assigned to the cells that in turn can be implemented to specify several bulk, initial and boundary collision schemes and/or dynamics.  

\paragraph{Reading STL files}
To conveniently read STL files, the OpenLB class \class{stlReader} is provided. 
There are differences, however, when compared to the isothermal case, due to the differing converter objects. 
An example of its use could be:
\begin{lstlisting}[language=myc++,caption={Initialization of a STLreader object}]
STLreader<T> nameIndicator( " fileName.stl ", converter.getDeltaX ( ), conversionFactor );
\end{lstlisting}
The offsets between the STL file and the global geometry are much easier handled, if they are directly defined when creating the STL file, rather than trying to do it in the application code afterwards.
Note that, if the geometry is adjusted for a grid resolution of \(N=100\) and a conversion factor of \(1\), and the resolution is increased to \(N=200\), the corresponding conversion factor also has to be increased by a factor \(2\), in order to keep the correct proportions of the model.

\subsubsection{Preparing the Lattices}
Recall that in a typical thermal application there are two independent lattices: one for the isothermal flow (usually referred to as \class{NSlattice}), and one for the thermal variables (\eg the temperature, usually referred to as \class{ADlattice}). 
For each material number the desired dynamics behavior has to be defined. 
Commonly used possibilities are \class{instances::getNoDynamics} (do nothing), \class{instances::getBounceBack} (no slip), or \class{bulkDynamics} (previously set collision for the bulk).
For a thermal lattice the we can define a boundary with a given temperature.
\begin{lstlisting}[language=myc++,caption={Definition of a temperature boundary}]
setAdvectionDiffusionTemperatureBoundary<T,TDESCRIPTOR>(
     ADlattice, Tomega, superGeometry, 2);
\end{lstlisting}
The chosen dynamics for a material number may differ between the isothermal and the thermal lattices, \eg an obstacle with a given temperature inside a flow channel would have a no-slip
behavior for the fluid part, but be part of the bulk and have a given temperature in the thermal lattice.

\subsubsection{Initialization of the Lattices (iT=0)}

\paragraph{NSlattice} 
For all material numbers defined as \class{bulkDynamics}, an initial velocity and density has to be set (usually fluid flow at rest). 
Additionally, since the velocity field and the temperature are related by a force term, an external field has to be defined. 
The easiest way to do this is by the material number.
\begin{lstlisting}[language=myc++,caption={Initialization of an external force field}]
NSlattice.defineField<descriptors::FORCE>( superGeometry, 1,  force );
\end{lstlisting}
Here \class{force} is an element of type \class{AnalyticalF}, which can initially be set to zero. 

\paragraph{ADlattice} 
For the advection--diffusion lattice, an initial temperature is set (similar to the density variable on the Navier--Stokes lattice), as well as the distribution functions corresponding
to this temperature value:
\begin{lstlisting}[language=myc++,caption={Initialization of the temperature field}]
T Texample = 0.5;
T zerovel[descriptors::d<T,DESCRIPTORS>()] = {0., 0.};
ConstAnalyticalF2D<T,T> Example( Texample );
std::vector<T> tEqExample(descriptors::q<T,DESCRIPTORS>() );
for ( int iPop = 0; iPop < descriptors::q<T,DESCRIPTORS>(); ++iPop )
  {tEqExample [ iPop ] = advectionDiffusionLbHelpers<T,TDESCRIPTOR>::
    quilibrium( iPop, Texample, zerovel ); }
ConstAnalyticalF2D<T,T> EqExample( tEqExample );
ADlattice.defineRho( superGeometry, 1 ,Example );
ADlattice.definePopulations( superGeometry, 1, EqExample );
\end{lstlisting}
To apply convective transport, a velocity vector has to be passed, which can be also done by material number.
\begin{lstlisting}[language=myc++]
std::vector<T> zero ( 2, T( ) );
ConstAnalyticalF2D<T,T> velocity ( zero );
ADlattice.defineField<descriptors::VELOCIY>(superGeoemtry, 1,velocity);
\end{lstlisting}
The last step is to make the lattice ready for the simulation:
\begin{lstlisting}[language=myc++,caption={Initialization of the lattices}]
NSlattice.initialize( );
ADlattice.initialize( );
\end{lstlisting}

\subsubsection{Setting the Boundary Conditions}
If the value of a boundary condition has to be updated during the simulation, \eg via increasing the velocity at the inflow or changing the temperature of a boundary, this can be achieved following the same procedure as for the initial conditions (see Section~\ref{sec:defineInitialConditions} further below).

\subsubsection{Getting the Results}
The desired data is saved using the \class{VTKwriter} objects, which can write the value of functors in VTI files (VTK format used \eg by ParaView~\cite{paraview-web}). 
The functors which are usually saved are the velocity field from the \class{NSlattice}, and the
temperature field (referred to as density) of the \class{ADlattice}. 
Thermal and isothermal information must be saved in two different objects, since they have two different descriptors.
\begin{lstlisting}[language=myc++,caption={Saving results in VTK files}]
SuperLatticeVelocity2D<T,NSDESCRIPTOR> velocity( NSlattice );
SuperLatticeDensity2D<T,TDESCRIPTOR> density( ADlattice );
vtkWriterNS.addFunctor( velocity );
vtkWriterAD.addFunctor( density );
vtkWriterNS.write(iT);
vtkWriterAD.write(iT);
\end{lstlisting}
It is important to emphasize that the data saved is in \textit{lattice units}. 
Some conversions can be made in order to obtain physical magnitudes, which are described later in Section~\ref{sec:oRITS}.

\subsubsection{Structure of the Program}
It is advisable to structure the main loop of the \path{.cpp}-file along the following steps:  
\begin{description}
  \item[1. Initialization] The converter between dimensionless and lattice units is set via \eg \(N\), \(\triangle t\) and the parameters for the simulation \(Ra\), \(Pr\), \(T_{\mathrm{cold}}\), \(T_{\mathrm{hot}}\), \(L_{x,y,z}\).
  \item[2. Prepare geometry] The mesh is created and voxels are classified with different material numbers according to their behavior (inflow, outflow, etc.).
  \item[3. Prepare lattice] The lattice dynamics are set according to the material numbers assigned before. The boundary conditions are initialized. Since there are two different lattices, the definition of the dynamics and the kind of boundary conditions (though not the actual values yet) have to be made for each of them separately. At this point the coupling generator is initialized (usually on the \class{NSlattice}) and then it is indicated which material numbers are to be coupled with the \class{ADlattice}.
  \item[4. Main loop with timer] The functions \class{setBoundaryValues}, \class{collideAndStream}, and \class{getResults} are called repeatedly until the maximum of iterations is reached or the simulation has converged (if a convergence criteria is set).
  \item[5. Definition of initial and boundary conditions] The values for the boundary functions are set. In some applications the values are to be refreshed at each time step. Thermal and isothermal lattices are treated separately. As indicated before, velocity and density (\class{NSlattice}), as well as temperature (\class{ADlattice}) have to be defined. Additionally, the couplings, and external forces and velocities should be initialized and reused as required.
  \item[6. Collide and stream execution] The collision and the streaming steps are performed. This function is called for each of the lattices separately. After the streaming step, the coupling between the lattices (here, based on the Boussinesq approximation) is executed.
  \item[7. Computation and output of the results] Console and data outputs of the results at certain time steps are created.
\end{description}

\subsection{Obtaining Results in Thermal Simulations \label{sec:oRITS}}
Here, the Rayleigh and the Prandtl numbers are the dimensionless numbers which control the physics of a convection problem. 
The Rayleigh number for a fluid is associated with buoyancy driven flow. 
When the Rayleigh number is below the critical value for that fluid, heat transfer is primarily in the form of conduction. 
When it exceeds the critical value, heat transfer is primarily in the form of convection. 
For natural convection, it is defined as
\begin{equation}
Ra=\frac{g \beta}{\nu \alpha} \triangle T L^3 ~, 
\end{equation}
where $g$ is the acceleration magnitude due to gravity, $\beta$ is the thermal expansion coefficient, $\nu$ is the kinematic viscosity, $\alpha$ is the thermal diffusivity, $\triangle T$ is the temperature difference, and $L$ denotes the characteristic length.
The Prandtl number is defined as the ratio of momentum diffusivity $\nu$ to thermal diffusivity $\alpha$ 
\begin{equation}
Pr=\frac{\nu}{\alpha}~.
\end{equation}
To handle differences between the converter objects for isothermal and thermal simulations some of the isothermal functions have been re-implemented for thermal simulations in a way that they only depend on lattice parameters and the Rayleigh and Prandtl numbers by modifying existing functors in the following files:
\begin{itemize}
    \item \path{/functors/lattice/blockLatticeIntegralF3D.(h,hh)}
    \item \path{/functors/lattice/superLatticeIntegralF3D.(h,hh)}
\end{itemize}

\subsubsection{Velocity}
The resulting velocity magnitude \(v_{res}\), independent of the lattice velocity \texttt{latticeU} selected, can be computed by
\begin{equation}
v_{res}
=
\frac{v_{\mathrm{LB}}}{\texttt{latticeU}}\sqrt{Ra \, Pr}
=
\frac{v_{\mathrm{LB}}}{N \triangle t}\sqrt{Ra \, Pr} ~.  
\end{equation}
The lattice velocity \class{latticeU} is obtained from the function \class{converter.getCharLatticeVelocity()} in the thermal converter object.
\subsubsection{Pressure}
The pressure in physical units is derived from the lattice pressure by using its definition from the isothermal converter object
\begin{align}
p_{\mathrm{phys}} & = p_{\mathrm{LB}}\frac{\texttt{physcForce}}{\texttt{physL}^{d-1}} \\
& =p_{\mathrm{LB}}\frac{\texttt{physL}^{d+1}}{\texttt{physT}^2}\frac{1}{\texttt{physL}^{d-1}} \\
& =p_{\mathrm{LB}}\left( \frac{\texttt{physL}}{\texttt{physT}}\right)^2 \\
& =p_{\mathrm{LB}}\left( \frac{\texttt{charU }}{\texttt{latticeU}} \right)^2 \\
& =p_{\mathrm{LB}}\frac{Ra \, Pr}{\texttt{latticeU}^2} ~.
\end{align} 
The lattice pressure can easily be computed from the lattice density using
\begin{equation}
p_{\mathrm{LB}}=\frac{\rho -1}{3} ~.
\end{equation}
The physical force can also be obtained from the computed lattice force
\begin{align}
F_{\mathrm{phys}} 
& = 
F_{\mathrm{LB}}\frac{\texttt{physL}^{d+1}}{\texttt{physT}^2} \\
& = 
F_{\mathrm{LB}} \frac{ \left( \texttt{charL} \frac{\texttt{latticeL}}{\texttt{charL}} \right)^{d+1}}{\left( \frac{\texttt{charL}}{\texttt{charU}}\frac{\texttt{latticeU} \texttt{latticeL}}{\texttt{charL}} \right)^2}\\
& = 
F_{\mathrm{LB}} \, \texttt{latticeL}^{d-1}\left(\frac{\texttt{charU}}\,{\texttt{latticeU}} \right)^2 \\
& = 
F_{\mathrm{LB}}\frac{\texttt{latticeL}^{d-1}}{\texttt{latticeU}^2} Ra \,  Pr ~,  
\end{align} 
where $d$ is the number of dimensions in the problem.

For most applications the value of the force coefficients in the different coordinate directions can be of interest, which can be computed with
\begin{equation}
C_{F_i}= F_{i,\mathrm{phys}}\frac{1}{\frac{1}{2} \texttt{charU}^2 \cdot \texttt{count}_i \cdot \texttt{latticeL}^{d-1}}=F_{i,\mathrm{LB}} \frac{1}{\frac{1}{2} \texttt{latticeU}^2 \cdot \texttt{count}_i} ~,
\end{equation}
where $\texttt{count}_i$ is the number of cells in the surface perpendicular to the direction $i$ of the force coefficient computed.

\subsection{Conduction Problems}
For heat conduction problems there is no velocity field that advects the temperature (see \cite{mishra2007solving} and \cite{kaluza2012numerical} ). 
In absence of convection, radiation and heat generation, the energy equation for a
homogeneous medium is given by
\begin{equation}
\frac{\partial T}{\partial t}=\alpha \bm{\Delta} T ~.
\end{equation}
A conduction simulation can be executed using an independent advection--diffusion lattice, without any velocity field coupled. 
In the same way as in the convection--diffusion heat transfer, the temperature is obtained after summing the distribution functions over all directions. 
The equilibrium distribution function in this case with the BGK approximation is given by
\begin{equation}
g_i^{eq}=\omega_i\rho=\omega_i T ~,
\end{equation}
which is equivalent to the one used in advection--diffusion simulations with the flow velocity set to zero. 
It means that conduction problems could be computed based on the available OpenLB code by only using a lattice with advection--diffusion dynamics and by setting the external velocity field to zero at any time.

\subsubsection{Multiple-Relaxation-Time (MRT)}
The implementation of the thermal lattice Boltzmann equation using the multiple-relaxation-time collision model is done similarly to the procedure used with the BGK collision model. 
A double MRT-LB is used, which consists of two sets of distribution functions: an isothermal MRT model for the mass-momentum equations, and a thermal MRT model for the temperature equation.
Both sets are coupled by a force term according to the Boussinesq approximation. 
The macroscopic governing equation for the temperature is
\begin{equation}
\frac{\partial T}{\partial t} +v\bm{\nabla} T = \alpha \bm{\Delta} T~,
\end{equation}
where $\alpha$ is the thermal diffusivity coefficient.

The isothermal MRT model with an external force is already implemented in OpenLB for the \class{D2Q9} and \class{D3Q19} lattice models (dynamics class \class{ForcedMRTdynamics}). 
This means that only the thermal MRT counterparts for 2D and 3D have to be developed.

The computation of the force term in the MRT model in the \class{ForcedMRTdynamics} class uses the body force as described in \cite{ladd2001lattice}. 
It does not include however a velocity shift like the BGK model, due to negligible differences in benchmark tests. 

\paragraph{D2Q5 thermal model} 
The formulation for the \class{D2Q5} thermal MRT model is based on~\cite{liu2015double}. 
The temperature field distribution functions $g_i$ are governed by the following equation
\begin{equation}
g_{i} (\bm{x} + \bm{c}_{i} \triangle t, t+ \triangle t ) - g_{i} (\bm{x} , t ) 
= 
- \bm{N}^{-1}_{i} \theta_{i} [\bm{n} (\bm{x},t) - \bm{n}^{eq} (\bm{x},t)] ~,
\end{equation}
where \(\bm{g}\) and \(\bm{n}\) are column vectors with entries \(g_{i}\) and \(n_{i}\) for \(i=0, 1, \ldots, q-1\), and denote the distribution functions and the moments, respectively. 
The vectors \(\bm{N}_{i}\) are the rows of the orthogonal transformation matrix \(\mathbf{N}\) and $\theta_{i}$ are the entries of a non-negative, diagonal relaxation matrix.
The macroscopic temperature T can be calculated by
\begin{equation}
T =\sum \limits_{i=0}^{4} g_i ~.
\end{equation}
The weight coefficients for each lattice direction are given in equation~\eqref{eq:wC} with
\begin{equation}\label{eq:wC}
\omega_i = \begin{cases}
\frac{3}{5}, & \quad \text{if } i = 0 ~,\\
\frac{1}{10}, & \quad \text{if } i = 1,2,3,4 ~.
\end{cases}
\end{equation}
The transformation matrix N maps the distribution functions for the temperature $g_i$ to the
corresponding moments $n_i$, \ie
\begin{equation}
\bm{n}= \mathbf{N} \bm{g} .
\end{equation}
The transformation matrix \(\mathbf{N}\) and its inverse matrix $\mathbf{N}^{-1}$ are shown in equations~\eqref{ma:ma1} and \eqref{ma:ma2}. 
There are some differences in the order of the columns with respect to what is specified in the reference \cite{liu2015double}. 
This is due to the different sequence used in numbering the velocity directions.
\begin{align}
    \mathbf{N} 
    & =
    \begin{pmatrix} 
       \langle 1| \\
       \langle e_x| \\
       \langle e_y| \\
       \langle 5e^2 - 4| \\
       \langle e_x^2-e_y^2|
    \end{pmatrix}
    =
    \begin{bmatrix} 
     1 & 1 & 1 & 1 & 1 \\
     0  &-1 & 0 & 1 & 0 \\
     0 & 0 & -1 & 0 & 1 \\
     -4 & 1 & 1 & 1 & 1 \\
     0 & 1 & -1 & 1 & -1
    \end{bmatrix} \label{ma:ma1} ~,\\
    \mathbf{N}^{-1} 
    & =
    \begin{pmatrix}
     \frac{1}{5} & 0 & 0 & -\frac{1}{5}  & 0\\
     \frac{1}{5}   &-\frac{1}{2} & 0 & \frac{1}{20} & \frac{1}{4} \\
     \frac{1}{5}& 0 & -\frac{1}{2} & \frac{1}{20}  & -\frac{1}{4} \\
     \frac{1}{5} & \frac{1}{2}& 0 & \frac{1}{20}  & -\frac{1}{4} \\
     \frac{1}{5}  & 0 & \frac{1}{2}  & \frac{1}{20}  & -\frac{1}{4}
    \end{pmatrix} \label{ma:ma2} ~.
\end{align}
The equilibrium moments $\bm{n}^{eq}$ are defined as
\begin{equation}
    \bm{n}^{eq} = \left( T, u_{x} T, u_{y} T, \varpi T , 0 \right)^T,
\end{equation}
where $\varpi$ is a constant of the \(D2Q5\) model, which we set to \(-2\). 
The diagonal relaxation matrix $\mathbf{\theta}$ is defined by
\begin{equation}
\mathbf{\theta} 
= 
\mathrm{diag} \left( 0, \zeta_a, \zeta_a, \zeta_e, \zeta_{\nu} \right). 
\end{equation}
The first relaxation rate, corresponding to the temperature, is set to zero for simplicity, since the first moment is conserved. 
The relaxation rates $\zeta_e$ and $\zeta_{\nu}$ are set to \(1.5\), whereas the relaxation rates $\zeta_a$ are functions of the thermal diffusivity $\alpha$ in \eqref{eq:tD}. 
The speed of sound of the \(D2Q5\) model is $c_s^2 = 0.2$.
\begin{align}
\alpha & = c_s^2 \left( \tau_a -\frac{1}{2} \right) = \frac{1}{5} \left( \tau_a - \frac{1}{2} \right)  \\
\Rightarrow 
\zeta_a & = \frac{1}{\tau_a} = \frac{1}{5\alpha + \frac{1}{2}} ~.
\label{eq:tD}
\end{align}

\subsubsection{Particle Flows as advection--diffusion Problem}\label{sec:particleADE}

The quantity $c$ in the advection--diffusion equation can be considered as a particle density, thereby giving a continuous ansatz for simulating particle flows. 
To solve for the particle distribution, an additional lattice is required with an appropriate descriptor and dynamics, which are only implemented for the 3D case.
\begin{lstlisting}[language=myc++,caption=Advection diffusion descriptor for particle flows, label=lst:particleADEdescriptor]
using ADDESCRIPTOR = descriptors::ParticleAdvectionDiffusionD3Q7Descriptor;
\end{lstlisting}
The descriptor in Listing~\ref{lst:particleADEdescriptor} allocates additional memory since for the computation of the particle velocity the velocity of the last time step has to be stored as well. These calculations are also non-local, therefore the communication of the additional data has to be ensured by an additional object, which is constructed according to line~1 of Listing~\ref{lst:ADEsExternal} and communicates the data by a function as shown in line~2, which has to be called in the time loop. 
\begin{lstlisting}[language=myc++,caption=SuperExternal3D object for the communication of additional data, label=lst:ADEsExternal]
SuperExternal3D<T, ADDESCRIPTOR, descriptors::VELOCITY> sExternal( superGeometry, sLatticeADE, sLatticeAD.getOverlap() );
sExternal.communicate();
\end{lstlisting}
Although the same unit converter can be used for the advection--diffusion lattice, another relaxation parameter has to be handed to the dynamics, as shown in Listing~\ref{lst:particleADEdynamics}. 
Additionally, some of the boundary conditions have to take the diffusion coefficient into account. 
Therefore a new $\omega_{ADE}$ is computed by
\begin{equation}
\omega_{ADE} = \left(4D \frac{U_{\mathrm{LB}}}{L_{\mathrm{LB}} U_C}+0.5\right)^{-1}. ~
\end{equation}
with characteristic lattice velocity $U_{\mathrm{LB}}$, characteristic velocity $U_C$, lattice length $L_{\mathrm{LB}}$, as well as the desired diffusion coefficient $D$. 
\begin{lstlisting}[language=myc++,caption=Dynamics for the simulation of particle flows via advection--diffusion equations, label=lst:particleADEdynamics]
ParticleAdvectionDiffusionBGKdynamics<T, ADDESCRIPTOR> bulkDynamicsAD ( omegaAD, instances::getBulkMomenta<T, ADDESCRIPTOR>() );
\end{lstlisting}
Applying the advection--diffusion equation to particle flow problems requires a new dynamics due to the handling of the particle velocity by the coupling processor of the two lattices, which differs for reasons of efficiency. When constructing the coupling post-processor as shown in Listing~\ref{lst:particleADEcoupling}, forces acting on the particle can be added like the Stokes drag force as shown in line~ 2 and 3 of Listing~\ref{lst:particleADEcoupling}. The implementation of new forces is straight forward, since only a new class which provides a function \texttt{applyForce(...)}, computing the force in a cell, needs to be written analogously to the existing \class{advDiffDragForce3D}. Finally the lattices are linked by line~4 of Listing~\ref{lst:particleADEcoupling}, which needs to be applied to the Navier--Stokes lattice for reasons of accessibility.
\begin{lstlisting}[language=myc++,caption=Coupling of an advection--diffusion and a Navier--Stokes lattice for particle flow simulations, label=lst:particleADEcoupling]
AdvectionDiffusionParticleCouplingGenerator3D<T,NSDESCRIPTOR> coupling( ADDESCRIPTOR::index<descriptors::VELOCITY>());
advDiffDragForce3D<T, NSDESCRIPTOR> dragForce( converter,radius,partRho );
coupling.addForce( dragForce );
sLatticeNS.addLatticeCoupling( superGeometry, 1, coupling, sLatticeAD );
\end{lstlisting}
For the boundary conditions the same basic objects as for the advection--diffusion equation can be used, however there is an additional boundary condition shown on Listing~\ref{lst:particleADEboundary} which has to be applied at all boundaries to ensure correctness of the finite differences scheme used to compute the particle velocity.
Further information as well as results can be found in Trunk \textit{et al.}~\cite{trunk:16} as well as in the examples section.
\begin{lstlisting}[language=myc++,caption=Example of a boundary condition for the particle velocity for particle flow simulations, label=lst:particleADEboundary]
setExtFieldBoundary<T,ADDESCRIPTOR,descriptors::VELOCITY,descriptors::VELOCITY2>(
      sLatticeAD, superGeometry.getMaterialIndicator({2, 3, 4, 5, 6}));
\end{lstlisting}


\section{Particles}\label{sec:dpm}

The present section summarizes OpenLB’s functionality regarding the consideration of discrete particles in a Lagrangian framework.
This includes both \textit{sub-grid} particles assuming spherical shapes and \textit{surface resolved} particles with arbitrary shapes, which can be handled by a common particle framework.
As the framework follows advances in the data concept of the lattice (cf.\ Section~\ref{sec:modelsAndCoreDataStructure}), it provides a dimension agnostic, flexible and easily extendable implementation.
While abstract template meta functionality characterizes the data handling level, accessible high-level user-functions are provided for \eg particle creation or coupling handling.
In order to guarantee support for previously developed applications, the 3D-only sub-grid particle framework from the previous releases is included as \textit{sub-grid (legacy) framework} (cf.\ Section~\ref{sec:subGridLegacy}) as well.

To get a good overview of the particle framework, the code of examples \texttt{settlingCube3d} (Section~\ref{sec:settlingCube3d}) and \texttt{bifurcation3d} (Section~\ref{sec:bifurcation3d}) is reviewed, focusing on the simulation of particles. 
The example \texttt{settlingCube3d} examines the settling of a cubical silica particle under the influence of gravity in surrounding water.
It starts with integrating some libraries and namespaces, followed by the definition of different types (Listing~\ref{lst:SetCube3D_1}), \eg the \textit{descriptor} and the \textit{particle type}. 
Afterwards, some variables are set to a concrete value, used in the fluid and particle calculation (Listing~\ref{lst:SetCube3d_2}). 
Particle settings include all the data to solve the equations of motion, such as the particle’s starting position and density.

\begin{lstlisting}[language=myc++,mathescape=true, caption=Following the new particle system with following the example settlingCube3d, label=lst:SetCube3D_1]
#include "olb3D.h"
#include "olb3D.hh"

using namespace olb;
using namespace olb::descriptors;
using namespace olb::graphics;
using namespace olb::util;
using namespace olb::particles;
using namespace olb::particles::dynamics;

using T = FLOATING_POINT_TYPE;
typedef D3Q19<POROSITY,VELOCITY_NUMERATOR,VELOCITY_DENOMINATOR> DESCRIPTOR;

//Define lattice type
typedef PorousParticleD3Q19Descriptor DESCRIPTOR;

//Define particleType
typedef ResolvedParticle3D PARTICLETYPE;
\end{lstlisting}

\begin{lstlisting}[language=myc++,mathescape=true, caption=Particle settings in example settlingCube3d, label=lst:SetCube3d_2]
//Particle Settings
T centerX = lengthX*.5;
T centerY = lengthY*.5;
T centerZ = lengthZ*.9;
T const cubeDensity = 2500;
T const cubeEdgeLength = 0.0025;
Vector<T,3> cubeCenter = {centerX,centerY,centerZ};
Vector<T,3> cubeOrientation = {0.,15.,0.};
Vector<T,3> cubeVelocity = {0.,0.,0.};
Vector<T,3> externalAcceleration = {.0, .0, -T(9.81) * (T(1) - physDensity / cubeDensity)};


// Characteristic Quantities
T const charPhysLength = lengthX;
T const charPhysVelocity = 0.15;    // Assumed maximal velocity
\end{lstlisting}

Like other simulations, particle flow simulations need basic, non particle-specific functions like \texttt{prepareGeometry} or \texttt{prepareLattice}.
After initializing those functions, the main function starts. 
The main section begins with initialization of physical units in the unit converter, which is explained in the Q\&A in Section~\ref{sec:QandA}.
The unit converter is followed by the preparation of the geometry using the \texttt{prepareGeometry}-function and afterwards the \texttt{prepareLattice}-function.
After those general simulation functions, the particle simulation starts.
First, the \class{ParticleSystem} (Listing~\ref{lst:SetCube3d_3}, explained in Section~\ref{subsec:particleSystem}) is called followed by the calculation of the particle quantities like a smoothing factor and the extent of the particles.
After those calculations, the particles are created. 
In the following lines, dynamics are assigned to the particles.

\begin{lstlisting}[language=myc++,mathescape=true, caption=Creation of particles and assigning dynamics, label=lst:SetCube3d_3]
  // Create ParticleSystem
  ParticleSystem<T,PARTICLETYPE> particleSystem;

  //Create particle manager handling coupling, gravity and particle dynamics
  ParticleManager<T,DESCRIPTOR,PARTICLETYPE> particleManager(
    particleSystem, superGeometry, sLattice, converter, externalAcceleration);

  // Create and assign resolved particle dynamics
  particleSystem.defineDynamics<
    VerletParticleDynamics<T,PARTICLETYPE>>();

  // Calculate particle quantities
  T epsilon = 0.5*converter.getConversionFactorLength();
  Vector<T,3> cubeExtend( cubeEdgeLength );

  // Create Particle 1
  creators::addResolvedCuboid3D( particleSystem, cubeCenter,
                                 cubeExtend, epsilon, cubeDensity, cubeOrientation );

  // Create Particle 2
  cubeCenter = {centerX,lengthY*T(0.51),lengthZ*T(.7)};
  cubeOrientation = {0.,0.,15.};
  creators::addResolvedCuboid3D( particleSystem, cubeCenter,
                                 cubeExtend, epsilon, cubeDensity, cubeOrientation );

  // Check ParticleSystem
  particleSystem.checkForErrors();

\end{lstlisting}
Before the main loop starts, Listing~\ref{lst:SetCube3d_4}, we create a timer, Listing~\ref{lst:SetCube3d_4} and set initial values to the distribution functions by calling \class{setBoundaryValues}.
After this, the following is processed at every time step.
The fluid's influence on the particles is calculated by evaluating hydrodynamic forces acting on the particle surface.
Afterwards, an external acceleration, \eg gravity, is applied onto the particles (Listing~\ref{lst:SetCube3d_4}) and the equations of motion are solved for each one. The back coupling from the particles to the fluid follows afterwards.
Finally, the main loop ends with the \class{getResults}-function, which prints the results to the console and writes VTK data for post-processing with ParaView (Section~\ref{sec:paraview}) at previously defined time intervals.

\begin{lstlisting}[language=myc++,mathescape=true, caption=Main Loop with Timer, label=lst:SetCube3d_4]
 /// === 4th Step: Main Loop with Timer ===
  Timer<T> timer(converter.getLatticeTime(maxPhysT), superGeometry.getStatistics().getNvoxel());
  timer.start();


  /// === 5th Step: Definition of Initial and Boundary Conditions ===
  setBoundaryValues(sLattice, converter, 0, superGeometry);

  clout << "MaxIT: " << converter.getLatticeTime(maxPhysT) << std::endl;

  for (std::size_t iT = 0; iT < converter.getLatticeTime(maxPhysT)+10; ++iT) {

    // Execute particle manager
    particleManager.execute<
      couple_lattice_to_particles<T,DESCRIPTOR,PARTICLETYPE>,
      apply_gravity<T,PARTICLETYPE>,
      process_dynamics<T,PARTICLETYPE>,
      couple_particles_to_lattice<T,DESCRIPTOR,PARTICLETYPE>
    >();

    // Get Results
    getResults(sLattice, converter, iT, superGeometry, timer, particleSystem );

    // Collide and stream
    sLattice.collideAndStream();
  }

  timer.stop();
  timer.printSummary();
\end{lstlisting}

As we followed the example for particle simulation \path{settlingCube3d}, some functions necessary for the simulations were introduced.
Therefore, in the next chapters, the individual parts of the framework are examined.


\subsection{Class ParticleSystem}
\label{subsec:particleSystem}

The \class{ParticleSystem} stores all data concerning the particles in containers.
Therefore, the class is used multiple times in a particle simulation.
First, the \class{ParticleSystem} is created according to the desired \class{PARTICLETYPE} (Listing~\ref{lst:SetCube3d_3}).
However, the container of particles is empty.
Therefore, we add two particles to it using creator functions and add dynamics via the \class{ParticleSystem}.
Additionally, it is utilized in the \class{ParticleManager} (cf.\ Section~\ref{subsec:particleManager}) to access the particles and perform predefined operations on them.

One focus of the new particle system is the separation of data and operations according to the lattice framework (cf.\ Section~\ref{sec:modelsAndCoreDataStructure}).
Therefore, only the data is stored in the \class{ParticleSystem}. For the operations, it is non-relevant and only used to store data of the calculations.

\subsection{Class SuperParticleSystem}
The example \path{bifurcation3d} (Section~\ref{sec:bifurcation3d}) makes use of OpenLB's domain decomposition approach. 
In order to use this, a \class{SuperParticleSystem} has to be created by passing the \class{SuperGeometry}, which holds all information regarding the lattice decomposition:

\begin{lstlisting}[language=myc++,mathescape=true, caption=Initialization of a SuperParticleSystem, label=lst:createSuperParticleSystem] 
  SuperParticleSystem<T,PARTICLETYPE> superParticleSystem(superGeometry);

\end{lstlisting}

\subsection{Class ParticleManager}
\label{subsec:particleManager}
The \class{ParticleManager} can be used to encapsulate relevant reoccurring particle tasks as \eg the particle-lattice-coupling.
After its initial instantiation by providing the access to relevant particle, lattice and set-up specific data, its \class{execute()} method can be called with the respective tasks specified as template arguments in the desired order.
The individual tasks (included in \path{particleTasks.h}) provide an \class{execute()} method as well and a parameter set specifying the coupling type and the potential embedding into a loop over all available particles.
The \class{ParticleManager} also takes care of combining respective tasks into a single particle loop.

When using a domain decomposition, the particle core distribution has to be updated in every time step. 
The use of the \class{ParticleManager} in the \path{bifurcation3d} example (Section~\ref{sec:bifurcation3d}) respectively looks as follows:

\begin{lstlisting}[language=myc++,mathescape=true, caption=Initialization of a ParticleManager, label=lst:createParticleManager] 
ParticleManager<T,DESCRIPTOR,PARTICLETYPE> particleManager(
    superParticleSystem, superGeometry, superLattice, converter);
\end{lstlisting}

\begin{lstlisting}[language=myc++,mathescape=true, caption=Execution of the ParticleManager, label=lst:executeParticleManager] 
particleManager.execute<
      couple_lattice_to_particles<T,DESCRIPTOR,PARTICLETYPE>,
      process_dynamics<T,PARTICLETYPE>,
      update_particle_core_distribution<T,PARTICLETYPE>
    >();
\end{lstlisting}

\subsection{Resolved Lattice Interaction}
In the directory \path{resolved}, all surface resolved specific functionality is bundled. The \path{blockLatticeInteraction.h} (only header file) and \path{blockLatticeInteraction.hh} files consist of five functions.
All of those functions are needed to calculate and check the position of the particles inside the geometry. For example, the \class{checkSmoothIndicatorOutOfGeometry}-function checks if every part of the particle is inside the cell barrier. If the particle is partly outside of the geometry, the position needs to be changed. Another important functions is the \class{setBlockParticleField}, where a all cells, which are inside the particle are set as a particle field.
Similar to the \path{blockLatticeInteraction.hh} also the \path{superLatticeInteraction.hh} exist. In this file the \class{setBlockParticleField} gets converted to the lattice structure with the function \class{setSuperParticleField}.
The file \path{momentumExchangeForce.h} provides functions to calculate hydrodynamic forces on the particle's surface via an adapted momentum exchange algorithm.
The file (\path{smoothIndicatorInteraction.h}) is needed for the simulation of the area directly at the surface of the particles.

\subsection{Particle Descriptors}
\label{descriptors}
The first file is the \path{particleDescriptorAlias.h} file, in which the alias' are given to different types of \class{PARTICLETYPE}s. In the example \path{settlingCube3d}, right in the beginning, the \class{PARTICLETYPE} \class{ResolvedParticle3D} is chosen.
After the choice of this alias, the dynamics and other main properties are set. Other possible particle types that can be chosen are \class{ResolvedParticle2D} or \class{ResolvedSphere3D}.

For the \path{bifurcation3d} example (Section~\ref{sec:bifurcation3d}), a respective descriptor is chosen:
\begin{lstlisting}[language=myc++,mathescape=true, caption=Particle type and descriptor] 
typedef D3Q19<> DESCRIPTOR;
typedef SubgridParticle3D PARTICLETYPE;
\end{lstlisting}

\subsection{Particle Dynamics}
\label{dynamics}
Another important part of the particle system are the dynamics. The files are used to define those properties for the chosen particle type. For example, in the \path{particleDynamics.h} and \path{particleDynamics.hh} all dynamic functions for the particle type are implemented.
Therefore, all information for calculation of dynamic values can be found here \eg acceleration or angular acceleration.
Those functions get called in the main part of simulations, \eg in the example \path{settlingCube3d} (\class{VerletParticleDynamics}), in Listing~\ref{lst:SetCube3d_3}, as the dynamics are assigned to the particle type.

In the example \path{bifurcation3d} (Section~\ref{sec:bifurcation3d}), different capture methods can be used by choosing the respective setting beforehand:
\begin{lstlisting}[language=myc++,mathescape=true, caption=Enum ParticleDynamicsSetup and example initialization , label=lst:SetCube3d_4b] 
//Set capture method:
// materialCapture: based on material number
// wallCapture:     based on more accurate stl description
typedef enum {materialCapture, wallCapture} ParticleDynamicsSetup;
const ParticleDynamicsSetup particleDynamicsSetup = wallCapture;
\end{lstlisting}

If the \texttt{wallCapture} is chosen, a \class{SolidBoundary} object has to be created and passed to the respective dynamics:

\begin{lstlisting}[language=myc++,mathescape=true, caption=Solid wall creation, label=lst:SetCube3d_4c] 
STLreader<T> stlReader( "../bifurcation3d.stl", converter.getConversionFactorLength() );
  IndicatorLayer3D<T> extendedDomain( stlReader,        converter.getConversionFactorLength() );
  // Create solid wall
  const unsigned latticeMaterial = 2; //Material number of wall
  const unsigned contactMaterial = 0; //Material identifier (only relevant for contact model)
  SolidBoundary<T,3> wall( std::make_unique<IndicInverse<T,3>>(stlReader),
                           latticeMaterial, contactMaterial );
\end{lstlisting}

When using the \texttt{materialCapture} instead, a \class{MaterialIndicator} is necessary to identify material numbers that initialize the capture treatment.

\begin{lstlisting}[language=myc++,mathescape=true, caption=Initialization of a SuperIndicatorMaterial, label=lst:SetCube3d_4d] 
std::vector<int> materials {2,4,5};
  SuperIndicatorMaterial<T,3> materialIndicator (superGeometry, materials);
\end{lstlisting}

Both \class{SolidBoundary} and \class{MaterialIndicator} are then used in the function \class{prepareParticles} to define the chosen dynamics:

\begin{lstlisting}[language=myc++,mathescape=true, caption=Creation of verlet dynamics, label=lst:SetCube3d_4e] 
if (particleDynamicsSetup==wallCapture){
    //Create verlet dynamics with material aware wall capture
    superParticleSystem.defineDynamics<
      VerletParticleDynamicsMaterialAwareWallCapture<T,PARTICLETYPE>>(
        wall, materialIndicator);
  } else {
    //Create verlet dynamics with material capture
    superParticleSystem.defineDynamics<
      VerletParticleDynamicsMaterialCapture<T,PARTICLETYPE>>(materialIndicator);
  }
  \end{lstlisting}

\subsection{Particle Functions}
In the functions-directory, additional free functions are defined. These functions are callable anywhere in the code.
The first set of files including \path{particleCreatorFunctions.h}, \path{particleCreatorFunctions2D.h}, \path{particleCreatorFunctions3D.h} and \path{particleCreatorHelperFunctions.h} concentrate on functions concerning the creation of particles with different types of surface structures. These functions are therefore called first to create particles in the desired shape. In the example \path{settlingCube3d}, the function \class{addResolvedCuboid} is called and creates a particle in the shape of a cuboid. Also other geometries, like circles in 2D or cylinders in 3D, can be created. All of those functions are implemented in these files.

The file (\path{particleMotionfunctions.h}) concentrates on the main algorithms for solving the equations of motion.
Two functions exist, using different integration-types, velocity Verlet algorithm (\class{velocityVerletIntegration}) or Euler-Integration (\class{eulerIntegration}).
The former function is used in the \class{VerletParticleDynamics} class (Section~\ref{dynamics}) and are therefore called in the main part of the example.
Often two functions for the same calculation exist as they need to match the dimension of a problem. Those are differentiated by partial template specialization.

The \path{particleDynamicsFunctions.h} also contains other important functions to simulate particle flows. Tasks are included in the \path{particleTasks.h} like \eg the \class{couple\_lattice\_to\_particles} or \class{couple\_resolved\_particles\_to\_lattice}.
Both functions are used in the main loop of the example, to realize a two-way-coupling.

To sum up, many of the most important functions for the simulations of the particle flow are implemented in the \path{ParticleDynamicsFunctions.h}.
Other functions, \eg concerning the calculation of rotation of the particle body, are implemented in the \path{bodyMotionfunctions.h}.
The \path{particleIoFunctions.h} contains functions to get the output of the calculation to the console.
It consist of two important functions (\class{printResolvedParticleInfo} an \class{printResovedParticleInfoSimple}), which are used in the \class{getResults}-function.
The \class{getResults}-function is called at the end of the main part of every simulation.

\subsection{Discrete Contact Model for Surface Resolved Particles}
\label{sec:particle:discrete_contact}

In order to simulate particulate flows accurately, it is often necessary to incorporate a contact model. The here used discrete contact model~\cite{marquardt:23} allows for the treatment of particle-particle and particle-wall interactions, enabling the calculation of contact forces and their application to the particles. The discrete contact model consists of several steps that are integrated into the general algorithm. Let's discuss each step in detail:

\begin{itemize}
\item \textit{Rough contact detection during coupling:}
During the coupling stage, where particle information is transferred to the fluid lattice, a rough contact detection mechanism is employed. This step identifies potential contact regions between particles and the fluid. It determines potential contacts by identifying particles that couple to the same cell.
\item \textit{Communication of found contacts:}
Once the potential contact regions are identified, the information regarding the found contact is communicated across all processes.
\item \textit{Correction of contact bounding box:}
To improve the accuracy of the contact treatment, the contact bounding boxes are refined based on the information obtained during the communication step. This correction step helps in precisely defining the contact regions, ensuring that the subsequent calculations consider the actual contacts.
\item \textit{Determination of contact properties:}
With the refined contact bounding boxes, the discrete contact model determines various contact properties. These properties include the contact volume, contact point, contact normal and other relevant parameters.
\item \textit{Calculation of contact force and application to particles:}
Using the contact properties, the contact force is calculated from the parameters determined before and applied to the particles so that it's available when solving the equations of motion.
\item \textit{Removal of empty contact objects (optional):}
After the contact forces have been determined and applied, empty contact objects, which no longer represent an existing contact, may be removed. This step helps in optimizing the computational efficiency by eliminating unnecessary iterations.
\end{itemize}
\par
The usage within OpenLB is exemplified by \textit{dkt2d}. First, we set types for the particle-particle and particle-wall interactions, which define how to the contact is treated. This is represented in Listing~\ref{lst:contact_types}).

\begin{lstlisting}[language=myc++,mathescape=true, caption=Contact types, label=lst:contact_types]
typedef ParticleContactArbitraryFromOverlapVolume<T, DESCRIPTOR::d, true> PARTICLECONTACTTYPE;
typedef WallContactArbitraryFromOverlapVolume<T, DESCRIPTOR::d, true> WALLCONTACTTYPE;
\end{lstlisting}

Additionally, we define the walls for the contact treatment, as shown in Listing~\ref{lst:solid_boundary}). Here, we create a \class{SolidBoundary} from an indicator and specify the minimal and maximal coordinates as well as the material number that represents the wall on he lattice and the material identifier, which defines the walls mechanical properties.

\begin{lstlisting}[language=myc++,mathescape=true, caption=Solid boundaries, label=lst:solid_boundary]
std::vector<SolidBoundary<T, DESCRIPTOR::d>> solidBoundaries;
solidBoundaries.push_back(  SolidBoundary<T, DESCRIPTOR::d>(
    std::make_unique<IndicInverse<T, DESCRIPTOR::d>>(
      cuboid, 
      cuboid.getMin() - 5 * converter.getPhysDeltaX(),
      cuboid.getMax() + 5 * converter.getPhysDeltaX()),
      wallLatticeMaterialNumber, 
      wallContactMaterial));
\end{lstlisting}

Similarly, we set a number that relates the particles to mechanical properties, see Listing~\ref{lst:solid_boundaryb}.
\begin{lstlisting}[language=myc++,mathescape=true, caption=Particle material number, label=lst:solid_boundaryb]
for (std::size_t iP = 0; iP < particleSystem.size(); ++iP) {
    auto particle = particleSystem.get(iP);
    setContactMaterial(particle, particleContactMaterial);
}
\end{lstlisting}

To store contact objects later on, we create an empty \class{ContactContainer} as shown in Listing~\ref{lst:contact_container}).

\begin{lstlisting}[language=myc++,mathescape=true, caption=Contact container, label=lst:contact_container]
ContactContainer<T, PARTICLECONTACTTYPE, WALLCONTACTTYPE> contactContainer;
\end{lstlisting}

By creating a lookup table \class{ContactProperties} that contains constant parameters which solely depend on the material combination in Listing~\ref{lst:contact_properties}), we save computational effort. Setting properties must be done for each material combination separately.

\begin{lstlisting}[language=myc++,mathescape=true, caption=Contact properties, label=lst:contact_properties]
ContactProperties<T, 1> contactProperties;
contactProperties.set(particleContactMaterial,
                    wallContactMaterial,
                        evalEffectiveYoungModulus(
                            youngsModulusParticle, 
                            youngsModulusWall,
                            poissonRatioParticle, 
                            poissonRatioWall),
                    coefficientOfRestitution,
                    coefficientKineticFriction,
                    coefficientStaticFriction);
\end{lstlisting}

Finally, we process the contacts as shown in Listing~\ref{lst:process_contacts}).

\begin{lstlisting}[language=myc++,mathescape=true, caption=Contact processing, label=lst:process_contacts]
processContacts<T, PARTICLETYPE, PARTICLECONTACTTYPE, WALLCONTACTTYPE, ContactProperties<T, 1>>(
    particleSystem, solidBoundaries, 
    contactContainer, contactProperties,
    superGeometry, contactBoxResolutionPerDirection);
\end{lstlisting}

\subsection{Sub-grid Legacy Framework}
\label{sec:subGridLegacy}
In this Section the use of Lagrangian particles with the legacy framework is shown.
Due to similar naming of classes and functions in the new common framework, it is worth noting that all terms are primarily referring to the naming convention used in the legacy framework itself and should not be mixed up with those of the new one.

Similar to the \class{BlockLattice} and \class{SuperLattice} structure, a \class{ParticleSystem3D} and \class{SuperParticleSystem3D} structure exists. In line~2 of Listing~\ref{lst:sPartSystem} the \class{SuperParticleSystem3D} is instantiated, taking a \class{SuperGeometry} as a parameter. In line~4 the \class{SuperParticleSysVtuWriter} is instantiated. It takes the \class{SuperParticleSystem3D}, a filename as \class{string}, and the wanted particle properties as arguments. Calling the function \class{SuperParticleSysVtuWriter.write(int timestep)} creates \path{.vtu} files of the particles positions for the given timestep. These files can be visualized with ParaView.

Line~10 of the listing instantiates an interpolation functor for the fluids velocity, which is used in line~13 during the instantiation of \class{StokesDragForce3D}. Particles need boundary conditions as well. In the listing, the simplest possible material boundary is presented. If a particle moves into a lattice node with material number $2, 4$ or $5$ its velocity is set to $0$ and it is neglected during further computations, its state of activity is set to false. This \class{MaterialBoundary3D} is instantiated in line~16. In lines 18 and 19 the force and boundary condition are added to and stored in the respective lists in the \class{SuperParticleSystem3D}.

The actual number crunching is then performed in line~25 which is positioned in the main loop of the program. The \class{supParticleSystem.simulate(T timeStep);}  function integrates the particle trajectories by \class{timeStep}. Therefore all stored particle forces are computed and summed up. The particles are moved one step according to Newton's laws. Then all stored particle boundary conditions are applied. Parallelization of the particles is achieved automatically.

Results of this simulation are published in \textcite{henn:16}.

\begin{lstlisting}[language=myc++,mathescape=true, caption=Usage of class SuperParticleSystem3D, label=lst:sPartSystem]
  // SuperParticleSystems3D
  SuperParticleSystem3D<T,PARTICLE> supParticleSystem(superGeometry);
  // define which properties are to be written in output data
  SuperParticleSysVtuWriter<T,PARTICLE> supParticleWriter(supParticleSystem, "particles",
    SuperParticleSysVtuWriter<T,PARTICLE>::particleProperties::velocity |
    SuperParticleSysVtuWriter<T,PARTICLE>::particleProperties::mass |
    SuperParticleSysVtuWriter<T,PARTICLE>::particleProperties::radius |
    SuperParticleSysVtuWriter<T,PARTICLE>::particleProperties::active);

  SuperLatticeInterpPhysVelocity3D<T,DESCRIPTOR> getVel(sLattice, converter);

  auto stokesDragForce = make_shared<StokesDragForce3D<T,PARTICLE,DESCRIPTOR>> (getVel, converter);

  // material numbers where particles should be reflected
  std::set<int> boundMaterial = { 2, 4, 5};
  auto materialBoundary = make_shared<MaterialBoundary3D<T, PARTICLE>> (superGeometry, boundMaterial);

  supParticleSystem.addForce(stokesDragForce);
  supParticleSystem.addBoundary(materialBoundary);
  supParticleSystem.setOverlap(2. * converter.getPhysDeltaX());

\* ... *\

main loop {
   supParticleSystem.simulate(converter.getPhysDeltaT());
}
\end{lstlisting}

\subsubsection{Interpolation of Fluid Velocity}
\label{subsec:interpolation}
As the particle position $\bm{X}: I \to \Omega$ moves in the continuous domain $\Omega$ and information on the fluid velocity can only be computed on lattice nodes $\bm{x}_i \in \Omega_h$, interpolation of the fluid velocity is necessary every time fluid-particle forces are computed. Let $\bm{u}^F_i = \bm{u}^F(\bm{x}_i)$ be the computed solution of the Navier--Stokes Equation at lattice nodes $\bm{x}_i$. Let $p\in P_n$ be the interpolating polynomial of order $n$ with $p(\bm{x}_i) = \bm{u}^F_i$ and $\overline{\left(\bm{x}_0, \ldots \bm{x}_n\right)}$ the smallest interval containing all points in the brackets. Furthermore, let $C^{n}\left[\bm{a}, \bm{b}\right]$ be the vector space of continuous functions that have continuous first $n$ derivatives in $\left[\bm{a}, \bm{b}\right]$. Then the interpolation error of the polynomial interpolation is stated by the following theorem.
\begin{theorem}[Interpolation error]
Let $\bm{u}\in C^{n+1}\left[\bm{a}, \bm{b}\right]$, $\bm{a}, \bm{b} \in \Omega$. Then for every $\bm{x} \in \left[\bm{a}, \bm{b}\right]$ there exists one $\widehat{\bm{x}}\in \overline{\left(\bm{x}_0, \ldots \bm{x}_n, \bm{x}\right)}$, such that
       \begin{equation}
         \bm{u}^F(\bm{x}) - p_n(\bm{x}) = \frac{d_{\bm{x}}^{n+1}\bm{u}^{F}(\widehat{\bm{x}})}{(n+1)!} \prod^n_{j=0}(\bm{x}-\bm{x}_j)
       \end{equation}
       holds.
     \end{theorem}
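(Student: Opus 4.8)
The plan is to adapt the classical Rolle's-theorem argument for the polynomial interpolation error. Since the estimate is to hold componentwise, it suffices to treat $\bm{u}^F$ as scalar-valued; and since Rolle's theorem is one-dimensional, I would reduce to functions of a single real variable by restricting attention to the affine segment containing the nodes $\bm{x}_0,\ldots,\bm{x}_n$ together with the evaluation point $\bm{x}$ (the interpolation stencils used here are line-based, so this costs no generality). All quantities below are then understood as functions of a real parameter $t$ ranging over $\overline{(\bm{x}_0,\ldots,\bm{x}_n,\bm{x})}$.

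First I would dispose of the degenerate case: if $\bm{x}$ equals some node $\bm{x}_j$, then $p_n(\bm{x})=\bm{u}^F(\bm{x})$ by the interpolation property and $\prod_{j=0}^{n}(\bm{x}-\bm{x}_j)=0$, so both sides of the asserted identity vanish and any $\widehat{\bm{x}}$ in the interval will do. Otherwise I fix $\bm{x}\notin\{\bm{x}_0,\ldots,\bm{x}_n\}$ and set
\[
  \lambda:=\frac{\bm{u}^F(\bm{x})-p_n(\bm{x})}{\prod_{j=0}^{n}(\bm{x}-\bm{x}_j)},
\]
which is well defined since the denominator is nonzero. Then I introduce the auxiliary function
\[
  g(t):=\bm{u}^F(t)-p_n(t)-\lambda\prod_{j=0}^{n}(t-\bm{x}_j),
\]
which belongs to $C^{n+1}$ on the segment (the last two terms are polynomials) and, by the choice of $\lambda$, vanishes at the $n+2$ pairwise distinct points $\bm{x}_0,\ldots,\bm{x}_n,\bm{x}$.

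Next I would apply Rolle's theorem iteratively: between consecutive zeros of $g$ lies a zero of $g'$, yielding at least $n+1$ zeros of $g'$; repeating, $g''$ has at least $n$ zeros, and after $n+1$ steps $g^{(n+1)}$ has at least one zero $\widehat{\bm{x}}\in\overline{(\bm{x}_0,\ldots,\bm{x}_n,\bm{x})}$. It remains to identify $g^{(n+1)}$: the polynomial $p_n$ has degree at most $n$, so $p_n^{(n+1)}\equiv 0$; the product $\prod_{j=0}^{n}(t-\bm{x}_j)$ is monic of degree $n+1$, so its $(n+1)$-st derivative is the constant $(n+1)!$; hence $g^{(n+1)}(t)=d_{\bm{x}}^{n+1}\bm{u}^F(t)-\lambda\,(n+1)!$. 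Evaluating at $\widehat{\bm{x}}$ and using $g^{(n+1)}(\widehat{\bm{x}})=0$ gives $\lambda=d_{\bm{x}}^{n+1}\bm{u}^F(\widehat{\bm{x}})/(n+1)!$; substituting this back into the definition of $\lambda$ produces precisely the claimed formula.

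The step demanding the most care is not any computation but the bookkeeping around the vector notation: one has to legitimately reduce the statement — which is phrased with bold points in $\Omega$ — to a genuinely one-dimensional situation before Rolle's theorem can be invoked, and one must make sure the nodes $\bm{x}_j$ are pairwise distinct so that $g$ really has $n+2$ distinct zeros to start the iteration from. For interpolation on lattice nodes both points are automatic, so beyond this the argument is entirely routine.
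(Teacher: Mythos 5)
Your proof is correct: it is the classical auxiliary-function-plus-iterated-Rolle argument for the Lagrange interpolation remainder, including the right handling of the degenerate case $\bm{x}\in\{\bm{x}_0,\ldots,\bm{x}_n\}$ and the (legitimate) reduction of the vector notation to a scalar, one-dimensional setting. The paper itself gives no proof at all — it simply cites Rannacher, Satz 2.3 — and the standard argument behind that citation is precisely the one you reconstructed, so there is nothing to compare beyond noting that you supplied what the paper delegates to the reference.
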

    \begin{proof}
      See \textcite[Satz 2.3]{rannacher:06}.
    \end{proof}
    Using linear ($n=1$) interpolation for the fluid velocity between two neighboring lattice nodes $\bm{a}=\bm{x}_0\in\Omega_h, \bm{b}=\bm{x}_1\in\Omega_h, \|\bm{x}_1-\bm{x}_0\|_2 = h$ clearly the following holds
    \begin{align}
      \label{eq:linInter}
      f( \bm{x} ) - p_1( \bm{x} ) &= \frac12 d_{\bm{x}}^2 \bm{u}^{F}( \widehat{\bm{x}} ) (\bm{x}-\bm{x}_0)(\bm{x}-\bm{x}_1) \\
      &\leq \frac12 d_{\bm{x}}^2 \bm{u}^{F}(\widehat{\bm{x}}) h^2~
    \end{align}
    and the approximation error of the linear interpolation is of order $\bigO{h^2}$. In the following we give reason why this order of interpolation is sufficient.

Lets assume there exists an ideal error law of the form
\[
\| \bm{u}_i^F - \bm{u}_i^{F^*}\|_{L^2(\Omega_h)} = c h^{\alpha}~,
\]
for the discrete solution $\bm{u}^F_h$ obtained by an LBM with lattice spacing $h$ and the analytic solution $\bm{u}^{F^*}$. Then $\alpha \in \Rplus$ is the to be determined order of convergence. We further define the relative error
\[
\Err_h = \frac{\|\bm{u}^F_h - \bm{u}^{F^*}\|_{L^2(\Omega_h)}}{\|\bm{u}^{F^*}\|_{L^2(\Omega_h)}}~.
\]
The ratio of the error laws of two distinct lattice spacings $h_i$ and $h_j$ forms the EOC as
\begin{equation}
\label{eq:eoc}
\EOC_{i,j} = \frac{\ln(\Err_{h_i} / \Err_{h_j})}{\ln(h_i / h_j)}~.
\end{equation}
With this \textcite[Chapter 2.3]{krause:10b} determines an of $\EOC\approx 2$ for the discrete solution towards the analytic solution of a stationary flow in the unit cube governed by the incompressible NSE. Therefore the order of converge of the fluid velocity obtained by an LBM can be assumed to be $\bigO{h^2}$. This conclusion is backed up by the theoretical results obtained by \cite{caiazzo:13}.
This leads to the assumption that, each interpolation scheme of higher order than $2$ would not be exhausted as the error of the incoming data is too large.

\begin{figure}
  \centering
    \includegraphics[scale=1]{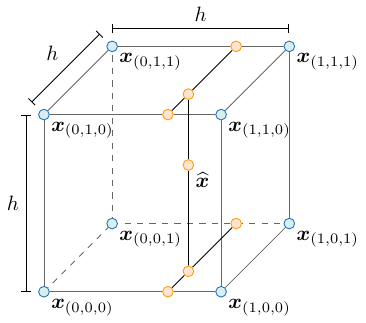}
  \caption{Trilinear interpolation.}
  \label{fig:interpolation}
\end{figure}
The interpolation is implemented as a trilinear interpolation using the eight nodes surrounding the particle. Let the point of interpolation $\widehat{\bm{x}} \in[x_{(0,0,0)}, x_{(1,1,1)}]$ be in the cube spanned by the lattice nodes $\bm{x}_{(0,0,0)}$ and $\bm{x}_{(1,1,1)}$, see Figure~\ref{fig:interpolation} for an illustration. We will denote by
\begin{align}
  \bm{d} = (d_0,d_1,d_2)^T &= \widehat{\bm{x}}-\bm{x}_{(0,0,0)} ~,
\end{align}
the distance of the particle to the next smaller lattice node. The fluid velocities at the eight corners are named accordingly $\bm{u}_{(i,j,k)},\ i,j,k\in \{0,1\}$. The trilinear interpolation is executed by three consecutive linear interpolations in the three different space directions. First we interpolate along the $x$-axis
\begin{align}
  u_{(d,0,0)} &= u_{(0,0,0)}(h-d_0) + u_{(1,0,0)}d_0 ~,\\
  u_{(d,1,0)} &= u_{(0,1,0)}(h-d_0) + u_{(1,1,0)}d_0 ~,\\
  u_{(d,0,1)} &= u_{(0,0,1)}(h-d_0) + u_{(1,0,1)}d_0 ~,\\
  u_{(d,1,1)} &= u_{(0,1,1)}(h-d_0) + u_{(1,1,1)}d_0 ~,
\end{align}
followed by interpolation along the $y$-axis
\begin{align}
  u_{(d,d,0)} &= u_{(d,0,0)}(h-d_1) + u_{(d,1,0)}d_1 ~, \\
  u_{(d,d,1)} &= u_{(d,0,1)}(h-d_1) + u_{(d,1,1)}d_1 ~,
\end{align}
and finally in direction of the $z$-axis
\begin{align}
 \bm{u}(\widehat{\bm{x}}) = u_{(d,d,d)}  &= u_{(d,d,0)}(h-d_2) + u_{(d,d,1)}d_2 ~.
\end{align}

\subsubsection{Class \lstinline{SuperParticleSystem3D}}
\begin{sloppypar}
The implementation of the particle phase follows an hierarchical ansatz, similar to the \class{Cell} $\to$ \class{BlockLattice3D} $\to$ \class{SuperLattice} ansatz used for the implementation of the LBM. The equivalent classes in the context of Lagrangian particles are \class{Particle3D} $\to$ \class{ParticleSystem3D} $\to$ \class{SuperParticleSystem3D}. The class \class{Particle3D} allocates memory for the variables of one single particle, such as its position, velocity, mass, radius and the force acting on it. It also provides the function \class{bool getActive()}, which returns the active state of the particle. Active particles' positions are updated during the simulation, in contrast to non-active particles, which are only used for particle-particle interaction. The class \class{Particle3D} is intended to be inherited from, in order to provide additional properties, such as electric or magnetic charge. The particles in the domain of a specific \class{BlockLattice3D} are combined in the class \class{ParticleSystem3D}. Finally the class \class{SuperParticleSystem3D} combines all \class{ParticleSystem3D}s, and handles the transfer of particles between them.

The concept of the class \class{SuperParticleSystem3D} is to provide an easily adaptable framework for simulation of a large number of particles arranged in and interacting with a fluid. In this context \emph{easily adaptable} means that simulated forces and boundary conditions are implemented in a modular manner, such that they are easily exchangeable. Development of new forces and boundary conditions can be readily done by inheritance of provided base classes. Particle-particle interaction can be activated if necessary and deactivated to decrease simulation time. The contact detection algorithm is interchangeable. This section introduces the \class{SuperParticleSystem3D} and the mentioned properties in more detail.

The class \class{SuperParticleSystem3D} is initialized by a call to the constructor simultaneously on all PUs. \\

\noindent\begin{minipage}{\linewidth}
\begin{lstlisting}[style=intext]
SuperParticleSystem3D(CuboidGeometry3D<T>& cuboidGeometry, LoadBalancer<T>& loadBalancer, SuperGeometry<T,3>& superGeometry);
\end{lstlisting}
\end{minipage}
 During the construction each PU instantiates one \class{ParticleSystem3D} for each local cuboid. Subsequently for each \class{ParticleSystem3D} a list of the ranks of PUs holding neighboring cuboids is created.

Particles can be added to the \class{SuperParticleSystem3D} by a call to one of the \class{addParticle()} functions.\\

\noindent\begin{minipage}{\linewidth}
\begin{lstlisting}[style=intext]
/// Add a Particle to SuperParticleSystem
void addParticle(PARTICLETYPE<T> &p);
/// Add a number of identical Particles equally distributed in a given IndicatorF3D
void addParticle(IndicatorF3D<T>& ind, T mas, T rad, int no=1, std::vector<T> vel={0.,0.,0.});
/// Add a number of identical Particles equally distributed in a given Material Number
void addParticle(std::set<int>  material, T mas, T rad, int no=1, std::vector<T> vel={0.,0.,0.});
/// Add Particles form a File. Save using saveToFile(std::string name)
void addParticlesFromFile(std::string name, T mass, T radius);
\end{lstlisting}
\end{minipage}

Currently there are four implementations of this class. The first adds single predefined particles, the second and third add a given number of equally distributed particles of the same mass and radius in an area that can be defined by either a set of material numbers or an indicator function. The initial particle velocity can be set optionally. Finally particles can be added from an external file, containing their positions. In all cases the assignment to the correct \class{ParticleSystem3D} is carried out internally.

Particle forces and boundaries are implemented by the base classes \class{Force3D} and \class{Boundary3D}. \\

\noindent\begin{minipage}{\linewidth}
\begin{lstlisting}[style=intext]
template<typename T, template<typename U> class PARTICLETYPE>
class Force3D {
public:
  Force3D();
  virtual void applyForce(typename std::deque<PARTICLETYPE<T> >::iterator p, int pInt, ParticleSystem3D<T, PARTICLETYPE>& psSys)=0;
}
\end{lstlisting}
\end{minipage}

Both classes are intended to be derived from in order to implement force and boundary specializations. 
The key function in both classes are \class{applyForce()} and \class{applyBoundary()}, which are called during each timestep of the main LBM loop. \class{Force3D} and \class{Boundary3D} specialisations are added to the \class{SuperParticleSystem3D} by passing a pointer to a class instantiation via a call to the respective function.\\

\noindent\begin{minipage}{\linewidth}
\begin{lstlisting}[style=intext]
/// Add a force to system
void addForce(std::shared_ptr<Force3D<T, PARTICLETYPE> > f);
/// Add a boundary to system
void addBoundary(std::shared_ptr<Boundary3D<T, PARTICLETYPE> > b);
\end{lstlisting}
\end{minipage}

Both functions add the passed pointer to a list of forces and boundaries, which will be looped over during the simulation step. If necessary a contact detection algorithm can be added. 
\\

\noindent\begin{minipage}{\linewidth}
\begin{lstlisting}[style=intext]
/// Set contact detection algorithm for particle-particle contact.
void setContactDetection(ContactDetection<T, PARTICLETYPE>& contactDetection);
\end{lstlisting}
\end{minipage}

A force based on contact between two particles is the contact force like described in the theory of Hertz and others and is named here as
\class{HertzMindlinDeresiewicz3D}.
\\

\noindent\begin{minipage}{\linewidth}
\begin{lstlisting}[style=intext]
  auto hertz = make_shared < HertzMindlinDeresiewicz3D<T, PARTICLE, DESCRIPTOR>
               > (0.0003e9, 0.0003e9, 0.499, 0.499);
  spSys.addForce(hertz);
\end{lstlisting}
\end{minipage}

Finally one timestep is computed by a call to the function \class{simulate()}.\\

\noindent\begin{minipage}{\linewidth}
\begin{lstlisting}[style=intext]
template<typename T, template<typename U> class PARTICLETYPE>
void SuperParticleSystem3D<T, PARTICLETYPE>::simulate(T dT)
{
  for (auto pS : _pSystems) {
    pS->_contactDetection->sort();
    pS->simulate(dT);
    pS->computeBoundary();
  }
  updateParticleDistribution();
}
\end{lstlisting}
\end{minipage}
This function contains a loop over the local \class{ParticleSystem3D}s calling the local sorting algorithm and the functions \class{ParticleSystem3D::simulate()} and \class{ParticleSystem3D::computeBoundary()}. The sorting algorithm determines potential contact between particles according to the set \class{ContactDetection}.
\noindent\begin{minipage}{\linewidth}
\begin{lstlisting}[style=intext]
  inline void simulate(T dT) {
    _pSys->computeForce();
    _pSys->explicitEuler(dT);
  }
\end{lstlisting}
\end{minipage}
The inline function \class{ParticleSystem3D::simulate()} first calls the local function \class{ParticleSystem3D::computeForce()}.

\noindent
\begin{lstlisting}[style=intext]
template<typename T, template<typename U> class PARTICLETYPE>
void ParticleSystem3D<T, PARTICLETYPE>::computeForce()
{
  typename std::deque<PARTICLETYPE<T> >::iterator p;
  int pInt = 0;
  for (p = _particles.begin(); p != _particles.end(); ++p, ++pInt) {
    if (p->getActive()) {
      p->resetForce();
      for (auto f : _forces) {
        f->applyForce(p, pInt, *this);
      }
    }
  }
}
\end{lstlisting}
This function consists of a loop over all particles stored by the calling \linebreak \class{ParticleSystem3D}. If the particle state is active, its force variable is reset to zero. Then the value computed by each previously added particle force is added to the particle's force variable. Finally, the particle velocity and position is updated by one step of an integration method.

Returning to the function \class{SuperParticleSystem3D::simulate(T dT)} the next command in the loop is a call of the function \class{ParticleSystem3D::computeBoundary()}, which has the same structure as the \class{ParticleSystem3D::computeForce()}. After executing the loop, the function \class{updateParticleDistribution()} is called, which redistributes the particles over the \class{ParticleSystem3D}s according to their updated position. A detailed description of this function is provided at the end of the next section.
\end{sloppypar}

\subsubsection{Implementation of the \emph{Communication Optimal Strategy}}
\label{sec:implParPar}
\begin{sloppypar}
The \emph{communication optimal strategy} is implemented in the function \class{SuperParticleSystem3D::updateParticleDistribution()} already mentioned above. The function has to be called after every update of the particle positions, in order to check if the particle remained in its current cuboid, as otherwise segmentation faults may occur during the computation of particle forces. The transfer  is implemented using nonblocking operations of the MPI library. \\

\noindent
\begin{lstlisting}[style=intext]
template<typename T, template<typename U> class PARTICLETYPE>
void SuperParticleSystem3D<T, PARTICLETYPE>::updateParticleDistribution()
{
  /* Find particles on wrong cuboid, store in relocate and delete */
  //maps particles to their new rank
  _relocate.clear();
  for (unsigned int pS = 0; pS < _pSystems.size(); ++pS) {
    auto par = _pSystems[pS]->_particles.begin();
    while (par != _pSystems[pS]->_particles.end()) {
      //Check if particle is still in his cuboid
      if (checkCuboid(*par, 0)) {
        par++
      }
      //If not --> find new cuboid
      else {
        findCuboid(*par, 0);
        _relocate.insert(
          std::make_pair(this->_loadBalancer.rank(par->getCuboid()), (*par)));
        par = _pSystems[pS]->_particles.erase(par);
      }
    }
  }
\end{lstlisting}
The function begins with with two nested loops. The outer loop is over all local \class{ParticleSystem3D}s, the inner loop over the \class{Particle3D}s of the current \class{ParticleSystem3D}. Each particle is checked if it remained in its cuboid during the last update, by the function \class{checkCuboid(*par, 0)}. The first parameter of \class{checkCuboid(*par, 0)} is the particle to be tested and the second parameter is an optional spatial extension of the cuboid. If the function returns \class{true} the counter is incremented and the next particle is tested. If the function returns \class{false} the particle together with the rank of its new cuboid are copied to the \class{std::multimap<int, PARTICLETYPE<T> > _relocate} for future treatment and removed from the \class{std::deque<PARTICLETYPE<T> > _particles} of particles. \\

\noindent\begin{minipage}{\linewidth}
\begin{lstlisting}[style=intext]
  /* Communicate number of Particles per cuboid*/
  singleton::MpiNonBlockingHelper mpiNbHelper;

  /* Serialise particles */
  _send_buffer.clear();
  T buffer[PARTICLETYPE<T>::serialPartSize];
  for (auto rN : _relocate) {
    rN.second.serialize(buffer);
    _send_buffer[rN.first].insert(_send_buffer[rN.first].end(), buffer, buffer+PARTICLETYPE<T>::serialPartSize);
  }
\end{lstlisting}
\end{minipage}

The function continues by instantiating the class \class{singleton::MpiNonBlockingHelper}, which handles memory for \class{MPI_Request} and \class{MPI_Status} messages. Then the particles buffered in \class{_relocate} are serialized. Meaning their data is written consecutively in memory and stored in a buffer \class{std::map<int, std::vector<double> > _send_buffer} in preparation for the transfer.

\noindent\begin{minipage}{\linewidth}
\begin{lstlisting}[style=intext]
  /*Send Particles */
  int noSends = 0;
  for (auto rN : _rankNeighbours) {
    if (_send_buffer[rN].size() > 0) {
      ++noSends;
    }
  }
  mpiNbHelper.allocate(noSends);
  for (auto rN : _rankNeighbours) {
    if (_send_buffer[rN].size() > 0) {
      singleton::mpi().iSend<double>(&_send_buffer[rN][0], _relocate.count(rN)*PARTICLETYPE<T>::serialPartSize, rN, &mpiNbHelper.get_mpiRequest()[k++], 1);
    }
  }
  singleton::mpi().barrier();
\end{lstlisting}
\end{minipage}

To find the number of send operations a loop over the ranks of neighboring cuboids is carried out, increasing the variable \class{count} each time data for a specific rank is available. Then the appropriate number of \class{MPI_Requests} is allocated. Finally the data is sent to the respective PUs via a nonblocking \class{MPI_Isend()} and all PUs wait until the send process is finished on each PU.

\noindent
\begin{lstlisting}[style=intext]
  /*Receive and add particles*/
  int flag = 0;
  MPI_Iprobe(MPI_ANY_SOURCE, 1, MPI_COMM_WORLD, &flag, MPI_STATUS_IGNORE);
  if (flag) {
    for (auto rN : _rankNeighbours) {
      MPI_Status status;
      int flag = 0;
      MPI_Iprobe(rN, 1, MPI_COMM_WORLD, &flag, &status);
      if (flag) {
        int amount = 0;
        MPI_Get_count(&status, MPI_DOUBLE, &number_amount);
        T recv_buffer[amount];
        singleton::mpi().receive(recv_buffer, amount, rN, 1);
        for (int iPar=0; iPar<amount; iPar+=PARTICLETYPE<T>::serialPartSize) {
          PARTICLETYPE<T> p;
          p.unserialize(&recv_buffer[iPar]);
          if (singleton::mpi().getRank() == this->_loadBalancer.rank(p.getCuboid())) {
            _pSystems[this->_loadBalancer.loc(p.getCuboid())]->addParticle(p);
          }
        }
      }
    }
  }
  if (noSends > 0) {
    singleton::mpi().waitAll(mpiNbHelper);
  }
}
\end{lstlisting}
On the receiving side the nonblocking routine \class{MPI_Iprobe()} checks whether an incoming transmission is available. The constant \class{MPI_ANY_SOURCE} indicates that messages from all ranks are accepted. If a message is awaiting reception the flag \class{flag} is set to a nonzero value and the following switch will be true. This query is not necessary, but the following loop can be entirely skipped if no particles are transferred, which is expected to be the case most of the time.

The subsequent loop tests for each single neighboring rank if a message awaits reception. If \class{true} the number of send \class{MPI_Double}s is read from the \class{status} variable via an \class{MPI_Get_count()}. The appropriate memory is allocated and the message is received by wrapped call to \class{MPI_Recv()}, and written consecutively. Then new \class{Particle3D}s are instantiated, initialized with the received data and assigned to the respective \class{ParticleSystem3D} on the updated PU. Finally, a call to \class{MPI_Waitall()} makes sure, that all \class{MPI_Isend()}s have been processed by the recipients.
\end{sloppypar}

\begin{figure}[ht]
\centering
  \includegraphics[scale=1]{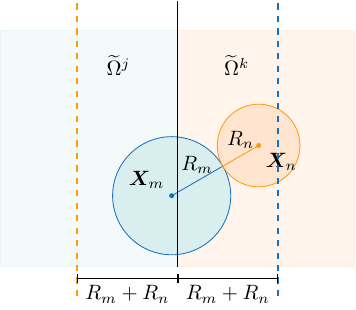}
\caption[Particle overlap.]{Overlap of the particle domains. Particles within a distance to of the sum of the two largest radii to a neighbor cuboid have to be transferred to this specific neighbor cuboid.}
\label{fig:particleOverlap}
\end{figure}
\subsubsection{Shadow Particles}
If particle collisions are considered, it may happen that particles $P_m$ with center $\bm{X}_m\in \widetilde{\Omega}^j$ collide with particles $P_n$ with center $\bm{X}_n\in \widetilde{\Omega}^k$ in a different cuboid, as illustrated in Figure~\ref{fig:particleOverlap}. Therefore $P_n$ has to be known on $\bm{X}_m\in \widetilde{\Omega}^j$ and so-called shadow particles are introduced. Shadow particles are static particles, whose positions and velocities are not explicitly computed during the update step. Particle collision across cuboid boundaries can only occur if the distance $d=\|\bm{X}_n-\bm{X}_m\|_2$ between the participating particles is less then the sum of the two largest radii of all particles in the system. Hence the width of the particle overlap has to be at least the sum of the two largest particle radii and all particles within this overlap have to be  transferred to the neighbor cuboid after each update of the particle position by an additional communication step similar to the one introduced above.


\chapter{Initial and Boundary Conditions}  

Each example located in \path{examples/} is typically structured along several function definitions and a \class{main} part. 
The function \class{setBoundaryValues} usually sets initial values for density and velocity (if not yet done in \class{prepareLattice}). 
Boundary values can be refreshed at certain time steps. 
In some applications, this function may be missing or empty, if there is no temporal change in the boundary values.
An exemplary implementation can be found in \path{examples/laminar/poiseuille2d}, where a smooth start-up is used at the velocity inflow boundary as follows: 
\begin{lstlisting}[language=myc++]
// No of time steps for smooth start-up
int iTmaxStart = converter.getLatticeTime( maxPhysT*0.4 );
int iTupdate = 5;

if (iT%iTupdate == 0 && iT <= iTmaxStart) {

    // Smooth start curve, polynomial
    PolynomialStartScale<T,T> StartScale(iTmaxStart, T(1));
    // Creates and sets the Poiseuille // inflow profile using functors
    T iTvec[1] = {T( iT )};
    T frac[1] = {};
    StartScale( frac,iTvec ); 
    T maxVelocity = converter.getCharLatticeVelocity()*3./2.*frac;
    T distance2Wall = L/2.;
    Poiseuille2D<T>
        poiseuilleU(superGeometry, 3,
        maxVelocity, distance2Wall); 
    sLattice.defineU(superGeometry, 3,
        poiseuilleU);
}
\end{lstlisting}

\section{Define Boundary Method}
\label{sec:defineBoundaryMethod}

There are two different types of boundaries, namely wet-node boundaries (on the nodes) and link-wise boundaries (in between the nodes).
Examples of wet-node boundaries are \class{LocalVelocity}, \class{LocalPressure}, \class{InterpolatedPressure}, \class{InterpolatedVelocity}, 
\class{Zou-HePressure}, \class{ZouHeVelocity} and \class{AdvectionDiffusion}.
Examples of link-wise boundaries are \class{Bouzidi}, \class{BouzidiVelocity}, \class{BouzidiZeroVelocity} (+\class{Interpolation}) and \class{YuPostProcessor} (+\class{Interpolation}).
The boundary declarations are usually designed as in the following code snippets: 
\begin{lstlisting}[language=myc++]
setBounceBackBoundary(superLattice, superGeometry, materialNumber);
setLocalVelocityBoundary<T, DESCRIPTOR>(superLattice, omega, superGeometry, materialNumber);
setBouzidiBoundary(superLattice, superGeometry, materialNumber, indicator);
\end{lstlisting}

OpenLB offers implementations of several methods for approximating macroscopic boundary conditions. 
Some of the frequently used boundary methods are listed below and described in terms of locality of operations and general applicability. 
This list does not raise a claim to completeness. 
OpenLB includes the following boundary methods, which are callable with \class{set...Boundary} and the respective arguments. 
\begin{enumerate}
    \item \textbf{\class{LocalPressure}, \class{LocalVelocity}, \class{LocalConvection}}, etc.:
    \begin{itemize}
        \setlength\itemsep{-0.5em}
        \item local
        \item wet-node
        \item \textit{Application:} boundary for fluid flows
        \item regularized boundary, re-computes all $f_{i}$ from local momenta reconstructing off-equilibrium parts
        \item stable in most regimes
        \item implemented according to Latt and Chopard~\cite{latt2008straight}
    \end{itemize}
    \item \textbf{\class{InterpolatedPressure}, \class{Interpolated}}, etc.:
    \begin{itemize}
    \setlength\itemsep{-0.5em}
        \item non-local
        \item wet-node
        \item \textit{Application:} boundary for fluid flows (stable for higher Reynolds numbers)
        \item re-computes all $f_{i}$ using a finite-difference scheme over adjacent cells for the velocity gradient
        \item implemented according to Skordos \cite{skordos:93}
    \end{itemize}
    \item \textbf{\class{setZouPressure}, \class{setZouHeVelocity}}, etc.:
    \begin{itemize}
    \setlength\itemsep{-0.5em}
        \item local
        \item wet-node
        \item \textit{Application:} boundary for fluid flows (low Reynolds number)
        \item computes missing $f_{i}$ by applying symmetry conditions on the off-equilibrium part, enforcing velocity and pressure on the equilibrium
        \item highly accurate, but less stable
        \item implemented according to Zou and He~\cite{zou-he:97}
    \end{itemize}
    \item \textbf{\class{Bouzidi}, \class{BouzidiVelocity}}, etc.:
    \begin{itemize}
        \setlength\itemsep{-0.5em}
        \item non-local
        \item link-wise
        \item \textit{Application:} boundary for fluid flows (for curved-boundaries)
        \item computes missing $f_{i}$ using a bounce-back rule which takes the distance between node and boundary into account
        \item second order accuracy in space
        \item implemented according to Bouzidi \textit{et al.}~\cite{bouzidi:01}
    \end{itemize}
    \item \textbf{\class{AdvectionDiffusionTemperature}}, etc.:
    \begin{itemize}
    \setlength\itemsep{-0.5em}
        \item local and non-local
        \item wet-node
        \item \textit{Application:} boundary for advection-diffusion problems (temperature, particle, ..)
        \item various boundary conditions for advection diffusion problems
        \item type of implementation differs for each condition
        \item for details see \eg Trunk \textit{et al.}~\cite{trunk:16}
    \end{itemize}
\end{enumerate}
The typical macroscopic effects that can be obtained with the wet-node-method are for example velocity and pressure Dirichlet boundary conditions and for the link-wise methods for example Dirichlet boundary conditions for a specific velocity, zero-velocity, convection and slip walls. 
With the \class{advectionDiffusion}-methods macroscopic boundary conditions in terms of Dirichlet boundaries with respect to temperature, convection, zero-distribution, or the external-field, are recoverable.

\subsection{Wet-node Method}
With the wet-node approach, for example Dirichlet-type boundaries for the velocity can be obtained.
On a macroscopic level, this is used for example at inflow boundaries where the values for the inflow velocity are given. 
\begin{lstlisting}[language=myc++]
setLocalVelocityBoundary<T, DESCRIPTOR> (superLattice, omega, superGeometry, materialNumber);
\end{lstlisting}
Moreover, the wet-node approach is applied for macroscopic boundaries for the pressure. 
This is used for example at outflow boundaries and fixes the values for the pressure in terms of a Dirichlet condition.

\begin{lstlisting}[language=myc++]
setLocalPressureBoundary<T, DESCRIPTOR> (superLattice, omega, superGeometry, materialNumber);
\end{lstlisting}

\subsection{Link-wise Method}
The link-wise method is applicable to recover macroscopic convection, which is used for outflow boundaries and approximates 
\begin{align}
    \frac{\partial u}{\partial t}+ u_{\text{average}}\frac{\partial u}{\partial n} = 0 ~, 
\end{align}
similar to a Neumann-type boundary for the velocity.
\begin{lstlisting}[language=myc++]
setInterpolatedConvectionBoundary<T, DESCRIPTOR> (superLattice, omega, superGeometry, materialNumber, averageVelocity);
\end{lstlisting}
Additionally, a slip boundary can be constructed that is used for solid boundaries and reflects outgoing $f_{i}$. 
The latter has the effect of zero velocity normal to the boundary and free flow tangential to the boundary.
\begin{lstlisting}[language=myc++]
setSlipBoundary<T, DESCRIPTOR> (superLattice, superGeometry, materialNumber);
\end{lstlisting}
Further, a macroscopic velocity used for curved boundaries with fixed values can be superimposed. 
It is realized as a Dirichlet boundary for the velocity and considers the boundary shape via an indicator functor. 
There are multiple \class{bulkMaterials} possible.
\begin{lstlisting}[language=myc++]
setBouzidiBoundary<T, DESCRIPTOR,BouzidiVelocityPostProcessor> (superLattice, superGeometry, material , indicator, bulkMaterialsList);
AnalyticalConstF3D<T> u;
setBouzidiVelocity(superLattice, superGeometry, material, u);
\end{lstlisting}
As a special case, the zero velocity boundary for curved boundaries which fixes the velocity to zero is callable with simplified syntax. 
It is a Dirichlet-type boundary for velocity and considers the shape by indicator. 
Here are also multiple \class{bulkMaterials} possible.
\begin{lstlisting}[language=myc++]
setBouzidiBoundary<T, DESCRIPTOR>(superLattice, superGeometry, material, indicator, bulkMaterialsList);
\end{lstlisting}

\subsection{AdvectionDiffusionBoundary}

An exemplary macroscopic effect of the \class{advectionDiffusion} boundary methods is the Dirichlet boundary condition for the temperature, which is used for inflow or wall boundaries and fixes the values for the temperature. 
\begin{lstlisting}[language=myc++]
setAdvectionDiffusionTemperatureBoundary<T, DESCRIPTOR> (superLattice, omega, superGeometry, materialNumber);
\end{lstlisting}
In addition, convection can be modeled, which is used for outflow boundaries and interpolates incoming $f_{i}$ from neighbors. 
It is similar to a Neumann-type temperature boundary.
\begin{lstlisting}[language=myc++]
setAdvectionDiffusionConvectionBoundary<T, DESCRIPTOR> (superLattice, superGeometry, materialNumber);
\end{lstlisting}
Further, the zero-distribution boundary is used for "sticky" boundaries and sets the incoming $f_{i}$ to zero, such that particles touching the boundary are \textit{trapped}. 
\begin{lstlisting}[language=myc++]
setZeroDistributionBoundary<T, DESCRIPTOR> (superLattice, superGeometry, materialNumber);
\end{lstlisting}
At last also macroscopic external fields used for example in particle simulations can be supplied with boundary conditions. 
The following boundary method provides data on the boundary for particle calculations (velocity gradient).
\begin{lstlisting}[language=myc++]
setExtFieldBoundary<T, DESCRIPTOR, FIELD >(superLattice, superGeometry, materialNumber);
\end{lstlisting}

\subsection{Additional Options}
An additional option to supply boundary conditions in OpenLB is to choose LBM collision dynamics also at the boundary. 
One can add a specific dynamics object to a boundary material number, e.g.: \class{BulkDynamics} which executes the collision step as in a fluid node (\eg with \class{BGKdynamics}).
\begin{lstlisting}[language=myc++]
    using BulkDynamics = BGKdynamics<T,DESCRIPTOR>;
    [...]
    setLocalVelocityBoundary<T,DESCRIPTOR, BulkDynamics>(superLattice,omega, superGeometry,materialNumber);
\end{lstlisting}
Another option is to choose an implementation via boundary dynamics.
One can add boundary methods as dynamics, \eg \class{BounceBack} (zero velocity), \class{BounceBackVelocity} (prescribes a nonzero velocity).
\begin{lstlisting}[language=myc++]
    superLattice.defineDynamics<BounceBack> (superGeometry, materialNumber);
    superLattice.defineDynamics <BounceBackVelocity>(superGeometry, materialNumber);
\end{lstlisting}

\section{Define Initial Conditions}
\label{sec:defineInitialConditions}

\begin{lstlisting}[language=myc++]
    sLattice.defineRhoU(superGeometry, materialNumber, analyticalFunctor, analyticalFunctor);
\end{lstlisting}
For each each material number, the density (usually $\rho =1$), the velocity (usually $u=0$), and the distribution functions should be initialized. 
The functions expect lattice values, therefore physical values have to be converted, e.g.: via the function \class{getLatticeDensity(density)} of the \class{UnitConverter}.
Note that instead of the \class{materialNumber} argument, any discrete indicator function can be used. 
Exemplary initializations are given below. 
\begin{lstlisting}[language=myc++]
// Initial conditions
AnalyticalConst2D<T,T> rhoF(1);
std::vector<T> velocity(dim,T(0));
AnalyticalConst2D<T,T> uF(velocity);

// Initialize density 
superLattice.defineRho(superGeometry, materialNumber, rhoF);

// Initialize velocity
superLattice.defineU(superGeometry, materialNumber, uF);

// Initialize distribution functions
// to local equilibrium
superLattice.iniEquilibrium(superGeometry, materialNumber,
       rhoF, uF);
\end{lstlisting}

\section{Define Boundary Values}

\begin{lstlisting}[language=myc++]
sLattice.defineU(superGeometry, materialNumber, analyticalFunctor);
\end{lstlisting}
Just like the initial conditions, boundary values can be set using \class{defineRho(...)} and \class{defineU(...)}. 
For a smooth start-up, the values can be scaled, \eg according to a sinus-scale or polynomial-scale for a given startup-time.
\begin{lstlisting}[language=myc++]
    // Smooth start curve, sinus
SinusStartScale<T,int>
StartScale(numerStartTimeSteps, maxValue);

//Smooth start curve,polynomial PolynomialStartScale<T,T>
StartScale(numerStartTimeSteps, maxValue);
  
// compute scale-factor "frac"
T iTvec[1] = {T(timestep)};
T frac[1] = {};
StartScale( frac, iTvec );
\end{lstlisting}
To apply a flow profile, one first has to update values in every \(n\)th time step, then initialize a functor and then set values using \class{defineRho(...)} and \class{defineU(...)}. 
These are the same functions as for the initial conditions. 
However, the time point when to call them is crucial.
\begin{lstlisting}[language=myc++]
if (timestep%updatePeriod==0 &&
       timestep <= numberStartupTimesteps) {
  Poiseuille2D<T> poiseuilleU(superGeometry,
         materialNumber, maxVelocity*frac[0],
         distance2Wall);
sLattice.defineU(superGeometry, materialNumber, poiseuilleU);
}
\end{lstlisting}
Further examples of 3D functors for this purpose are: 
rotating functors (linear for a rotating velocity field and quadratic for a rotating pressure field), 
Circle-Poiseuille, 
Elliptic-Poiseuille, and 
Rectangular-Poiseuille. 
The latter are only for on axis boundaries and can be constructed from points spanning a plane or a material number. 
These functors are summarized with arguments as follows: 
\begin{lstlisting}[language=myc++]
RotatingLinear3D(axisPoint,
  axisDirection, angularVelocity);
  
RotatingQuadratic1D(axisPoint, axisDirection, angularVelocity);

CirclePoiseuille3D(axisPoint,
  axisDirection, maxVelocity, radius);
  
EllipticPoiseuille3D(center, semiPrincipalAxis1, semiPrincipalAxis2, maxVelocity);

RectanglePoiseuille3D(point1, point2,
  point3, maxVelocity);
  
RectanglePoiseuille3D(superGeometry, materialNumber, maxVelocity, offsetX, offsetY, offsetZ);
\end{lstlisting}


\chapter{Input and Output} 
\label{cha:io}

During development or during actual simulations, it might be necessary to parametrize your program.
For this case, OpenLB provides an XML parser, which can read files produced by OpenGPI~\cite{opengpi-web}, thereby providing a user-friendly GUI. 
Details on the XML format and functions are given in Section~\ref{sec:xmlparams}.

The simulation data is stored in the VTK data format and can be further processed with ParaView.
For output tasks that are performed only once during the simulation, it is recommended to write the data sequentially.
Commonly, the geometry or cuboid information is one of these tasks.
In contrast to the parallel version, it is easier to use and does not produce unnecessary data overhead.
However, if the output is performed regularly in a parallel simulation, the performance may slow down using the sequential output method.
Therefore, OpenLB has implemented a parallel data output functionality.
At the lowest scope, every thread writes \texttt{.vti} files that contain the data.
OpenLB writes a \texttt{.vti} file for every cuboid, to provided parallel data processing.
Those \texttt{.vti} files are summarized and put together by the next hierarchy, the \texttt{.vtm} file.
A \texttt{.vtm} file corresponds to the entire domain with respect to a certain time step.
At the end, the different time steps are packed to a \texttt{.pvd} file, that is a collection of \texttt{.vtm} according to time steps.

The technical aspects are presented in Section~\ref{sec:vti_pvd-format}, whereas the usage is demonstrated with an example in Section~\ref{sec:outPut-vtk}.

\section{Output Data Structure}\label{sec:vti_pvd-format}
OpenLB simulation data is stored in file system according to the VTK data format~\cite{vtkFormat-web}.
This format has XML structure and the data therein is written as binary \texttt{Base64} code.
Additionally, the simulation data is compressed by zlib, which allows to reduce data tremendously.

On the top level, the parallel output hierarchy contains a \texttt{.pvd} file, which consists of links to \texttt{.vtm} files.
The \texttt{.vtm} files summarize the cuboids represented by \texttt{.vti} files.

\begin{lstlisting}[language=XML, caption={Example of a \texttt{.pvd} file that points for every time step to the corresponding \texttt{.vtm} file. Every time step is associated to a \texttt{.vtm} file.}]
<?xml version="1.0"?>
<VTKFile type="Collection" version="0.1" byte_order="LittleEndian">
<Collection>
<DataSet timestep="81920" group="" part="" file="data/VTM_iT0081920.vtm"/>
<DataSet timestep="163840" group="" part="" file="data/VTM_iT0163840.vtm"/>
<DataSet timestep="245760" group="" part="" file="data/VTM_iT0245760.vtm"/>
<DataSet timestep="327680" group="" part="" file="data/VTM_iT0327680.vtm"/>
<DataSet timestep="409600" group="" part="" file="data/VTM_iT0409600.vtm"/>
</Collection>
</VTKFile>
\end{lstlisting}

\begin{lstlisting}[language=XML, caption={Example of a \texttt{.vtm} file that points to \texttt{.vti} files that hold data of a cuboids. Every cuboid writes its data to a \texttt{.vti} file, which are assembles by a \texttt{.vtm} file.}]
<?xml version="1.0"?>
<VTKFile type="vtkMultiBlockDataSet" version="1.0" byte_order="LittleEndian">
<vtkMultiBlockDataSet>
<Block index="0" >
<DataSet index= "0" file="VTM_iT0081920iC00000.vti"></DataSet>
</Block>
<Block index="1" >
<DataSet index= "0" file="VTM_iT0081920iC00001.vti"></DataSet>
</Block>
<Block index="2" >
<DataSet index= "0" file="VTM_iT0081920iC00002.vti"></DataSet>
</Block>
<Block index="3" >
<DataSet index= "0" file="VTM_iT0081920iC00003.vti"></DataSet>
</Block>
</vtkMultiBlockDataSet>
</VTKFile>
\end{lstlisting}

There is also a \texttt{BlockVTKwriter} that writes data sequentially.
More details can be found in the source code and its documentation.

\section{Write Simulation Data to VTK File Format}\label{sec:outPut-vtk}
VTK data files can be visualized and postprocessed with the free software ParaView~\cite{paraview-web}, which offers a graphical interface with extensive functionality. 
The following listing shows, on the one hand, how to write VTK files sequential for a geometry and cuboid functors.
On the other hand, the usage of the parallel write-routine for velocity and pressure functors is shown.
\begin{lstlisting}[language=myc++, caption={An exemplary code to write simulation data to file system.}]
// create VTK writer object
SuperVTMwriter3D<T> vtmWriter("FileNameGoesHere");
// write only the first iteration step
if (iT==0) {
  SuperLatticeGeometry3D<T,DESCRIPTOR> geometry(sLattice, superGeometry);
  SuperLatticeCuboid3D<T,DESCRIPTOR> cuboid(sLattice);
  // writes the geometry and cuboids to file system, sequentially
  vtmWriter.write(geometry);
  vtmWriter.write(cuboid);
  // mandatory to call the following write()-method
  vtmWriter.createMasterFile();
}
// write every 2 sec (physical time scale)
if (iT%converter.getLatticeTime(2.)==0) {
  // create functors that process data from SuperLattice
  SuperLatticePhysVelocity3D<T,DESCRIPTOR> velocity(sLattice,
                                                    converter);
  SuperLatticePhysPressure3D<T,DESCRIPTOR> pressure(sLattice,
                                                    converter);
  vtmWriter.addFunctor( velocity );
  vtmWriter.addFunctor( pressure );
  // writes the added functors to file system, parallel
  vtmWriter.write(iT);
}
\end{lstlisting}
Note that the function call \texttt{creatMasterFile()} in \texttt{iT == 0} is essential to write parallel VTK data.

\section{CSV Writer} \label{sec:csvWriter}
For some data analysis a CSV format of the data is necessary. In this case it is possible to use the CSV Writer to create these data files.
The following lines show an application of the CSV Writer in the example advectionDiffusion1d (\ref{example:advectionDiffusion1d}). If one only wants to write in one data file, the filename can be given to the constructor of the CSV Writer. However the \texttt{plotFileName} parameter provides the possibility to set a new datafile with every call of this function. The \texttt{precision} parameter refers to the precision of the output data.

\begin{lstlisting}[language=myc++, caption={Exemplary application of the CSV Writer}]
CSV<T> csv();
csv.writeDataFile(N, simulationAverage, "averageSimL2RelErr");
\end{lstlisting}

\section{Write Images Instantaneously}\label{sec:writing_image}
OpenLB is able to output image data directly.
This is helpful to get a brief overview of how the simulation is going on without using external visualization tools.
Note that only $1D$ data or equivalent scalar-valued data can be represented by images.
Hence, for vector-valued data, \eg velocity, it is important to take an appropriate norm.
This step transforms the vector into a scalar and the data becomes one dimensional as required.

For $2D$ applications it is straight forward to generate images, since every point of the computational grid represents a pixel.
However, for $3D$ applications this assignment fails.
OpenLB allows one to reduce the $3D$ grid to a $2D$ plane by parametrizing a hyperplane in $3D$ space.
The resulting $2D$ block lattice represents the image by assigning lattice points to pixels.

An example of how to take a norm and how to reduce a plane is shown below.

\begin{lstlisting}[language=myc++, caption={An exemplary code reducing a plane in $3D$}, label=lst:planeReduction3d]
// get the pointwise l2 norm of velocity
SuperEuklidNorm3D<T> normVel( velocity );
// reduce a hyperplane parametrized by normal (0,0,1) and centered in the mother geometry from the 3D data
BlockReduction3D2D<T> planeReduction( normVel, {0, 0, 1} );
\end{lstlisting}

Note that internally the hyperplane is parametrized using the \class{Hyperplane3D} class. This example uses one of the helper constructors of \class{BlockReduction3D2D} to hide this detail for the common use case of parametrizing a hyperplane by a normal vector. There are further such helper constructors available if one wishes to for example define a hyperplane by two span vectors and its origin. However for full control over the hyperplane a \class{Hyperplane3D} instance may also be created by hand.

\begin{lstlisting}[language=myc++, caption={Exemplary code to write images of an explicitly instantiated $3D$ hyperplane with}, label=lst:instantisationHyperplane3D1]
SuperEuklidNorm3D<T> normVel( velocity );
BlockReduction3D2D<T> planeReduction(
  normVel,
  // explicitly construct a 3D hyperplane
  Hyperplane3D<T>()
  .centeredIn(superGeometry.getCuboidGeometry().getMotherCuboid())
  .spannedBy({1, 0, 0}, {0, 1, 0}));
  BlockGifWriter<T> gifWriter;
  gifWriter.write(planeReduction, iT, "vel" );
\end{lstlisting}

Both of these exemplary codes reduce a $3D$ hyperplane to a $2D$ lattice with $600$ points on its longest side. It is possible to change this resolution either by providing it as a constructor argument to \texttt{BlockReduction3D2D} or by explicitly instantiating a \texttt{HyperplaneLattice3D}.

\begin{lstlisting}[language=myc++, caption={Exemplary code using an explicitly instantiated $3D$ hyperplane lattice}, label=lst:instantisationHyperplane3D2]
SuperEuklidNorm3D<T> normVel( velocity );
HyperplaneLattice3D<T> gifLattice(
  superGeometry.getCuboidGeometry(),
  Hyperplane3D<T>()
  .centeredIn(superGeometry.getCuboidGeometry().getMotherCuboid())
  .normalTo({0, -1, 0}),
  // resolution (floating point values are used as grid spacing instead)
  1000);
\end{lstlisting}

In $2D$ the reduction of velocity data to a block can be achieved as follows.

\begin{lstlisting}[language=myc++, caption={Exemplary code reducing data in $2D$}]
SuperEuklidNorm2D<T,DESCRIPTOR> normVel( velocity );
BlockReduction2D2D<T> planeReduction( normVel );
\end{lstlisting}

The resolution of $600$ points on the longest side of the object is set as default, but can be altered similarly to the Listings~\ref{lst:planeReduction3d}, \ref{lst:instantisationHyperplane3D1} and \ref{lst:instantisationHyperplane3D2}. There are two options of generating images of the processed values in $2D$ and $3D$.

\subsection{GifWriter}
In this example the constructor \texttt{gifWriter} generates automatically scaled images of the \texttt{PPM} data type which are scaled according to the minimum and maximum value of the desired value of the time step.

\begin{lstlisting}[language=myc++, caption={Exemplary code using \texttt{gifWriter} to create \texttt{PPM} files}]
BlockReduction3D2D<T> planeReduction(normVel, gifLattice);
BlockGifWriter<T> gifWriter;
//gifWriter.write(planeReduction, 0, 0.7, iT, "vel"); //static scale
gifWriter.write( planeReduction, iT, "vel" ); // scaled
\end{lstlisting}

With \texttt{imagemagick}'s command \texttt{convert} the \texttt{PPM} files generated by \texttt{gifWriter} can be combined to an animated \texttt{GIF} file as follows.
\begin{lstlisting}[language=bash]
convert tmp/imageData/*.ppm animation.gif
\end{lstlisting}
To reduce the \texttt{GIF}'s file size you can use the options \texttt{fuzz} and \texttt{OptimizeFrame}, for example:
\begin{lstlisting}[language=bash]
convert -fuzz 3% -layers OptimizeFrame tmp/imageData/*.ppm animation.gif
\end{lstlisting}
Even smaller files are possible with \texttt{ffmpeg} and conversion to \texttt{MP4} video file. This could be done using a command like the following.
\begin{lstlisting}[language=bash]
ffmpeg -pattern_type glob -i 'tmp2/imageData/*.ppm' animation.mp4
\end{lstlisting}

\subsection{Heatmap}
Whereas the the \texttt{gifWriter} creates only automatically scaled \texttt{PPM} images, the functor \texttt{heatmap} has more options to adjust the \texttt{JPEG} files. For this purpose the variable \texttt{plotParam} can be created and the desired modifications, \eg minimum and maximum values of the scale, can be passed on to the optional variable.

\begin{lstlisting}[language=myc++, caption={Exemplary code using the functor \texttt{heatmap} with modified parameters}, label=lst:heatmap]
SuperEuklidNorm3D<T> normVel( velocity );
BlockReduction3D2D<T> planeReduction( normVel, {0, 0, 1} );
// write output as JPEG and changing properties
heatmap::plotParam<T> jpeg_Param;
jpeg_Param.contourlevel = 5;    //setting the number of contur lines
jpeg_Param.colour = "rainbow";  //colour combination "grey", "pm3d", "blackbody" and "rainbow" can be chosen
heatmap::write(planeReduction, iT, jpeg_Param);
\end{lstlisting}

The exemplary code in listing~\ref{lst:heatmap} shows how to change the color set and number of contour lines in the generated images. All possible adjustments are listed and used in the example venturi3D (see Section~\ref{sec:venturi3d}).

\section{Gnuplot Interface}\label{sec:gnuplot}
Often, for the analysis of simulations a plot of the data is required. OpenLB offers an interface which uses \texttt{Gnuplot} to create plots.
Furthermore, it is possible to see the particular data that was used for the plots in realtime and to use comparison data, which is directly
used in the plot.
An example for the usage from \texttt{examples/cylinder2d} is shown below.
\begin{lstlisting}[language=myc++, caption={An exemplary code to plot simulation data.}]
// Gnuplot constructor (must be static!)
// for real-time plotting: gplot("name", true) // experimental!
static Gnuplot<T> gplot( "drag" );

...

// set data for gnuplot: input={xValue, yValue(s),
// names (optional), position of key (optional)}
gplot.setData( converter.getPhysTime( iT ), {_drag[0], 5.58},
 {"drag(openLB)", "drag(schaeferTurek)"}, "bottom right" );

// writes a png (or optional pdf) in one file for every timestep,
// if the png file is opened by an imageviewer it can be used as a "liveplot"
// optional for pdf output, use: gplot.writePDF()
gplot.writePNG();
}
\end{lstlisting}
The data \texttt{drag[0]} is calculated in the example and compared with the value $ 5.58 $. This is then plotted as shown in Figure~\ref{fig:gnuplotDrag}.
\begin{figure}[ht]
  \centering
  \includegraphics[width=0.7\textwidth]{./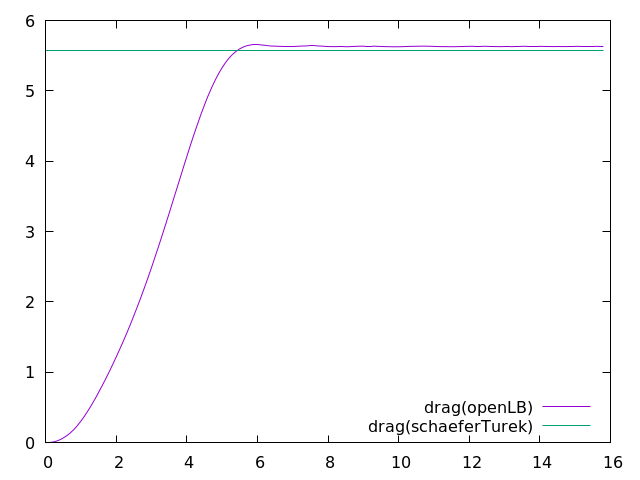}
  \caption{Gnuplot output of drag calculation in cylinder2d.}
  \label{fig:gnuplotDrag}
\end{figure}

In order to have plots for different times, the following usage is recommended.
\begin{lstlisting}[language=myc++, caption={Creating plots for different time steps.}]
  ...

  // every (iT%vtkIter) write an png of the plot
  if ( iT%( vtkIter ) == 0 ) {
  // writes pngs: input={name of the files (optional),
  // x range for the plot (optional)}
  gplot.writePNG( iT, maxPhysT );
\end{lstlisting}

\subsection{Regression with Gnuplot}\label{subsec:RegressionWithGnuplot}
Moreover, Gnuplot can be used to create a linear regression of datasets. For instance, the analysis of the
experimental order of convergence in a simulation can be executed as in the example \path{poiseuille2dEOC}. \\
The possible options are: Linear regression to the given data whereas it is possible to use a loglog-scaling
(loglogINVERTED for inverting the x-axis). The implementation is done via the constructor of plot in the \texttt{.cpp}
file itself as seen below.

\begin{lstlisting}[language=myc++]
static Gnuplot<T> gplot( "eoc", Gnuplot<T>::LOGLOG, Gnuplot<T>::LINREG);
\end{lstlisting}
The possible options for the scaling are: LINEAR (using the data as given), \texttt{LOGLOG} (using log of the x- and
y-dataset) and \texttt{LOGLOGINVERTED} (using log of y-dataset and 1/log of x-dataset). For the regression type one can
choose \texttt{LINREG} (linear Regression) and \texttt{OFF} (no regression).\\
\begin{figure}[ht]
	\centering
	\includegraphics[width=0.9\textwidth]{./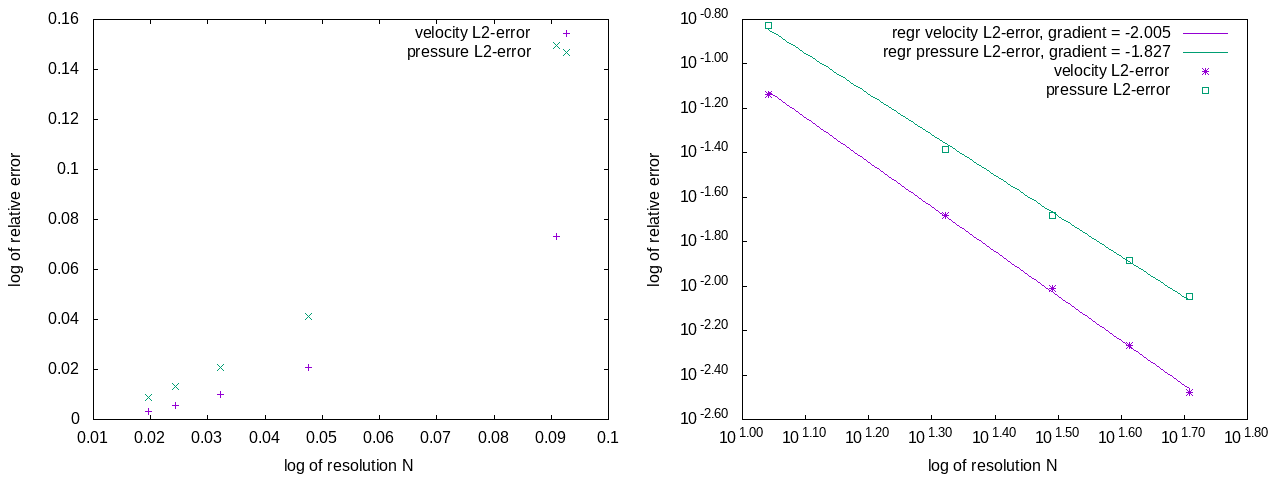}
	\caption{Example of using regression to analyze polynomial errors (left: old, right: new implementation) }
	\label{fig:gnuplot_eoc_example}
\end{figure}

\section{Console Output}\label{sec:consoleOutput}
In OpenLB, there is an extension of the default \class{ostream}, which handles parallel output and prefixes each line with the name of the class that produced the output.
Listing~\ref{lst:consoleOutput} is the output of the \texttt{bstep2d} example.

It is easy to determine which part of OpenLB has produced a specific message.
This can be very helpful in the debugging process, as well as for quickly postprocessing console output or filtering out important information without any need to go into the code.
Together with OpenLB's semi-CSV style output standard, it is possible to easily visualize any data imaginable with diagrams, such as convergence rates, data errors, or simple average mass density.
\begin{lstlisting}[language=myc++]
void MyClass::print() {
OstreamManager clout(std::cout, "MyClass");
...
clout << "step=" << step << "; avRho=" << avRho
      << "; maxU=" << maxU << std::endl;
}
\end{lstlisting}

Using the \class{OstreamManager} is easy and consists of two parts. First, an instance of the class \class{OstreamManager} is needed. The one created here in line~2 is called \texttt{clout} like all the oth\-er instances in OpenLB. This word consists of the two words class and output Moreover, it is quite similar to standard \texttt{cout}. The constructor receives two arguments: one describing the ostream to use, the other one setting the prefix-text. In line~4 the usage of an instance of the \class{OstreamManager} is shown. There is not much difference in usage between a default \texttt{std::cout} and an instance of OpenLB's \class{OstreamManager}. The only thing to consider is that a normal \verb "\n"  won't have the expected effect, so use \texttt{std::endl} instead.

In classes with many output producing functions however, you wouldn't like to instantiate \class{OstreamManager} for every single function, so a central instantiation is preferred. This is done by adding a \texttt{mutable OstreamManager} object as a private class member and initializing it in the initialization list of each defined constructor. An example implementation of this method can be found in \path{src/utilities/timer.hh}.

Another great benefit of \class{OstreamManager} is the reduction of output in parallel. Running a program using \texttt{cout} on multiple cores normally means getting one line of output for each process. \class{OstreamManager} will avoid this by default and display only the output of the first processor. If this behavior is unwanted in a specific case, it can be turned off for an instance named \texttt{clout} by \texttt{clout.setMultiOutput(true)}.

Further scenarios that are not yet implemented in OpenLB can make use of different streams like the ostream \texttt{std::cerr} for separate error output, file streams, or something completely different. In doing so, every stream needs its own instance.

\newpage
\begin{landscape}
\begin{lstlisting}[language=bash,breaklines= true,caption={Terminal output of example bstep2d.},label={lst:consoleOutput}]
$ ./bstep2d
...
[prepareGeometry] Prepare Geometry ...
[SuperGeometry2D] cleaned 0 outer boundary voxel(s)
[SuperGeometry2D] cleaned 0 inner boundary voxel(s)
[SuperGeometry2D] the model is correct!
[SuperGeometryStatistics2D] materialNumber=0; count=13846; minPhysR=(0,0); maxPhysR=(5,0.75)
[SuperGeometryStatistics2D] materialNumber=1; count=92865; minPhysR=(0.0166667,0.0166667); maxPhysR=(19.9833,1.48333)
[SuperGeometryStatistics2D] materialNumber=2; count=2448; minPhysR=(0,0); maxPhysR=(20,1.5)
[SuperGeometryStatistics2D] materialNumber=3; count=43; minPhysR=(0,0.783333); maxPhysR=(0,1.48333)
[SuperGeometryStatistics2D] materialNumber=4; count=89; minPhysR=(20,0.0166667); maxPhysR=(20,1.48333)
[prepareGeometry] Prepare Geometry ... OK
[prepareLattice] Prepare Lattice ...
[prepareLattice] Prepare Lattice ... OK
[main] starting simulation...
[SuperPlaneIntegralFluxVelocity2D] regionSize[m]=1.46667; flowRate[m^2/s]=0; meanVelocity[m/s]=0
[SuperPlaneIntegralFluxPressure2D] regionSize[m]=1.46667; force[N]=0; meanPressure[Pa]=0
[Timer] step=0; percent=0; passedTime=0.846; remTime=101519; MLUPs=0
[LatticeStatistics] step=0; t=0; uMax=1.49167e-154; avEnergy=0; avRho=1
[SuperPlaneIntegralFluxVelocity2D] regionSize[m]=1.46667; flowRate[m^2/s]=0; meanVelocity[m/s]=0
[SuperPlaneIntegralFluxPressure2D] regionSize[m]=1.46667; force[N]=0; meanPressure[Pa]=0
[Timer] step=300; percent=0.25; passedTime=2.503; remTime=998.697; MLUPs=17.2699
[LatticeStatistics] step=300; t=0.1; uMax=5.75006e-07; avEnergy=8.66459e-16; avRho=1
...
\end{lstlisting}
\end{landscape}

\section{Read and Write STL Files}\label{sec:stlfiles}
OpenLB offers the possibility to read and write geometry data in the Standard Triangulation Language, STL for short. The OpenLB class \class{STLreader} provides the desired functionality. In the case that the STL file you want to read is too large, you can use ParaView's filter "Decimate" to reduce the number of facets.

The constructor of the class \class{STLreader} takes two necessary and three optional arguments.
\begin{lstlisting}[language=myc++]
STLreader(const std::string fName, T voxelSize, T stlSize=1,
          unsigned short int method = 2, bool verbose = false);
\end{lstlisting}
\begin{itemize}
\item \texttt{fName}: The filename of the STL file to be read.
\item \texttt{voxelSize}: The intended spatial step size for the simulation in SI units (m).
\item \texttt{stlSize}: Conversion factor if the STL file is not given in SI units. E.g.:\ For an STL file in cm, this factor is $\texttt{stlSize} = 0.01$.
\item \texttt{method}: Switch between methods for determining inside and outside of geometry.
  \begin{itemize}
  \item    default: fast, less stable
  \item    1: slow, more stable (for untight STLs)
  \end{itemize}
\item \texttt{verbose}: Switch to get more output.
\end{itemize}

\textbf{Functionality}: The STL file is read and stored in the class \class{STLmesh}. A class \class{Octree} is instantiated of side-length $\text{rad} = 2^{j-1} \cdot \texttt{voxelSize}, j\in \mathbb{N}$ with $j$ such that a cube with diameter $2 \text{rad}$ covers the entire \texttt{STL}. Intersections of triangles and the nodes of the \class{Octree} are computed and an index of the respective triangles is stored in each node. A node is a leaf if either $\text{rad} = \text{voxelSize}$ or if it does not contain any triangles. \\
In a second step, it is determined whether a leaf is inside the STL geometry by one of the following methods:
\begin{itemize}
\item (Default) One ray in Z-direction is defined for each voxel in XY-layer. All nodes are indicated on the fly (faster, less stable).
\item Define three rays (X-, Y-, Z-direction) for each leaf and count intersections with STL for each ray. Odd number of intersection means inside. The final state is decided by a majority vote (slower, more stable).
\end{itemize}

\section{XML Parameter Files}\label{sec:xmlparams}

In OpenLB essential simulation parameters can be placed in a XML file.
This is a useful feature, since once a program is compiled, the parameters can be changed through the XML file and recompilation is redundant.
As a consequence whenever parameter fitting or general simulations are wanted, this approach can help you since only editing the XML file is necessary.
The parsing is implemented in the header file \path{io/xmlReader.h}.

The general format for the XML files is:
\begin{lstlisting}[language=XML]
<Param>
  <Output>
    <Log>
      <VerboseLog> true </VerboseLog>
    </Log>
  </Output>
  <VisualizationImages>
    <Filename> image </Filename
  </VisualizationImages>
</Param>
\end{lstlisting}
All parameters need to be wrapped in a \texttt{<Param>} tag. To open a config file, you just pass a string with the file name to the class constructor of \class{XMLreader}.
\begin{lstlisting}[language=myc++]
std::string fName("demo.xml");
XMLreader config(fName);

bool _verboseLog;
std::string imagename;
XMLreader outputConfig = config["Output"];

config.readOrWarn<bool>("Output", "Log","VerboseLog",_verboseLog);
outputConfig.readOrWarn<bool>("Log", "VerboseLog","",_verboseLog);
config.readOrWarn<std::string>("VisualizationImages", "Filename", "", imagename);
\end{lstlisting}
First, an \class{XMLreader} object \texttt{config} is created. There are multiple ways to access the configuration data.
To select the tag you would like to read, you just use an associative array like syntax as shown above.

To get a specific value out of an XML parameter file, there are multiple methods.
One is to pass a predefined variable to the method \texttt{readOrWarn}, which reads the respective value and prints a warning in case the data type is not matching or the value cannot be found.
For large subtrees with lots of parameters, you can also create a subobject.
For this, you just have to reassign your selected subtree to a new \class{XMLreader}-object as is done above for \texttt{Output}.

\section{Visualization with ParaView}
\label{sec:paraview}

As already mentioned, there are several data formats that can be used in ParaView. Use `File -- Open' and choose the set of data you want to use. In regards to OpenLB it is enough to open the file with the ending \texttt{.pvd}, since it contains a reference to the \texttt{.vti} files.
The chosen files should now be part of the `Pipeline Browser', which should be on the left hand side (if any of the panels are missing you can add them in the `View' menu on the top). Click on `Apply' in the `Properties' panel (usually located below the `Pipeline Browser') after opening.

Your data should now be visible in the center window. From within the `Properties' or in one of the top tool bars, you can change the `Coloring' properties, which selects what shall be displayed (\eg physical velocity, phys pressure), which part of this choice shall be displayed (\eg magnitude, x-value) and the way it is colored.

Make sure that `3D' is part of the tool bar directly above the window where you can see your objects. If you cannot find it click on `2D' which should be written instead and change it to `3D' by doing this. The commands for moving your whole set of visible objects and thus changing the perspective are the following:
\begin{itemize}
  \item Using the mouse wheel, you can zoom in and out.
  \item Using the right mouse button or `Ctrl + left mouse button', you can move the object to the background or the foreground. In comparison to zooming in and out, this changes the level of the 3D-effect.
  \item Using the left mouse button allows you to turn the object.
  \item Clicking the mouse wheel allows you to move the object center.
\end{itemize}
Of course you can also stick to `2D', although in this case the mouse commands might change a bit.

You can visualize the temporal development of your simulation using the `Play' button and the related buttons directly next to it. If you want to go to a certain time step, use the input field `Time', which is also located here.

To manipulate your data in ParaView, numerous so-called `Filters' are provided in the `Filters' menu in the top bar.

\section{Clip}
With this filter, you can cut off parts of your objects, for example, to make it possible to look inside the geometry. There are several tool options to determine which part is cut off. You can choose between plane, box and sphere.

If the ``wrong'' side is cut off, check `inside out' to make the other side visible.

\subsubsection{Contour}
Using `Contour' you can show lines or planes of certain data values, which you can set.

\section{Glyph}
If you have a point data set, you can represent it as spheres using the filter `Glyph' and choosing `Sphere' as setting for `Glyph Type'. Using the resolution settings, you can smooth the surface to make the sphere look more rounded.

There are alternative ways to represent the data. As an example, arrows can be used to show the direction of a velocity. Check `Glyph Type' for further possibilities.

\section{Stream Tracer}
The Stream Tracer filter is a powerful tool that allows users to visualize the flow as streamlines, making for example turbulent flow more visible. By placing the seed of the stream tracer next to the area of interest, users can apply this filter. There are two possible forms of the seed: it can be set as a pointcloud or a line.

To use the Stream Tracer filter, simply select the seed type that best fits your needs and place it in the desired location. In the case of a pointcloud seed, the flow that goes through the sphere is transformed, while a line seed transforms the flow along the line. An example of the OpenLB example \texttt{nozzle3d} visualized with streamlines using a pointcloud seed can be seen in Figure~\ref{fig:streamlines}.

\begin{figure}[ht]
    \centering
    \includegraphics[width =\textwidth]{./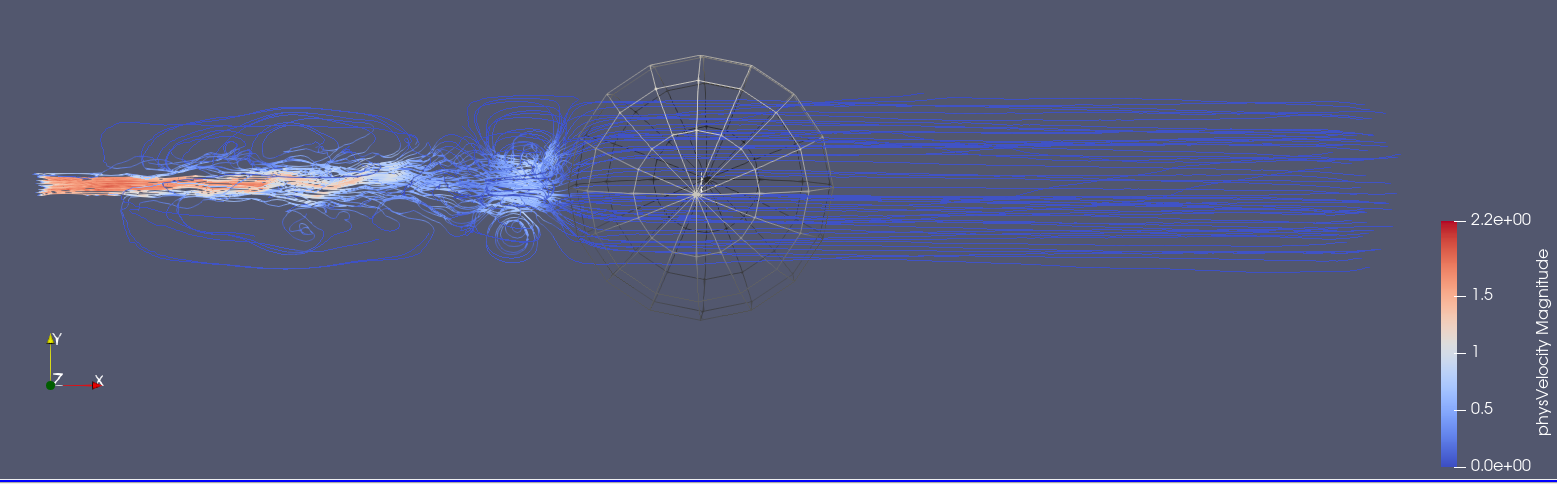}
    \caption{OpenLB example \texttt{nozzle3d} visualized with streamlines. In this case as seed, a pointcloud was used. }
    \label{fig:streamlines}
\end{figure}

\section{Resample To Image}
Another filter for visualization purposes is \texttt{Resample To Image}. It applies a volumetric raymarching algorithm on the data. The Figure~\ref{fig:resample_nozzel3d} shows the result when this filter is applied.

Another useful filter for visualization purposes is the \texttt{Resample To Image} filter. This filter applies a volumetric raymarching algorithm to the data, allowing users to visualize the data in a new way.
To use this filter, simply select the \texttt{Resample To Image} option and apply it to your data. The resulting visualization will be a 3D image that can be rotated and explored in real-time.
An example of the OpenLB example \texttt{nozzle3d} visualized using the \texttt{Resample To Image} filter can be seen in Figure~\ref{fig:resample_nozzel3d}. 

\begin{figure}[ht]
    \centering
    \includegraphics[width = \textwidth]{./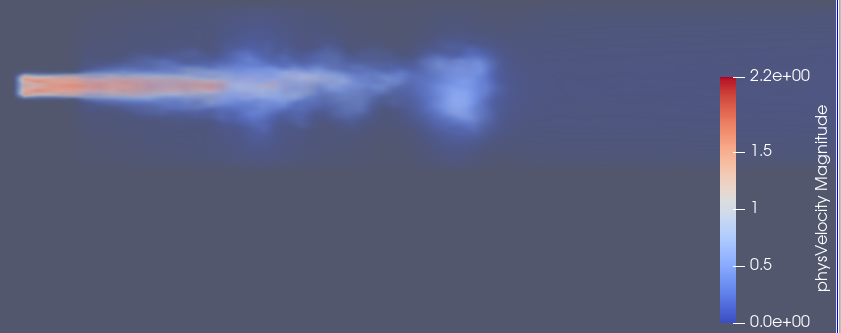}
    \caption{OpenLB example \texttt{nozzle3d} visualized with the help of the \texttt{Resample To Image} filter}
    \label{fig:resample_nozzel3d}
\end{figure}

This visualization provides a unique perspective on the data and can help users to better understand the underlying patterns and structures in the data.
It's important to note that the \texttt{Resample To Image} filter requires a significant amount of computational resources to generate the 3D image. As such, it may not be suitable for use on larger datasets or on less powerful hardware. Additionally, the parameters of the filter can be adjusted to fine-tune the resulting visualization, but this requires some knowledge of the filter and its underlying algorithm.

\subsection{Application of Functors} \label{sec:functor_aplication}
The concept of functors benefits from generality and therefore, they are used for many applications.

\subsubsection{Extract Simulation Data}
Velocity, pressure and other information can be extracted from the lattice using predefined functors, see Listing~\ref{lst:SuperLatticF}.
All they need to know is a \class{SuperLatticeXD} and an \class{UnitConverter} - if dimension or physical units are wanted.
\begin{lstlisting}[language=myc++,caption={Code example for calculating velocity and pressure using functors.},label={lst:SuperLatticF}]
// Create functors
SuperLatticePhysVelocity3D<T,DESCRIPTOR> velocity(sLattice, converter);
SuperLatticePhysPressure3D<T,DESCRIPTOR> pressure(sLattice, converter);
\end{lstlisting}

\subsubsection{Define Analytic Functions}
Often the inflow velocity has Poiseuille profile which is defined analytically, by means of a function.
OpenLB provides analytic functors to define \eg a Poiseuille velocity profile, random values, linear and constant values.
\begin{lstlisting}[language=myc++,caption={Define a Poiseuille velocity profile for inflow boundary condition.}]
Poiseuille2D<T> poiseuilleU(superGeometry, 3, maxVelocity, distance2Wall);
\end{lstlisting}

\subsubsection{Interpolation}
Another case for interpolation functors is the conversion of a given analytical functor, such as an analytical solution to a \class{SuperLattice} functor.
Afterwards, the difference can be easily calculated with the help of the functor arithmetic, see Listing~\ref{lst:computeErrorNorm}.
Finally, specific norms implemented as functors facilitate analysis of convergence.
\begin{lstlisting}[language=myc++,caption={Transition from an analytical functor to a lattice functor.}]
// define a analytic functor: R^3 -> R
AnalyticalConst3D<double,double> constAna(1.0);
// get analytic functor on the lattice: N^3 -> R
SuperLatticeFfromAnalyticalF3D<double,DESCRIPTOR> constLat(constAna,
                                                           lattice);
\end{lstlisting}
Application of this is shown in the example \path{poiseuille2d}, which is discussed in Section~\ref{sec:poiseuille2d and poiseuille3d}

\subsubsection{Arithmetic and Advanced Functor Usage}
\label{sec:advancedFunctorUsage}
Functors can be added, subtracted, etc. which is a very useful and elegant method to treat data.
Listing~\ref{lst:computeErrorNorm} shows how to compute the relative error over the whole three dimensional domain.
\begin{lstlisting}[language=myc++,caption={Computation of a relative error with respect to $L^2$-norm.},label={lst:computeErrorNorm}]
int input[1];
double normAnaSol[1], absErr[1], relErr[1];
// define analytical solution: R^3 -> R
// for snake of simplicity it is a constant function,
// however it may be any specialization of AnalyticalF3D
AnalyticalConst3D<double,double> dSol(1.0);
// get analytical solution on the lattice: N^3 -> R
SuperLatticeFfromAnalyticalF3D<double,DESCRIPTOR> dSolLattice(dSol, lattice);
// get density out of simulation data
SuperLatticeDensity3D<T,DESCRIPTOR> d(lattice);
// compute absolute error
SuperL2Norm3D<double> dL2Norm(dSolLattice - d, superGeometry, 1);
// compute norm of solution
SuperL2Norm3D<double> dSolL2Norm(dSolLattice, superGeometry, 1);
dL2Norm(absErr, input);         // access absolute error
dSolL2Norm(normAnaSol, input);  // access norm of the solution
relErr[0] = absErr[0] / normAnaSol[0];
clout << "denstity-L2-error(abs)=" << absErr[0] << ";"
      << "denstity-L2-error(rel)=" << relErr[0] << std::endl;
\end{lstlisting}
For more detail, see the source code of example~\ref{sec:poiseuille2d and poiseuille3d}.

Assemble geometry with geometric primitives of type \class{IndicatorFXD}.
\begin{lstlisting}[language=myc++,caption={Deploy functor arithmetic to build geometry data.}]
Vector<double,2> extendChannel(lx0, ly0);
Vector<double,2> originChannel;
IndicatorCuboid2D<double> channel(extendChannel, originChannel);
// setup step
Vector<double,2> extendStep(lx1, ly1);
Vector<double,2> originStep;
IndicatorCuboid2D<double> step(extendStep, originStep);
// remove step from channel
IndicatorIdentity2D<double> channelIdent(channel-step);
\end{lstlisting}

\subsubsection{Setting Boundary Value}
Boundary cells are marked by a certain material number in the \class{SuperGeometryXD}.
Using a functor, velocities can be set simultaneously on all cells of this material.
First, a vector that characterizes the maximum flow velocity and its directions is necessary.
Then, a special functor uses this vector to initialize a Poiseuille profile.
The direction can be extracted in the case of axis-parallel inflow regions automatically from the \class{SuperGeometryXD}.
In the last step, the \class{SuperLattice} initializes all cells of a certain material given by the \class{SuperLatticeXD} with the velocities computed by the functor.
\begin{lstlisting}[language=myc++,caption={Code example for setting a Poiseuille velocity profile and a constant pressure boundary in cylinder3d.}]
// Creates and sets the Poiseuille inflow profile using functors
double maxVel = converter.getCharLatticeVelocity();
CirclePoiseuille3D<double> poiseuilleU(superGeometry, 3, maxVel, distance2Wall);
sLattice.defineU(superGeometry, 3, poiseuilleU);
\end{lstlisting}

\subsubsection{Flux Functor}
The \emph{flux} of a quantity is defined as the rate at which this quantity passes through a fixed boundary per unit time.
As a \emph{mathematical concept}, flux is represented by the surface integral of a vector field
\begin{align}
\Phi = \int \bm{F} \cdot d\bm{A} ~,
\end{align}
where $ \bm{F} $ is a vector field, and $ d\bm{A} $ is an area element of the surface $ A $, in the direction of the surface normal $ \bm{n} $.

\emph{Flux functors} calculate the discrete flux
\begin{align}
\Phi_h = h^2\sum_{i}\bm{f}_{i}\cdot\bm{n} ~,
\end{align}
with $ h $ as the grid length of the surface and $ \bm{f}_{i} $ the vector of the quantity at grid point $ i $.

As the grid of the area has to be independent from the lattice, the value of $ \bm{f}_{i} $ will be interpolated from the surrounding lattice points.
In the general case this discrete value is calculated by \class{SuperPlaneIntegralF3D}. Note that the reduction of the relevant surface is performed by \class{BlockReduction3D2D} and that \class{SuperPlane\-IntegralF3D} adds only the multiplication by the area unit as well as the normal vector for multidimensional $\bm{f}_{i}$.
In turn specific flux functors such as \class{SuperPlaneIntegralFluxVelocity3D} only add functor instantiation and print methods.
So, for the \class{SuperPlaneIntegralF3D} functor a surface needs to be defined. OpenLB currently supports using subsets of hyperplanes as the surfaces on which to calculate a flux.

Such a \emph{hyperplane} can be defined by an origin and two span vectors, an origin and a normal vector or a $3D$ circle indicator. \class{BlockReduction3D2D} interpolates the full intersection of hyperplane and mother geometry. Optionally this maximal plane may be further restricted by arbitrary $2D$ indicators.

Note that \class{SuperPlaneIntegralF3D} as well as all specific flux functors provide a variety of constructors accepting various hyperplane parametrizations. For full control you may consider explicitly constructing a \class{Hyperplane3D} instance.

The \emph{discretization} of a hyperplane parametrization (given by \class{Hyperplane3D}) into a discrete lattice is performed by \class{HyperplaneLattice3D}.

\qquad\\
\textbf{Step 1}: Define the hyperplane by\\
a) \emph{origin and two span vectors}
\begin{lstlisting}[language=myc++]
Vector<T,3> origin;
Vector<T,3> u, v;
\end{lstlisting}
b) \emph{origin and normal vector}
\begin{lstlisting}[language=myc++]
Vector<T,3> origin;
Vector<T,3> normal;
\end{lstlisting}
c) \emph{normal vector (centered in mother cuboid)}
\begin{lstlisting}[language=myc++]
Vector<T,3> normal;
\end{lstlisting}
d) \emph{circle indicator}
\begin{lstlisting}[language=myc++]
IndicatorCircle3D<T> circleIndicator(center, normal, radius);
\end{lstlisting}
e) \emph{arbitrary hyperplane}
\begin{lstlisting}[language=myc++]
// example parametrization of a hyperplane centered in the mother cuboid and normal to the Z-axis
Hyperplane3D<T> hyperplane()
  .centeredIn(cuboidGeometry.getMotherCuboid())
  .normalTo({0, 0, 1});
\end{lstlisting}
\qquad\\ \textbf{Step 1.1} (optional): Define the hyperplane discretization by
\newline a) \emph{grid length}
\begin{lstlisting}[language=myc++]
T h = converter.getLatticeL();
HyperplaneLattice3D<T> hyperplaneLattice(
  cuboidGeometry,
  Hyperplane3D<T>().originAt(origin).spannedBy(u, v),
  h);
\end{lstlisting}
b) \emph{grid resolution}
\begin{lstlisting}[language=myc++]
HyperplaneLattice3D<T> hyperplaneLattice(
  cuboidGeometry,
  Hyperplane3D<T>().originAt(origin).spannedBy(u, v),
  600); // resolution
\end{lstlisting}
\qquad\\
\textbf{Step 1.2} (optional): Define the flux-relevant lattice points by
\newline
a) \emph{list of material numbers}
\begin{lstlisting}[language=myc++]
std::vector<int> materials = {1, 2, 3};
\end{lstlisting}
a) \emph{arbitrary indicator}
\begin{lstlisting}[language=myc++]
SuperIndicatorF3D<T> integrationIndicator...
\end{lstlisting}
\qquad\\
\textbf{Step 1.3} (optional): Restrict the discretized intersection of hyperplane and geometry by
\newline
a) \emph{2D circle indicator (relative to hyperplane origin)}
\begin{lstlisting}[language=myc++]
T radius = 1.0;
IndicatorCircle2D<T> subplaneIndicator({0,0}, radius);
\end{lstlisting}
a) \emph{arbitrary 2D indicator (relative to hyperplane origin)}
\begin{lstlisting}[language=myc++]
IndicatorF2D<T> subplaneIndicator...
\end{lstlisting}
\qquad\\
\textbf{Step 2}: Create a \class{SuperF3D} functor for
\newline
a) \emph{velocity flow}
\begin{lstlisting}[language=myc++]
SuperLatticePhysVelocity3D<T,DESCRIPTOR> f(sLattice, converter);
\end{lstlisting}
b) \emph{pressure}
\begin{lstlisting}[language=myc++]
SuperLatticePhysPressure3D<T,DESCRIPTOR> f(sLattice, converter);
\end{lstlisting}
c) \emph{any other \class{SuperF3D} functor}
\begin{lstlisting}[language=myc++]
SuperF3D<T> f...
\end{lstlisting}
\qquad\\
\textbf{Step 3}: Instantiate \texttt{SuperPlaneIntegralF3D} functor depending on how the hyperplane was defined and discretized.\\
a) \emph{using origin, two span vectors and materials list}
\begin{lstlisting}[language=myc++]
SuperPlaneIntegralF3D<T> fluxF(
  f, superGeometry, origin, u, v, materials);
\end{lstlisting}
b) \emph{using origin, normal vector and materials list}
\begin{lstlisting}[language=myc++]
SuperPlaneIntegralF3D<T> fluxF(
  f, superGeometry, origin, normal, materials);
\end{lstlisting}
c) \emph{using normal vector and materials list}
\begin{lstlisting}[language=myc++]
SuperPlaneIntegralF3D<T> fluxF(f, superGeometry, normal, materials);
\end{lstlisting}
d) \emph{using 3D circle indicator and materials list}
\begin{lstlisting}[language=myc++]
SuperPlaneIntegralF3D<T> fluxF(
  f, superGeometry, circleIndicator, materials);
\end{lstlisting}
e) \emph{using arbitrary hyperplane and integration point indicator}
\begin{lstlisting}[language=myc++]
SuperPlaneIntegralF3D<T> fluxF(
  f, superGeometry, hyperplane, integrationIndicator);
\end{lstlisting}
f) \emph{using arbitrary hyperplane, integration point indicator and subplane indicator}
\begin{lstlisting}[language=myc++]
SuperPlaneIntegralF3D<T> fluxF(f, superGeometry, hyperplane, integrationIndicator, subplaneIndicator);
\end{lstlisting}
g) \emph{using arbitrary hyperplane lattice, integration point indicator and subplane indicator}
\begin{lstlisting}[language=myc++]
SuperPlaneIntegralF3D<T> fluxF(f, superGeometry, hyperplaneLattice, integrationIndicator, subplaneIndicator);
\end{lstlisting}
\qquad\\
\textbf{Step 4}: Get results using \texttt{operator()}
\begin{lstlisting}[language=myc++]
int input[1]; // irrelevant
T output[5];
fluxF(output, input);
\end{lstlisting}
\begin{itemize}
\item \textit{output[0]}: flow rate or plane integral (if quantity has dimension 1)
\item \textit{output[1]}: size of the area
\item \textit{output[2..4]}: flow vector (ie. vector of summed quantities)
\end{itemize}

\qquad\\
In many cases the functor argument is either the velocity or the pressure functor.\\
Thus \textbf{Step 2} and \textbf{Step 3} may be combined using \class{SuperPlaneIntegralFluxVelocity3D} respectively \class{SuperPlaneIntegralFluxPressure3D}. Their constructors are mostly identical to the ones provided by \class{SuperPlaneIntegralF3D}. In fact the only difference is that the first functor argument is replaced by references to \class{SuperLattice} and \class{UnitConverter}.

\qquad\\
\textbf{Step 2.1)}: Combined steps for velocity flux
\begin{lstlisting}[language=myc++]
SuperPlaneIntegralFluxVelocity3D<T> vFlux(superLattice, converter, ...);
\end{lstlisting}
\qquad\\
\textbf{Step 2.2)}: Combined steps for pressure flux
\begin{lstlisting}[language=myc++]
SuperPlaneIntegralFluxPressure3D<T> pFlux(superLattice, converter, ...);
\end{lstlisting}
\qquad\\
\textbf{Step 3.1)}: Output region size, volumetric flow rate and mean velocity
\begin{lstlisting}[language=myc++]
vFlux.print(std::string regionName,
  std::string fluxSiScaleName, std::string meanSiScaleName);
\end{lstlisting}
\begin{itemize}
\item \texttt{fluxSiScaleName:} 'ml/s' or 'l/s' or '\ ' (default=$ m^3/s $)
\item \texttt{meanSiScaleName:} 'mm/s' or '\ ' (default=$ m/s $)
\end{itemize}
\qquad
\textbf{Step 3.2)}: Output region size, force and mean pressure
\begin{lstlisting}[language=myc++]
pFlux.print(std::string regionName,
  std::string fluxSiScaleName, std::string meanSiScaleName);
\end{lstlisting}
\begin{itemize}
\item \texttt{fluxSiScaleName:} 'MN' or 'kN' or '\ ' (default=$ N $)
\item \texttt{meanSiScaleName:} 'mmHg' or '\ ' (default=$ Pa $)
\end{itemize}

\subsubsection{Discrete Flux Functor}

If a hyperplane is axis-aligned, flux functors may optionally be used in \emph{discrete} mode. Passing \class{BlockDataReductionMode::Discrete} as the last argument to any plane integral or flux constructor instructs the internal \class{BlockReduction3D2D} instance to reduce the hyperplane by evaluating the underlying functor at the nearest lattice points instead of by interpolating physical positions.

Note that this imposes restrictions on the accepted hyperplane and its lattice:

\begin{itemize}
\item \class{Hyperplane3D} normal must be orthogonal to a pair of unit vectors
\item \class{HyperplaneLattice3D} spacing must equal the distance between lattice nodes
\end{itemize}

The restriction on the hyperplane lattice spacing is fulfilled implicitly when automatic lattice parametrization is used. 

\begin{lstlisting}[language=myc++]
// discrete flux usage in examples/aorta3d
SuperPlaneIntegralFluxVelocity3D<T> vFluxInflow( sLattice, converter, superGeometry, inflow, materials, BlockDataReductionMode::Discrete );
\end{lstlisting}

\subsubsection{Wall Shear Stress Functor}
The \emph{Wall Shear Stress} is defined as the parallel force per unit area exerted by a fluid on a wall.
In the context of macroscopic fluid mechanics the \emph{Wall Shear Stress} of a Newtonian fluid is given by
\begin{equation}\label{wss_analytical}
\tau_W = \mu \frac{\partial \bm{u}}{\partial y}\bigg|_{y=0} ~,
\end{equation}
where $\mu$ is the dynamic viscosity, $u$ is the velocity field and $y$ the coordinate perpendicular to the wall.
The \emph{Wall Shear Stress Functor} calculates the discrete Wall Shear Stress
\begin{equation}\label{wss.tractionsimplified}
\tau_W = \text{\boldmath$\sigma$} \cdot \bm{n} -((\text{\boldmath$\sigma$}\cdot \bm{n})\cdot \bm{n})\cdot \bm{n} ~,
\end{equation}
where $\text{\boldmath$\sigma$}$ is the Cauchy stress tensor and $\bm{n}$ the local unit normal vector of the surface.
Since the lattice stress tensor $\text{\boldmath$\Pi$}$ is not defined on boundary cells, it is read out from an adjacent fluid cell in a discrete velocity direction associated with each boundary cell. The unit normal vector is obtained by a given \class{IndicatorF3D} instance, which is slightly increased in size. See \path{examples/poiseuille3d} for usage details. Due to the staircase approximation of the boundary, the wall shear stress calculation is first order accurate.
\begin{figure}[ht]
	\centering
	\includegraphics[width=0.6\textwidth]{./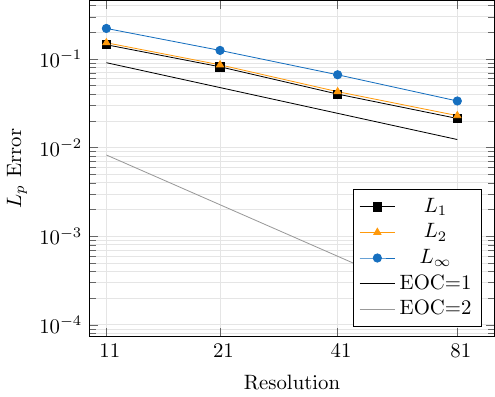}
	\caption{Relative error of Wall Shear Stress $L_p$ norm in \protect\path{poiseuille3d}.}
	\label{fig:wss_eoc}
\end{figure}

\subsection{Error Norm Functors}

While relative and absolute error norms may be calculated manually using functor arithmetic (see \ref{sec:advancedFunctorUsage}), they are also available as distinct functors. As such it is preferable to utilize \class{Super\-RelativeErrorLpNormXD} and \class{SuperAbsoluteErrorLpNormXD} if one uses the common definition of relative and absolute error norms.

Let \texttt{wantedF} be the simulated solution functor and \texttt{f} the analytical solution, then 
\begin{align}
\texttt{SuperRelativeErrorLpNormXD} &\text{ implements } \frac{\|\texttt{wantedF} - \texttt{f}\|_p}{\|\texttt{wantedF}\|_p} ~,  \\
\texttt{SuperAbsoluteErrorLpNormXD} &\text{ implements } \|\texttt{wantedF} - \texttt{f}\|_p ~.
\end{align}
An example of how to use these error norm functors in practice is given by the Poiseuille flow example as described in section~\ref{sec:poiseuille2d and poiseuille3d}.

\begin{lstlisting}[language=myc++,caption={L1 velocity error in \texttt{poiseuille2d}},label=lst:poiseuille2dVelError]
Poiseuille2D<T> uSol(axisPoint, axisDirection, maxVelocity, radius);
SuperLatticePhysVelocity2D<T,DESCRIPTOR> u(sLattice, converter);
auto indicatorF = superGeometry.getMaterialIndicator(1);

SuperAbsoluteErrorL1Norm2D<T> absVelErrorNormL1(u, uSol, indicatorF);
absVelErrorNormL1(result, tmp);
clout << "velocity-L1-error(abs)=" << result[0];
SuperRelativeErrorL1Norm2D<T> relVelErrorNormL1(u, uSol, indicatorF);
relVelErrorNormL1(result, tmp);
clout << "; velocity-L1-error(rel)=" << result[0] << std::endl;
\end{lstlisting}

Further implementation details are touched upon in section~\ref{sec:functorComposition}.

\subsubsection{Grid Refinement Metric Functors}

\class{SuperLatticeRefinementMetricKnudsen(2,3)D} implements an automatic block-level grid refinement criterion as described by \citeauthor{lagrava:15} in \citetitle{lagrava:15}~\cite{lagrava:15}. This criterion uses the quality of the cell-local Knudsen number approximation as measured by \class{SuperLatticeKnudsen*D} to judge the adequacy of the block resolution.


\chapter{Flow Control and Optimization} 

In almost every application, the optimization of a process regarding certain objectives is of general interest. 
In optimal flow control we are interested in optimizing the flow problem regarding a specific need, \ie to find optimal flow parameters which lead to the desired flow properties. 
OpenLB provides a framework to solve such optimization problems with iterative, gradient-based methods. Basically, a flow simulation is encapsulated into the optimization framework which performs several flow simulations. In each step the input parameters are varied in order to approach the optimal solution. 
This chapter explains first how these optimal solutions are found and second how this is realized within OpenLB.
Therefore, this section is structured as follows. Section~\ref{sec:opti-intro} gives a brief overview of the theoretical optimization concepts and Section~\ref{sec:opti-implementation} deals with their implementation in OpenLB. 
Note that, in this chapter, specific notation is introduced where necessary and might not conform to the rest of the user guide. 
For examples, tensors and scalars are mostly printed in similar fonts below. 

\section{Introduction}\label{sec:opti-intro}

This chapter is dedicated to the solution of optimization problems of the general form
\begin{equation}
  \textrm{Find } \alpha \textrm{, s.t. } J(f, \alpha) \textrm{ is minimized while } G(f, \alpha) = 0 ~.
\end{equation}
In this context, $\alpha \in \mathbb{R}^d$ is called the \emph{control variable}, $f$ is the state, $J$ is the \emph{objective functional} and $G$ is the side condition.
For simplicity, we assume in our code that the side condition $G(f,\alpha)=0$ yields a unique solution $f$ for any admissible control $\alpha$, s.t. we have $f=f(\alpha)$ (unrestricted optimization approach). 

Possible optimization problems are:
\begin{itemize}
  \item Finding the minimum of an analytic function $J$ w.r.t.\ an argument vector $\alpha$, \eg as a postprocessing step.
  \item Optimization of flow simulation parameters: e.g., the performance of a mixer could be optimized w.r.t.\ simulation parameters.
    In this case, $\alpha$ contains the free simulation parameters, $G$ are the governing fluid flow equations (mathematical model), $f$ is the solution of the governing equations for a given $\alpha$ (\eg the concentration field) and $J$ measures the mixing quality.
  \item Parameter identification: sometimes, physical simulation parameters are not known, but with the help of additional data (\eg NMR-data of the fluid velocity), we can reconstruct them:
    Here, we set $\alpha$ to be the unknown parameters, $G$ are the governing equations, $f$ is their solution and $J$ measures the difference between simulated and ''true'' data (\eg $J = \tfrac{1}{2} \Vert \mathbf{u}(f(\alpha))- \mathbf{u}_{NMR} \Vert^2$). Using optimization, we can then find the unknown parameter $\alpha$ with the least difference to the ''true'' data. 
\end{itemize}

The optimization framework provided in OpenLB is rather designed for large-scale applications like fluid flow simulations, where a time critical step (regarding computation time) is the function evaluation and not the optimization routine. 
For an application-oriented introduction, we also refer to the example \class{showcaseRosenbrock}.

\subsection{Iterative Approach -- Line Search Algorithm}\label{subsec:opti-linesearch}

The optimization methods in the OpenLB framework employ an iterative, gradient-based approach based on the line search concept. 
Starting with an initial guess for the control variables $\alpha_k$, the next set of control variables $\alpha_{k+1}$ are found by proceeding along a direction $d$ with a step size $s$. With each step the objective function should be minimized, \ie
\begin{align}
    \alpha_{k+1} = \alpha_k + d\,s, \quad with \quad J(\alpha_{k+1}) < J(\alpha_k) ~.
\end{align}

\noindent After each iteration step, either the objective function $J$ or its derivatives regarding the current control variables $\frac{\partial J}{\partial \alpha}$ are evaluated. If they are below a user-defined threshold, the algorithm exits. The open questions at this point are listed together with their corresponding subsections giving answers to those questions:
\begin{enumerate}
    \item How to choose the (descent) direction? (\ref{subsec:opti-descent})
    \item How to choose the step size? (\ref{subsec:opti-step})
    \item How can we compute the derivatives? (\ref{subsec:opti-derivative})
\end{enumerate}

\subsection{Descent Algorithms}\label{subsec:opti-descent}

The descent algorithm computes the descent direction $d$ for our next iteration step, \ie the direction where we likely approach the optimal solution. One possible approach is to use the local gradient of the objective function $\frac{\partial J}{\partial \alpha}$ as the descent direction. This method is referred to as the steepest descent approach. Since only first order derivatives are considered here, this method is relatively slow (it requires more iteration steps) compared to Newton or Quasi-Newton methods while its main advantage is its stability.
Quasi-Newton methods such as LBFGS achieve higher convergence orders since these additionally include the approximated second derivatives. In the OpenLB framework, the steepest descent and LBFGS algorithm are provided for computing the descent direction.

\subsection{Step Size Algorithms}\label{subsec:opti-step}

The choice of suitable step sizes is a nontrivial task; if the step sizes are too large we may overshoot and miss the optimum. On the other side, if the steps are too small we will need a lot of iterations until the optimal solution is reached, resulting in longer computation duration. To find the optimal step size, step conditions like Armijo, normal or strong Wolfe-Powell rules can be applied.

The Armijo rule prevents choosing too large step sizes by checking if we achieve a certain minimal decrease in our objective function. The objective function in the next iteration step $k+1$ can be approximated as
\begin{align}
    J(\alpha_{k+1}) \approx J(\alpha_k) + s_k \nabla J(\alpha_k) ^ T d ~,
\end{align}
where $s_k$ is the current step size. First, relatively large step sizes are chosen and the objective function is evaluated at $\alpha_{k+1}$. If the objective function is decreased sufficiently in comparison to the expected value for the objective function, the algorithm exits with the current step size; if not, a smaller step size is chosen. This condition can be written as
\begin{align}
    J(\alpha_{k+1}) = J(\alpha_k + s_k\,d) \leq J(\alpha_k) + \rho \,s_k \nabla J(\alpha_k) ^ T d \quad \text{with } \rho \in [0,1] ~.
\end{align}
These steps are repeated until this condition is fulfilled. To add further flexibility a scalar $\rho$ is introduced here allowing us to define what portion of the expected linear decrease we want to assure by adjusting the threshold. Note that the Armijo rule requires an evaluation of the objective function for each iteration step.

The Wolfe-Powell rule can be classified into the normal and strong formulation. Both contain the Armijo rule and consider an additional criterion which prevents us from choosing step sizes which are too small. For this, the directional derivative (scalar product of the proceeding direction and the derivative) on the old iteration is compared to the directional derivative on the new iteration point. Assuming that we are at the optimal solution, the directional derivative is zero. During our line search we are descending regarding the objective function which means that our directional derivative is always a negative value. If we move closer towards the optimum, the directional derivative must increase to get eventually to zero. Thus, we can check
\begin{align}
    \nabla J(\alpha_k + s_k\,d)^T d \geq \delta \nabla J(\alpha_k)^T d ~,
\end{align}
where $\delta$ is a tuning parameter. This is referred to as the normal Wolfe-Powell formulation. The strong formulation reads
\begin{align}
    |\nabla J(\alpha_k + s_k\,d)^T d| \leq - \delta \nabla J(\alpha_k)^T d ~.
\end{align}
The step size is varied until both the Armijo rule and the second condition of either the normal or strong Wolfe-Powell formulation are satisfied (or by exceeding the maximal number of attempts). Note that the Wolfe-Powell rule requires for each iteration step an evaluation of the objective function and its derivative at that position.

\subsection{Derivative Computation}\label{subsec:opti-derivative}

For almost all algorithms the evaluation of function gradients are necessary. For those there are also several approaches such as finite-difference schemes and algorithmic differencing. The following options are implemented in OpenLB:
\begin{itemize}
  \item Usage of forward/ central difference quotients: evaluate $J$ for neighboring values of $\alpha$ and compute the difference quotients.
    This is the simplest method and for a small number of control variables it is fast. However, it is the least accurate method.
  \item Forward automatic differentiation: evaluate $J$ for operator-overloaded variables of type \texttt{ADf<T,n>}, where \texttt{T} is the underlying arithmetic type and \texttt{n} is the number of control variables.
    This method is a little slower that difference quotients, but it usually returns derivatives at full machine precision.
  \item Adjoint LBM: adjoint Lattice-Boltzmann equations are used, cf.~\cite{krause:12a, klemens:18a}.
    This method is perfect for distributed control problems since the computational expense remains constant for any number of control variables.
    However, this method requires the (theoretical) calculation of adjoint LB-equations which is a problem-specific task.
\end{itemize}

\section{Implementation}\label{sec:opti-implementation}

The implementation is separated into two key steps: the \class{OptiCase} classes define how the objective functional $J$ and its gradient are computed as functions of $\alpha$.
The \class{Optimizer} classes define optimization methods such as steepest descent or step size algorithms (independent of the question, how the function and gradient evaluations are performed).

\subsection{Optimizer Classes: Optimization Methods}

The common methods steepest descent, LBFGS and Barzilai--Borwein are implemented as children classes of the \class{Optimizer} classes.
Their various free parameters (\eg an upper limit for the number of optimization steps) can be passed via the constructor or read from an \texttt{xml} file. 
In many situations, the control variables have to remain in a certain range in order to guarantee physical meaningfulness and a stable simulation. E.g.\ for porosity optimization, the porosity has to lie between $0$ and $1$ in order to be physically meaningful. This can be achieved via bounding the control by user-defined thresholds (typically used for parameter optimization) or by using a smooth projection map $p\colon \mathbb{R} \to [0, 1]$ (helpful for porosity optimization).
Alternatively, if one has more than one control variables, a vector with the upper and lower bounds for each variable can be used to bound the control.

\subsection{OptiCase Classes: Gradient Computation}

For the numerical evaluation of the functional gradient $\tfrac{dJ}{d\alpha}$, the following four options are implemented:
\begin{itemize}
  \item Forward/ central difference quotients.
    The method is implemented in the classes \class{OptiCaseFDQ} and \class{OptiCaseCDQ}, where the user (only) has to provide an expression for the evaluation of $J$ at construction.
  \item Forward automatic differentiation. Evaluate $J$ for operator-overloaded variables of type \texttt{ADf<T,n>}, where \texttt{T} is the underlying arithmetic type and \texttt{n} is the number of control variables.
    Therefore, the full source code has to be templatized w.r.t.\ the arithmetic type \texttt{T} and one has to take care that the arithmetic operations are differentiable (which is both satisfied by the OpenLB code basis).
    One then passes two instances of $J$ (one with type \texttt{T}, one with type \texttt{ADf<T,n>}) to the class \class{OptiCaseAD}, which then does the rest.
  \item Adjoint LBM.
    This method requires the (theoretical) calculation of adjoint LB-equations which is a problem-specific task.
    Because of that, the implementing class \class{OptiCaseDual} poses direct assumptions on the implementation of the LB simulation (e.g., it requires usage of the Solver framework and has so far only been implemented in the context of porosity and force optimization, cf.\ examples \class{DomainIdenfification3d} and \class{TestFlowOpti3d}).
\end{itemize}

\noindent For difference quotients and forward automatic differentiation, any functions of type \texttt{T} 
\texttt{(const} \texttt{std::vector\&)} (with or without LB simulation) are passed.
Flow simulations have to be encapsulated by a suitable wrapper that accepts the control variables, runs the simulation and computes the objective.
In the Solver app structure, the \texttt{getCallable} method fulfills this task.
An introduction into forward automatic differentiation, whose usage is not restricted to the optimization context, is provided in the example \class{showcaseADf}.

\subsection{Parameter Explanation and Reading from XML} 
In this short overview the relevant parameters for an app with optimization are listed with the respective names for using an \texttt{xml}-file for the input.
The different parameters of each part in the \texttt{xml}-file are explained via a table of the following form:
\begin{center}
\small
\begin{tabular}[ht]{ | m{1.5cm} | m{2.0cm}| m{4cm} | m{1.5cm} | m{3.5cm} | } 
  \hline
  \textbf{Parameter} &\textbf{Name}(type) & [\textbf{declaration} \& \textbf{definition}] (\&\& and other usages) & Default value & \textbf{Explanation:} (if available, all) \textbf{\textit{possibilities}}\\ 
  \hline
\end{tabular}
\end{center}

The explanation of each column is as follows:
\small
\begin{itemize}
    \item \textbf{Parameter:} Name of the parameter in the \texttt{xml}-file
    \item \textbf{Name} (type): Name of the parameter in the source code and its data type in brackets. Besides the common data types the abbreviations S and T are used for template parameters.
    
    \item $[$\textbf{declaration} \& \textbf{definition}$]$ (\&\& other usages): Location of declaration and definition (\&\& and sometimes some other important usages) of the parameter, e.g.:\\
    $[$\path{solver.h} \& \path{solver.hh}$]$ (\&\& \path{example.cpp})\\
    If the location of the declaration and definition is the same, only one location is indicated, \eg $[$\path{solver.hh}$]$\\
    If there is more than one location for the declaration and definition, it is indicated with an \textit{and}, \eg 
    $[$\path{solver.h} \& \path{solver.hh}$]$ and $[$\path{optiCaseDual.hh}$]$\\
    If there are different possibilities for the location of declaration and definition, it is indicated with curved brackets \textit{()} and an \textit{or}, \eg\\
    ($[$\path{optimizerSteepestDescent.h}$]$ or $[$\path{optimizerLBFGS.h}$]$ or $[$\path{optimizerBarzilaiBorwein.h}$]$)
    \item \textbf{Default value}: Is this parameter essential or optional?\\
    If a parameter is optional, it does not need to be defined. Then, the default value can be seen in this column.\\
    If a parameter is of such importance that without it the program has to exit, it is labeled as \textit{EXIT}.\\
    Some parameters are indicated with \textbf{unused} which means, that the parameter is read but not used afterwards.
    \item \textbf{Explanation: } (if available, all) \textbf{\textit{possibilities}}: Brief description and explanation of the parameter. Some parameters have different possibilities for their definition. In this case, all available possibilities are also offered in \textbf{bold} type letters, e.g.:\\
    \textit{\textbf{ad}} for \textit{OptiCaseAD} or\\
    \textit{\textbf{dual}} for \textit{OptiCaseDual} or\\
    \textit{\textbf{adTest}} for \textit{OptiCaseADTest}
\end{itemize}

The arrangement of the parameters in the \texttt{xml}-file has the following structure:\\
example:

\begin{lstlisting}[language=XML]
<Param>
  <Optimization>
    <MaxStepAttempts> 20 </MaxStepAttempts>
  </Optimization>
</Param>
\end{lstlisting}

\newpage

\begin{center}

\small
\begin{longtable}[ht!]{ | m{1.5cm} | m{2cm}| m{4cm} | m{1.5cm} | m{3.5cm} | }
  \hline
  \textbf{Parameter} &\textbf{Name}(type) & [\textbf{declaration} \& \textbf{definition}] (\&\& and other usages) & Default value & \textbf{Explanation:} (if available, all) \textbf{\textit{possibilities}}\\ 
  \hline \hline
  
  \textit{ControlType} & \_controlType & $[$\path{optiCaseDual.h} \& & & Defines the control type to optimize\\
  & (std::string) & \path{optiCaseDual.hh}$]$ & & \textit{\textbf{force}} or\\
  & & & & \textit{\textbf{porosity}}\\
  \hline
  
  \textit{Control-Material} & \_control-Material (int) & $[$\path{optiSolverParameters.h}$]$ (\&\& \path{optiCaseDual.h} and \path{optiCaseDual.hh}) & 0 & Defines the number of control material, here 6 for porosity optimization problems\\
  \hline
  
  \textit{Field-Dimension} & \_fieldDim (int) & $[$\path{optiSolverParameters.h}$]$ (\&\& \path{optiCaseDual.h} and \path{optiCaseDual.hh}) & 0 & Spatial dimension of controlled field\\
  \hline
  
  \textit{Dimension-Control} & \_dimCtrl (int) & ($[$\path{optiCaseDual.h} \& \path{optiCaseDual.hh}$]$) and ($[$\path{optimizer.h} \& \path{optimizer.hh}$]$) ($[$\path{optimizerSteepestDescent.h}$]$ or $[$\path{optimizerLBFGS.h}$]$ or $[$\path{optimizerBarzilaiBorwein.h}$]$) and $[$\path{controller.h}$]$ & EXIT & Upper limit for number of control variables, so far not read by xml\\
  \hline
  
  \textit{RegAlpha} & regAlpha (S) & $[$\path{optiCaseDual.h} \& \path{optiCaseDual.hh}$]$ & 0 & Weighting factor for regularization, so far unused\\
  \hline
  
  \textit{Lambda} & \_lambda (T) & ($[$\path{optimizerSteepestDescent.h}$]$ or $[$\path{optimizerLBFGS.h}$]$ or $[$\path{optimizerBarzilaiBorwein.h}$]$) (\&\& \path{optimizerLineSearch.h}) & 1 & Determines the initial step length lambda\\
  \hline
  
  \textit{MaxIter} & \_maxIt (int) & $[$\path{optimizer.h} \& (\path{optimizerSteepestDescent.h} or \path{optimizerLFBGS.h} or \path{optimizerBarzilaiBorwein.h})$]$ & 100 & Maximal number of iterations of the optimizer\\
  \hline

  \textit{MaxStep-Attempts} & \_maxStep-Attempts (int) & $[$\path{optimizer.h} \& (\path{optimizerSteepestDescent.h} or \path{optimizerLFBGS.h} or \path{optimizerBarzilaiBorwein.h})$]$ & 20 & Maximal number of attempts at each optimization step for finding a suitable step size\\
  \hline
  
  \textit{FailOnMax-Iter} & \_failOn-MaxIter (bool) & $[$\path{optimizer.h}$]$ (\&\& \path{optimizer.hh} or \path{optimizerLBFGS.h} or \path{optimizerLineSearch.h}) & 1 (true) & if true, the optimization fails when reaching \_maxIt and prints the warning: Optimization problem failed to converge within specified iteration limit of \_maxIt iterations with tolerance of \_eps\\
  \hline
  
  \textit{Tolerance} & \_eps (T) & $[$\path{optimizer.h}$]$ (\&\& \path{optimizer.hh}) & 1e-10 & Tolerance of the optimizer. Optimizer stops if the norm of the vector of derivatives of the object functional is smaller than the tolerance of the optimizer\\
  \hline
  
  \textit{L} & \_l (int) & $[$\path{optimizerLBFGS.h}$]$ & 20 & Maximal number of stored iteration steps for LBFGS algorithm.\\
  \hline
  
  \textit{Verbose} & \_verboseOn (bool) & $[$\path{optimizer.h} \& \path{optimizer.hh}$]$ (\&\& \path{optimizerLineSearch.h}) & 1 (true) & Print Warnings and further output in the terminal, if true\\
  \hline
  
  \textit{InputFile-Name} & fname (std::string) & ($[$\path{optimizerSteepstDescent.h}$]$ or $[$\path{optimizerLBFGS.h}$]$ or $[$\path{optimizerBarzilaiBorwein.h}$]$) (\&\& \path{optimizer.h} and \path{optimizerLineSearch.hh}) & "control.dat" & Name of the file that contains the initial guess for the control values\\
  \hline
  
  \textit{ControlTol-erance} & \_controlEps (T) & ($[$\path{optimizer.h} \& \path{optimizerSteepestDescent.h} and \path{optimizerLBFGS.h} and \path{optimizerBarzilaiBorwein.h}$]$ (\&\& \path{optimizer.hh}) & 0 & Optimization stops if the change of the control variables is less than this tolerance.\\
  \hline
  
  \textit{StepCondi-} & stepCondition & ($[$\path{optimizerSteepest}-  & "Armijo" & Defines the step condition:\\
  \textit{tion} & (std::string) & \path{Descent.h}$]$ or & (SD), & \textbf{None} or\\
  & & $[$\path{optimizerLBFGS.h}$]$ or & "Strong & \textbf{Smaller} or\\
  & & $[$\path{optimizerBarzilai}- & -Wolfe" & \textbf{Armijo} or\\
  & & \path{Borwein.h}$]$) & (LBFGS), & \textbf{Wolfe} or\\
  & & (\&\& \path{optimizerLineSearch.h}) & "None",& \textbf{StrongWolfe}\\
  & & & (Barzilai--Borwein) &\\
  \hline
  
  \textit{Vector-Bounds} & \_vectorBounds (bool) & $[$\path{optimizer.h} \& \path{optimizerLBFGS.h} or \path{optimizerBarzilaiBorwein.h}$]$ (\&\& \path{optimizer.h} and \path{optimizerLineSearch.h}) & 0 (false) & Determines, whether bounds on the control variables are applied component-wise\\
  \hline
  
  \textit{Projection} & \_projection-Name (std::string)& $[$\path{optiCaseDual.h} \& \path{optiCaseDual.hh}$]$ & & mapping method between the physical and computational control variables:\\
  &  & & & \textbf{Sigmoid} or\\
  & & & & \textbf{Rectifier} or \\
  & & & & \textbf{Softplus} or\\
  & & & & \textbf{Baron} or\\
  & & & & \textbf{Krause} or\\
  & & & & \textbf{Foerster} or\\
  & & & & \textbf{FoersterN} or\\
  & & & & \textbf{StasiusN} or\\
  \hline
  
\textit{Reference-Solution} & \_compute-Reference (bool) & $[$\path{optiCaseDual.h} \& \path{optiCaseDual.hh}$]$ & false & states if reference solution is available\\
\hline
  \textit{StartValue} & startValue (T) & (\&\& \path{optimizer.h} and \path{optimizer.hh}) & 0 & Determines the initial guess for the optimization algorithm\\
  \hline
  
  \textit{Upper-Bound} & \_upperBound (T) & ($[$\path{optimizer.h} \& \path{optimizerSteepestDescent.h} and \path{optimizerLBFGS.h} and \path{optimizerBarzilaiBorwein.h}$]$ (\&\& \path{optimizer.hh}) & false in all three optimizers & Sets an upper bound for the control values during the optimization process\\
  \hline
  
  \textit{Lower-Bound} & \_lowerBound (T) & ($[$\path{optimizer.h} \& \path{optimizerSteepestDescent.h} and \path{optimizerLBFGS.h} and \path{optimizerBarzilaiBorwein.h}$]$ (\&\& \path{optimizer.hh}) & false in all three optimizers & Sets a lower bound for the control values during the optimization process\\
  \hline
  
  \textit{Volume-Ratio} & volumeRatio (T) & $[$\path{optiCaseDual.hh}$]$ & \textbf{unused} (//) & Determines the volume ratio\\
  \hline 
  \pagebreak
  \hline
  \multicolumn{5}{|c|}{\textit{VisualizationGnuplot:}}\\
  \hline
  
  \textit{Visualized-Parameters} & gplotAnalysis-String (std::string) & $[$\path{optimizerSteepestDescent.h}$]$ or $[$\path{optimizerLBFGS.h}$]$ or $[$\path{optimizerBarzilaiBorwein.h}$]$ & & Lists the parameters that will be plotted by Gnuplot during the optimization process. Possibilities:\\
  & & & & \textbf{VALUE}\\
  & & & & \textbf{CONTROL}\\
  & & & & \textbf{DERIVATIVE}\\
  & & & & \textbf{ERROR}\\
  & & & & \textbf{NORM\_DERIVATIVE}\\
  \hline
\end{longtable}
\end{center}


\chapter{Examples} 
\label{sec:samples}

\section{Example Overview}
A list of the currently included examples is given together with related keywords in the following Tables~\ref{tab:examplesI} and \ref{tab:examplesII}.
\newgeometry{
    left=0.5cm, 
    right=0.5cm, 
    bottom=1cm, 
    top=1cm
    }

\begin{landscape}
\begin{footnotesize}
\renewcommand{\arraystretch}{1,1}

\begin{table}\footnotesize

\begin{tabular}{ |c|p{4.5cm}|c|c|c|c|c|c|c|c|c|c|c|c| }

\noalign{\hrule height 1pt}

\multirow{2}{*}{folder} & \multirow{2}{*}{example} & \multirow{2}{*}{turbulent} & \multirow{2}{*}{thermal} & multi & multi & \multirow{2}{*}{particles} & porous & transient  & \multirow{2}{*}{benchmark} & \multirow{2}{*}{showcase} & STL  & geometry  & check-\\

& & & & Component & Phase & & Media & flow & & & geometry & primitives & pointing \\

\noalign{\hrule height 1pt}

\multirow{2}{*}{\makecell{adsorption}} & \cellcolor{darkgray} \makecell[l]{adsorption3D}		  & \cellcolor{darkgray}&\cellcolor{darkgray} & \cellcolor{RoyalBluedark}& \cellcolor{RoyalBluedark}& \cellcolor{RoyalBluedark}& \cellcolor{darkgray}& \cellcolor{darkgray}&\cellcolor{RoyalBluedark} &\cellcolor{darkgray} & \cellcolor{darkgray}& \cellcolor{RoyalBluedark} & \cellcolor{darkgray}\\

	& \cellcolor{lightgray} \makecell[l]{microMixer3D}  	 	& \cellcolor{lightgray}&\cellcolor{lightgray} & \cellcolor{RoyalBlue}& \cellcolor{RoyalBlue}& \cellcolor{RoyalBlue}& \cellcolor{lightgray}& \cellcolor{lightgray}&\cellcolor{lightgray} &\cellcolor{RoyalBlue} & \cellcolor{RoyalBlue}& \cellcolor{RoyalBlue} & \cellcolor{lightgray}\\

\noalign{\hrule height 1pt}

\multirow{6}{*}{\makecell{advectionDiffusion \\ Reaction}} & \cellcolor{darkgray} \makecell[l]{advectionDiffusionReaction2d}		  & \cellcolor{darkgray}&\cellcolor{darkgray} & \cellcolor{darkgray}& \cellcolor{darkgray}& \cellcolor{darkgray}& \cellcolor{darkgray}& \cellcolor{darkgray}&\cellcolor{RoyalBluedark} &\cellcolor{darkgray} & \cellcolor{darkgray}& \cellcolor{RoyalBluedark} & \cellcolor{darkgray}\\

							& \cellcolor{lightgray} \makecell[l]{reactionFiniteDifferences2d}  	 	& \cellcolor{lightgray}&\cellcolor{lightgray} & \cellcolor{lightgray}& \cellcolor{lightgray}& \cellcolor{lightgray}& \cellcolor{lightgray}& \cellcolor{lightgray}&\cellcolor{RoyalBlue} &\cellcolor{lightgray} & \cellcolor{lightgray}& \cellcolor{RoyalBlue} & \cellcolor{lightgray}\\

& \cellcolor{darkgray} advectionDiffusion1d		  & \cellcolor{darkgray}&\cellcolor{darkgray} & \cellcolor{darkgray}& \cellcolor{darkgray}& \cellcolor{darkgray}& \cellcolor{darkgray}& \cellcolor{RoyalBluedark}&\cellcolor{RoyalBluedark} &\cellcolor{darkgray} & \cellcolor{darkgray}& \cellcolor{RoyalBluedark} & \cellcolor{darkgray}\\

							& \cellcolor{lightgray} advectionDiffusion2d  	 	& \cellcolor{lightgray}&\cellcolor{lightgray} & \cellcolor{lightgray}& \cellcolor{lightgray}& \cellcolor{lightgray}& \cellcolor{lightgray}& \cellcolor{RoyalBlue}&\cellcolor{RoyalBlue} &\cellcolor{lightgray} & \cellcolor{lightgray}& \cellcolor{RoyalBlue} & \cellcolor{lightgray}\\

							& \cellcolor{darkgray} advectionDiffusion3d  	 	& \cellcolor{darkgray}&\cellcolor{darkgray} & \cellcolor{darkgray}& \cellcolor{darkgray}& \cellcolor{darkgray}& \cellcolor{darkgray}& \cellcolor{RoyalBluedark}&\cellcolor{RoyalBluedark} &\cellcolor{darkgray} & \cellcolor{darkgray}& \cellcolor{RoyalBluedark} & \cellcolor{darkgray}\\

														& \cellcolor{lightgray} \makecell[l]{advectionDiffusionPipe2d}  	 	& \cellcolor{lightgray}&\cellcolor{lightgray} & \cellcolor{lightgray}& \cellcolor{lightgray}& \cellcolor{lightgray}& \cellcolor{lightgray}& \cellcolor{RoyalBlue}&\cellcolor{RoyalBlue} &\cellcolor{RoyalBlue} & \cellcolor{lightgray}& \cellcolor{RoyalBlue} & \cellcolor{lightgray}\\

\noalign{\hrule height 1pt}

\multirow{9}{*}{laminar}  &\cellcolor{darkgray} bstep2d								&\cellcolor{darkgray} &\cellcolor{darkgray} & \cellcolor{darkgray}& \cellcolor{darkgray}&\cellcolor{darkgray} & \cellcolor{darkgray} & \cellcolor{RoyalBluedark} & \cellcolor{RoyalBluedark} & \cellcolor{darkgray}& \cellcolor{darkgray}& \cellcolor{RoyalBluedark} &\cellcolor{RoyalBluedark}\\

& \cellcolor{lightgray}bstep3d										& \cellcolor{lightgray} & \cellcolor{lightgray} &\cellcolor{lightgray} & \cellcolor{lightgray}&\cellcolor{lightgray} &\cellcolor{lightgray} &\cellcolor{RoyalBlue} & \cellcolor{RoyalBlue} &\cellcolor{lightgray} &\cellcolor{lightgray} & \cellcolor{RoyalBlue} & \cellcolor{RoyalBlue}\\

& \cellcolor{darkgray}cavity2d	& \cellcolor{darkgray}& \cellcolor{darkgray}& \cellcolor{darkgray}& \cellcolor{darkgray}& \cellcolor{darkgray}& \cellcolor{darkgray}&\cellcolor{RoyalBluedark} & \cellcolor{RoyalBluedark} &\cellcolor{darkgray} & \cellcolor{darkgray}& \cellcolor{RoyalBluedark} & \cellcolor{darkgray} \\

& \cellcolor{lightgray}cavity3d										& \cellcolor{lightgray}& \cellcolor{lightgray}& \cellcolor{lightgray}& \cellcolor{lightgray}&\cellcolor{lightgray} & \cellcolor{lightgray}&\cellcolor{RoyalBlue} & \cellcolor{RoyalBlue} &\cellcolor{lightgray} & \cellcolor{lightgray}& \cellcolor{RoyalBlue} & \cellcolor{lightgray}\\

& \cellcolor{darkgray}cylinder2d										&\cellcolor{darkgray} & \cellcolor{darkgray}& \cellcolor{darkgray}& \cellcolor{darkgray}&\cellcolor{darkgray} & \cellcolor{darkgray}& \cellcolor{darkgray}&\cellcolor{RoyalBluedark} &\cellcolor{darkgray} & \cellcolor{darkgray}& \cellcolor{green} & \cellcolor{darkgray} \\

& \cellcolor{lightgray}cylinder3d										&\cellcolor{lightgray} & \cellcolor{lightgray}& \cellcolor{lightgray}& \cellcolor{lightgray}&\cellcolor{lightgray} &\cellcolor{lightgray} &\cellcolor{lightgray} &\cellcolor{RoyalBlue}&\cellcolor{lightgray} &\cellcolor{RoyalBlue} & \cellcolor{lightgray}& \cellcolor{lightgray} \\

& \cellcolor{darkgray}poiseuille2d
&\cellcolor{darkgray} &\cellcolor{darkgray} &\cellcolor{darkgray} & \cellcolor{darkgray}&\cellcolor{darkgray} & \cellcolor{darkgray}& \cellcolor{darkgray}&\cellcolor{RoyalBluedark} & \cellcolor{darkgray}& \cellcolor{darkgray}& \cellcolor{RoyalBluedark} & \cellcolor{darkgray}\\

& \cellcolor{lightgray}poiseuille2dEOC
&\cellcolor{lightgray} & \cellcolor{lightgray}&\cellcolor{lightgray} & \cellcolor{lightgray}& \cellcolor{lightgray}& \cellcolor{lightgray}&\cellcolor{lightgray} &\cellcolor{RoyalBlue} &\cellcolor{lightgray} &\cellcolor{lightgray} & \cellcolor{RoyalBlue} & \cellcolor{lightgray}\\

& \cellcolor{darkgray}poiseuille3d										&\cellcolor{darkgray} &\cellcolor{darkgray} & \cellcolor{darkgray}&\cellcolor{darkgray} &\cellcolor{darkgray} & \cellcolor{darkgray}& \cellcolor{darkgray}&\cellcolor{RoyalBluedark} &\cellcolor{darkgray} & \cellcolor{darkgray}& \cellcolor{RoyalBluedark} &\cellcolor{darkgray} \\

& \cellcolor{lightgray}powerLaw2d
&\cellcolor{lightgray} & \cellcolor{lightgray}&\cellcolor{lightgray} & \cellcolor{lightgray}& \cellcolor{lightgray}& \cellcolor{lightgray}&\cellcolor{lightgray} &\cellcolor{RoyalBlue} &\cellcolor{lightgray} &\cellcolor{lightgray} & \cellcolor{RoyalBlue} & \cellcolor{lightgray}\\

& \cellcolor{darkgray} testFlow3dSolver &\cellcolor{darkgray} & \cellcolor{darkgray}&\cellcolor{darkgray} & \cellcolor{darkgray}&\cellcolor{darkgray} &\cellcolor{darkgray} & \cellcolor{darkgray} & \cellcolor{RoyalBluedark} &\cellcolor{RoyalBluedark} & \cellcolor{darkgray} &\cellcolor{darkgray} & \cellcolor{darkgray}\\

\noalign{\hrule height 1pt}

\multirow{11}{*}{multiComponent} & \cellcolor{darkgray} binaryShearFlow2d				& \cellcolor{darkgray}& \cellcolor{darkgray}& \cellcolor{darkgray} & \cellcolor{RoyalBluedark} & \cellcolor{darkgray}& \cellcolor{darkgray}& \cellcolor{RoyalBluedark}& \cellcolor{RoyalBluedark} & \cellcolor{RoyalBluedark}& \cellcolor{darkgray}& \cellcolor{RoyalBluedark} & \cellcolor{darkgray}\\

 & \cellcolor{lightgray} contactAngle2d				& \cellcolor{lightgray}& \cellcolor{lightgray}& \cellcolor{lightgray} & \cellcolor{RoyalBlue} & \cellcolor{lightgray}& \cellcolor{lightgray}& \cellcolor{lightgray}& \cellcolor{RoyalBlue} & \cellcolor{lightgray}& \cellcolor{lightgray}& \cellcolor{RoyalBlue} & \cellcolor{lightgray}\\

& \cellcolor{darkgray} contactAngle3d		&\cellcolor{darkgray} &\cellcolor{darkgray} &\cellcolor{darkgray} &\cellcolor{RoyalBluedark} &\cellcolor{darkgray} & \cellcolor{darkgray}& \cellcolor{darkgray}& \cellcolor{RoyalBluedark}&\cellcolor{darkgray} & \cellcolor{darkgray}&\cellcolor{RoyalBluedark}& \cellcolor{darkgray}\\

& \cellcolor{lightgray} fourRollMill2d				& \cellcolor{lightgray}& \cellcolor{lightgray}& \cellcolor{lightgray} & \cellcolor{RoyalBlue} & \cellcolor{lightgray}& \cellcolor{lightgray}& \cellcolor{RoyalBlue}& \cellcolor{RoyalBlue} & \cellcolor{RoyalBlue}& \cellcolor{lightgray}& \cellcolor{RoyalBlue} & \cellcolor{lightgray}\\

& \cellcolor{darkgray} microFluidics2d								& \cellcolor{darkgray}& \cellcolor{darkgray}&\cellcolor{RoyalBluedark}&\cellcolor{darkgray} &\cellcolor{darkgray} & \cellcolor{darkgray}&\cellcolor{RoyalBluedark} & \cellcolor{darkgray}& \cellcolor{RoyalBluedark} &\cellcolor{darkgray} & \cellcolor{RoyalBluedark} & \cellcolor{darkgray}\\

& \cellcolor{lightgray} phaseSeperation2d								& \cellcolor{lightgray}& \cellcolor{lightgray}& \cellcolor{lightgray}&\cellcolor{RoyalBlue} & \cellcolor{lightgray}&\cellcolor{lightgray} & \cellcolor{RoyalBlue} &\cellcolor{lightgray} & \cellcolor{lightgray}&\cellcolor{lightgray} &\cellcolor{RoyalBlue} & \cellcolor{lightgray}\\

& \cellcolor{darkgray} phaseSeperation3d								&\cellcolor{darkgray} &\cellcolor{darkgray} &\cellcolor{darkgray} &\cellcolor{RoyalBluedark} & \cellcolor{darkgray}&\cellcolor{darkgray} & \cellcolor{RoyalBluedark} & \cellcolor{darkgray}& \cellcolor{darkgray} & \cellcolor{darkgray}&\cellcolor{RoyalBluedark} & \cellcolor{darkgray}\\

& \cellcolor{lightgray}rayleighTaylor2d								&\cellcolor{lightgray} & \cellcolor{lightgray}&\cellcolor{RoyalBlue}& \cellcolor{lightgray}&\cellcolor{lightgray} & \cellcolor{lightgray}&\cellcolor{RoyalBlue}&\cellcolor{RoyalBlue} &\cellcolor{lightgray} & \cellcolor{lightgray}& \cellcolor{RoyalBlue} &\cellcolor{lightgray} \\

& \cellcolor{darkgray} rayleighTaylor3d								&\cellcolor{darkgray} & \cellcolor{darkgray}&\cellcolor{RoyalBluedark} & \cellcolor{darkgray}& \cellcolor{darkgray}& \cellcolor{darkgray}& \cellcolor{RoyalBluedark} & \cellcolor{RoyalBluedark} & \cellcolor{darkgray}&\cellcolor{darkgray} &\cellcolor{RoyalBluedark} & \cellcolor{darkgray}\\

& \cellcolor{lightgray} youngLaplace2d									& \cellcolor{lightgray}& \cellcolor{lightgray}&\cellcolor{RoyalBlue} & \cellcolor{lightgray}& \cellcolor{lightgray}&\cellcolor{lightgray} & \cellcolor{lightgray}& \cellcolor{RoyalBlue} & \cellcolor{lightgray}& \cellcolor{lightgray}& \cellcolor{RoyalBlue} &\cellcolor{lightgray} \\

& \cellcolor{darkgray}youngLaplace3d									&\cellcolor{darkgray} &\cellcolor{darkgray} &\cellcolor{RoyalBluedark} &\cellcolor{darkgray} & \cellcolor{darkgray}& \cellcolor{darkgray}& \cellcolor{darkgray}& \cellcolor{RoyalBluedark} &\cellcolor{darkgray} & \cellcolor{darkgray}& \cellcolor{RoyalBluedark} & \cellcolor{darkgray} \\
\noalign{\hrule height 1pt}

\end{tabular}

\centering
\begin{tabular}{ p{4cm} p{0,6cm} p{0,6cm} l p{2cm} p{1,2cm} l}
& & & & & & \\
&\cellcolor{RoyalBlue}&\cellcolor{RoyalBluedark}& example includes relevant subject& &\cellcolor{green}&example includes relevant subject and is recommended for beginning \\
\end{tabular}
\caption{Currently included examples in OpenLB (continued in Table~\ref{tab:examplesII}).}
\label{tab:examplesI}
\end{table}

\pagebreak

\begin{table}\footnotesize
\begin{tabular}{ |c|p{4.5cm}|c|c|c|c|c|c|c|c|c|c|c|c| }

\noalign{\hrule height 1pt}

\multirow{2}{*}{folder} & \multirow{2}{*}{example} & \multirow{2}{*}{turbulent} & \multirow{2}{*}{thermal} & multi & multi & \multirow{2}{*}{particles} & porous & transient  & \multirow{2}{*}{benchmark} & \multirow{2}{*}{showcase} & STL  & geometry  & check-\\

& & & & Component & Phase & & Media & flow & & & geometry & primitives & pointing \\

\noalign{\hrule height 1pt}

		\multirow{4}{*}{optimization} & \cellcolor{lightgray} domainIdentification3d	&\cellcolor{lightgray} & \cellcolor{lightgray}&\cellcolor{lightgray} & \cellcolor{lightgray}&\cellcolor{lightgray} &\cellcolor{RoyalBlue} & \cellcolor{lightgray} & \cellcolor{RoyalBlue} &\cellcolor{RoyalBlue} & \cellcolor{lightgray} &\cellcolor{lightgray} & \cellcolor{lightgray}\\

& \cellcolor{darkgray} parameterIdentificationPoiseuille2d &\cellcolor{darkgray} &\cellcolor{darkgray} & \cellcolor{darkgray}& \cellcolor{darkgray}&\cellcolor{darkgray} & \cellcolor{darkgray}& \cellcolor{darkgray} &\cellcolor{darkgray} & \cellcolor{RoyalBluedark} & \cellcolor{darkgray}&\cellcolor{darkgray} & \cellcolor{darkgray}\\

& \cellcolor{lightgray} showcaseADf &\cellcolor{lightgray} &\cellcolor{lightgray} & \cellcolor{lightgray}& \cellcolor{lightgray}&\cellcolor{lightgray} & \cellcolor{lightgray}& \cellcolor{lightgray} &\cellcolor{lightgray} & \cellcolor{RoyalBlue} & \cellcolor{lightgray}&\cellcolor{lightgray} & \cellcolor{lightgray}\\

& \cellcolor{darkgray} showcaseRosenbrock & \cellcolor{darkgray}& \cellcolor{darkgray}&\cellcolor{darkgray} & \cellcolor{darkgray}&\cellcolor{darkgray} & \cellcolor{darkgray}& \cellcolor{darkgray} & \cellcolor{darkgray}& \cellcolor{RoyalBluedark} & \cellcolor{darkgray}&\cellcolor{darkgray} & \cellcolor{darkgray}\\

& \cellcolor{lightgray} testFlowOpti3d &\cellcolor{lightgray} &\cellcolor{lightgray} & \cellcolor{lightgray}& \cellcolor{lightgray}&\cellcolor{lightgray} & \cellcolor{lightgray}& \cellcolor{lightgray} &\cellcolor{RoyalBlue} & \cellcolor{RoyalBlue} & \cellcolor{lightgray}&\cellcolor{lightgray} & \cellcolor{lightgray}\\

\noalign{\hrule height 1pt}

		\multirow{4}{*}{particles} & \cellcolor{darkgray} bifurcation3d						&\cellcolor{darkgray} & \cellcolor{darkgray}&\cellcolor{darkgray} & \cellcolor{darkgray}&\cellcolor{RoyalBluedark} &\cellcolor{darkgray} & \cellcolor{RoyalBluedark} & \cellcolor{RoyalBluedark} &\cellcolor{RoyalBluedark} & \cellcolor{RoyalBluedark} &\cellcolor{darkgray} & \cellcolor{darkgray}\\

& \cellcolor{lightgray} dkt2d &\cellcolor{lightgray} &\cellcolor{lightgray} & \cellcolor{lightgray}& \cellcolor{lightgray}&\cellcolor{RoyalBlue} & \cellcolor{lightgray}& \cellcolor{RoyalBlue} &\cellcolor{RoyalBlue} & \cellcolor{RoyalBlue} & \cellcolor{lightgray}&\cellcolor{RoyalBlue} & \cellcolor{lightgray}\\

& \cellcolor{darkgray} magneticParticles3d							& \cellcolor{darkgray}& \cellcolor{darkgray}&\cellcolor{darkgray} & \cellcolor{darkgray}&\cellcolor{RoyalBluedark} & \cellcolor{darkgray}& \cellcolor{RoyalBluedark} & \cellcolor{darkgray}& \cellcolor{RoyalBluedark} & \cellcolor{darkgray}&\cellcolor{RoyalBluedark} & \cellcolor{darkgray}\\

& \cellcolor{lightgray} settlingCube3d								&\cellcolor{lightgray} &\cellcolor{lightgray} & \cellcolor{lightgray}& \cellcolor{lightgray}&\cellcolor{RoyalBlue} & \cellcolor{lightgray}& \cellcolor{RoyalBlue} &\cellcolor{lightgray} & \cellcolor{RoyalBlue} & \cellcolor{lightgray}&\cellcolor{RoyalBlue} & \cellcolor{lightgray}\\
\noalign{\hrule height 1pt}

\multirow{2}{*}{porousMedia} & \cellcolor{darkgray} porousPoiseuille2d				&\cellcolor{darkgray} & \cellcolor{darkgray}& \cellcolor{darkgray}& \cellcolor{darkgray}& \cellcolor{darkgray}&\cellcolor{RoyalBluedark} &\cellcolor{darkgray} & \cellcolor{RoyalBluedark} &\cellcolor{darkgray} & \cellcolor{darkgray}& \cellcolor{RoyalBluedark}& \cellcolor{darkgray}\\

& \cellcolor{lightgray} porousPoiseuille3d								&\cellcolor{lightgray} & \cellcolor{lightgray}& \cellcolor{lightgray}& \cellcolor{lightgray}& \cellcolor{lightgray}&\cellcolor{RoyalBlue} & \cellcolor{lightgray}& \cellcolor{RoyalBlue} & \cellcolor{lightgray}& \cellcolor{lightgray}& \cellcolor{RoyalBlue} &\cellcolor{lightgray} \\
\noalign{\hrule height 1pt}

\multirow{8}{*}{thermal} & \cellcolor{darkgray} galliumMelting2d		& \cellcolor{darkgray}&\cellcolor{RoyalBluedark} & \cellcolor{darkgray}& \cellcolor{RoyalBluedark}& \cellcolor{darkgray}& \cellcolor{darkgray}& \cellcolor{RoyalBluedark}&\cellcolor{RoyalBluedark} &\cellcolor{RoyalBluedark} & \cellcolor{darkgray}& \cellcolor{RoyalBluedark} & \cellcolor{darkgray}\\

& \cellcolor{lightgray} porousPlate2d						& \cellcolor{lightgray}&\cellcolor{RoyalBlue} & \cellcolor{lightgray}& \cellcolor{lightgray}& \cellcolor{lightgray}& \cellcolor{lightgray}& \cellcolor{lightgray}&\cellcolor{RoyalBlue} &\cellcolor{lightgray} & \cellcolor{lightgray}& \cellcolor{RoyalBlue} & \cellcolor{lightgray}\\

& \cellcolor{darkgray}porousPlate3d									& \cellcolor{darkgray}&\cellcolor{RoyalBluedark}&\cellcolor{darkgray} & \cellcolor{darkgray}& \cellcolor{darkgray}& \cellcolor{darkgray}&\cellcolor{darkgray} &\cellcolor{RoyalBluedark} &\cellcolor{darkgray} & \cellcolor{darkgray}& \cellcolor{RoyalBluedark} & \cellcolor{darkgray}\\

& \cellcolor{lightgray} porousPlate3dSolver						& \cellcolor{lightgray}&\cellcolor{RoyalBlue} & \cellcolor{lightgray}& \cellcolor{lightgray}& \cellcolor{lightgray}& \cellcolor{lightgray}& \cellcolor{lightgray}&\cellcolor{RoyalBlue} &\cellcolor{lightgray} & \cellcolor{lightgray}& \cellcolor{RoyalBlue} & \cellcolor{lightgray}\\

& \cellcolor{darkgray} rayleighBernard2d
&\cellcolor{RoyalBluedark} & \cellcolor{RoyalBluedark} &\cellcolor{darkgray} &\cellcolor{darkgray} &\cellcolor{darkgray} & \cellcolor{darkgray}&\cellcolor{RoyalBluedark} &\cellcolor{RoyalBluedark} & \cellcolor{RoyalBluedark} & \cellcolor{darkgray}&\cellcolor{RoyalBluedark}& \cellcolor{darkgray}\\

& \cellcolor{lightgray} rayleighBernard3d
&\cellcolor{RoyalBlue}&\cellcolor{RoyalBlue} & \cellcolor{lightgray}&\cellcolor{lightgray} &\cellcolor{lightgray} & \cellcolor{lightgray}& \cellcolor{RoyalBlue}&\cellcolor{RoyalBlue} &\cellcolor{RoyalBlue} & \cellcolor{lightgray}& \cellcolor{RoyalBlue} & \cellcolor{lightgray}\\

& \cellcolor{darkgray} squareCavity2d									&\cellcolor{RoyalBluedark} & \cellcolor{RoyalBluedark} &\cellcolor{darkgray} &\cellcolor{darkgray} &\cellcolor{darkgray} & \cellcolor{darkgray}&\cellcolor{RoyalBluedark} &\cellcolor{RoyalBluedark} & \cellcolor{darkgray} & \cellcolor{darkgray}&\cellcolor{RoyalBluedark}& \cellcolor{darkgray}\\

& \cellcolor{lightgray} squareCavity3d									&
\cellcolor{lightgray}&\cellcolor{RoyalBlue} &\cellcolor{lightgray} & \cellcolor{lightgray}& \cellcolor{lightgray}&\cellcolor{lightgray} & \cellcolor{lightgray}& \cellcolor{RoyalBlue} &\cellcolor{lightgray} & \cellcolor{lightgray}& \cellcolor{RoyalBlue}& \cellcolor{lightgray}\\

& \cellcolor{lightgray} stefanMelting2d									&
\cellcolor{darkgray}&\cellcolor{RoyalBluedark} &\cellcolor{darkgray} & \cellcolor{RoyalBluedark}& \cellcolor{darkgray}&\cellcolor{darkgray} & \cellcolor{RoyalBluedark}& \cellcolor{RoyalBluedark} &\cellcolor{darkgray} & \cellcolor{darkgray}& \cellcolor{RoyalBluedark}& \cellcolor{darkgray}\\
\noalign{\hrule height 1pt}

\multirow{4}{*}{turbulent} & \cellcolor{darkgray} aorta3d								&\cellcolor{RoyalBluedark} &\cellcolor{darkgray} & \cellcolor{darkgray}&\cellcolor{darkgray} & \cellcolor{darkgray}& \cellcolor{darkgray}& \cellcolor{RoyalBluedark}&\cellcolor{RoyalBluedark}&\cellcolor{RoyalBluedark}&\cellcolor{RoyalBluedark}&\cellcolor{darkgray} & \cellcolor{darkgray}\\

& \cellcolor{lightgray} channel3d										&\cellcolor{RoyalBlue}&\cellcolor{lightgray} & \cellcolor{lightgray}& \cellcolor{lightgray}&\cellcolor{lightgray} & \cellcolor{lightgray}&\cellcolor{RoyalBlue}&\cellcolor{RoyalBlue} &\cellcolor{RoyalBlue}& \cellcolor{lightgray}&\cellcolor{RoyalBlue}& \cellcolor{lightgray}\\

& \cellcolor{darkgray} nozzle3d										&\cellcolor{RoyalBluedark}&\cellcolor{darkgray} & \cellcolor{darkgray}& \cellcolor{darkgray}&\cellcolor{darkgray} & \cellcolor{darkgray}&\cellcolor{RoyalBluedark}&\cellcolor{darkgray} &\cellcolor{RoyalBluedark}& \cellcolor{darkgray}&\cellcolor{RoyalBluedark}& \cellcolor{darkgray}\\

& \cellcolor{lightgray}tgv3d											&\cellcolor{RoyalBlue}& \cellcolor{lightgray}&\cellcolor{lightgray} &\cellcolor{lightgray} & \cellcolor{lightgray}&\cellcolor{lightgray} &\cellcolor{RoyalBlue}&\cellcolor{RoyalBlue}& \cellcolor{lightgray}&\cellcolor{lightgray} &\cellcolor{RoyalBlue}& \cellcolor{lightgray}\\

& \cellcolor{darkgray}venturi3d										&\cellcolor{RoyalBluedark}& \cellcolor{darkgray}& \cellcolor{darkgray}& \cellcolor{darkgray}&\cellcolor{darkgray} &\cellcolor{darkgray} &\cellcolor{RoyalBluedark}&\cellcolor{darkgray} &\cellcolor{RoyalBluedark}& \cellcolor{darkgray}&\cellcolor{green}&\cellcolor{darkgray}\\
\noalign{\hrule height 1pt}

		\multirow{4}{*}{freeSurface} & \cellcolor{darkgray} breakingDam2d	&\cellcolor{darkgray} & \cellcolor{darkgray}&\cellcolor{RoyalBluedark} & \cellcolor{RoyalBluedark}&\cellcolor{darkgray} &\cellcolor{darkgray} & \cellcolor{RoyalBluedark} & \cellcolor{RoyalBluedark} &\cellcolor{darkgray} & \cellcolor{darkgray} &\cellcolor{darkgray} & \cellcolor{darkgray}\\

& \cellcolor{lightgray} breakingDam3d &\cellcolor{lightgray} &\cellcolor{lightgray} & \cellcolor{RoyalBlue}& \cellcolor{RoyalBlue}&\cellcolor{lightgray} & \cellcolor{lightgray}& \cellcolor{RoyalBlue} &\cellcolor{lightgray} & \cellcolor{lightgray} & \cellcolor{lightgray}&\cellcolor{lightgray} & \cellcolor{lightgray}\\

& \cellcolor{darkgray} deepFallingDrop2d & \cellcolor{darkgray}& \cellcolor{darkgray}&\cellcolor{RoyalBluedark} & \cellcolor{RoyalBluedark}&\cellcolor{darkgray} & \cellcolor{darkgray}& \cellcolor{RoyalBluedark} & \cellcolor{darkgray}& \cellcolor{darkgray} & \cellcolor{darkgray}&\cellcolor{darkgray} & \cellcolor{darkgray}\\

& \cellcolor{lightgray} fallingDrop2d &\cellcolor{lightgray} &\cellcolor{lightgray} & \cellcolor{RoyalBlue}& \cellcolor{RoyalBlue}&\cellcolor{lightgray} & \cellcolor{lightgray}& \cellcolor{RoyalBlue} &\cellcolor{lightgray} & \cellcolor{lightgray} & \cellcolor{lightgray}&\cellcolor{lightgray} & \cellcolor{lightgray}\\

& \cellcolor{darkgray} fallingDrop3d & \cellcolor{darkgray}& \cellcolor{darkgray}&\cellcolor{RoyalBluedark} & \cellcolor{RoyalBluedark}&\cellcolor{darkgray} & \cellcolor{darkgray}& \cellcolor{RoyalBluedark} & \cellcolor{darkgray}& \cellcolor{darkgray} & \cellcolor{darkgray}&\cellcolor{darkgray} & \cellcolor{darkgray}\\

& \cellcolor{lightgray} rayleighInstability3d &\cellcolor{lightgray} &\cellcolor{lightgray} & \cellcolor{RoyalBlue}& \cellcolor{RoyalBlue}&\cellcolor{lightgray} & \cellcolor{lightgray}& \cellcolor{RoyalBlue} &\cellcolor{lightgray} & \cellcolor{lightgray} & \cellcolor{lightgray}&\cellcolor{lightgray} & \cellcolor{lightgray}\\

\noalign{\hrule height 1pt}

\end{tabular}

\begin{tabular}{ p{4cm} p{0,6cm} p{0,6cm} l p{2cm} p{1,2cm} l}
& & & & & & \\
&\cellcolor{RoyalBlue}&\cellcolor{RoyalBluedark}& example includes relevant subject& &\cellcolor{green}&example includes relevant subject and is recommended for beginning \\

\end{tabular}
\caption{Currently included examples in OpenLB (continuation of Table~\ref{tab:examplesI}).}
\label{tab:examplesII}
\end{table}

\renewcommand{\arraystretch}{1}

\end{footnotesize}
\end{landscape}
\restoregeometry

All the demo codes can be compiled with or without MPI, with or without OpenMP, and executed in serial or parallel.

\section{adsorption}
\label{sec:ads}

\subsection{adsorption3D}
\label{subsec:ads3D}
This example shows the adsorption in a batch reactor using an Euler--Euler approach. 
The model is based on the linear driving force model and uses advection diffusion reaction lattices for particles, solute and particle loading.
Different isotherms and mass transfer models can be used. 
An analytical solution is implemented when using the linear isotherm and surface diffusion.

\subsection{microMixer3D}
\label{subsec:micromixer}
This example portrays the adsorption in a static mixing reactor using an Euler--Euler approach. 
Analogue to the example before, the model is based on the linear driving force model and uses advection diffusion reaction lattices for particles, solute and particle loading. 
Different isotherms and mass transfer models can be used.
 
\section{advectionDiffusionReaction}\label{sec:adr}

\subsection{advectionDiffusionReaction2d}\label{subsec:adre2d}
This example illustrates a steady-state chemical reaction in a plug flow reactor.
One can choose two types of reaction, \(A \longrightarrow C\) and \( A \longleftrightarrow C\).
The concentration and analytical solution along the centerline of the rectangle domain is given in \path{./tmp/N<resolution>/gnuplotData} as well as the error plot for the concentration along the centerline.
The default configuration executes three simulation runs and the average \(L^{2}\)-error over the centerline is computed for each resolution.
A plot of the resulting experimental order of convergence is provided in \path{./tmp/gnuplotData/}.

\subsection{reactionFiniteDifferences2d}\label{subsec:rfd2d}
Similarly to the previous example, a simplified domain with no fluid motion and homogeneous species concentrations is simulated, but here with finite differences. 
The chemical reaction \(\vert a \vert A \longrightarrow \vert b \vert B\) is approximated, where the reaction rate \(\nu = A/t_{0}\) is given, where \(t_{0}\) is a time conversion factor. 
The initial conditions are set to \(A(t=0)=1\) and \(B(t=0)=0\) such that an analytical solution is possible via
\begin{align}
A(t) & = \exp\left( -\frac{\left\vert a \right\vert t}{t_{0}} \right) ~, \\
B(t) & = \left\vert \frac{b}{a} \right\vert \left[ 1 -\exp\left(-\frac{\left\vert a\right\vert t}{t_{0}} \right) \right] ~.
\end{align}
By default, the executable produces a plot in \path{./tmp/gnuplotData/} which is given in Figure~\ref{fig:chemReaction2dFD} below.
\begin{figure}[ht!]
	\center
	\includegraphics[width=0.7\textwidth]{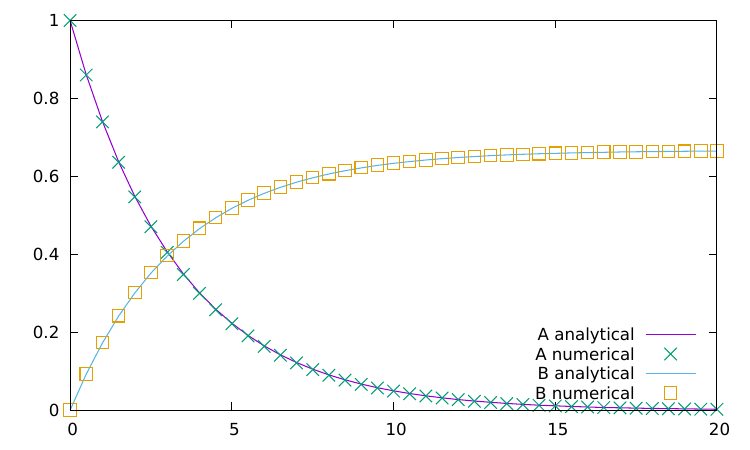}
	\caption{Solutions to concentration profiles in reactionFiniteDifferences2d.}
	\label{fig:chemReaction2dFD}
\end{figure}

\subsection{advectionDiffusion1d}  \label{example:advectionDiffusion1d}
The \path{advectionDiffusion1d} example showcases second order mesh convergence of LBM for scalar linear one-dimensional advection\textendash diffusion equations~\cite{simonis:20} of the form
\begin{align}
\partial_{t} \chi + \partial_{x} F \left( \chi \right) - \mu \partial_{xx} \chi = 0 ~,
\end{align}
where \(\chi : \mathcal{X} \times \mathcal{I} \to \mathbb{R}\) is the conservative variable dependent on space \(x \in \mathcal{X} \subseteq \mathbb{R}\) and time \(t \in \mathcal{I} \subseteq \mathbb{R}_{0}^{+}\), \(F \equiv u \chi \) is a linear function defined by the advection velocity \(u \in \mathbb{R}\), and \(\mu > 0\) denotes the diffusion coefficient. 
Hence, the LBM approximates the transport of the conservative variable \(\chi\) along a one-dimensional line with periodic boundary conditions on \(\mathcal{X} = \left[-1,1\right]\).
The initial pulse is defined by a sine profile, which is subsequently diffused and advected.
An analytical solution at point $x$ and time $t$ is given by
\begin{equation}
  \chi^{\star} (x, t) = \sin\left[ \pi(x - u t) \right] \exp \left( -\mu \pi^2 t   \right) ~.
\end{equation}
In practice the simulation uses a two-dimensional square domain which is evaluated along a centerline to obtain the desired one-dimensional result.
The domain is initialized with $\chi^{\star} (x,t= 0)$.

Diffusive scaling is applied which results in the input parameters listed in Table~\ref{tab:advectionDiffusion1dParameters}.
In the default setting, \path{advectionDiffusion1d} executes three simulation runs with increasing resolutions \(N=50, 100, 200\), respectively.
Each simulation recovers $\mu=1.5$ and a P\'{e}clet number of $Pe=40/3$.

\begin{table}[ht!]
\centering
\begin{tabular}{|p{2cm}|p{2cm}|p{2cm}|}
\hline
 \multicolumn{3}{|c|}{diffusive scaling \(\triangle t = \triangle x^{2}\) for \(Pe=40/3\) } \\ \hline
  \hline
  $N$ & $u_L$  & $\triangle x$ \\ \hline
  \hphantom{0}50 & $0.4$ & 0.04\\ \hline
             100 & $0.2$ & 0.02\\ \hline
             200 & $0.1$ & 0.01\\ \hline
\end{tabular}
\caption{
    Default simulation parameters of \protect\path{advectionDiffusion1d} with \(\mu=1.5\)
    }
\label{tab:advectionDiffusion1dParameters}
\end{table}
The output of each simulation run is stored in the \path{tmp/N<number>} directory.
At each simulation time step the average L2 relative error over the centerline is computed.
Said average is then stored within the respective resolution directory \path{./gnuplotData/data/averageL2RelError.dat}.
Additionally, the program averages the values in \path{averageL2RelError.dat} for each simulation run, which in turn is written to the global error file \path{tmp/gnuplotData/data/averageSimL2RelErr.dat}.
For post-processing, a python3 script can be executed via
\begin{lstlisting}[language=bash]
python3 advectionDiffusion1dPlot.py
\end{lstlisting}
The script requires the \path{matplotlib} python package which can be installed on any platform by issuing the following commands in a terminal:
\begin{lstlisting}[language=bash]
python3 -m pip install -U pip
python3 -m pip install -U matplotlib
\end{lstlisting}
The script generates basic error plots for every file with the file extension \path{.dat} in \path{./tmp}.
Finally, a global log-log error plot with reference curves is extracted from the data contained in \path{averageSimL2RelErr.dat}.

\subsection{advectionDiffusion2d}

The example \path{advectionDiffusion2d} acts as a mesh-convergence test for a solution to the scalar linear \textit{two-dimensional} advection\textendash diffusion equation
\begin{equation}
\partial_{t} \chi + \bm{\nabla}_{\bm{x}} \bm{F} \left( \chi \right) - \mu \bm{Delta}¸_{\bm{x}} \chi = 0 ~,
\end{equation}
where \(\chi : \mathcal{X} \times \mathcal{I} \to \mathbb{R}\) is the conservative variable dependent on space \(\bm{x} \in \mathcal{X} \subseteq \mathbb{R}^2\) and time \(t \in \mathcal{I} \subseteq \mathbb{R}_{0}^{+}\), \(\bm{F} \equiv \bm{u} \chi \) is a linear function defined by the advection velocity \(\bm{u} = \left(u_x, u_y\right)^{\mathrm{T}} \in \mathbb{R}^{2}\), and \(\mu > 0\) denotes the diffusion coefficient.
Similarly, the analytical solution is given for any point \(\bm{x} = \left(x, y\right)^{\mathrm{T}}\) and time \(t\) as
\begin{equation}
\chi^{\star}\left(x, y, t\right) = \mathrm{sin} \left[ \pi \left(x - u_x t\right) \right] \mathrm{sin} \left[ \pi \left( y - u_y t\right) \right] \mathrm{exp} \left( - 2 \mu \pi^{2} t \right) ~.
\end{equation}
The simulation is executed on a square \(\mathcal{X} = \left[-1, 1\right]^{2} \) which is periodically embedded in \(\mathbb{R}^2\).
An error norm over the domain measures the deviation from the analytical solution up to the time step at which the initial pulse is diffused below 10\%.
For the default setting (\(\mu = 0.05\) and \(Pe=100\)), the outputs of three subsequent simulation runs are stored in a subfolder structure in \path{./tmp} and directly post-processed for visualization.
A sequence of contour plots is generated with the highest computed resolution \(N=200\) and contained in \path{./tmp/N200/imageData}.
Note that via issuing the command
\begin{lstlisting}[language=bash]
python3 advectionDiffusion2dPlot.py
\end{lstlisting}
an error plot can be produced, which numerically validates the second order convergence in space.

\subsection{advectionDiffusion3d}\label{subsec:ade3d}
The example \path{advectionDiffusion3d} acts as a mesh-convergence test for a numerical solution to initial value problem
\begin{equation}
\begin{cases}
\partial_{t} \chi (\bm{x}, t) + \bm{\nabla}_{\bm{x}} \bm{F} \left( \chi(\bm{x}, t) \right) - \mu \\bm{\Delta}_{\bm{x}} \chi (\bm{x}, t) = 0  \quad & \text{in }\mathcal{X} \times \mathcal{I} ~, \\
\chi(\bm{x}, 0 ) \equiv \chi_{0} (\bm{x})  \quad & \text{in } \mathcal{X} ~,
\end{cases}
\end{equation}
where \(\chi : \mathcal{X} \times \mathcal{I} \to \mathbb{R}\) is the conservative variable dependent on space \(\bm{x} \in \mathcal{X} \subseteq \mathbb{R}^3\) and time \(t \in \mathcal{I} \subseteq \mathbb{R}_{0}^{+}\), \(\bm{F} \equiv \bm{u} \chi \) is a linear function defined by the advection velocity \(\bm{u} = \left(u_x, u_y, u_{z}\right)^{\mathrm{T}} \in \mathbb{R}^{3}\), and \(\mu > 0\) denotes the diffusion coefficient.
Note that the domain \(\mathcal{X} = [-1 , 1]^{3}\) is periodic.

The example implements a smooth initial profile \(\chi_{0}^{\mathrm{s}} (\bm{x})\) and an unsmooth version \(\chi_{0}^{\mathrm{u}} (\bm{x})\).
The former is a three-dimensional extrusion of the initial pulse in the \path{advectionDiffusion2d} example, such that the equation admits the analytical solution~\cite{dapelo:21,simonis:20,simonis:23}
\begin{equation}
\chi^{\star,\mathrm{s}}\left(x, y, z, t\right) = \mathrm{sin} \left[ \pi \left(x - u_x t\right) \right] \mathrm{sin} \left[ \pi \left( y - u_y t\right) \right] \mathrm{sin} \left[ \pi \left(z - u_z t\right) \right] \mathrm{exp} \left( - 3 \mu \pi^{2} t \right) ~.
\end{equation}: periodically connected in and outlets with forcing combined with different wall treatment. 
The latter comprises a Dirac delta at \(x_0\) as initial pulse which induces a Dirac comb as super-positioned analytical solution~\cite{dapelo:21}
\begin{equation}
\chi^{\star,\mathrm{u}}\left(x,y,z,t\right) =  \frac{1}{\sqrt{4\pi\mu t }} \sum\limits_{k\in \mathbb{Z}} \exp\left( \frac{-\left( x - x_0 - u_{x} t + 2 k\right)^{2}  }{4 \mu t }\right) + 1 ~. 
\end{equation}
For each case several error norms over the domain measure the deviations from the analytical solution.
For the default setting the outputs of three subsequent simulation runs are stored in a subfolder structure in \path{./tmp} and directly post-processed for visualization.
Via issuing the command
\begin{lstlisting}[language=bash]
python3 advectionDiffusion3dPlot.py
\end{lstlisting}
an error plot is produced, which numerically validates the: periodically connected in- and outlets with forcing combined with different wall treatment  second order convergence for both initializations, under the constraints on the grid P{\'e}clet number derived in~\cite{dapelo:21}.

\subsection{advectionDiffusionPipe3d}\label{subsec:adp3d}
This example implements a spreading Gaussian density package advecting within a square duct pipe with velocity \(\bm{U}\).
The precise description of the test-case can be found in~\cite{dapelo:21}.
Whereas the velocity is computed via approximating the incompressible Navier\textendash Stokes equations with a \(D3Q19\) BGK LBM, the advection\textendash diffusion equation for the density package is solved with finite differences (FD).
Four different FD schemes can be employed within the example.
The advantages of each of the schemes for a broad range of \(Pe\) are documented in~\cite{dapelo:21}.

\section{laminar}\label{sec:laminar}

\subsection{bstep2d and bstep3d}\label{sec:bstep2d and bstep3d}
This example implements the fluid flow over a backward facing step. Furthermore, it is shown how checkpointing is used to regularly save the state of the simulation.
The 2D geometry corresponds to Armaly \textit{et al.}~\cite{armaly:83}.

\subsection{cavity2d, cavity2dSolver and cavity3d}\label{sec:cavity2d and cavity3d}
This example illustrates a flow in a cuboid, lid-driven cavity. 
The 2D version also shows how to use the XML parameter files and has an example description file for OpenGPI. 
This example is available in two different versions for sequential and parallel use. 
The \path{2dSolver}-version illustrates the use of the solver class concept (cf.\ Section~\ref{sec:solver}) together with the XML parameter interface.

\subsection{cylinder2d and cylinder3d}
\label{sec:cylinder2dAndCylinder3d}
This example examines a steady flow past a cylinder placed in a channel. 
The cylinder is offset somewhat from the center of the flow to make the steady-state symmetrical flow unstable. 
At the inlet, a Poiseuille profile is imposed on the velocity, whereas the outlet implements a Dirichlet pressure condition set by \(p = 0\), inspired by~\cite{turek:96}. 
For high resolution, low \class{latticeU}, and enough time to converge, the results for pressure drop, drag and lift lie within the estimated intervals for the exact results. 
An unsteady flow with Karman vortex street can be created by changing the Reynolds number to \(Re=100\). 
The 3D version also shows the usage of the STL-reader. 
The model was created using the open source CAD tool FreeCAD~\cite{freecad-web}.

\subsection{poiseuille2d and poiseuille3d}\label{sec:poiseuille2d and poiseuille3d}
For basic tests of boundary conditions, a comparison with analytical solutions is the easiest and most accurate approach. 
One of the fundamental applications of fluid dynamics is that of laminar flow of a Newtonian fluid in a circular pipe. 
This is known as Poiseuille flow. 
The analytical solution is easily found and is therefore a common benchmark case (see Figure~\ref{fig:poiseuille3dCAD}). 
It is also one of the first examples in most fluid dynamics text books for the application of the principles of fluid dynamics. 
The extension of the Poiseuille flow in a round pipe from 2D to 3D is trivial, consequently it is also an ideal test case for curved boundaries in 3D as well.
\begin{figure}[ht]
	\center
	\includegraphics[width=0.7\textwidth]{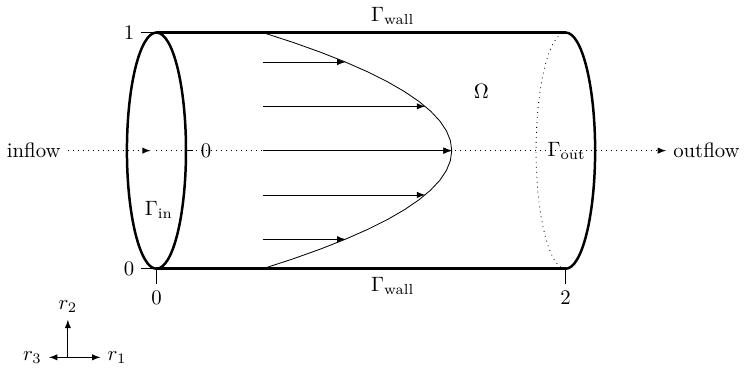}
	\caption{
		Geometry setup in example \protect\path{poiseuille3d} with boundary patches and velocity profile.}
	\label{fig:poiseuille3dCAD}
\end{figure}

\subsection{poiseuille2dEOC and poiseuille3dEOC}
The examples \path{poiseuille2dEOC} and \path{poiseuille3dEOC} are extensions of the \path{poiseuille2d} and \path{poiseuille3d} examples, respectively, focusing on the experimental order of convergence (EOC) of the simulation. 
The implementations therein run the Poiseuille flow simulations multiple times to subsequently analyze the different error norms of the simulations with an automated \class{gnuplot} output (cf.\ Section~\ref{subsec:RegressionWithGnuplot}). 
Exemplary, the \path{poiseuille3dEOC} example is simulated for multiple configurations: the periodic pipe flow with \textit{forcing} and several boundary methods at the walls (keywords \textit{interpolated}, \textit{bouzidi}, and \textit{bounce back} are used to specify the implementations in Section~\ref{sec:defineBoundaryMethod}), as well as the a bounded domain case where a \textit{velocity inlet} and a \textit{pressure outlet} are prescribed and combined with a \textit{bouzidi} boundary at the pipe wall. 
For all simulations, the Reynolds number of the flow problem is set to $Re = 10$ and the relaxation time to $\tau = 0.8$. 
The EOC is investigated by running four consecutive simulations with increasing grid resolution (number of lattice cells along the pipe diameter: $N_D = 21, 31, 41, 51;$). 
A residual below 1e-5 stops each simulation via the convergence criteria (see Section~\ref{subsec:convergenceCrit} and Section~\ref{sec:lessonConvergenceCheck}). 
The absolute error of the simulation from the analytic solution is evaluated on every lattice node whereby different norms for the error calculation are used. 
The $L_1$-norm is defined as 
\begin{align}
    E_{abs, L_1} = \sum^{N}_i \sum^D_d|\phi_{sim,i,d} - \phi_{ana,i,d}|\,\triangle x^D ~,
\end{align}
where $N$ is the number of lattice nodes, $D$ is the number of dimensions of the variable $\phi$, $\phi_{sim}$ is the solution of a flow variable obtained by the simulation, $\phi_{ana}$ is the corresponding analytic solution, and $\triangle x$ is the grid spacing. 
The $L_2$-norm is defined as
\begin{align}
    E_{abs, L_2} = \sqrt{\sum^{N}_i \sum^D_d|\phi_{sim,i,d} - \phi_{ana,i,d}|^2 \,\triangle x^D} ~.
\end{align}
Finally, the $L_\infty$-norm is defined as
\begin{align}
    E_{abs, L_\infty} = \max_{i,d}|\phi_{sim,i,d} - \phi_{ana,i,d}| ~.
\end{align}
The relative error according to the $L_1$-norm is defined as 
\begin{align}
    E_{rel, L_1} = \frac{E_{abs, L_1}}{\sum^{N}_i \sum^D_d|\phi_{ana,i,d}|\,\triangle x^D} ~,
\end{align}
the $L_2$-norm as
\begin{align}
    E_{rel, L_2} = \frac{E_{abs, L_2}}{\sqrt{\sum^{N}_i \sum^D_d|\phi_{ana,i,d}|^2\,\triangle x^D}} ~,
\end{align}
and the $L_\infty$-norm as
\begin{align}
    E_{rel, L_\infty} = \frac{E_{abs, L_\infty}}{\max_{i,d}|\phi_{ana,i,d}|} ~.
\end{align}
The results of the EOC tests are plotted in Figures~\ref{fig:eocPoiseuille3d_per_interpol},~\ref{fig:eocPoiseuille3d_per_bouzidi},~\ref{fig:eocPoiseuille3d_per_bounceback},~\ref{fig:eocPoiseuille3d_inout_bouzidi}, and~\ref{fig:eocPoiseuille3d_inout_bouzidi_pressure}. 
\begin{figure}[ht!]
\centering
\subcaptionbox{Absolute error}{
  \includegraphics[scale=1]{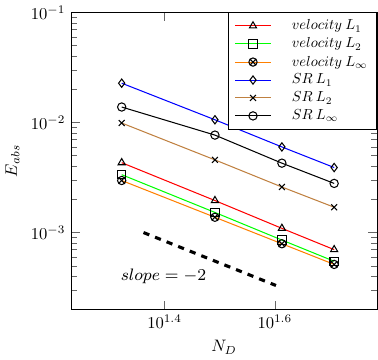}
}
\quad
\subcaptionbox{Relative error}{
  \includegraphics[scale=1]{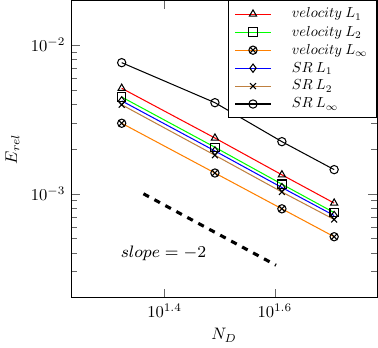}
}
\caption{EOC study for periodic pipe flow simulations (\protect\path{poiseuille3dEOC}) with \textit{forcing} and \textit{interpolated} boundary walls. 
The absolute and relative error of the velocity and the strain-rate (SR) are plotted.}
\label{fig:eocPoiseuille3d_per_interpol}
\end{figure}

\begin{figure}[ht!]
\centering
\subcaptionbox{Absolute error}{
      \includegraphics[scale=1]{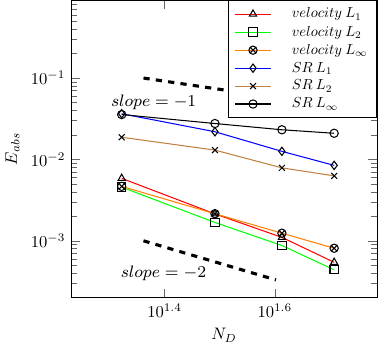}
}
\quad
\subcaptionbox{Relative error}{
    \includegraphics[scale=1]{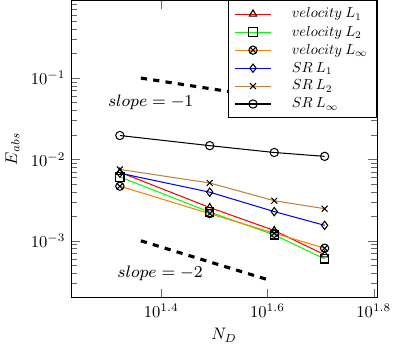}       
}
\caption{
EOC study for periodic pipe flow simulations (\protect\path{poiseuille3dEOC}) with \textit{forcing} and \textit{bouzidi} boundary walls. 
The absolute and relative error of the velocity and the strain-rate (SR) are plotted.}
\label{fig:eocPoiseuille3d_per_bouzidi}
\end{figure}

\begin{figure}[ht!]
\centering
\subcaptionbox{Absolute error}{
      \includegraphics[scale=1]{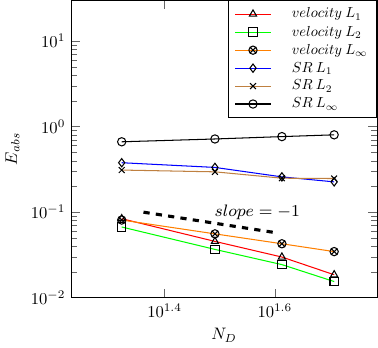}
}
\quad
\subcaptionbox{Relative error}{
   \includegraphics[scale=1]{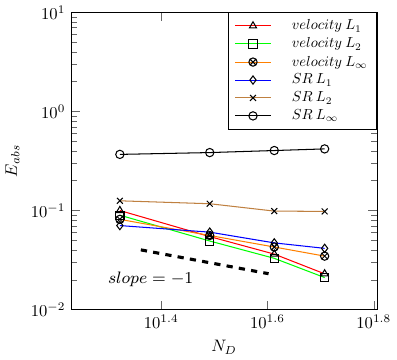}
}
\caption{
EOC study for periodic pipe flow simulations (\protect\path{poiseuille3dEOC}) with \textit{forcing} and \textit{bounce-back} boundary walls. 
The absolute and relative error of the velocity and the strain-rate (SR) are plotted.}
\label{fig:eocPoiseuille3d_per_bounceback}
\end{figure}

\begin{figure}[ht!]
\centering
\subcaptionbox{Absolute error}{
     \includegraphics[scale=1]{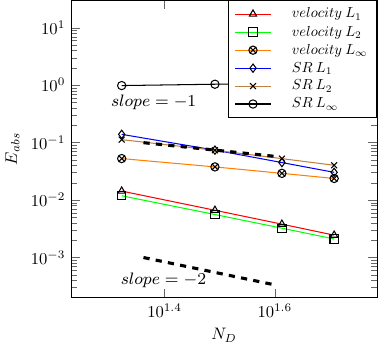}
}
\quad
\subcaptionbox{Relative error}{
    \includegraphics[scale=1]{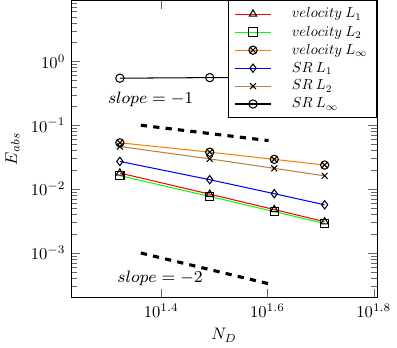}
}
\caption{
EOC study for pipe flow simulations (\protect\path{poiseuille3dEOC}) with \textit{velocity inlet}, \textit{pressure outlet} and \textit{bouzidi} boundary walls. 
The absolute and relative error of the velocity and the strain-rate (SR) are plotted.}
\label{fig:eocPoiseuille3d_inout_bouzidi}
\end{figure}

\begin{figure}[ht!]
\centering
\subcaptionbox{Absolute error}{
    \includegraphics[scale=1]{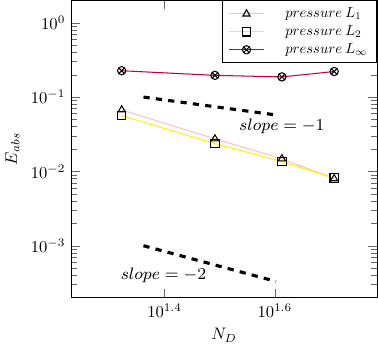}
}
\quad
\subcaptionbox{Relative error}{
      \includegraphics[scale=1]{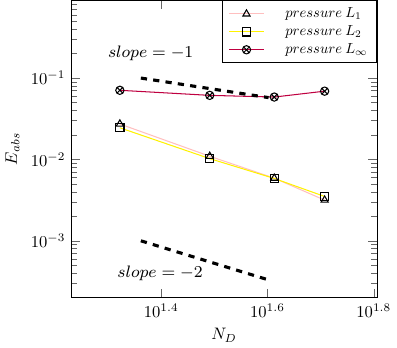}
}
\caption{
EOC study for pipe flow simulations (\protect\path{poiseuille3dEOC}) with \textit{velocity inlet}, \textit{pressure outlet} and \textit{bouzidi} boundary walls. 
The absolute and relative error of the pressure are plotted.
}
\label{fig:eocPoiseuille3d_inout_bouzidi_pressure}
\end{figure}

\subsection{powerLaw2d}\label{sec:powerLaw2d}
This example describe a steady non-Newtonian flow in a channel. 
At the inlet, a Poiseuille profile is imposed on the velocity, whereas the outlet implements a Dirichlet pressure condition set by \(p = 0\).

\subsection{testFlow3dSolver}\label{sec:testFlow3dSolver}
This app implements a Navier--Stokes flow with an analytical solution~\cite{krause:10b}. 
The \textit{standard} simulation as well as an EOC computation for various error norms are shown. 
The implementation makes use of the Solver framework, while simulation parameters are read from the corresponding parameter XML file.
Among others, the following options can be selected:
\begin{itemize}
  \item The flow domain can be either a cube or restricted to a sphere (node [Application][Domain])
  \item Different LBM-boundary conditions (all of them implement a Dirichlet fixed velocity, node [Application][BoundaryCondition])
\end{itemize}
Especially, the EOC computation can be used to benchmark various aspects of LBM implementation, \eg the boundary conditions and the collision steps. 
The result on an example run is shown in Figure~\ref{fig:tf_eoc}.
\begin{figure}[ht!]
	\center
  \includegraphics[width=0.8\textwidth]{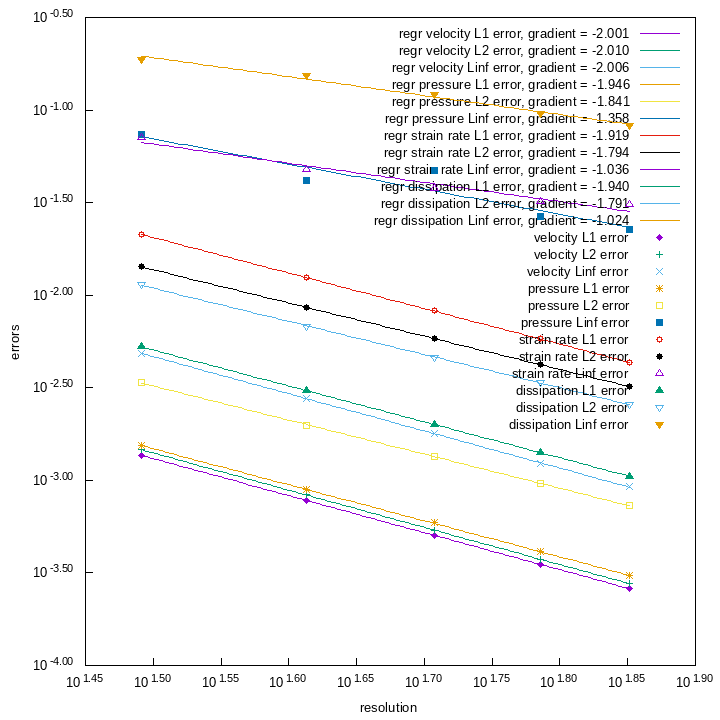}
	\caption{
	  EOC study for the example \protect\path{testFlow3dSolver} on a sphere, as it is returned via the \class{gnuplot} interface.}
  \label{fig:tf_eoc}
\end{figure}
The class structure is as follows: \class{TestFlowBase} implements the basic simulation details (geometry, boundary conditions etc.). 
Since it shall later be used for adjoint optimization, it inherits from \class{AdjointLbSolverBase}~. 
The specification \class{TestFlowSolver} then implements the standard simulation. 
The distinction between \class{TestFlowBase} and \class{TestFlowSolver} is solely because \class{TestFlowBase} will also be used for optimization (cf.\ examples in Section~\ref{sec:testFlowOpti3d}).

\section{optimization}\label{sec:ex_optimization}

\subsection{domainIdentification3d}\label{sec:domainId3d}
In this example, a domain identification problem is solved: a cubic obstacle in the middle of a cubic fluid flow domain has to be identified given only the surrounding fluid flow, cf.\ \cite{klemens:20}.
Therefore, an optimization problem is set up, where the porosity field $\alpha$ in the design domain has to be found such that the error $J$ between resulting velocity field $\mathbf{u}$ and original velocity field $\mathbf{u^\ast}$ is minimized. 
Hence, the objective is $J := \tfrac{1}{2} \int_{\Tilde{\Omega}} \vert \mathbf{u}- \mathbf{u^\ast} \vert ^2 \, dx$.
The example employs adjoint Lattice Boltzmann methods for gradient computation. Results are shown in Figure~\ref{fig:td_opti}.
\begin{figure}[ht!]
	\centering
	\begin{minipage}{0.5\textwidth}
     	\centering
		\includegraphics[width=0.96\textwidth]{./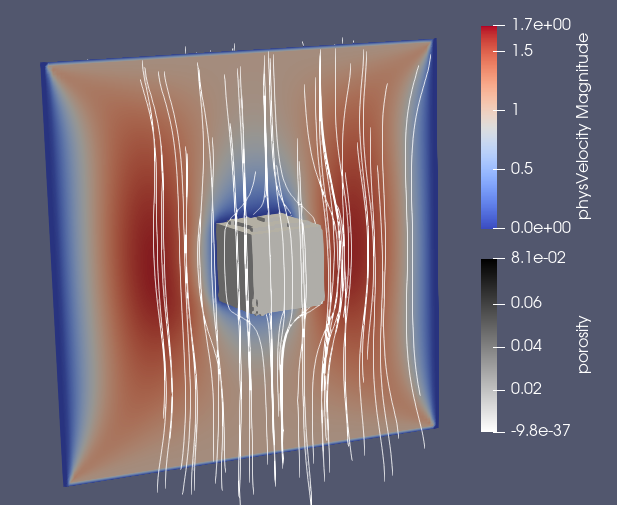}
	\end{minipage}%
	\begin{minipage}{0.5\textwidth}
	      \centering
		\includegraphics[width=0.96\textwidth]{./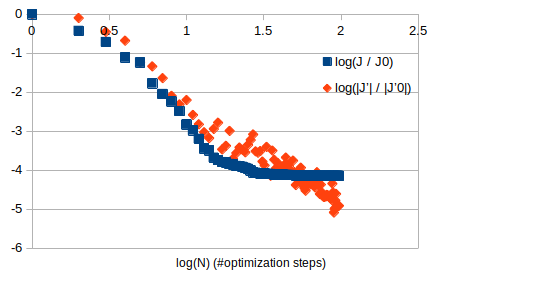}
	\end{minipage}
	\caption{Results of domain identification after 97 optimization steps. Left: identified object, surrounding velocity field and streamlines. Right: relative decrease of the objective functional and its derivative w.r.t.\ the design variables. We can see that the objective hardly changes after ca. 30 optimization steps.}
	\label{fig:td_opti}
\end{figure}

The implementation makes use of the Solver framework and the XML interface for parameter reading, cf.\ Section~\ref{sec:solver}. In addition to the typical parameters for simulation and optimization, the following geometric domains are read from the xml file:
\begin{itemize}
    \item $<$ObjectiveDomain$>$: The domain $\Tilde{\Omega}$, where the objective functional is computed.
    \item $<$SimulationObject$>$: The object that is to be identified by the example, it is used for the computation of the reference solution $\mathbf{u\ast}$.
    \item $<$DesignDomain$>$: the domain, where the porosity field is computed by the optimization algorithm.
\end{itemize}

The OpenLB software provides the method \textit{createIndicator} to create these domains or objects.\\
For example, a three-dimensional cuboid with length $1$ in the origin can be created with\\
\verb|createIndicatorF3D<T>|(\textit{reading from xml-file}) with the input from the xml-file:\\
\verb|<IndicatorCuboid3D extend = "1 1 1" origin = "0 0 0"/>|.\\
These domains are objects and depend on the specific optimization problem and have to be created or modified for each problem again.

The following optimization parameters are specific in the context of domain identification and set in the xml file:
\begin{itemize}
    \item ReferenceSolution: decide whether the reference velocity $\mathbf{u^\ast}$ is computed via a simulation
    \item StartValueType: select between Control, Permeability and Porosity
\end{itemize}

\subsection{poiseuille2Opti}\label{sec:parameterIdentificationPoiseuille2d}
This is a simple showcase for a two-dimensional fluid flow optimization/ parameter identification problem:
It is based on the simulation of a planar channel flow similar to the example \path{poiseuille2d}.
By an optimization loop, the inlet pressure is determined s.t. a given mass flow rate is achieved.
Hence, we solve the following optimization problem:
\begin{equation}
  \underset{p_0}{\mathrm{argmin}} \frac{1}{2} (m(\mathbf{u}(p_0), p(p_0)) - m^{\ast})^2 ~,
\end{equation}
where $(\mathbf{u}(p_0), p(p_0))$ is the solution of the Navier-Stokes equations corresponding to the inlet pressure $p_0$, $m$ is the mass flow rate corresponding to that solution, and $m^{\ast}$ is the wanted mass flow rate.

The optimization is performed with the LBFGS method; the derivative of the objective w.r.t.\ the argument (inlet pressure) is computed with automatic differentiation.

\subsection{showcaseADf}\label{sec:showcaseADf}
This app gives an introduction into the usage of forward automatic differentiation in order to compute derivatives of any numerical quantities in OpenLB.
It is written in the style of literate programming as a tutorial that should allow for being read sequentially.

\subsection{showcaseRosenbrock}\label{sec:showcaseRosenbrock}
This app gives an introduction into the usage of optimization functionalities in OpenLB in order to compute optima of functions or simulations.
It is written in the style of literate programming as a tutorial that should allow for being read sequentially.

\subsection{testFlowOpti3d}\label{sec:testFlowOpti3d}
This app is built on top of the example laminar/testFlow3dSolver and solves parameter identification problems with an optimization approach~\cite{krause:10b}.
It makes use of the Solver framework to create the following variants of application:
\begin{itemize}
  \item Computing sensitivity of flow quantities w.r.t.\ the force field with Difference Quotients (DQ) or Automatic Differentiation (AD)
  \item Optimization with AD: scale the force field s.t. the velocity or dissipation error w.r.t.\ a given flow is minimized. 3 control variables scale the components of the force field. The objective is $J = \tfrac{1}{2} \int_{\Omega} \vert \mathbf{u} - \mathbf{u^\ast} \vert ^2 \, dx$, where $\mathbf{u^\ast}$ is the reference solution field (for control $= (1, 1, 1)$) and $\mathbf{u}$ is the simulated solution (velocity or dissipation) with ''estimated'' control parameters.
  \item Optimization with adjoint LBM: identify the force field s.t. the velocity or dissipation error w.r.t.\ a given flow is minimized. The objective is as above, but the control is the (spatially) distributed field -- three components at each mesh point.
\end{itemize}

The basic simulation setup is implemented in the base class \class{TestFlowBase}. Further optimization-specific implementations (which are used in all optimization variants, \eg objective computation) are done in \class{TestFlowOptiBase}. For DQ and AD, \class{TestFlowSolverDirectOpti} implements the case-specific features, \eg that the vector of control variables acts as a scaling factor for the force field. For adjoint LBM, case-specific features are implemented in \class{TestFlowSolverOptiAdjoint}.
The structure of parameter classes corresponds to to that of the solver classes.

Simulation and optimization parameters are read from the corresponding parameter XML file. The following optimization parameters are specific for this example and set in the XML file:
\begin{itemize}
    \item ReferenceSolution: decide whether reference solution $\mathbf{u^\ast}$ is computed via a simulation
    \item TestFlowOptiMode: decide, whether velocity or dissipation is compared in the objective functional
    \item OptiReferenceMode: use analytical or discrete (simulated) solution as $\mathbf{u^\ast}$
    \item ControlType (force): type of control parameters in the context of adjoint optimization
    \item CuboidWiseControl: if true, then the domain is decomposed into an arbitrary number of cuboids and one control variable is set as a scaling factor for each cuboid (this is meaningful if the sensitivies are computed with finite difference quotients or automatic differentiation.
\end{itemize}

The results of example runs with AD and adjoint LBM are shown in Figure~\ref{fig:tf_opti_eoc}.

\begin{figure}[ht!]
	\centering
	\begin{minipage}{0.5\textwidth}
    	\centering
		\includegraphics[width=0.96\textwidth]{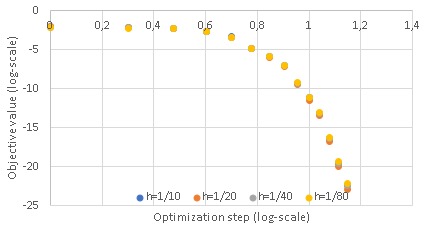}
	\end{minipage}%
	\begin{minipage}{0.5\textwidth}
    	\centering
		\includegraphics[width=0.96\textwidth]{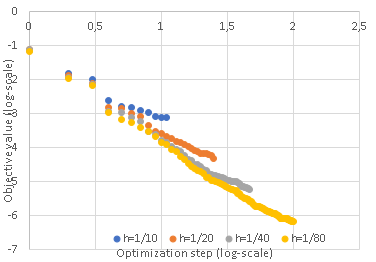}
	\end{minipage}
	\caption{Decrease of the objective for increasing optimization steps for different resolutions (left: AD, right: adjoint LBM). The trends differ since the number of control variables is fixed in the AD case while it increases with the resolution in the adjoint optimization case.}
	\label{fig:tf_opti_eoc}
\end{figure}

\section{multiComponent}\label{sec:multiComponent}
The examples in this folder demonstrate the use of specific LBM models (\eg Section~\ref{sec:freeEnergy}) for multiphase and multicomponent flows.

\subsection{binaryShearFlow2d} \label{sec:binaryShearFlow2d}
A circular domain of one fluid phase is immersed in a rectangle filled with another fluid phase.
The top and bottom walls are moving in opposite directions, such that the droplet shaped phase is exposed to shear flow and deforms accordingly.
The default parameter setting is taken from~\cite{komrakova:2014} and injected into the more general ternary free energy model from~\cite{semprebon:16}.
Both scenarios, breakup and steady state of the initial droplet, are implemented and visualized as \path{.vtk} output. 
Reference simulations are provided in~\cite{simonis:23a}.

\subsection{contactAngle2d and contactAngle3d}\label{sec:contactAngle2d and contactAngle3d}
In this example a semi-spherical droplet of fluid is initialized within a different fluid at a solid boundary. The contact angle is measured as the droplet comes to equilibrium. This is compared with the analytical angle predicted by the parameters set for the boundary (100 degrees for preset values). 
This example demonstrates how to use the solid wetting boundaries for the free-energy model with two fluid components. 

\subsection{fourRollMill2d}\label{sec:fourRollMill2d}
Here, a spherical domain filled with one fluid phase is immersed in a square filled with another phase of equal density and viscosity.
Four circle structures which represent roller sections are equidistantly distributed in the corners of the domain.
The bottom left and top right cylinders begin to spin in counterclockwise direction.
Whereas the top left and bottom right cylinders spin in clock-wise direction.
A velocity field of extensional type deforms the initial droplet accordingly.
Dependent on the non-dimensional parameter setting in the example header, the droplet reaches steady state or breaks up. Reference simulations are provided in~\cite{simonis:23a}.

\subsection{microFluidics2d}\label{microFluidics2d}
This example shows a microfluidic channel creating droplets of two fluid components. 
Poiseuille velocity profiles are imposed at the various channel inlets, while a constant density outlet is imposed at the end of the channel to allow the droplets to exit the simulation. 
This example demonstrates the use of three fluid components with the free energy model. It also shows the use of open boundary conditions, specifically velocity inlet and density outlet boundaries.

\subsection{phaseSeparation2d and phaseSeparation3d}\label{sec:phaseSeparation2d and phaseSeparation3d}
In these examples the simulation is initialized with a given density plus small, random variation over the domain. This condition is unstable and leads to liquid-vapor phase separation. Boundaries are assumed to be periodic. These examples show the usage of multiphase flow.

\subsection{rayleighTaylor2d and rayleighTaylor3d}\label{sec:rayleighTaylor2d and rayleighTaylor3d}
This example demonstrates Rayleigh--Taylor instability in 2D and 3D, generated by a heavy fluid penetrating a light one. The multicomponent fluid model by X. Shan and H. Chen is used~\cite{shan_chen:93}. These examples show the usage of multicomponent flow and periodic boundaries.

\subsection{youngLaplace2d and youngLaplace3d}\label{sec:youngLaplace2d and youngLaplace3d}
In this example the two-component free energy model is used in its simplest configuration to perform a Young--Laplace pressure test. A circular or spherical domain of a fluid with radius $R$ is immersed in another fluid. A diffusive interface forms and the pressure difference across the interface, $\triangle p$, is calculated and compared to that given by the Young--Laplace equation,
\begin{align}
\triangle p = \frac{\gamma}{R} &= \frac{\alpha}{6R} (\kappa_1 + \kappa_2) \qquad \text{for 2D} ~, \\
\triangle p = \frac{2\gamma}{R} &= \frac{\alpha}{3R} (\kappa_1 + \kappa_2) \qquad \text{for 3D} ~.
\end{align}
The parameters $\alpha$ and $\kappa_i$ are input parameters to the simulation which define the interfacial width and surface tension, $\gamma$, respectively.
The pressure difference is calculated between a point in the middle of the circular domain and a point furthest away from it in the computational domain.

\section{particles}\label{sec:particles}

\subsection{bifurcation3d}\label{sec:bifurcation3d}
The \path{bifurcation3d} example simulates particulate flow through an exemplary bifurcation of the human bronchial system. 
The geometry is a splitting pipe, with one inflow and two outflows. 
The fluid is transporting micrometer scale particles and the escape and capture rate is computed. 
There are two implementations of the problem. 
The first one is a Euler--Euler ansatz, meaning that the fluid phase as well as the particle phase are modeled as continua. 
The second is an Euler--Lagrange ansatz, where the particles are modeled as discrete objects.

\subsubsection{eulerEuler}
In this example the particles are viewed as a continuum and described by a advection--diffusion equation. 
This is done similar to the thermal examples, where the temperature is the considered quantity. 
For particles however, inertia has to be taken into account. 
This is achieved by applying the Stokes drag force to the velocity field. 
Since for this computations also the velocity of the previous time step is required, the new descriptor \class{ParticleAdvectionDiffusionD3Q7Descriptor} has to be used, that is capable of saving 2 velocity fields. 
Besides an extra lattice for the advection--diffusion equation, a \class{SuperExternal3D} structure is required to manage the communication for parallel execution.
\begin{lstlisting}[language=myc++]
SuperExternal3D<T,ADDESCRIPTOR,descriptors::VELOCITY> sExternal(
          superGeometry,
          sLatticeAD,
          sLatticeAD.getOverlap());

   ...

sExternal.communicate();
\end{lstlisting}
The function \texttt{communicate()} is called in the time loop and handles the communication analogue to the lattices.

Furthermore the new dynamics object \class{ParticleAdvectionDiffusionBGKdynamics} is required to access the saved velocity fields correctly and use them in an efficient way. 
For information on the coupling of the lattices we refer to the section on the advection--diffusion equation for particle flow problems~\ref{sec:particleADE}. 
In this example only the Stokes drag is applied by
\begin{lstlisting}[language=myc++]
advDiffDragForce3D<T, NSDESCRIPTOR> dragForce( converter,radius,partRho );
\end{lstlisting}
For the simulation of particles as a continuum, also new boundary conditions are required. 
Here \texttt{setZeroDistributionBoundary} represents an unidirectional outflow condition, that removes particle concentrations that cross a boundary. For the usual outflow at the bottom of the bifurcation a new \texttt{AdvectionDiffusionConvectionBoundary} for advection--diffusion lattices can be applied, that approximates a Neumann boundary condition, for further reference see~\cite{trunk:16}. Since non-local computations (gradient is required) are performed on the the external field, also a Neumann boundary condition is required that is here implemented as \texttt{setExtFieldBoundary}.

\subsubsection{eulerLagrange}
The main task of his example is to show the using of Lagrangian particles with OpenLB. 
As this example is used to show the application of the particle framework (\ref{sec:particles}), implementation specifics can be found there.

\subsection{dkt2d}\label{sec:dkt2d}
OpenLB provides an alternative approach to conventional resolved particle simulation methods, referred to
as the homogenized lattice Boltzmann method (HLBM). 
It was introduced in "Particle flow simulations with homogenized lattice Boltzmann methods" by Krause \textit{et al.}~\cite{krause:17} and extended for the simulation of 3D particles in Trunk \textit{et al.}~\cite{trunk:16}. 
It was eventually revisited in~\cite{trunk:21}.
In this approach the porous media model, introduced into LBM by Spaid and Phelan~\cite{spaid:97}, is extended by enabling the simulation of moving porous media. 
In order to avoid pressure fluctuations, the local porosity coefficient is used as a smoothing parameter.

The example \path{dkt2d} employs said approach for the sedimentation of two particles under gravity in a water-like fluid in 2D. 
The rectangular domain is limited by no-slip boundary conditions. This setup is usually referred to as a drafting–kissing–tumbling (DKT) phenomenon and is widely used as a reference setup for the simulation of particle dynamics submerged in a fluid. 
The benchmark case is described \eg in~\cite{wang2014drafting, feng2004immersed}.
For the calculation of forces a DNS approach is chosen which also leads to a back-coupling of the particle on the fluid, inducing a flow. 
The example demonstrates the usage of HLBM in the OpenLB framework as well as the utilization of the \path{gnuplot}-writer to print simulation results.

\subsection{magneticParticles3d}\label{sec:magneticParticles3d}
\textbf{Warning:} This example can currently only be run sequentially! 
High-gradient magnetic separation is a method to separate ferromagnetic particles from a suspension. 
The simulation shows the deposition of magnetic particles on a single magnetized wire and models the magnetic separation step of the complete process.

\subsection{settlingCube3d}\label{sec:settlingCube3d}

The case examines the settling of a cubical silica particle under gravity in a surrounding fluid. 
The rectangular domain is limited by no-slip boundary conditions. 
For the calculation of forces a DNS approach is chosen which also leads to a back-coupling of the particle on the fluid, inducing a flow.
The example demonstrates the usage of HLBM in the OpenLB framework as well as the utilization of the \path{gnuplot}-writer to print simulation results (Section~\ref{sec:dkt2d}).

\section{porousMedia}\label{sec:porousMedia}

\subsection{porousPoiseuille2d and porousPoiseuille3d}\label{sec:porousPoiseuille2d and porousPoiseuille3d}
This example simulates Poiseuille flow through porous media.
The implementation reproduces the benchmark \textit{Example A} by Guo and Zhao~\cite{guo2002lattice}. 
The theoretical maximum velocity is calculated as in \textit{Equation 21}, and the velocity profile as in \textit{Equation 23} of~\cite{guo2002lattice}. 
For a schematic simulation setup see Figure~\ref{fig:poiseuille2dCAD}.
\begin{figure}[ht!]
	\center
	\includegraphics[width=0.8\textwidth]{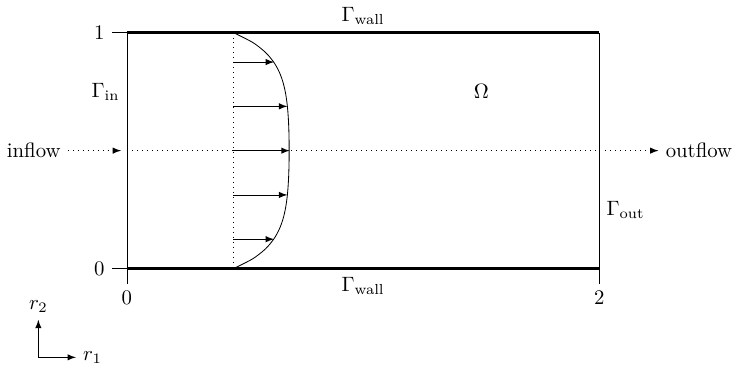}
	\caption{
		Geometry used in the example \protect\path{porousPoiseuille2d} with boundary patches and velocity profile.}
	\label{fig:poiseuille2dCAD}
\end{figure}

\section{reaction}\label{sec:reaction}

\subsection{advectionDiffusionReaction2d}
\label{example:advectionDiffusionReaction2d}
This example illustrates a forward or a reversible reaction of a substance into another one (A$\to$C or A$\textcolor{orange}{\leftrightarrow}$C).
The concentration of each substance is modeled with a one-dimensional advection diffusion equation. We consider a steady state in a plug flow reactor. 
This means that it is time independent and we have a constant velocity field $u$. The chemical reaction is modeled with a linear but coupled source term. 
The equations read
\begin{align}
\begin{cases}
u \partial_x c_A &= D \partial_x^2c_A-k_H c_A \textcolor{orange}{+k_R c_C} ~,\\
u \partial_x c_C &= D \partial_x^2c_C+k_H c_A \textcolor{orange}{-k_R c_C} ~,
\end{cases}
\end{align}
with the diffusion coefficient $D>0$, forth reaction rate coefficient $k_H>0$ and backwards reaction rate coefficient $k_R>0$ (= 0 in case of $A\to$ C).
This LBM models the transport of the species concentration along a one-dimensional line on $[0,10]$. 
In practice the simulation uses a two-dimensional rectangular domain which is evaluated along a centerline to obtain the desired one-dimensional result. 
The height of the domain depends on the resolution which holds the number of voxels for the height constant.

On the bottom and the top of the rectangular periodic boundaries and at the inlet and outlet the concentrations from the analytical solution are set. 
The solution is given by
\begin{align}
c_A(x)&= c_{A,0} e^{\lambda x}\textcolor{orange}{+\frac{k_R}{k_H+k_R}c_{A,0} \left(1-e^{\lambda x}\right)} ~, \\
c_C(x)&= c_{A,0}-c_A(x) ~,
\end{align}
with $\lambda= \frac{u-\sqrt{u^2+4\textcolor{orange}{(}k_H\textcolor{orange}{+k_R)} D}}{2D}$.\\
Diffusive scaling is applied and a physical diffusivity of $D=0.1$ and a flow rate of $u=0.5$ which leads to a P\'{e}clet number of $Pe=100$.

Every species has its own lattice and stored in a vector. In the \texttt{simulate} method we iterate over ever element of the vector \texttt{adlattices}.

In the default setting, \path{advectionDiffusionReaction2d} executes three simulation runs with increasing resolutions \(N=200, 250, 300\), respectively.
The output of each simulation run is stored in the \path{tmp/N<number>} directory. 
It contains a plot \path{centerConcentrations.pdf} of the concentrations and the analytical solution along the centerline and an error plot of the numerical and analytical solution along the centerline \path{Error Concentration.pdf}.
After each simulation has converged the average L2 relative Error over the centerline is computed.
Said average is then stored within \path{tmp/gnuplotData/data/averageL2RelError.dat}.
The order of convergence can be seen in the log-log error plot in \path{tmp/gnuplotData/concentration_eoc.png}. 
The EOC plot is only done for one species (A or C) and the species can be chosen in the return statement of the method \texttt{errorOverLine}.

One can select the reactionType \texttt{a2c} or \texttt{a2cAndBack} which automatically provides the data for modeling the reaction. 
It contains the number of reactions, the reaction rate coefficients \texttt{physReactionCoeff[numReactions]} ($k_H$, $k_R$), the number of species \texttt{numComponents} and their names, the stoichometric coefficients \texttt{stochCoeff} which are sorted according to the number of reactions and inside each reaction block according to the species number. Finally we assume that the reaction rate satisfies a power law depending on the concentration of the species. 
The exponent is given by the reaction order \texttt{reactionOrders} which is sorted in the same way as \texttt{stochCoeff}. 
In the example cases these exponents are always $1$. 
The chemical reaction itself is represented as a source term for each Advection Diffusion equation. This source term is calculated in the \class{ConcentrationAdvectionDiffusionCouplingGenerator} for every species which can handle arbitrary number of species and reactions and stored in the field \class{SOURCE}.

\section{thermal}\label{sec:thermal}

\subsection{galliumMelting2d}

The solution for the melting problem (solid-liquid phase change) coupled with natural convection is found using the lattice Boltzmann method after Huang and Wu~\cite{huangwu:2015}. The equilibrium distribution function for the temperature is modified in order to deal with the latent-heat source term. That way, iteration steps or solving a group of linear equations is avoided, which results in enhanced efficiency. The phase interface is located by the current total enthalpy, and its movement is considered by the immersed moving boundary scheme after Noble and Torczynski~\cite{nobletorczynski:1998}. This method was validated by comparison with experimental values (\eg Gau and Viskanta~\cite{gauviskanta:1986}).

\subsection{porousPlate2d, porousPlate3d and porousPlate3dSolver}

The porous plate problem is implemented as described in~\cite{guo2002coupled} and~\cite{peng2003simplified}. 
To test the coupled model's accuracy and to determine its EOC, we use a numerical simulation the porous plate problem including a temperature gradient and natural convection in a square cavity. 
The porous plate problem describes a channel flow, where the upper cool plate moves with a constant velocity, and through the bottom warm plate a constant normal flow is injected and withdrawn at the same rate from the upper plate.
At the left and right hand side of the domain a periodic boundary condition is applied and constant velocity and temperature boundary conditions are applied to the top and bottom plates according to Figure~\ref{fig:porousplateprob}.
\begin{figure}[ht!]
  \centering
    \includegraphics[scale=1]{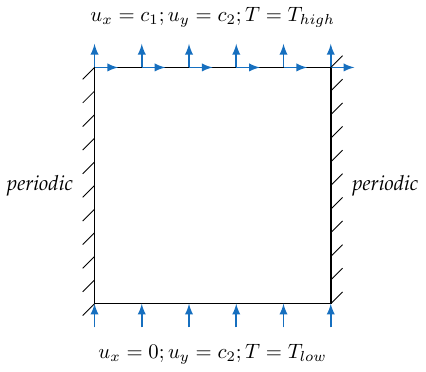}
  \caption{Schematic representation of the porous plate's simulation setup including the boundary conditions.}
  \label{fig:porousplateprob}
\end{figure}
An analytical solution for the given steady state problem is given for the velocity and temperature distributions by
\begin{align}
 u_x(y) & =u_{x,0}(\frac{e^{Re \cdot y/L}-1}{e^{Re}-1}) ~, \\
T(y) & =T_0+ \triangle T=(\frac{e^{Pr \cdot Re \cdot y/L} -1}{e^{Pr \cdot Re}-1}) ~.
\label{analytical solution}
\end{align}
Here $u_{x,0}$ is the upper plate's velocity, $Re = \frac{u_{y,0} L}{\nu}$ the Reynolds number depending on the injected velocity $u_{y,0}$, the fluid's viscosity $\nu$ and the channel length $L$. 
The temperature difference between the hot and cold plate is given by $\triangle T=T_h-T_c$.
First we implement a couple of simulations to scale the velocity and temperature profiles for a range of the Reynolds number and Prandtl number. 
The relative global error is computed via~\cite{guo2002coupled}
\begin{equation}
E= \frac{\sqrt{\sum \limits_{i} |T(x_i)-T_a(x_i)|^2}}{\sqrt{\sum \limits_{i} |T_a(x_i)|^2}} ~,
\end{equation}
where the summation is over the entire system, $T_a$ is the analytical solution \eqref{analytical solution}.
The \path{porousPlate3dSolver} example implements the same simulation, but additionally illustrates the application of the solver class concept.

\subsection{rayleighBenard2d and rayleighBenard3d}\label{sec:rayleighBenard2d and rayleighBenard3d}

The Rayleigh--B\'enard convection is a typical case of natural convection, where the lower boundary is heated and a regular pattern of convection cells is developed. This is a suitable test platform for thermal algorithms, since the driving force is a coupling between momentum and energy equations by means of a buoyancy force, which is function of the temperature, and the temperature varies spatially inside the domain.
This example demonstrates Rayleigh--B\'enard convection rolls in 2D and 3D, simulated with the thermal LB model by Guo \textit{et al.}~\cite{guo:02c}, between a hot plate at the bottom and a cold plate at the top.

\subsubsection{Setup}
The case considered has an aspect ratio ($AR$ = $Lx/Ly$) of \(2\), which enhances the appearance of unstable modes. The lower wall is heated with a constant temperature ($T = 1$), and the upper wall is isothermal and cold ($T = 0$). The vertical walls are set to be periodic. 

Among the example programs implemented in OpenLB, a demo code for the Rayleigh--Bénard convection in 2D and 3D is provided. This code is taken as a base for the development of most of the thermal applications. For the simulation of the Rayleigh--Bénard convection only one modification is made to the code regarding the initial conditions: to enhance the appearance of the convection cells, an instability in the domain is introduced. The available code initializes a small area near the lower boundary with a slightly higher temperature, introducing a perturbation in the system, whereas the rest of the domain is initialized with the cold temperature. In the modified code there is no local perturbation, but the initial temperature at the domain is dependent on the space coordinates. The domain is initialized with zero velocity and a temperature field by using a functor according to  
\begin{equation}
T(x,y,t=0)= T_{max} [(1-\frac{y}{L_y}) + 0.1cos(2\pi\frac{x}{L_x})] ~. 
\label{eq:T(x,y,t=0)}
\end{equation}
\begin{sloppypar}
The files created to help with the initialization of the temperature field are called \path{tempField.h} and \path{tempField2.h}. 
Listing~\ref{lst:listingTemperate} shows the corresponding usage. 
The first file computes the temperature at every point of the lattice, as a function of its macroscopic position, and then this value is applied on the lattice as the density (line~3). The second file calculates the equilibrium distribution functions for every node corresponding to the given temperature and zero velocity. Next, the populations are defined for the desired material number in line~4. 
\end{sloppypar}
\begin{lstlisting}[language=myc++,caption={Initialization of the temperature field}, label=lst:listingTemperate]
TemperatureField2D<T,T> Initial( converter );
TemperatureFieldPop2D<T,T> EqInitial( converter );
ADlattice.defineRho( superGeometry, 1, Initial );
ADlattice.definePopulations( superGeometry, 1, EqInitial );
\end{lstlisting}
In \eqref{eq:T(x,y,t=0)}, the \(y\)-dependent part of the equation matches the stationary solution of the problem, corresponding to a case where there is no fluid movement and the heat transfer only
occurs by conduction. The cosine term introduces a disturbance in the system, which enhances the appearance of the convection cells.

\subsubsection{Simulation Parameters}
Computations are run for a range of different Rayleigh ($3\cdot 10^3$, $6\cdot 10^5$) and Prandtl numbers (\(0.3\), \(1\)). The spatial resolution was fixed to \(100\) cells in the \(y\)-direction, and the time discretization was switched between $10^{-3}$ and $10^{-4}$, which give lattice velocities of \(0.1\) and \(0.01\) respectively. The convergence criterion is applied on the average energy,
and it is set to a precision of $10^{-5}$.

\subsection{squareCavity2d and squareCavity3d}
A common application for the validation of thermal models is the numerical simulation of the natural convection in a square cavity. 
For this configuration there is an extensive database in a wide range of Rayleigh numbers, which allows to verify the accuracy of the thermal model. 

\subsubsection{Setup}
The problem considered is shown schematically in Figure~\ref{fig:sDiagOTSimDom}. 
The horizontal walls of the cavity are adiabatic, while the vertical walls are kept isothermal, with the left wall at high temperature ($T_{hot}$ = 1) and the right wall at low temperature ($T_{cold}$ = 0).
\begin{figure}
\centering
  \includegraphics[scale=1]{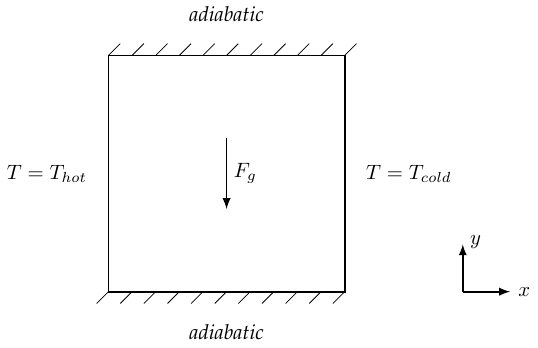}
\caption{Schematic diagram of the simulation domain for the example \protect\path{squareCavity2d}.}
\label{fig:sDiagOTSimDom}
\end{figure}

The dynamics chosen for the velocity field is \class{ForcedBGKdynamics}, and for the temperature field \class{AdvectionDiffusionBGKdynamics}.

\subsubsection{Simulation Parameters}
Taking air at \(293 \mathrm{K}\) as working fluid, the value of the Prandtl number is $Pr = 0.71$ and is kept constant. 
The Rayleigh number ranges from $10^3$ to $10^6$.
Different spatial resolutions are tested for each Rayleigh number, in order to study the grid convergence. 
The time-step size is adjusted so that the lattice velocity stays at the value \(0.02\). 
This ensures that the Mach number is kept at incompressible levels. The convergence criterion is set by a standard deviation of $10^{-6}$ in the kinetic energy.

\subsubsection{MRT}
The new implemented MRT model for thermal applications is first examined on the 2-dimensional cavity. 
The only setup differences to the BGK model are the lattice descriptors (\texttt{ForcedMRTD2Q9Descriptor} and \texttt{AdvectionDiffusionMRTD2Q5Descriptor}) and the dynamics objects selected, which are now specialized for the MRT dynamics (\texttt{ForcedMRTdynamics} and \texttt{AdvectionDiffusionMRTdynamics}).
This simulation is used as a test for different important aspects of the implementation. 
First, the formulation of the MRT model, particularly the values of the transformation matrix, the relaxation times and the sound speed of the lattice are based on~\cite{li2013boundary}, but it show variations over 10\% with respect to the BGK model. 
A second formulation \cite{liu2015double} is selected, which shows much closer results to the BGK model. 
No special treatment is required to make use of the available boundary conditions.
The number of iterations required to achieve the desired precision, that is, the number of time steps until the steady-state solution is reached, is found to be usually higher for the MRT simulations. 
Furthermore, the execution time is between 4 and 8 times longer when compared to
the BGK simulations.

\subsection{stefanMelting2d}

The solution for the melting problem (solid-liquid phase change) is computed using the LBM from Huang and Wu~\cite{huangwu:2015}. 
The equilibrium distribution function for the temperature is modified in order to deal with the latent-heat source term. 
That way, iteration steps or solving a group of linear equations is avoided, which results in enhanced efficiency. 
The phase interface is located by the current total enthalpy, and its movement is considered by the immersed moving boundary scheme after Noble and Torczynski~\cite{nobletorczynski:1998}. 
Huang and Wu validated this method by the problem of conduction-induced melting in a semi-infinite space, comparing its results to analytical solutions.

\section{turbulent}\label{sec:turbulent}

\subsection{aorta3d}\label{sec:aorta3d}
In this example, the fluid flow through a bifurcation is simulated. 
The geometry is obtained from a mesh in STL format. 
With Bouzidi boundary conditions, the curved boundary is adequately mapped and initialized entirely automatically. 
A Smagorinsky LES BGK model is used for the dynamics to stabilize the turbulent flow simulation for low resolutions. 
The output is the flux computed at the inflow and outflow region. 
The results have been validated through comparison with other results obtained with FEM and FVM.

\subsection{channel3d}\label{sec:channel3d}
This example features the application of wall functions in a bi-periodic, fully developed turbulent channel flow for friction Reynolds numbers of $Re_\tau=1000$ and $Re_\tau=2000$. 
For the published results and further reference see~\cite{haussmann:19}.

\subsection{nozzle3d}\label{sec:nozzle3d}
On the one hand this example describes building a cylindrical 3D geometry in OpenLB, on the other hand it examines turbulent flow in a nozzle injection tube using different turbulence models and Reynolds numbers.

For characterization different physical parameters have to be set. 
The resolution $N$ defines most physical parameters such as the velocity \class{charU}, the kinematic viscosity $\nu$ and two characteristic lengths \class{charL} and \class{latticeL}.
Physical length \class{charL} is used to characterize the geometry and the Reynolds number.
Lattice length \class{latticeL} defines the mesh size and is calculated as $\texttt{latticeL} = \texttt{charL}/N$.
More information about the parameter definitions are located in the file \path{units.h}.
\begin{table}[ht!]
\centering
\caption{Preset simulation parameters of the \protect\path{nozzle3d} example.}
\begin{tabular}{|l|l|}
  \hline
  parameter & value \\ \hline
  $\texttt{charL}$ & $1 \mathrm{m}$ \\ \hline
  $\texttt{latticeL}$  & $\frac{1}{3} \mathrm{m}$ \\ \hline
  $\texttt{charU}$ & $1 \frac{\mathrm{m}}{\mathrm{s}}$ \\ \hline
  $\nu$ & $0.00002 \frac{\mathrm{m}^2}{\mathrm{s}}$ \\ \hline
  $Re_{inlet}$  & $5000$ \\ \hline
  turbulence model  & Smagorinsky \\ \hline
\end{tabular}
\end{table}
Figure~\ref{fig:geometryNozzle3d} illustrates the geometry and the nozzle's size as a function of the characteristic length \class{charL}.
The nozzle consists of two circular cylinders.
The inflow (red) is located left in the \textit{inletCylinder}.
The outflow (green) is at the right end of the \textit{injectionTube}.
At the main inlet, either a block profile or a power $1/7$ profile is imposed as a Dirichlet velocity boundary condition, whereas at the outlet a Dirichlet pressure condition is set by $p=0$ (\ie $rho=1$).
\begin{figure}[ht!]
  \includegraphics[width=1.0\textwidth]{./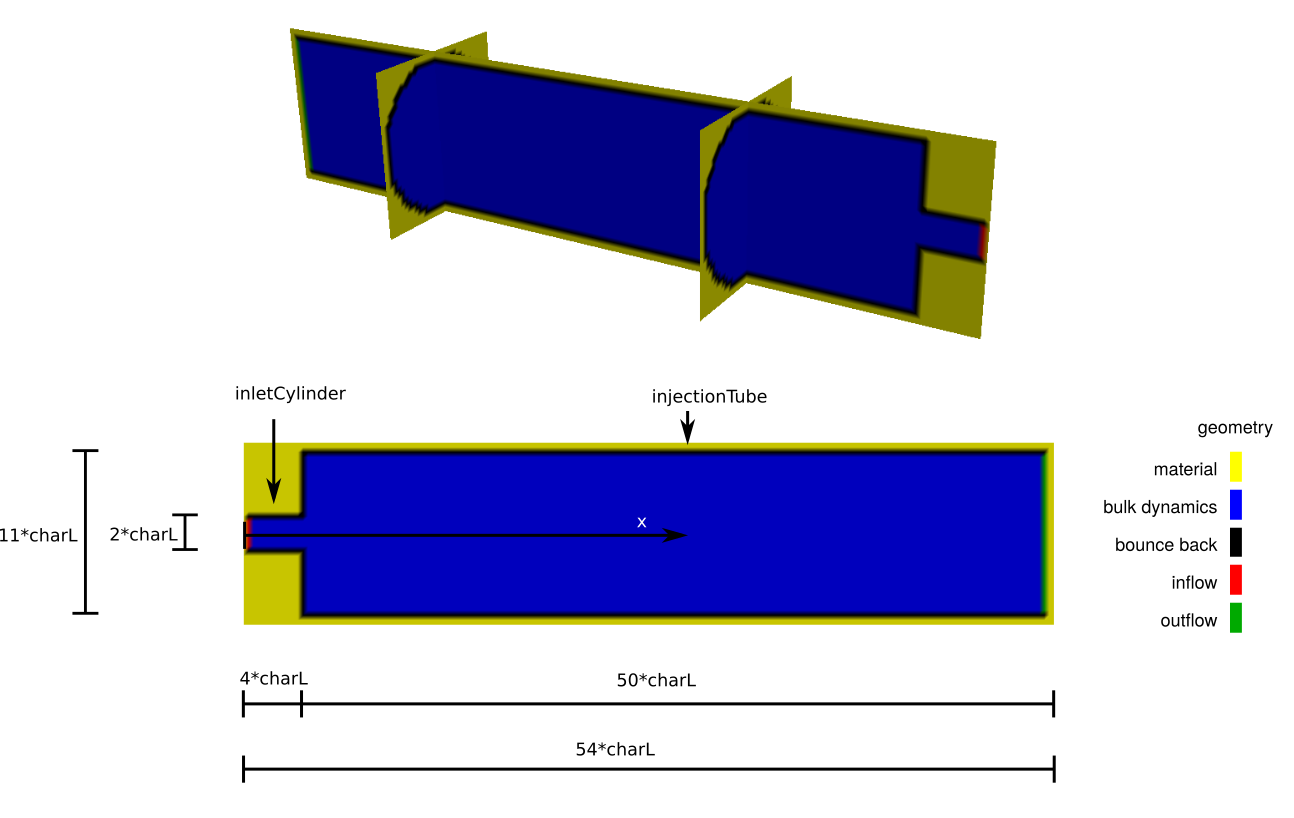}
  \caption{Cross section of a 3D geometry of \protect\path{nozzle3d} in dependency of characteristic length \protect\texttt{charL}.}
  \label{fig:geometryNozzle3d}
\end{figure}
Two vectors, origin and extend, describe the center and normal direction of the cylinder's circular start (origin) and end (extend) plane. 
The radius is defined in the function.

As mentioned before, this example simulates turbulent fluid flow.
The flow behavior in the inlet is characterized by the Reynolds number.
The following turbulence models are based on large eddy simulation (LES).
The idea behind LES is to simulate only eddies larger than a certain grid filter length, while smaller eddies are modeled.
Several models are currently implemented, e.g.:
\begin{itemize}
  \item The \textbf{Smagorinsky model} reduces the turbulence to a so called eddy viscosity. This viscosity depends on the Smagorinsky constant, which has to be defined. This model has certain disadvantages at the wall. 
  \item The \textbf{Shear-improved Smagorinsky model (SISM)} is based on the Smagorinsky model. Compared to the original model, the SISM works at the wall very well. Similarly, a model specific constant has to be defined.
\end{itemize}
The following code shows the model selection.
A model is selected, when the correlate line is uncommented.
Below, the model specific constants are defined.
In this case the Smagorinsky model is selected.
Smagorinsky constant is set to \(0.15\).
\noindent
\begin{lstlisting}[style=intext]
/// Choose your turbulent model of choice

#define Smagorinsky

...

#elif defined(Smagorinsky)
bulkDynamics = new SmagorinskyBGKdynamics<T, DESCRIPTOR>(converter.getOmega(), instances::getBulkMomenta<T, DESCRIPTOR>(),
0.04, converter.getLatticeL(), converter.physTime());

\end{lstlisting}

As an example, Figure~\ref{fig:nozzle3d_physVelocity} shows the results with preset parameters.
\begin{figure}[ht!]
  \includegraphics[width=1.0\textwidth]{./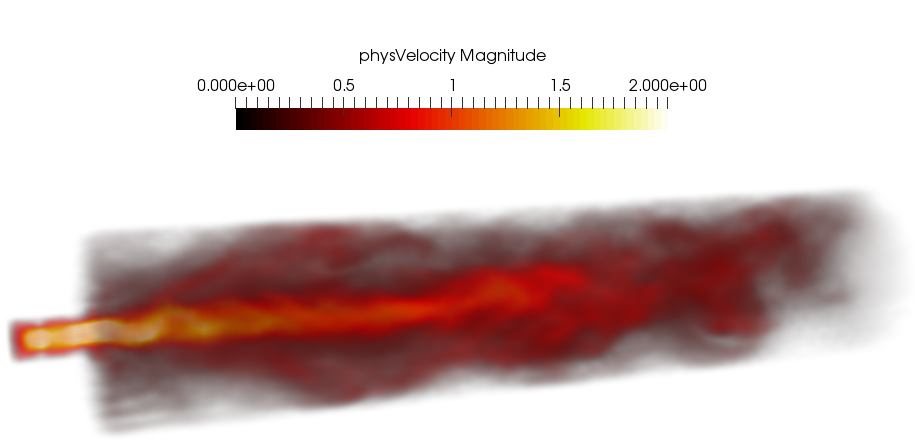}
  \caption{Physical velocity field after 200 seconds with preset parameters (Smagorinsky Model, $C_S = 0.15$, $\texttt{latticeL} = \frac{1}{3} \mathrm{m}$, $Re_{inlet}=5000$).}

  \label{fig:nozzle3d_physVelocity}
\end{figure}
The simulations strongly depends on the Smagorinsky constant's value, used in the turbulence model.
However, the constant is not a general calculable value and valid for one model.
It could be a function of the Reynolds number and/or another dimensionless parameter.
Thus, a physically useful value has to be found by trial and error, or chosen as an educated guess in the beginning.
Generally, if the constant's value is chosen too small, the simulation results will become unstable and/or unphysical.
If the value is too large, the model will introduce too much artificial viscosity and smooth the results.

\begin{figure}[ht!]
  \centering
  \includegraphics[width=0.8\textwidth]{./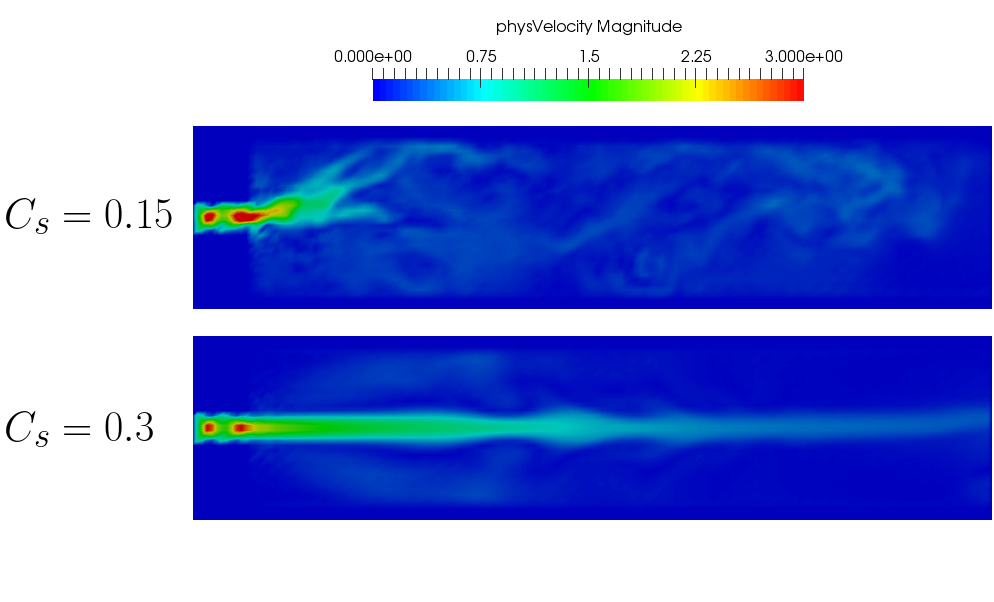}
  \caption{Physical velocity magnitude plotted on the cross section for two Smagorinsky constants. 
  Increasing the Smagorinsky constant visibly smoothens the results and straightens the appearing turbulence.}
\end{figure}

\subsection{tgv3d}\label{sec:tgv3d}
The Taylor--Green vortex (TGV) is one of the simplest configurations to investigate the generation of small scale structures and the resulting turbulence.
\begin{figure}[ht!]
  \centering
  \includegraphics[width=0.8\textwidth]{./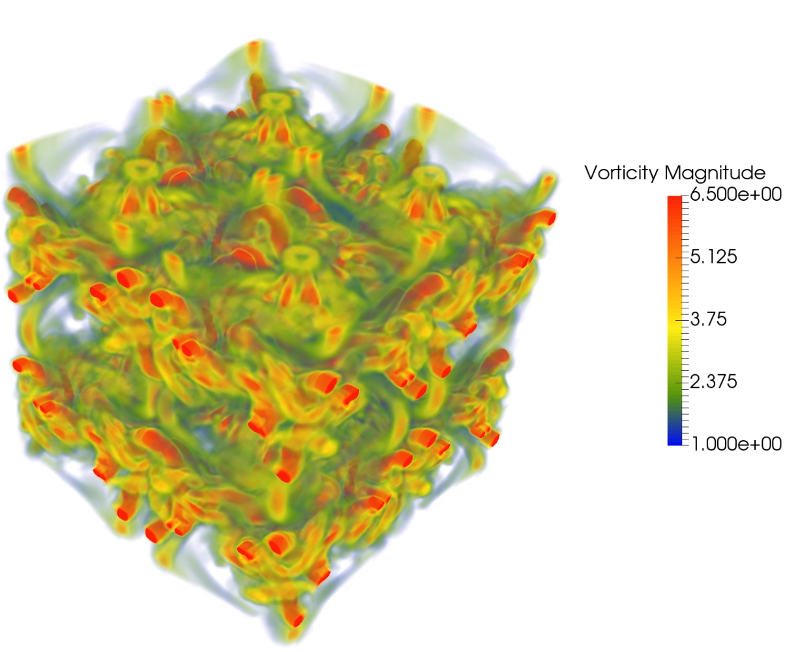}
  \caption{Isosurfaces of vorticity for the Taylor--Green vortex at $t=12 \mathrm{s}$.}
\end{figure}
The cubic domain $\Omega = \left(2\pi\right)^3$ with periodic boundaries and the single mode initialization contribute to the model's simplicity.
In consequence, the TGV is a common benchmark case for direct numerical simulation (DNS) as well as large eddy simulation (LES).
This example demonstrates the usage of different sub-grid models and visualizes their effects on global turbulence quantities.
The molecular dissipation rate, the eddy dissipation rate and the effective dissipation rate are calculated and plotted over the simulation time.
The results can be compared with a DNS solution published by Brachet \textit{et al.}~\cite{brachet:83}.

\subsection{venturi3d}\label{sec:venturi3d}
This example examines a steady flow in a Venturi tube. 
A Venturi tube is a cylindrical tube, which has a reduced cross-section in the middle part. 
At this constriction is an injection tube. 
As a result of the accelerating fluid in the constriction, the static pressure decreases and the injection tube's fluid is pumped in the main tube.
The overall geometry is built with adding together single bodies. 
Each body's geometry is defined by certain points (position vectors) in the coordinate system and their radius. 
A cone-shaped cylinder needs the center of the start an end circle as well as the radii. Following code builds the geometry and shows the semantics.

\noindent
\begin{lstlisting}[style=intext]
/// Definition of the geometry of the venturi

//Definition of the cross-sections' centers
Vector<T,3> C0(0,50,50);
Vector<T,3> C1(5,50,50);
Vector<T,3> C2(40,50,50);
Vector<T,3> C3(80,50,50);
Vector<T,3> C4(120,50,50);
Vector<T,3> C5(160,50,50);
Vector<T,3> C6(195,50,50);
Vector<T,3> C7(200,50,50);
Vector<T,3> C8(190,50,50);
Vector<T,3> C9(115,50,50);
Vector<T,3> C10(115,25,50);
Vector<T,3> C11(115,5,50);
Vector<T,3> C12(115,3,50);
Vector<T,3> C13(115,7,50);

//Definition of the radii
T radius1 = 10 ;  // radius of the tightest part
T radius2 = 20 ;  // radius of the widest part
T radius3 = 4 ;   // radius of the small exit

//Building the cylinders and cones
IndicatorCylinder3D<T> inflow(C0, C1, radius2);
IndicatorCylinder3D<T> cyl1(C1, C2, radius2);
IndicatorCone3D<T> co1(C2, C3, radius2, radius1);
IndicatorCylinder3D<T> cyl2(C3, C4, radius1);
IndicatorCone3D<T> co2(C4, C5, radius1, radius2);
IndicatorCylinder3D<T> cyl3(C5, C6, radius2);
IndicatorCylinder3D<T> outflow0(C7, C8, radius2);
IndicatorCylinder3D<T> cyl4(C9, C10, radius3);
IndicatorCone3D<T> co3(C10, C11, radius3, radius1);
IndicatorCylinder3D<T> outflow1(C12, C13, radius1);

//Addition of the cylinders to overall geometry
IndicatorIdentity3D<T> venturi(cyl1 + cyl2 + cyl3 + cyl4 + co1 + co2 + co3);
\end{lstlisting}
Figure~\ref{fig:venturi3d_geometry_2} visualizes the defined point positions and Figure~\ref{fig:venturi3d_geometry_1} shows the computational geometry.
\begin{figure}[ht!]
  \centering
  \includegraphics[width=0.7\textwidth]{./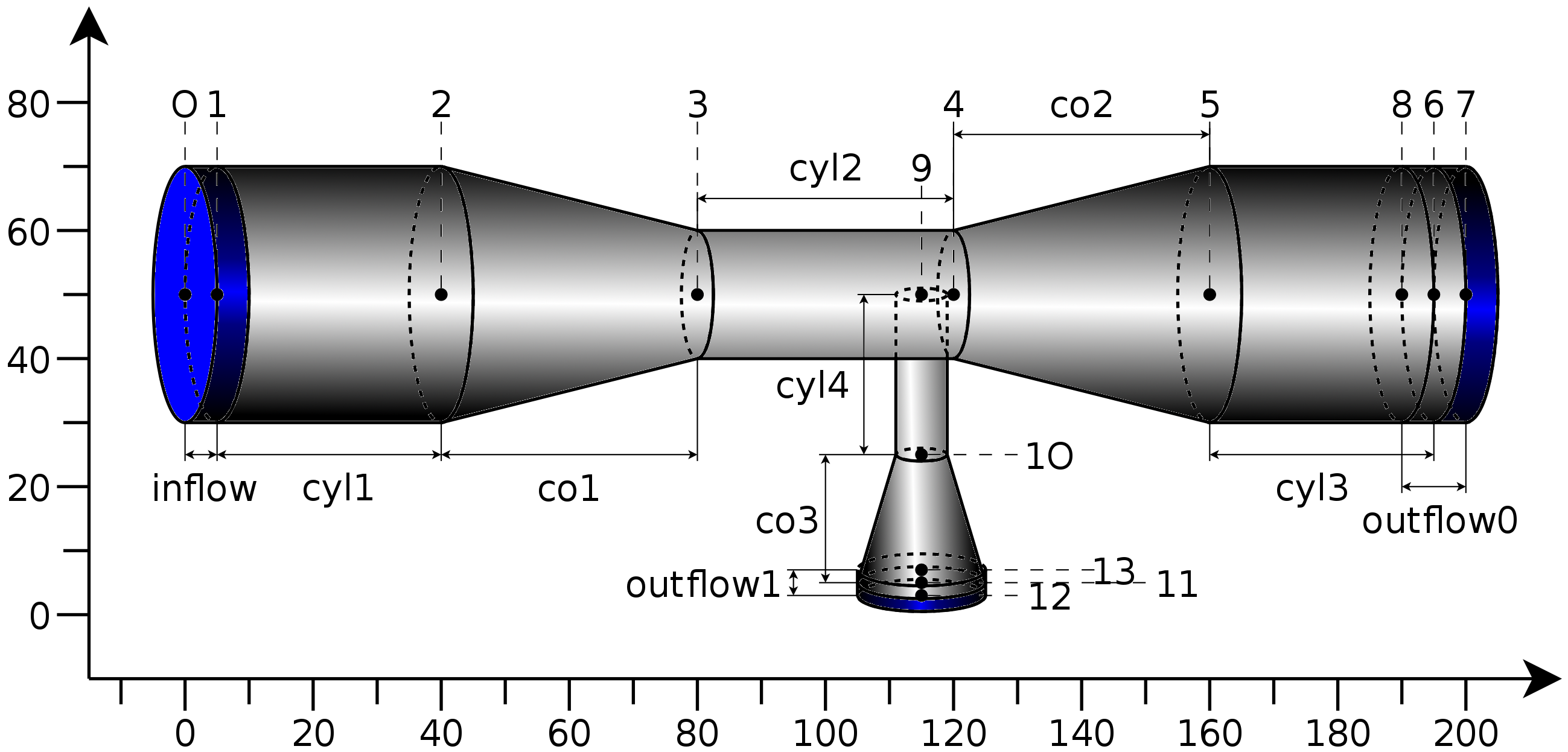}
  \caption{Schematic diagram visualizing the defined point positions for \protect\path{venturi3d}.}
  \label{fig:venturi3d_geometry_2}
\end{figure}
\begin{figure}[ht!]
 \centering
  \includegraphics[width=0.9\textwidth]{./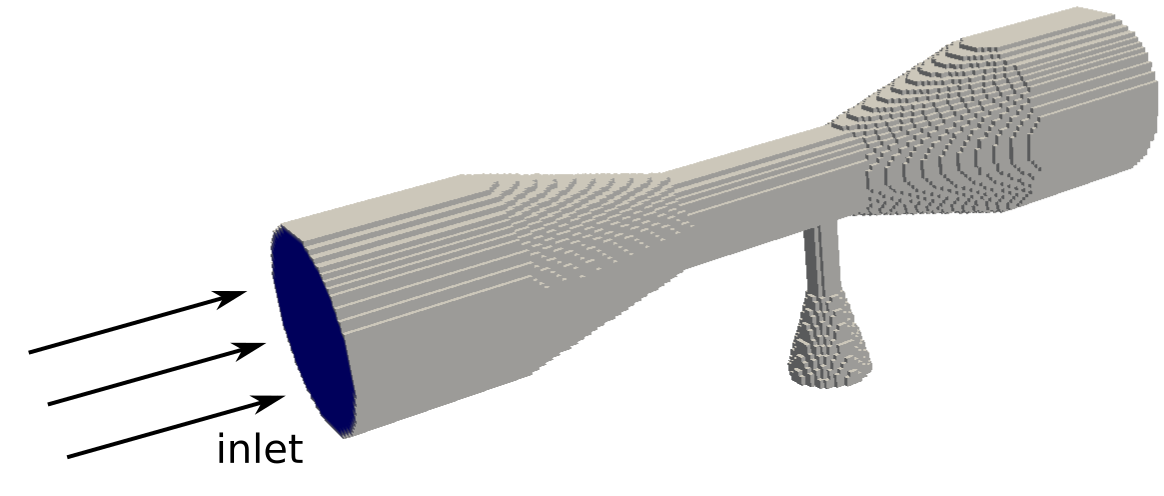}
  \caption{Built geometry as used in simulation of example \protect\path{venturi3d}.}
  \label{fig:venturi3d_geometry_1}
\end{figure}
\begin{figure}[ht!]
  \centering
  \includegraphics[width=0.7\textwidth]{./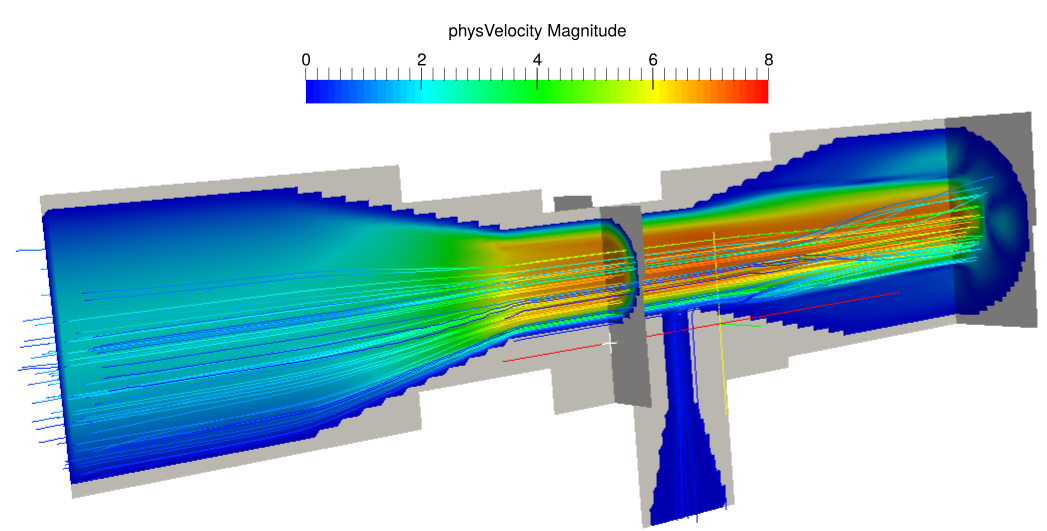}
  \caption{Simulation of the the example \protect\path{venturi3d} after \(200\) simulated time steps.}
  \label{fig:venturi3d_simulation_1}
\end{figure}
At the main inlet, a Poiseuille profile is imposed as a Dirichlet velocity boundary condition, whereas at the outlet and the minor inlet, a Dirichlet pressure condition is set by $p=0$ (\ie $\rho=1$).
Figure~\ref{fig:venturi3d_simulation_1} visualizes the computed velocity magnitude in the Venturi tube geometry.

\section{freeSurface}\label{sec:freeSurface}
The free surface approach~\cite{thurey:07} is a numerical simulation of two phases, where one of the phases does not have to be simulated thoroughly. 
In the provided examples of this category, these two phases are usually water and air, with the air phase being the one that is handled in a simplified manner. 

\subsection{breakingDam2d and breakingDam3d}\label{sec:breakingDam}
The \path{breakingDam2d} example is based on a physical experiment conducted by LaRocque \textit{et al.}~\cite{larocque:13}. 
An enclosed box contains an area of fluid in the lower left corner. 
With the start of the simulation, the fluid spreads throughout the box, with a visible wave forming after the fluid reaches the right-hand side wall.
\begin{figure}[ht]
  \includegraphics[width=1.0\textwidth]{./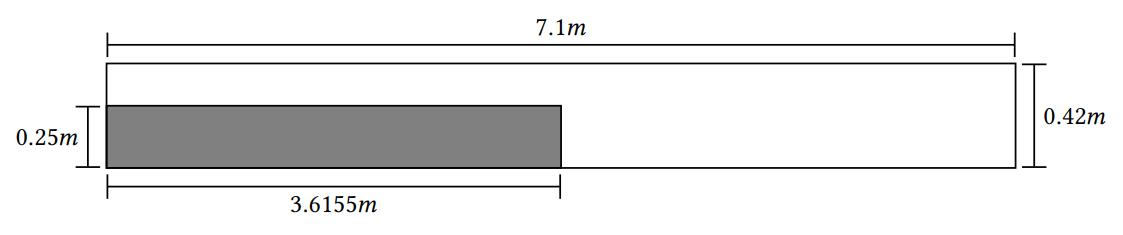}
  \caption{Initial breaking dam setup of \protect\path{breakingDam2d} with the fluid in gray.}
  \label{fig:breaking2d_setup}
\end{figure}

The example \path{breakingDam3d} extends the 2D example to 3D. 
A full simulation run of the \path{breakingDam3d} example, visualized in ParaView, can be found on the OpenLB YouTube page (\href{https://youtu.be/X8yeLCkUldQ}{\nolinkurl{https://youtu.be/X8yeLCkUldQ}}).

\subsection{fallingDrop2d and fallingDrop3d}\label{sec:fallingDrop}
This example type simulates a drop falling into a pool of the same liquid. 
Figure~\ref{fig:fallingDrop2dexample} shows an excerpt of the first couple of time steps. 
On the right half of these steps, the forming of a so-called crown can be seen.
\begin{figure}[ht]
  \includegraphics[width=1.0\textwidth]{./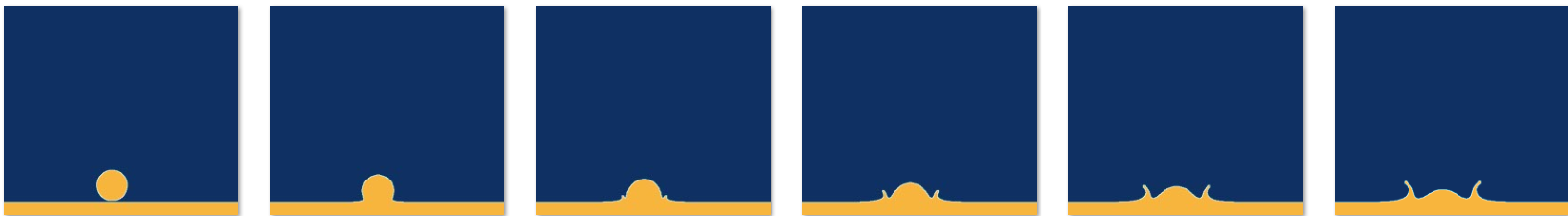}
  \caption{Setup and some exemplary steps of the example \protect\path{fallingDrop2d}.}
  \label{fig:fallingDrop2dexample}
\end{figure}

\subsection{deepFallingDrop2d}\label{sec:deepFallingDrop}
A variation of the \path{fallingDrop2d} example type with a deeper pool of liquid that a drop will fall into. This adapted pool depth allows changes to the droplet properties, such as size, density or velocity, to be more apparent in the simulation.

\subsection{rayleighInstability3d}\label{sec:freeSurfaceRayleigh}
This example covers Plateau-Rayleigh instabilities. 
Figure~\ref{fig:freeSurfaceRayleigh} shows the initial setup as well as multiple steps throughout a simulation of this example. 
Each of these steps after the initial setup was captured after 1800 additional simulation steps. 
The perturbation was made by setting the fill level on the border of the main cylinder according to a sine wave with a wave length set to $\delta = 9R$ with $R$ being the radius of the cylinder. 
This initial sine wave can be seen in the setup step: Red cells are full, blue cells are empty.
\begin{figure}[ht]
  \includegraphics[width=1.0\textwidth]{./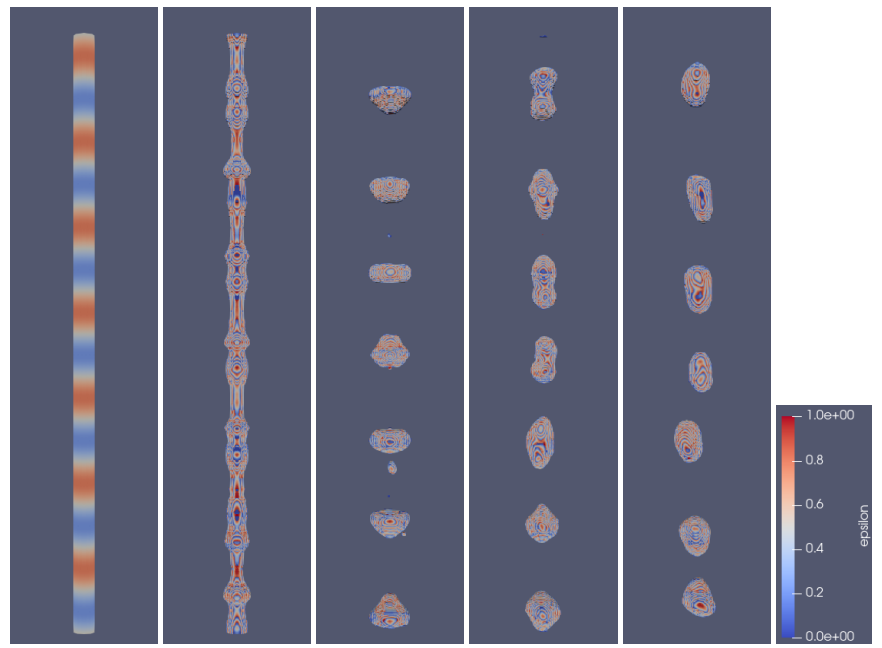}
  \caption{Initial setup and different steps of the Plateau-Rayleigh Instability in \protect\path{rayleighInstability3d}.}
  \label{fig:freeSurfaceRayleigh}
\end{figure}


\chapter{Building and Running} 

OpenLB is developed for high performance computing. 
As such, Linux-based systems are the first class target platform. 
The reason for this is that most HPC clusters and, in fact, all of the 500 fasted supercomputers in the world (cf.\ \href{https://www.top500.org/}{\nolinkurl{https://www.top500.org/}}) run some kind of Linux distributions.

\section{Install Dependencies}\label{sec:dependencies}
GNU Make in addition to a reasonably current C++ compiler supporting C++17 is all that is needed to compile and run non-parallelized OpenLB applications.
OpenLB is able to utilize vectorization (AVX2/AVX-512) on x86 CPUs and NVIDIA GPUs for block-local processing.
CPU targets may additionally utilize OpenMP for shared memory parallelization while any communication between individual processes is performed using MPI.
It has been successfully employed for simulations on computers ranging from low-end smartphones up to supercomputers.

The present release 1.6 has been explicitly tested in the following environments:
\begin{itemize}
\item NixOS 22.11 and unstable (Nix Flake provided)
\item Ubuntu 20.04, 22.04
\item Red Hat Enterprise Linux 8.x (HoreKa, BwUniCluster2)
\item Windows 10, 11 (WSL)
\item MacOS 13
\end{itemize}
as well as compilers:
\begin{itemize}
\item GCC 9 and later
\item Clang 13 and later
\item Intel C++ 2021.4 and later
\item NVIDIA CUDA 11.4 and later
\item NVIDIA HPC SDK 21.3 and later
\item MPI libraries OpenMPI 3.1, 4.1 (CUDA-awareness required for Multi-GPU); Intel MPI 2021.3.0 and later
\end{itemize}
Other CPU targets are also supported, \eg common Smartphone ARM CPUs and Apple M1/M2.

\subsection{Linux}\label{sec:linux}
It is recommended to work on a Linux-based machine. 
Please ensure to have the above-mentioned dependencies installed (Section~\ref{sec:dependencies}). 
Further description is provided below in Section~\ref{sec:windows}. 

\subsection{Mac}\label{sec:macos}
Configuring OpenLB on MacOS is explained for the release 1.5 using MacOS 11.6 in the technical report \href{https://www.openlb.net/wp-content/uploads/2022/06/olb-tr6.pdf}{TR6: Configuring OpenLB on MacOS}~\cite{tr6}. 

\subsection{Windows}\label{sec:windows}
The preferable approach is to use the Windows Subsystem for Linux (WSL) introduced in Windows 10. A
guide can be found in the technical report \href{https://www.openlb.net/tech-reports}{TR5: Installing OpenLB in Windows 10}~\cite{tr5}.

\section{Compiling OpenLB Programs}\label{sec:compiling}

OpenLB consists of generic, template-based code, which needs to be included in the code of application programs, and dependency libraries that are to be linked with the program.
The installation process is light and does not require an explicit precompilation and installation of libraries.
Instead, it is sufficient to unpack the source code into an arbitrary directory.
Compilation of libraries is handled on-demand by the \path{Makefile} of an application/example program.

To get familiar with OpenLB, new users are encouraged to have a look at programs in the \path{examples} directory.
Once inside one of the example directories, entering the command \texttt{make} will first produce libraries and then the end-user example program.
This close relationship between the production of libraries and end-user programs reflects the fact that using OpenLB presently translates to writing a C++ program using the OpenLB library functions.

The file \path{config.mk} in the root directory can be easily edited to modify the compilation process. Available options include the choice of the compiler (\texttt{GNU g++} is the default), optimization flags, a switch between normal/debug mode, between sequential/openmp-parallel/mpi-parallel programs, and between (un)vectorized CPU and GPU platforms.

Example configuration files for common build types and systems are included in the \path{config/} directory of the release tarball.

To compile your own OpenLB programs from an arbitrary directory, make a copy of a sample \path{Makefile} contained in a default example folder. Edit the \texttt{OLB\_ROOT} entry to indicate the location of the OpenLB source, and the \texttt{EXAMPLE} entry to explicit the name of your program, without file extension.

A minimal but perfectly sufficient development environment for OpenLB consists of a supported C++ compiler, a plain text editor of choice and the GNU core utilities including \texttt{make}.
This means that many Linux distributions either already include everything one needs to get started or at least include everything in their default package repositories for convenient installation.
One may of course also use more involved text editors with \eg debugger integration or even full integrated development environments.

For compiling OpenLB applications outside of the provided build system one only needs to make the \path{src} folder available for inclusion and define the compiler and linker flags as they are visible in the printout of the default build.
However, simply calling the appropriate \texttt{make} target from inside the text editor/IDE may be more convenient.

\subsection{Using NVIDIA GPUs in OpenLB}\label{sec:nvidiaGpuOlb}

The following section is a quick guide on how to install the CUDA functionality for Nvidia graphics cards on both Windows or Linux. The first two sections describe how to install CUDA on Windows via WSL or Linux, respectively. The third section discusses how to set up OpenMPI, a CUDA-aware MPI implementation, and the fourth and final section explains how to configure OpenLB to make use of the installed functionalities.

\subsubsection{CUDA on Windows with WSL}

As mentioned in the chapter about install dependencies (Section~\ref{sec:windows}), the preferred approach for OpenLB on Windows is to use the Windows Subsystem for Linux (WSL). The following was written with the assumption that OpenLB has been successfully set up on WSL with Ubuntu.

The following specifications are needed to get CUDA running via WSL:
\begin{itemize}
\item[•] Windows 10 version 21H2 or higher
\item[•] CUDA compatible Nvidia graphics card
\item[•] WSL 2 with a glibc-based distribution (\eg Ubuntu)
\end{itemize}

To find out which Windows version exactly you're using, open up the \texttt{run} dialog box in Windows and type in the command \texttt{winver}, which will display a pop-up window similar to the one below:
\begin{figure}[ht]
  \center
  \includegraphics[scale=0.75]{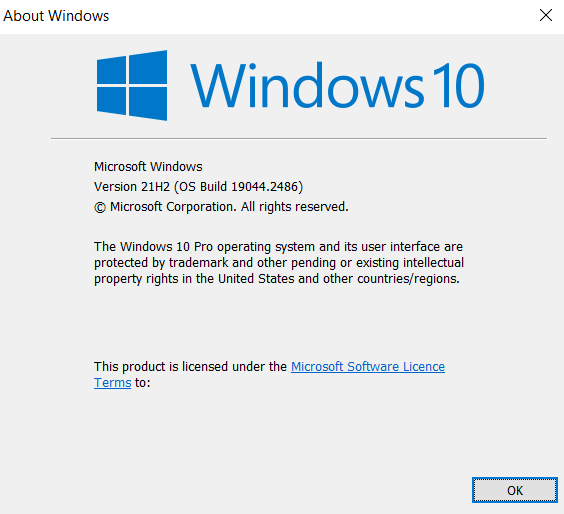}
  \caption{
  Pop-up window displaying the exact version and build of Windows.}
  \label{fig:aboutWindows}
\end{figure}

In order to find out what graphics card you have and whether it is compatible with CUDA, open the up the Windows \texttt{run} dialog and type in the command \texttt{dxdiag}, which will open the \texttt{DirectX Diagnostic Tool}. Under the tab \texttt{Render}, it will display the information regarding your graphics card. In the example picture of the \texttt{DirectX Diagnostic Tool} below, the graphics card is a \texttt{NVIDIA GeForce GTX 1650}.
\begin{figure}[ht]
  \center
  \includegraphics[scale=0.75]{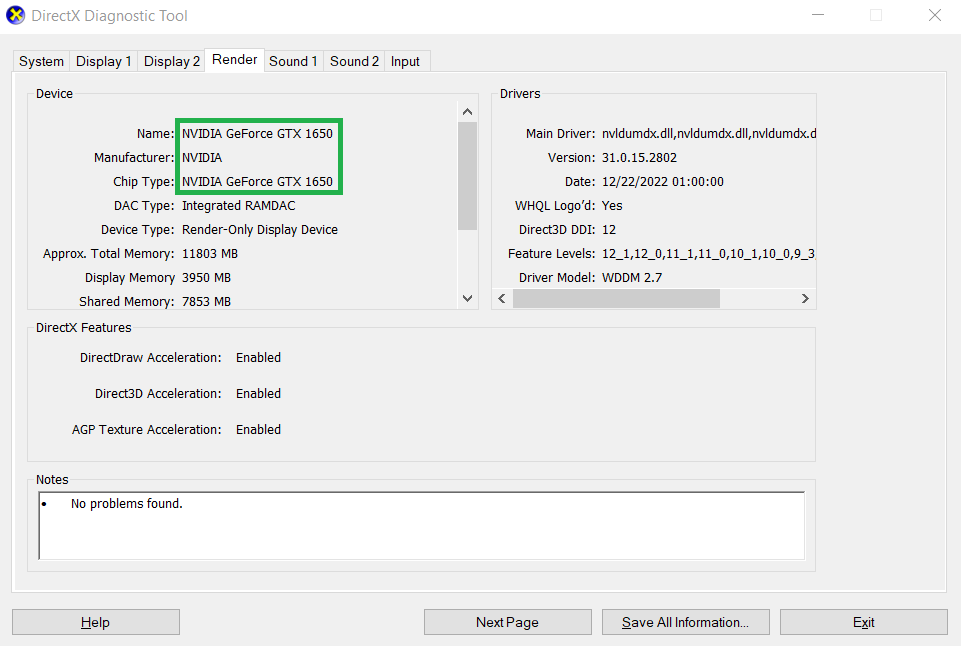}
  \caption{
  The \texttt{Render} tab of the \texttt{DirectX Diagnostic Tool}.}
  \label{fig:dxDiag}
\end{figure}
NVIDIA provides the information on which graphics card is compatible with CUDA on their website (\href{https://developer.nvidia.com/cuda-gpus}{\nolinkurl{https://developer.nvidia.com/cuda-gpus}}).

CUDA is only supported on version 2 of the Windows Subsystem for Linux (WSL). To confirm which version of WSL is installed, open the Windows PowerShell with administrator rights and type in the command
\begin{verbatim}
   wsl --list --verbose
\end{verbatim} 
This will display which Linux distribution and which version of WSL is currently installed. The output should look similar to the following:
\begin{verbatim}
PS C:\Windows> wsl --list --verbose
  NAME      STATE           VERSION
* Ubuntu    Stopped         1
\end{verbatim}
In this example the distribution that is installed is \texttt{Ubuntu} and the WSL version is 1. Upgrading to the necessary version 2 can be done by typing
\begin{verbatim}
   wsl --set-version Ubuntu 2
\end{verbatim}
into the \texttt{PowerShell} terminal. Note that when using a different distribution for WSL, the command has to be adjusted accordingly.

An error might occur claiming that a certain hard-link target does not
exist. This means that there is software installed on WSL that collides
with the update. The error message will provide the path of the non-existing hard-link, which will be a hint onto which package causes this error. In the WSL terminal, the command
\begin{verbatim}
   sudo apt list --installed
\end{verbatim}
will give an overview over all the installed packages. The conflicting package can then be removed with
\begin{verbatim}
   sudo apt-get remove [PACKAGE-NAME]
\end{verbatim}
Once the package has been removed, WSL can be upgraded. On a successful upgrade, we should receive a message that the conversion is complete and we can verify the version with the
\begin{verbatim}
   wsl --list --verbose
\end{verbatim} command. The conflicting package can then be reinstalled.

In order for WSL to have access to the GPU hardware, virtual GPU needs to be enabled on Windows. This can be done by installing an appropriate driver on Windows. It should not be necessary to install any device drivers on WSL itself. It is even highly suggested not to do so, since any installation of a driver on WSL itself might override the functionality provided by the driver that is installed onto Windows. As of the writing of this guide, the most recent NVIDIA drivers automatically support virtual GPU for WSL.
The newest driver can be directly downloaded from the NVIDIA website (\href{https://www.nvidia.com/download/index.aspx}{\nolinkurl{https://www.nvidia.com/download/index.aspx}}). The website offers drop down lists to specify what product type, device, operating system, etc. the driver is needed for. Once the most recent driver is installed, we can install the CUDA toolkit on WSL.

The following commands typed into the WSL terminal will install the Nvidia CUDA toolkit on WSL (Ubuntu):
\begin{verbatim}
sudo apt-key del 7fa2af80
		
wget https://developer.download.nvidia.com/compute/cuda/repos/
     wsl-ubuntu/x86_64/cuda-wsl-ubuntu.pin
		
sudo mv cuda-wsl-ubuntu.pin /etc/apt/preferences.d/
                            cuda-repository-pin-600
		
sudo apt-key adv --fetch-keys https://developer.download.nvidia.com/
                              compute/cuda/repos/wsl-ubuntu/
                              x86_64/3bf863cc.pub
		
sudo add-apt-repository 'deb https://developer.download.nvidia.com/
                             compute/cuda/repos/wsl-ubuntu/x86_64/ /'
		
sudo apt-get update
		
sudo apt-get -y install cuda
\end{verbatim} 
If the NVIDIA CUDA compiler is correctly installed, the command
\begin{verbatim}
   nvcc --version
\end{verbatim}
will reply with a message similar to the following:

\begin{lstlisting}[language=myc++,mathescape=true, caption=Version details of an installed Cuda compiler, label=lst:nvccVersion]
   nvcc: NVIDIA (R) Cuda compiler driver
   Copyright (c) 2005-2022 NVIDIA Corporation
   Built on Mon_Oct_24_19:12:58_PDT_2022
   Cuda compilation tools, release 12.0, V12.0.76
   Build cuda_12.0.r12.0/compiler.31968024_0
\end{lstlisting}

To check the versions of CUDA and the driver, the command
\begin{verbatim}
   nvidia-smi
\end{verbatim}
will respond with the NVIDIA System Management Interface, displaying various information about the installed GPUs (see Figure~\ref{fig:nvidiaSmi}). 
\begin{figure}[ht]
  \center
  \includegraphics[scale=0.75]{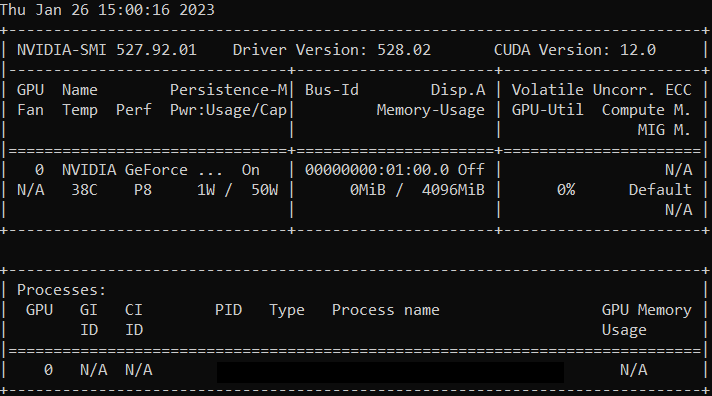}
  \caption{
  The NVIDIA System Management Interface}
  \label{fig:nvidiaSmi}
\end{figure}
The CUDA toolkit should now be properly installed and working. 

\subsubsection{CUDA on Linux}
Before installing the CUDA toolkit on Linux, typing the command
\begin{verbatim}
   lspci | grep -i nvidia
\end{verbatim}
can confirm that the GPU is CUDA-capable.

To install the CUDA toolkit on Linux, visit the the NVIDIA website and choose the fitting operating system, architecture, distribution, as well as the preferred installation type for your system (\href{https://developer.nvidia.com/cuda-toolkit}{\nolinkurl{https://developer.nvidia.com/cuda-toolkit}}). The website will then provide you with the correct commands with which you can install the CUDA toolkit on your Linux system.

After the installation of the toolkit, the environment variables need to be set:
\begin{verbatim}
   export PATH=/usr/local/cuda-12.0/bin${PATH:+:${PATH}}	
\end{verbatim}
If the installation was done with a run file, the \texttt{LD\_LIBRARY\_PATH} variable has to be set, as well. The following command sets this variable on a 64-bit system. The command for 32-bit systems is almost identical: \texttt{lib64} has to be exchanged for \texttt{lib}:
\begin{verbatim}
   export LD_LIBRARY_PATH=/usr/local/cuda-12.0/lib64\
                          ${LD_LIBRARY_PATH:+:${LD_LIBRARY_PATH}}
\end{verbatim}
If a different install path or version of the CUDA toolkit has been chosen during the installation process, both commands above have to be altered accordingly. To confirm that the installation has been successful, use the commands
\begin{verbatim}
   nvcc --version
\end{verbatim}
and
\begin{verbatim}
   nvidia-smi
\end{verbatim} 
If the CUDA toolkit has been installed correctly, an output similar to those shown in Listing~\ref{lst:nvccVersion} and Figure~\ref{fig:nvidiaSmi} respectively.

\subsubsection{OpenMPI}
To have the functionality of MPI in combination with CUDA, there are several CUDA-aware MPI implementations available. 
In this guide we will describe the installation of the open-source implementation OpenMPI in four steps: 
\begin{enumerate}
    \item Download the desired OpenMPI version from the website (\href{https://www.open-mpi.org/software/}{\nolinkurl{https://www.open-mpi.org/software/}}). 
    As of the writing of this section, the most current stable release version was \path{openmpi-4.1.5.tar.bz2}. 
    \item In your Linux (or WSL for Windows) terminal, move to the folder where the file was saved to and extract the downloaded package via the command
    \begin{verbatim}
        tar -jxf openmpi-4.1.5.tar.bz2
    \end{verbatim}
    \item Change into this new directory to configure, compile and install OpenMPI with the following commands:
    \begin{verbatim}
    ./configure --prefix=$HOME/opt/openmpi 
                --with-cuda=/usr/local/cuda-12.0/include
    make all
    make install
    \end{verbatim} 
    Note that the path following \texttt{--prefix=} is the path we wish to install OpenMPI in and the path following \texttt{--with-cuda=} is the location of the \path{include} folder of your CUDA installation. 
    These paths might be different depending on the users choices. 
    \item Change the environment variables with the following two commands in the Linux or WSL terminal:
    \begin{verbatim}
    echo "export PATH=\$PATH:\$HOME/opt/openmpi/bin" >> $HOME/.bashrc
    echo "export LD_LIBRARY_PATH=\$LD_LIBRARY_PATH:
                 \$HOME/opt/openmpi/lib" \ >> $HOME/.bashrc
    \end{verbatim}
    Once again the path for OpenMPI might be different, depending on where the software was installed. 
\end{enumerate}
To see whether the installation of OpenMPI was successful, we can enter the command
\begin{verbatim}
ompi_info --parsable -l 9 
          --all | grep mpi_built_with_cuda_support:value
\end{verbatim}
If the installation was done successfully, the terminal should respond with the output \texttt{true}.

\subsubsection{Utilizing CUDA in OpenLB}

The root directory contains a folder named \path{config}, in which several build config examples can be found. The \path{config.mk} makefile of the root directory can be replaced with the makefile that suits the current needs (\eg using only the GPU, using the GPU with MPI, using CPU with MPI, etc.). Each example makefile also includes instructions. Make a backup of the current \path{config.mk} in the root directory and replace it with a copy of the makefile \path{gpu\_only} found in the \path{config} folder. After renaming \path{gpu\_only} to \path{config.mk}, we open the file and check the value of \texttt{CUDA\_ARCH}: This value might have to be changed, depending on your graphics card and its architecture. The file \path{rules.mk} in the root directory contains a table that shows which architecture goes with which value:
\begin{verbatim}
   ## | CUDA Architecture | Version    |
   ## |-------------------+------------|
   ## | Fermi             | 20         |
   ## | Kepler            | 30, 35, 37 |
   ## | Maxwell           | 50, 52, 53 |
   ## | Pascal            | 60, 61, 62 |
   ## | Volta             | 70, 72     |
   ## | Turing            | 75         |
   ## | Ampere            | 80, 86, 87 |
\end{verbatim}
Another table on the internet (\href{https://en.wikipedia.org/wiki/CUDA}{\nolinkurl{https://en.wikipedia.org/wiki/CUDA}}) shows which graphics card corresponds to which architecture. This guide used the \texttt{GTX 1650} as an example for the graphics card. The following picture shows that the \texttt{GTX 1650} corresponds to the \texttt{Turing} architecture, so the value of \texttt{CUDA\_ARCH} has to be set to 75 in both \path{config.mk} and \path{rules.mk} files. 
\begin{figure}[ht]
  \center
  \includegraphics[scale=0.75]{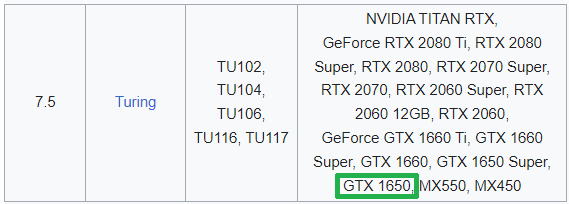}
  \caption{
  Table containing Nvidia GPUs with the \texttt{Turing} Microarchitecture.}
  \label{fig:turingGpu}
\end{figure}
After saving the changes of \texttt{CUDA\_ARCH} in both \path{config.mk} and \path{rules.mk}, the \path{config.mk} can be compiled via the command \texttt{make clean; make} in your WSL (or Linux) terminal. It is now possible to compile and execute one of the GPU-enabled OpenLB examples with CUDA support.


\chapter{Step by Step: Using OpenLB for Applications}

The general way of functioning in OpenLB follows a generic path.
The following structure is maintained throughout every OpenLB application example, to provide a common structure and guide beginners.
\begin{description}
\item[1st Step: Initialization]
The converter between physical and lattice units is set in this step.
It is also defined, where the simulation data is stored and which lattice type is used.
\item[2nd Step: Prepare geometry]
The geometry is acquired, either from another file (a \path{.stl} file) or from defining indicator functions.
Then, the mesh is created and initialized based on the given geometry.
This consists of classifying voxels with material numbers, according to the kind of voxels they are: an inner voxel containing fluid ruled by the fluid dynamics will have a different number than a voxel on the inflow with conditions on its velocity.
The function \class{prepareGeometry} is called for these tasks.
Further, the mesh is distributed over the threads to establish good scaling properties.
\item[3rd Step: Prepare lattice]
According to the material numbers of the geometry, the lattice dynamics are set here.
This step characterizes the collision model and boundary behavior.
The choices depend on whether a force is acting or not, the use of single relaxation time (BGK) or multiple relaxation times (MRT), the simulation dimension (it can also be a 2D model), whether compressible or incompressible fluid is considered, and the number of neighboring voxels chosen.
By the creation of a computing grid, the \class{SuperLattice}, the allocation of the required data is done as well.
\item[4th Step: Main loop with timer]
The timer is initialized and started, then a loop over all time steps \class{iT} starts the simulation, during which the functions \class{setBoundaryValues}, \class{collideAndStream} and \class{getResults} (the 5th, 6th, and 7th step, respectively) are called repeatedly until a maximum of iterations is reached, or the simulation has converged.
At the end, the timer is stopped and the summary is printed to the console.
\item[5th Step: Definition of initial and boundary conditions]
The first of the three important functions called during the loop, \class{setBoundaryValues}, sets the slowly increasing inflow boundary condition.
Since the boundary is time dependent, this happens in the main loop.
In some applications, the boundaries stay the same during the whole simulation and the function doesn't need to do anything after the very first iteration.
\item[6th Step: Collide and stream execution]
Another function \class{collideAndStream} is called each iteration step, to perform the collision and the streaming step.
If more than one lattice is used, the function is called for each lattice separately.
\item[7th Step: Computation and output of results]
At the end of each iteration step, the function \class{getResults} is called, which creates console output, \path{.ppm} files or \path{.vti} files of the results at certain time steps.
The ideal is to get the relevant simulation data with functors and thus facilitate the post processing significantly.
By passing the converter and the time step, the frequency of writing or displaying data can be chosen easily.
In many applications, the console output is required more often than the VTK data.
\end{description}

\section{Lesson~\arabic{section}: Getting Started - Sketch of Application}
\label{sec:lessonGettingStarted}

This section presents example \path{bstep2d} that can be found in the recent release of OpenLB.
This example simulates a flow over a backward-facing step and serves as an illustration of OpenLB and its features.
In order to execute the simulation and get some results, download and unpack OpenLB on (preferably) a Linux system, see Section~\ref{sec:linux}.
Then, generate a executable file by compiling the program through the command \texttt{make}.
Finally, launch the simulation by \texttt{./bstep2d} and observe the terminal output, see Section~\ref{sec:consoleOutput}.

A few lines are invariably the same for all OpenLB applications, see Listing~\ref{lst:code1_1}.
\begin{lstlisting}[language=myc++,caption={Framework of an OpenLB program. Fundamental properties of the simulation are defined here.}, label=lst:code1_1]
#include "olb2D.h"
#include "olb2D.hh"

using namespace olb;              // OpenLB namespaces
using namespace olb::descriptors; //

using T = FLOATING_POINT_TYPE;
using DESCRIPTOR = D2Q9<>;
\end{lstlisting}

\begin{description}
\item{Line~1:} The header file \path{olb2D.h} includes definitions for the whole 2D code present in the release.
In the same way, access to 3D code is obtained by including the file \path{olb3D.h}.
\item{Line~2:} Most OpenLB code depends on template parameters.
Therefore, it cannot be compiled in advance, and needs to be integrated ``as is'' into your programs via the file \path{olb2D.hh} or \path{olb3D.hh} respectively.
\item{Line~4:} All OpenLB code is contained in the namespace \texttt{olb}.
The descriptors have an own namespace and define the lattice arrangement, \eg $D2Q9$ or $D3Q19$.
\item{Line~7:} Choice of precision for floating point arithmetic.
The default type \texttt{FLOATING\_POINT\_TYPE} is defined in the \path{config.mk} file and usually equals \class{float} or \class{double}.
Any other floating point type can be used, including built-in types and user-defined types which are implemented through a C++ class.
\item{Line~8:} Choice of a lattice descriptor.
Lattice descriptors specify not only which lattice (see Figure~\ref{fig:discreteVelocitySets} for exemplary velocity sets) are employed, but is also are used to compute the size of various dependent fields such as force vectors.
\end{description}

The next code presents a brief overview about the structure of an OpenLB application, see Listing~\ref{lst:bstep2d}.
It aims rather to introduce and guidelines the beginners, than explain the classes and methods in depth.
Details on the shown functions can be found in the source code, this means in the \path{bstep2d.cpp} file, as well as in the following chapters.

\begin{lstlisting}[language=myc++,caption={A brief overview of a typically OpenLB application, \texttt{bstep2d}. Details on the specific functions can be found in the following chapters.}, label=lst:bstep2d]
SuperGeometry<T,2> prepareGeometry(LBconverter<T> const& converter)
{
  // create Cuboids and assign them to threads
  // create SuperGeometry
  // set material numbers
  return superGeometry;
}
void prepareLattice(...)
{
  // set dynamics for fluid and boundary lattices
  // set initial values, rho and u
}
void setBoundaryValues(...)
{
  // set Poiseuille velocity profile at inflow
  // increase inflow velocity slowly over time
}
void getResults(...)
{
  // write simulation data do vtk files and terminal
}

int main(int argc, char* argv[])
{
  // === 1st Step: Initialization ===
  olbInit( &argc, &argv );
  singleton::directories().setOutputDir( "./tmp/" );  // set output directory
  OstreamManager clout( std::cout, "main" );

  UnitConverterFromResolutionAndRelaxationTime<T, DESCRIPTOR> converter(
    (T)   N,                 // resolution
    (T)   relaxationTime,    // relaxation time
    (T)   charL,             // charPhysLength: reference length of simulation geometry
    (T)   1.,                // charPhysVelocity: maximal/highest expected velocity during simulation in __m / s__
    (T)   1./19230.76923,    // physViscosity: physical kinematic viscosity in __m^2 / s__
    (T)   1.                 // physDensity: physical density in __kg / m^3__
  );

  // Prints the converter log as console output
  converter.print();
  // Writes the converter log in a file
  converter.write("bstep2d");

  // === 2nd Step: Prepare Geometry ===
  // Instantiation of a superGeometry
  SuperGeometry<T,2> superGeometry( prepareGeometry(converter) );

  // === 3rd Step: Prepare Lattice ===
  SuperLattice<T,DESCRIPTOR> sLattice( superGeometry );
  BGKdynamics<T,DESCRIPTOR> bulkDynamics (
    converter.getLatticeRelaxationFrequency(),
    instances::getBulkMomenta<T,DESCRIPTOR>()
  );

  //prepare Lattice and set boundaryConditions
  prepareLattice( converter, sLattice, bulkDynamics, superGeometry );

  // instantiate reusable functors
  SuperPlaneIntegralFluxVelocity2D<T> velocityFlux( sLattice,
      converter,
      superGeometry,
      {lengthStep/2.,  heightInlet / 2.},
      {0.,  1.} );

  SuperPlaneIntegralFluxPressure2D<T> pressureFlux( sLattice,
      converter,
      superGeometry,
      {lengthStep/2.,  heightInlet / 2. },
      {0.,  1.} );

  // === 4th Step: Main Loop with Timer ===
  clout << "starting simulation..." << std::endl;
  Timer<T> timer( converter.getLatticeTime( maxPhysT ), superGeometry.getStatistics().getNvoxel() );
  timer.start();

  for ( std::size_t iT = 0; iT < converter.getLatticeTime( maxPhysT ); ++iT ) {
    // === 5th Step: Definition of Initial and Boundary Conditions ===
    setBoundaryValues( converter, sLattice, iT, superGeometry );
    // === 6th Step: Collide and Stream Execution ===
    sLattice.collideAndStream();
    // === 7th Step: Computation and Output of the Results ===
    getResults( sLattice, converter, iT, superGeometry, timer, velocityFlux, pressureFlux );
  }

  timer.stop();
  timer.printSummary();
}
\end{lstlisting}

\section{Lesson~\arabic{section}: Define and Use Boundary Conditions}
The current OpenLB release offers a wide range of boundary conditions for the implementation of pressure and velocity boundaries. They support boundaries that are aligned with the numerical grid, and also implement proper corner nodes in 2D and 3D, and edge nodes that connect two plane boundaries in 3D. The choice of a boundary condition is conceptually separated from the definition of the location of boundary nodes. It is therefore possible to modify the choice of the boundary condition by changing a single instruction in a program. An overview of the available boundary conditions is given by~\cite{krause:21}.

The new boundary condition system utilizes free floating functions and doesn't require a class structure. Consequently the following classes are obsolete in the current release: \\ 
\texttt{sOn/\-Off\-Lattice\-Boundary\-ConditionXD},\\ 
\texttt{On\-Lattice\-Boundary\-ConditionXD},\\ 
\texttt{(Off)\-Boundary\-Condition\-InstantiatorXD} and \\ 
\texttt{Regularized/\-Inter\-polation\-Boundary\-ManagerXD}. \\ 
Key Features of the new system are:
\begin{itemize}
\item Free floating design that allows for general functions like "setBoundary" to be used in multiple boundary Conditions.
\item Overall slimmer design with fewer function calls and fewer loops through the block domain.
\item MomentaVector and dynamicsVector is stored in \texttt{/src/\-core/\-block\-Lattice\-Structure\-3D.h}
\item Uncluttered function call design, which makes it easier to create new boundary conditions
\end{itemize}

The new boundaries are similarly named as in \class{addSlipBoundary} becomes \class{setSlipBoundary} in the new system.
For example, if you want to use a slip boundary condition:
\begin{description}
	\item[Define dynamics]  Keep in mind that \class{sLattice.defineDynamics} is the same for the old and new boundary system.
\item[Define your boundary type] In this case it is the \class{slipBoundary}. To set the boundary, call the fitting \class{setBoundaryCondition} function in the following manner inside your \class{prepareLattice} function:
\begin{itemize}
\item \texttt{setSlipBoundary<T,DESCRIPTOR>("superLattice",\\"superGeometry","MaterialNumber");}
\item The difference between the old and the new system is that every boundaryCondition needs to have the \class{SuperLattice} as an argument. The \class{latticeRelaxationFrequency} \class{omega} is usually called as the second argument.
\end{itemize}
\item[Define initial conditions]\
\begin{itemize}
\item on-lattice: \texttt{sLattice.defineRhoU(..)}
\item off-lattice: \texttt{sLattice.defineRho(..)},\ \texttt{sLattice.defineUBouzidi(..)}
\end{itemize}
\item[Define boundary values]\
\begin{itemize}
\item on-lattice: \texttt{sLattice.defineU(..)}
\item off-lattice: \texttt{sLattice.defineUBouzidi(..)}
\end{itemize}
\end{description}

With the help of this system, one can treat local and non-local boundary conditions the same way. Furthermore, they can be used both for sequential and parallel program execution, as it is shown in Lesson~10. The mechanism behind this is explained in Lesson~7. The bottom line is that both local and non-local boundary conditions instantiate a special dynamics object and assign it to boundary cells. Non-local boundaries additionally instantiate post-processing objects which take care of non-local aspects of the algorithm.

\section{Lesson~\arabic{section}: UnitConverter - Lattice and Physical Units}
\label{sec:LessonUnitConverter}

Fluid flow problems are usually given in a system of metric units. For example consider a cylinder of diameter \SI{3}{\centi\meter} in a fluid channel with average inflow velocity of \SI{4}{\m}. The fluid has a kinematic viscosity of \SI{0,001}{\square\meter\per\second}. The value of interest is the pressure difference measured in $Pa$ at the front and the back of the cylinder (with respect to the flow direction). However, the variables used in a LB simulation live in a system of lattice units, in which the distance between two lattice cells and the time interval between two iteration steps are chosen to be unity. Therefore, when setting up a simulation, a conversion directive has to be defined that takes care of scaling variables from physical units into lattice units and \textit{vice versa}.

In OpenLB, all these conversions are handled by a class called \class{UnitConverter}, see Listing~\ref{lst:converter}. An instance of the UnitConverter is generated with desired discretization parameters and reference values in SI units. It provides a set of conversion functions to enable a fast and easy way to convert between physical and lattice units. In addition, it gives information about the parameters of the fluid flow simulation, such as the Reynolds number or the relaxation parameter $\omega$.

Let's have a closer look at the input parameters: The reference values represent characteristic quantities of the fluid flow problem. In this example, it is suitable to choose the cylinder's diameter as characteristic length and the average inflow speed as characteristic velocity. Furthermore, two discretization parameters, namely the grid size $\triangle x$ in \si{\meter} and time step size $\triangle t$ in \si{\second} are provided to the converter. From these reference values and discretization parameters, all the conversion factors and the relaxation time $\tau$ are calculated.

Due to the fact, that there are stability bounds for the relaxation time and the maximum occurring lattice velocity, one does not usually chose $\triangle x$ and $\triangle t$, but sets stable and accurate values for any two out of resolution, relaxation time or characteristic (maximum) lattice velocity. To make that easily available for the user of OpenLB, there are different constructors for the \class{UnitConverter} class:\\
\class{UnitConverterFromRelaxationTimeAndLatticeVelocity},\\ \class{UnitConverterFromResolutionAndLatticeVelocity},\\ \class{UnitConverterFromResolutionAndRelaxationTime}.

Once the converter is initialized, its methods can be used to convert various quantities such as velocity, time, force or pressure.
The function for the latter helps us to evaluate the pressure drop in our example problem, as shown in the the following code snippet:
\begin{lstlisting}[language=myc++,caption={Use of UnitConverter in a 3D problem.},label={lst:converter}]
UnitConverterFromResolutionAndRelaxationTime<T, DESCRIPTOR> const converter(
    int {N},              // resolution: number of voxels per charPhysL
    (T)   0.53,           // latticeRelaxationTime: relaxation time, have to be greater than 0.5!
    (T)   0.1,            // charPhysLength: reference length of simulation geometry
    (T)   0.2,            // charPhysVelocity: maximal/highest expected velocity during simulation in __m / s__
    (T)   0.2*2.*0.05/Re, // physViscosity: physical kinematic viscosity in __m^2 / s__
    (T)   1.0             // physDensity: physical density in __kg / m^3__
  );
// Prints the converter log as console output
converter.print();
// Writes the converter log in a file
converter.write("converterLogFile");
// conversion from seconds to iteration steps and vice-versa
int iT = converter.getLatticeTime(maxPhysT);
T sec  = converter.getPhysTime(iT);
<...> simulation
<...> evaluation of latticeRho at the back and the front of the cylinder
T latticePressureFront = latticeRhoFront / descriptors::invCs2<T,DESCRIPTOR>();
T latticePressureBack = latticeRhoBack / descriptors::invCs2<T,DESCRIPTOR>();
T pressureDrop =   converter.getPhysPressure(latticePressureFront)
                 - converter.getPhysPressure(latticePressureBack);
\end{lstlisting}

\begin{description}
\item{Line~1--7:} Instantiate an \class{UnitConverter} object and specify discretization parameters as well as characteristic values.
\item{Line~9:} Write simulation parameters and conversion factors to terminal.
\item{Line~11:} Write simulation parameters and conversion factors in a logfile.
\item{Line~13 and 14:} The conversion from physical units (seconds) to discrete ones (time steps) is managed by the converter.
\item{Line~17--20:} The converter automatically calculates the pressure values from the local density.
\end{description}

\section{Lesson~\arabic{section}: Extract Data From a Simulation}
When the collision step is executed, the value of the density and the velocity are computed internally, in order to evaluate the equilibrium distribution.
Those macroscopic variables are however interesting for the OpenLB end-user as well, and it would be a shame to simply neglect their value after use.
These values are accessed through the method \texttt{getStatistics()} of a \class{BlockLattice}:

\begin{description}
\item[T lattice.getStatistics().getAverageRho()]
Returns average density evaluated during the previous collision step.
\item[T lattice.getStatistics().getAverageEnergy()]
Returns half the average velocity norm evaluated during the previous collision step.
\item[T lattice.getStatistics().getMaxU()]
Returns maximum value of the velocity norm evaluated during the previous collision step.
\end{description}

Often, the information provided by the statistics of a lattice in not sufficient, and more generally numerical result are required.
To do this, you can get data cell-by-cell from the \class{BlockLatticeXD} and \class{SuperLatticeXD} through \texttt{functors}, see Chapter~\ref{sec:functors}.
Functors act on the underlying lattice and process its data to relevant macroscopic units, \eg density, velocity, stress, flux, pressure and drag.
Functors provide an \texttt{operator()} that instead of access stored data, computes every time it is called the data.
Since OpenLB version 0.8, the concept of functors unfold not only for postprocessing, but also for boundary conditions and the generation of geometry, see Chapter~\ref{sec:functors}.
In Listing~\ref{lst:VTMwriter} it is shown, how to extract data out of a \class{SuperLattice} named \texttt{sLattice} and an \class{SuperGeometry3D} named \texttt{sGeometry}.
The data format is a legal \texttt{.vtk} file, that can be processed further with ParaView.
\begin{lstlisting}[language=myc++, caption={Extract simulation data to \texttt{vtk} file format.}, label=lst:VTMwriter]
  // generate the writer object
  SuperVTMwriter3D<T> vtmWriter("bstep3d");
  // write every 0.2 seconds
  if (iT==converter.getLatticeTime(0.2)) {
    // create functors
    SuperLatticeGeometry3D<T,DESCRIPTOR> geometry(sLattice, sGeometry);
    SuperLatticeCuboid3D<T,DESCRIPTOR> cuboid(sLattice);
    SuperLatticeRank3D<T,DESCRIPTOR> rank(sLattice);
    // write functors to file system, vtk formata
    vtmWriter.write(geometry);
    vtmWriter.write(cuboid);
    vtmWriter.write(rank);
  }
\end{lstlisting}

As before mentioned, OpenLB provides functors for a bunch of data, see Listing~\ref{lst:functorData}.
More details about writing simulation data can be found in Chapter~\ref{cha:io}.
\begin{lstlisting}[language=myc++,caption={Code example for creating velocity, pressure and geometry functors.}, label=lst:functorData]
// Create the functors by only passing lattice and converter
SuperLatticePhysVelocity3D<T,DESCRIPTOR> velocity(sLattice, converter);
SuperLatticePhysPressure3D<T,DESCRIPTOR> pressure(sLattice, converter);
// Create functor that corresponds to material numbers
SuperLatticeGeometry3D<T,DESCRIPTOR> geometry(sLattice, superGeometry);
\end{lstlisting}

The most straightforward and convenient way of visualizing simulation data is to produce a 2D snapshot of a scalar valued functor.
This is done through the \class{BlockReduction3D2D}, which puts a plane into arbitrary 3D functors.
Afterwards, this plane can be easily written to a image file.
OpenLB creates images of PPM format as shown in Listing~\ref{lst:GIFwriter}.

\begin{lstlisting}[language=myc++,caption=Create a PPM image out of a 3D velocity functor., label=lst:GIFwriter]
// velocity is an application: R^3 -> R^3
// an image in its very basic sense is an application: R^2 -> R

// transformation of data is presented below
// get velocity functor
SuperLatticePhysVelocity3D<T,DESCRIPTOR> velocity(sLattice, converter);
// get scalar valued functor by applying the point wise l2 norm
SuperEuklidNorm3D<T> normVel( velocity );
// put a plane with normal (0,0,1) in the 3 dimensional data
BlockReduction3D2D<T> planeReduction( normVel, {0, 0, 1} );
BlockGifWriter<T> gifWriter;
// write ppm image to file system
gifWriter.write( planeReduction, iT, "vel" );
\end{lstlisting}
This image writer provides \textit{in situ} visualization which, in contrast to the \class{VTKwriter}, produces smaller data sets that can be interpreted immediately without requiring other software.

\section{Lesson~\arabic{section}: Convergence Check}
\label{sec:lessonConvergenceCheck}

The class \texttt{ValueTracer} checks for time-convergence of a given scalar $\phi$. The convergence is reached when the standard deviation $\sigma$ of the monitored value  $\phi$ is smaller than a given residuum $\epsilon$ times the expected value $\bar\phi$.
\begin{equation}
\sigma(\phi)=\sqrt{\frac{1}{N+1}\sum \limits_{i=0}^{N}\left(\phi_i-\bar\phi\right)^2}< \epsilon\bar\phi
\end{equation}
The expected value  $\phi$ is the average over the last $N$ time steps with $\phi_i := \phi(t^* - i\triangle t)$ where \(t^*\) is the current time step and $\triangle t$ denotes the time step size.
\begin{equation}
\bar\phi=\frac{1}{N+1}\sum \limits_{i=0}^{N}\phi_i
\end{equation}
The value $N$ should be chosen as a problem specific time period. 
As an example $\texttt{charT} = \texttt{charL} / \texttt{charU}$ and $N = \texttt{converter.getLatticeTime}(\texttt{charT})$.
To initialize a \texttt{ValueTracer} object use:
\begin{lstlisting}[language=myc++,caption=Create a PPM image out of a 3D velocity functor.]
util::ValueTracer<T> converge( numberTimeSteps, residuum );
\end{lstlisting}
For example, to check for convergence with a residuum of $\epsilon = 10^{-5}$ every physical second:
\begin{lstlisting}[language=myc++,caption=Create a PPM image out of a 3D velocity functor.]
util::ValueTracer<T> converge( converter.getLatticeTime(1.0), 1e-5 );
\end{lstlisting}
It is required to pass the monitored value to the  \texttt{ValueTracer} object every time steps by:
\begin{lstlisting}[language=myc++,caption=Create a PPM image out of a 3D velocity functor.]
for (iT = 0; iT < maxIter; ++iT) {
  ...
  converge.takeValue( monitoredValue, isVerbose );
  ..
}
\end{lstlisting}
If you like to print average value and its standard derivation every number of time steps chosen during initialization set \texttt{isVerbose} to true otherwise choose false.
It is good idea to choose average energy as monitored value:
\begin{lstlisting}[language=myc++,caption=Create a PPM image out of a 3D velocity functor.]
converge.takeValue(SLattice.getStatistics().getAverageEnergy(),true);
\end{lstlisting}
Do something like the following in the time loop:
\begin{lstlisting}[language=myc++,caption=Create a PPM image out of a 3D velocity functor.]
if (converge.hasConverged()) {
  clout << "Simulation converged." << std::endl;
  break;
}
\end{lstlisting}

\section{Lesson~\arabic{section}: Use an External Force}
\label{sec:externalForces}

In simulations the dynamics of a fluid are often driven by some kind of externally imposed force field. 
In order to optimize memory access and to minimize cache-misses, the value of this force can be stored right alongside the cell's population values. 
This is achieved by specifying additional fields in the lattice descriptor (see Sections~\ref{ssec:Descriptor} and~\ref{sec:couplings}).

At this point we want to consider a time- and space-independent external force as a basic example. 
Listing~\ref{lst:externalForceExample} shows how such an external force can be defined for all cells of a certain material number.
\begin{lstlisting}[language=myc++,caption=Define a constant external force,label=lst:externalForceExample]
// Define constant force
AnalyticalConst2D<T,T> force(
  8.0 * converter.getLatticeViscosity()
      * converter.getCharLatticeVelocity()
      / ( Ly*Ly ), // x-component of the force
  0.0);            // y-component of the force

// Initialize force for materials 1 and 2
superLattice.defineField<FORCE>(
  superGeometry.getMaterialIndicator({1, 2}), force);
\end{lstlisting}
This code was adapted from \path{examples/laminar/poiseuille2d} where just such a constant force is used to drive the channel flow. 
Note that the underlying \class{D2Q9} descriptor's field list must contain a \class{FORCE} field in order for \texttt{defineField<FORCE>} to work.

The command \class{SuperLattice::defineField} provides just one of many template methods one can use to work with descriptor fields. 
For example, we can read and write a cell's force field as follows:
\begin{lstlisting}[language=myc++]
// Get a reference to the memory location of a cell's force vector
FieldPtr<T,DESCRIPTOR,FORCE> force = cell.template getFieldPointer<FORCE>();
// Read a cell's force vector as an OpenLB vector value
Vector<double,2> force = cell.template getField<FORCE>();
// Set a cell's force vector to zero
cell.template setField<FORCE>(Vector<double,2>(0.0, 0.0));
\end{lstlisting}
Note that these methods work the same way for any other field that might be declared by a cell's specific descriptor.

\section{Lesson~\arabic{section}: Understand Genericity in OpenLB}

OpenLB is a framework for the implementation of lattice Boltzmann algorithms. Although most of the code shipped with the distribution is about fluid dynamics, it is open to various types of physical models. Generally speaking, a model which makes use of OpenLB must be formulated in terms of the ``local collision followed by nearest-neighbor streaming'' philosophy. A current restriction to OpenLB is that the streaming step can only include nearest neighbors: there is no possibility to include larger neighborhoods within the modular framework of the library, \ie without tampering with OpenLB source code. Except for this restriction, one is completely free to define the topology of the neighborhood of cells, to implement an arbitrary local collision step, and to add non-local corrections for the implementation of, say, a boundary condition.

To reach this level of genericity, OpenLB distinguishes between non-modifiable core components, which you'll always use as they are, and modular extensions. As far as these extensions are concerned, you have the choice to use default implementations that are part of OpenLB or to write your own. As a scientific developer, concentrating on these, usually quite short, extensions means that you can concentrate on the physics of your model instead of technical implementation details. By respecting this concept of modularity, you can automatically take advantage of all structural additions to OpenLB. In the current release, the most important addition is parallelism: you can run your code in parallel without (or almost without) having to care about parallelism and MPI.

The most important non-modifiable components are the lattice and the cell.
You can configure their behavior, but you are not expected to write a new class which inherits from or replaces the lattice or the cell.
Lattices are offered in different flavors, most of which inherit from a common interface \class{BlockStructureXD}.
The most common lattice is the regular \class{BlockLatticeXD}, which is replaced by the \texttt{SuperLatticeXD} for parallel applications and for memory-saving applications when faced with irregular domain boundaries.
An alternative choice for parallelism and memory savings is the \class{CuboidStructureXD}, which does not inherit from \class{BlockStructureXD}, but instead allows for more general constructs.

The modular extensions are classes that customize the behavior of core-components. An important extension of this kind is the lattice descriptor. This specifies the number of particle populations contained in a cell, and defines the lattice constants and lattice velocities, which are used to specify the neighborhood relation between a cell and its nearest neighbors. The lattice descriptor can also be used to require additional allocation of memory on a cell for external scalars, such as a force field. The integration of a lattice descriptor in a lattice happens via a template mechanism of C++. This mechanism takes place statically, \ie before program execution, and avoids the potential efficiency loss of a dynamic, object-oriented approach. Furthermore, template specialization is used to optimize the OpenLB code specifically for some types of lattices. Because of the template-based approach, a lattice descriptor needs not inherit from some interface. Instead, you are free to simply implement a new class, inspired from the default descriptors in the files \texttt{core/lattice\-Descriptors.h} and \texttt{core/lattice\-Descriptor.hh}.

The dynamics executed by a cell are implemented through a mechanism of dynamic (run-time) genericity.
In this way, the dynamics can be different from one cell to another, and can change during program execution.
There are two mechanisms of this type in OpenLB, one to implement local dynamics, and one for non-local dynamics.
To implement local dynamics, one needs to write a new class which inherits the interface of the abstract class \class{Dynamics}.
The purpose of this class is to specify the nature of the collision step, as well as other important information (for example, how to compute the velocity moments on a cell).
For non-local dynamics, a so-called post-processor needs to be implemented and integrated into a \class{BlockLatticeXD} through a call to the method \texttt{add\-Post\-ProcessorXD}.
This terminology can be somewhat confusing, because the term ``post-processing'' is used in the CFD community in the context of data analysis at the end of a simulation.
In OpenLB, a post-processor is an operator which is applied to the lattice after each streaming step.
Thus, the time-evolution of an OpenLB lattice consists of three steps: (1) local collision, (2) nearest-neighbor streaming, and (3) non-local postprocessing.
Implementing the dynamics of a cell through a postprocessor is usually less efficient than when the mechanism of the \texttt{Dynamics} classes is used.
It is therefore important to respect the spirit of the lattice Boltzmann method and to express the collision as a local operation whenever possible.

\section{Lesson~\arabic{section}: Use Checkpointing for Long Duration Simulations}
All types of data in OpenLB can be stored in a file or loaded from a file. This includes the data of a \class{BlockLatticeXD} and the data of a \texttt{Scalar\-FieldXD} or a \texttt{Tensor\-FieldXD}. All these classes implement the interface \texttt{Serializable<T>}. This guarantees that they can transform their content into a data stream of type \texttt{T}, or read from such a stream. Serialization and unserialization of data is mainly used for file access, but it can be applied to different aims, such as copying data between two objects of different type. The data is stored in the ascii-based binary format \texttt{Base64}. Although \texttt{Base64}-encoded data requires $25\%$ more storage space than when a pure binary format is used, this approach was chosen in OpenLB to enhance compatibility of the code between platforms. Saving and loading data is invoked by calling the \texttt{save} and \texttt{load} method on the object to be serialized. These methods take the filename as an optional (but recommended) argument, as shown below:

\begin{lstlisting}[language=myc++,caption=Store and load the state of the simulation.]
int nx, ny;
<...> initialization of nx and ny
BlockLattice<T,DESCRIPTOR> lattice(nx, ny);
// load data from a previous simulation
lattice.load("simulation.checkpoint");
<...> run the simulation
// save data for security, to be able to take up
// the simulation at this point later
lattice.save("simulation.checkpoint");
\end{lstlisting}

Checkpointing is also illustrated in the example programs \texttt{bstep2D} and \texttt{bstep3D} (Section~\ref{sec:bstep2d and bstep3d}).

\section{Lesson~\arabic{section}: Run Your Programs on a Parallel Machine}\label{sec:lesson10}

OpenLB programs can be executed on a parallel machine with distributed memory, based on MPI.
To compile an OpenLB program for parallel execution, modify the file named \path{config.mk}, found in the OpenLB root directory, by changing the respective lines: \texttt{CXX := mpic++}, and \texttt{PARALLEL\_MODE := MPI}.
The modified lines are shown in Listing~\ref{lst:makefileMPI}. Execute \texttt{make clean} and \texttt{make cleanbuild} within the desired program directory to eliminate previously compiled libraries, and recompile the program by executing the \texttt{make} command.
To run the program in parallel, use the command \texttt{mpirun -np 2 ./cavity2d}.
Here \texttt{-np 2} specifies the number of processors to be used.

\begin{lstlisting}[language=myc++,caption=Edited config.mk for MPI-parallel programs., label=lst:makefileMPI]
CXX             := mpic++
...
PARALLEL_MODE   := MPI
\end{lstlisting}

\section{Lesson~\arabic{section}: Work with Indicators}

Many of the methods covered up until this point accepted geometry and material number arguments to define their working domain. This can lead to repetition and code that is harder to read than necessary. An alternative to \eg setting up bulk dynamics using raw material numbers is available in the form of indicator functors.

In fact most of the material number accepting operations we have covered so far use generic lattice indicators under the hood, specifically \class{SuperIndicatorMaterial3D}.

\begin{lstlisting}[language=myc++,caption=Indicator usage example]
superLattice.defineDynamics(superGeometry, 1, &bulkDynamics);
superLattice.defineDynamics(superGeometry, 3, &bulkDynamics);
superLattice.defineDynamics(superGeometry, 4, &bulkDynamics);
// is equivalent to:
SuperIndicatorMaterial3D<T> bulkIndicator({1, 3, 4});
superLattice.defineDynamics(bulkIndicator, &bulkDynamics);
\end{lstlisting}

This can be further abstracted using \class{SuperGeometry3D}'s indicator factory:

\begin{lstlisting}[language=myc++,caption=Indicator usage in bstep3d]
auto bulkIndicator = superGeometry.getMaterialIndicator({1, 3, 4});
superLattice.defineDynamics(bulkIndicator, &bulkDynamics);
\end{lstlisting}

The advantage of this pattern is that we explicitly named materials 1, 3 and 4 as bulk materials and can reuse the indicator whenever we operate on bulk cells:

\begin{lstlisting}[language=myc++,caption=Indicator reusage in bstep3d]
superLattice.defineRhoU(bulkIndicator, rho, u);
superLattice.iniEquilibrium(bulkIndicator, rho, u);
\end{lstlisting}

This way the bulk material domain is defined in a central place which will come in handy should we need to change them in the future.

Note that for one-off usage this can be written even more compactly:

\begin{lstlisting}[language=myc++,caption=Inline indicator usage]
superLattice.defineDynamics(
  superGeometry.getMaterialIndicator({1, 3, 4}), &bulkDynamics);
\end{lstlisting}

\begin{sloppypar}
This pattern of using indicators instead of raw material numbers is available for all material number accepting methods of \class{SuperGeometryXD}.
\end{sloppypar}

The methods themselves support arbitrary \class{SuperIndicatorFXD} instances and as such are not restricted to material indicators.

\section{Alternative Approach: Using a Solver Class}\label{sec:solver}
Quite a lot of program components are similar for each OpenLB application: \eg the collide and stream loop is part of every simulation. The concept of a solver class is meant to perform such steps automatically, s.t. the user only has to define those steps which are specific for his/her application. Moreover, a generic interface shall be given for other programs (\eg launch from python scripts or execution of optimization routines). For both purposes, this is work in progress and more improvements and functionalities are under development. In the following, the parts of an OpenLB program in solver style are explained. These steps are also illustrated by the examples cavity2dSolver and porousPlate3dSolver.

\subsection{Structure of an OpenLB Simulation in Solver Style}
\subsubsection{Parameter handling}
In order to allow flexible interfaces to other programs, all parameters which are needed for simulation and interface are stored publicly in structs. For different groups of parameters (\eg simulation/ output/ stationarity), different structs are used.

For \texttt{Simulation}, \texttt{Output} and \texttt{Stationarity}, basic versions containing the essential parameters are given by \texttt{SimulationBase}, \texttt{OutputBase}, \texttt{StationarityBase}, respectively. These can be supplemented by inheritance.

More parameter structs could be added for individualization. For instance, a \texttt{Results} struct could be used to save simulation results.

\subsubsection{List parameter structs and lattices}
A map of parameter structs with corresponding names is defined as a \texttt{meta::map}. Similarly, a map of lattice names and descriptors is defined. Some typical names are provided at \path{src/solver/names.h}; a list which is intended to be extended for individualization. The two maps are then given to the solver class as template parameters.

\subsubsection{Definition of a solver class}
Many standard routines for simulation are implemented in the existing class \class{LbSolver}. It is templatized w.r.t.\ maps of parameters and lattices and should therefore fit to most application cases. However, some steps (like the definition of the geometry) depend on the application and have to be defined for each application.
Therefore, an application-specific solver class is created as a child class of \class{LbSolver}. It has to implement the methods \class{prepareGeometry}, \class{prepareLattices}, \class{setInitialValues} and \class{setBoundaryValues}, similar to the classical app structure. Moreover, methods \class{getResults}, 
\class{computeResults}, \class{writeImages}, \class{writeVTK} and \class{writeGnuplot} can be defined if such output is desired. They are all called automatically during construction/ simulation.
The access to the parameter structs works with the tags defined above: \eg, \class{this->parameters} 
\class{(Simulation()).maxTime} gives the maximal simulation time (which is a member of the struct 
\class{SimulationBase}). Similarly, we find access to super geometry and super lattices via \class{this->geometry()} and \class{this->} 
\class{lattice(LatticeName())}, respectively.
An automatic check, whether the simulation became stationary, is executed if a parameter struct with tag \class{Stationarity} is available (and the corresponding struct inherits from \class{StationarityBase}).

\subsubsection{Main method}
First, instances of the parameter structs and the solver class are constructed. This can be done classically, using the constructors (cf.\ example \path{porousPlate3dSolver}), or, if XML-reading has been implemented for all parameter structs, with the create-from XML-interface  (cf.\ example \path{cavity2dSolver}).

Secondly, the \class{solve()} method of the solver instance is called in order to run the simulation.

\subsection{Set up an Application in Solver Style}
In order to set up your own OpenLB application in solver style, the following steps should be followed:
\begin{itemize}
	\item Select parameter structs. You can use existing ones or inherit from them/ define them completely new. The simulation parameters are expected to inherit from \class{SimulationParameters} and provide a unit converter. The output parameters should inherit from \class{OutputParameters}. You are free to add more parameter structs (\eg for simulation results) yourself.
	\item Define a solver class. It should inherit publicly from the \class{LbSolver} class and implement the missing virtual methods like \class{prepareGeometry}.
	\item Define the main method. Either construct instances of the parameter structs and the solver class or use the create-from-xml interface. Then call the \class{solve()} method and possibly perform postprocessing.
\end{itemize}

\subsection{Parameter Explanation and Reading from XML}

\subsubsection*{Introduction}
In this short overview the relevant parameters for an app in solver style are listed with the respective names for using an xml-file for the input. The documentation is divided into two different subsections, which are the different main parts of the xml-file:
\begin{enumerate}
    \item[] \ref{sec:Application}. Application
    \item[] \ref{sec:Output}. Output
\end{enumerate}

In each subsection, the different parameters are explained via a table of the following form:
\begin{center}
\small
\begin{tabular}[ht]{ | m{1.5cm} | m{2cm}| m{4cm} | m{1.5cm} | m{3.5cm} | } 
  \hline
  \textbf{Parameter} &\textbf{Name}(type) & \textbf{Class} (file) & Default value & \textbf{Explanation:} (if available, all) \textbf{\textit{possibilities}}\\ 
  \hline
\end{tabular}
\end{center}

The explanation of each column is as follows:
\small
\begin{itemize}
    \item \textbf{Parameter:} Name of the parameter in the xml-file
    \item \textbf{Name} (type): Name of the parameter in the source code and its data type in brackets. Besides the common data types the abbreviations S and T are used for template parameters.
    
    \item \textbf{Class} (file): name of the class in which the parameter is stored and name of the file in which the class is defined.
    \item Default value: Is this parameter essential or optional or unused?\\
    If a parameter is optional, it is not needed to be defined. Then, the default value can be seen in this column.\\
    If a parameter is so important that without it the program has to exit, it is labeled as \textit{EXIT}.\\
    Some parameters are indicated with \textbf{unused} which means, that the parameter is read but not used afterwards.
    \item \textbf{Explanation: } (if available, all) \textbf{\textit{possibilities}}: Brief description and explanation of the parameter. Some parameters have different possibilities for their definition. In this case, all available possibilities are also offered in \textbf{bold} type letters, e.g.:\\
    \textit{\textbf{ad}} for \textit{OptiCaseAD} or\\
    \textit{\textbf{dual}} for \textit{OptiCaseDual} or\\
    \textit{\textbf{adTest}} for \textit{OptiCaseADTest}
\end{itemize}

The arrangement of the parameters in the xml-file has the following structures:

\begin{lstlisting}[language=XML]
<Param>
  <name of the subsection in this documentation>
    <superordinate term of the parameter>
      <name of the parameter> value of the parameter </name of the parameter>
    </superordinate term of the parameter>
  </name of the subsection in this documentation>
</Param>
\end{lstlisting}
example:
\begin{lstlisting}[language=XML]
<Param>
  <Application>
    <Discretization>
      <Resolution> 128 </Resolution>
    </Discretization>
  </Application>
</Param>    
\end{lstlisting}

\newpage
\subsubsection{Application} \label{sec:Application}
In the following, all parameters for the general setup of the application are explained.
\begin{center}
\small
\begin{longtable}[ht!]{ | m{1.5cm} | m{2cm}| m{4cm} | m{1.5cm} | m{3.5cm} | }
  \hline
  \textbf{Parameter} &\textbf{Name}(type) & \textbf{Class} (file) & Default value & \textbf{Explanation:} (if available, all) \textbf{\textit{possibilities}}\\ 
  \hline \hline

  \textit{Name} & name (std::string) & \class{OutputGeneral} (\path{solverParameters.h}) & "unnamed" & Output name \textit{Name}.dat for information of UnitConverter in the \textit{tmp} folder\\
  \hline
  
  \textit{OlbDir} & olbDir (std::string) & \class{OutputGeneral} (\path{solverParameters.h}) & "../../../" & Defines the trail for the OpenLB directory\\
  \hline
  
  \textit{Pressure-Filter}& pressureFilter-On (bool) & \class{SimulationBase} (\path{solverParameters.h}) & & Weights for moments computing,\\
    &  &  & false &if (this-$>$ pressureFilterOn):\\
  & & & & \_lattice-$>$ stripeOffDensityOffset (\_lattice-$>$ getStatistics(). getAverageRho() - (T) 1) in lbSolver.hh\\
  \hline \hline

  \multicolumn{5}{|c|}{\textit{Discretization:} }\\
  \hline
  
  \textit{Resolution} & \_resolution (int) & \class{UnitConverter} (\path{unitConverter.hh}) & & Defines the resolution number of the simulation area\\
  \hline
  
  \textit{Lattice-Relaxation-Time} & \_lattice-Relaxation-Time (T) & \class{UnitConverter} (\path{unitConverter.hh}) & & Defines the lattice relaxation time and thereby the \textit{physicalDeltaT, \_conversionTime} respectively\\
  \hline \hline
  
  \multicolumn{5}{|c|}{\textit{PhysParameters:} }\\
  \hline
  
  \textit{CharPhys-Length} & \_charPhys-Length (T) & \class{UnitConverter} (\path{unitConverter.hh}) & & Defines the characteristic physical length\\
  \hline
  
  \textit{CharPhys-Velocity} & \_charPhys-Velocity (T) & \class{UnitConverter} (\path{unitConverter.hh}) &  & Defines the characteristic physical velocity\\
  \hline
  
  \textit{PhysDensity} & \_physDensity (T) & \class{UnitConverter} (\path{unitConverter.hh}) & & Defines the physical density\\
  \hline
  
  \textit{CharPhys-Pressure} & \_charPhys-Pressure (T) & \class{UnitConverter} (\path{unitConverter.hh}) & & Defines the characteristic physical pressure\\
  \hline
  
  \textit{Phys-Viscosity} & \_phys-Viscosity (T) & \class{UnitConverter} (\path{unitConverter.hh}) & & Defines the physical viscosity\\
  \hline
  
  \textit{PhysMax-Time} & maxTime (S) & \class{SimulationBase} (\path{solverParameters.h}) & & Defines the maximal simulation time in seconds\\
  \hline
  
  \textit{StartUp-Time} & startUpTime (S) & \class{SimulationBase} (\path{solverParameters.h})  & & Defines the start time until which time the simulation is started up. From then on, the \textit{convergence criterion} is checked (in \path{solver3D.hh}) and determines the boundaries for \textit{defineU} in primalMode.\\
  \hline
  
  \textit{Boundary-Value-Update-Time} & \_phys-Boundary-ValueUpdate-Time (S) & \class{SimulationBase} (\path{solverParameters.h}) & PhysMax-Time/100 & Defines the boundary value update time\\
  \hline \hline

  \multicolumn{5}{|c|}{\textit{mesh:}}\\
  \hline
  
  \textit{noCuboids-PerProcess} & noC (int) & \class{SimulationBase} (\path{solverParameters.h}) & 1, \textbf{unused} & Defines the number of cuboids per process\\
  \hline\hline

    \multicolumn{5}{|c|}{\textit{ConvergenceCheck:}}\\
    \hline
    \textit{Type} & convergence-Type[0] (std::string) & \class{Stationarity} (\path{solverParameters.h}) & "Max-Lattice-Velocity" &  Define which quantity is regarded when checking the convergence. Alternative values: \textbf{AverageEnergy}, \textbf{AverageRho}\\
    \hline
    \textit{Interval} & phys-Interval[0] (BaseType$<$T$>$) & \class{Stationarity} (\path{solverParameters.h}) & & Time interval in physical in which the values are checked for convergence\\
    \hline
    
    \textit{Residuum} & epsilon[0] (BaseType$<$T$>$) & \class{Stationarity} (\path{solverParameters.h}) & & sensitivity for the convergence check\\
    \hline
\end{longtable}
\end{center}

\newpage
\subsubsection{Output}
\label{sec:Output}
In the following, all parameters for the output produced by the application are explained.
\begin{center}
\small
\begin{longtable}[ht!]{ | m{1.5cm} | m{2cm}| m{4cm} | m{1.5cm} | m{3.5cm} | }
  \hline
  \textbf{Parameter} &\textbf{Name}(type) & \textbf{Class} (file) & Default value & \textbf{Explanation:} (if available, all) \textbf{\textit{possibilities}}\\ 
  \hline \hline
  
  \textit{Multi-Output} & Used as (bool) & \path{testDomain3d.cpp} & & Choose 0 or 1 for clout.setMultiOutput, output for all processors. Needs to be manually added to the olbInit call\\
  \hline
  
  \textit{OutputDir} & outputDir (std::string) & \class{OutputGeneral} (\path{solverParameters.h}) & "./temp/" & Choose your output directory\\
  \hline
  
  \textit{PrintLog-Converter} & printLog-Converter (bool) & \class{OutputGeneral} (\path{solverParameters.h}) & true & If true, call writeLogConverter(), \ie print the converter\\
  \hline \hline

  \multicolumn{5}{|c|}{\textit{Log:}}\\
  \hline
  
  \textit{SaveTime} & logT (BaseType$<$T$>$) & \class{OutputGeneral} (\path{solverParameters.h}) & EXIT & Choose physical time, write the data in the terminal each $x$ seconds until PhysMaxTime; printLog(iT) in \path{lbSolver.hh}\\
  \hline
  
  \textit{VerboseLog} & verbose (bool) & \class{OutputGeneral} (\path{solverParameters.h}) & 1 (true) & Show statements in the terminal, if true\\
  \hline \hline
  
  \multicolumn{5}{|c|}{\textit{VisualizationVTK}, \textit{VisualizationImages} and \textit{VisualizationGnuplot}:}\\
  \hline
  
  \textit{Output} & out (bool) & \class{OutputPlot} (\path{solverParameters.h})  & false & States whether Output or not\\
  \hline
  
  \textit{FileName} & filename (std::string) & \class{OutputPlot} (\path{solverParameters.h}) & "unnamed" & name of the outputfile\\
  \hline
  
  \textit{SaveTime} & saveTime (S) & \class{OutputPlot} (\path{solverParameters.h}) & EXIT & Choose physical time, write the data each $x$ seconds until PhysMaxTime\\
  \hline \hline
  
 \pagebreak
 \hline
  \multicolumn{5}{|c|}{\textit{Timer:}}\\
  \hline
  
  \textit{PrintMode} & timerPrint-Mode (int)& \class{OutputGeneral} (\path{solverParameters.h}) & 2 & mode of the display style passed to printStep() of Timer instance,\\
  & & & & \textbf{0} for single-line layout, usable for data extraction as csv,\\
  & & & & \textbf{1} for single-line layout, (not conform with output rules),\\
  & & & & \textbf{2} for double-line layout, (not conform with output rules),\\
  & & & & \textbf{3} for performance output only\\
  \hline
 
\end{longtable}
\end{center}

%
%
%
%
%
%
%
%
%


\appendix
\chapter{Appendix}

\section{Q\&A}
\label{sec:QandA}
In this Q\&A part, some potential questions concerning the code are answered:

\subsection*{What do I need the unit converter for?}
The unit converter (Listing~\ref{lst:UnitConverter}) is used in every simulation done with OpenLB.
In this class, the physical units, like length or mass, are converted to lattice units and vice versa
This step is necessary to get a result in the correct physical dimensions and units.
\begin{lstlisting}[language=myc++,mathescape=true, caption=UnitConverter, label=lst:UnitConverter]
	UnitConverterFromResolutionAndLatticeVelocity<T,DESCRIPTOR> converter(
	(int)   res,                  //resolution
	( T )   charLatticeVelocity,  //charLatticeVelocity
	( T )   charPhysLength,       //charPhysLength
	( T )   charPhysVelocity,     //charPhysVelocity
	( T )   physViscosity,        //physViscosity
	( T )   physDensity           //physDensity
	);
	converter.print();
\end{lstlisting}
For a closer look, also checkout the respective example in Section~\ref{sec:LessonUnitConverter}.

\subsection*{How can I write my own dynamics?}\label{faq:dynamics}
Dynamics are the classes that model the cell-specific computation of momenta, equilibria and collision operator.
Ideally, new dynamics are constructed using OpenLB's flexible \emph{dynamics tuple} system which allows the composition of dynamics as a tuple of momenta, equilibria, and collision operator in addition to a optional \emph{combination rule} for declaring \eg forcing schemes.
This is the approach used for most dynamics in OpenLB (also see Section~\ref{sec:dynamics}).
\begin{lstlisting}[language=myc++]
template <typename T, typename DESCRIPTOR>
using ForcedTRTdynamics = dynamics::Tuple<
  T, DESCRIPTOR,
  momenta::BulkTuple,
  equilibria::SecondOrder,
  collision::TRT,
  forcing::Guo
>;
\end{lstlisting}
Modifying this example to use the BGK collision operator without forcing is as simple as writing:
\begin{lstlisting}[language=myc++]
template <typename T, typename DESCRIPTOR>
using BGKdynamics = dynamics::Tuple<
  T, DESCRIPTOR,
  momenta::BulkTuple,
  equilibria::SecondOrder,
  collision::BGK
>;
\end{lstlisting}
In most cases that are not yet covered by the extensive library of dynamics tuples, it should be sufficient to write \eg a new collision operator to be plugged into this framework.
As a fallback, fully custom dynamics can be implemented using \class{dynamics::CustomCollision} following this basic scaffold:
\begin{lstlisting}[language=myc++]
template <typename T, typename DESCRIPTOR, typename MOMENTA=momenta::BulkTuple>
struct MyCustomDynamics final : public dynamics::CustomCollision<T,DESCRIPTOR,MOMENTA> {
  using MomentaF = typename MOMENTA::template type<DESCRIPTOR>;

  // Declare list of parameter fields (can be empty)
  using parameters = meta::list<...>;

  // Allow exchanging the momenta, used for example to construct boundary dynamics
  template <typename M>
  using exchange_momenta = MyCustomDynamics<T,DESCRIPTOR,DYNAMICS,M>;

  template <typename CELL, typename PARAMETERS, typename V=typename CELL::value_t>
  CellStatistic<V> apply(CELL& cell, PARAMETERS& parameters) any_platform {
    // Implement custom collision here
  };

  T computeEquilibrium(int iPop, T rho, const T u[DESCRIPTOR::d]) const override any_platform {
    // Implement custom equilibrium computation here
  };

  std::type_index id() override {
    return typeid(MyCustomDynamics);
  };

  AbstractParameters<T,DESCRIPTOR>& getParameters(BlockLattice<T,DESCRIPTOR>& block) override {
    return block.template getData<OperatorParameters<MyCustomDynamics>>();
  }

  // Return human readable name
  std::string getName() const override {
    return "MyCustomDynamics<" + MomentaF().getName() + ">";
  };

};
\end{lstlisting}

\subsection*{How can I write my own post processor?}\label{faq:postprocessor}
A non-local operator, also referred to as a \emph{post processor} in OpenLB, is any class that provides a scope, a priority and an \texttt{apply} method template (also see Section~\ref{sec:postprocessor}).
The scope declares how the \texttt{apply} method is to be called, the priority is used to sort the execution sequence of multiple operators assigned to the same \emph{stage} and the \texttt{apply} method template implements the actual instructions to be performed.
\begin{lstlisting}[language=myc++, caption=Simple post processor implementation, label=lst:simplePostProcessorImpl]
struct MyCustomPostProcessor {
  // One of OperatorScope::(PerCell,PerCellWithParameters,PerBlock)
  static constexpr OperatorScope scope = OperatorScope::PerCell;

  int getPriority() const {
    return 0;
  }

  template <typename CELL>
  void apply(CELL& cell) any_platform {
    // custom non-local code here
    // access neighbors via `cell.neighbor(c_i)`
  }
};
\end{lstlisting}
This new post processor can be assigned to cells of the lattice using the various overloads of \texttt{SuperLattice::addPostProcessor}:
\begin{lstlisting}[language=myc++, caption=Simple post processor assignment, label=lst:simplePostProcessorAssign]
// Assign MyCustomPostProcessor to all cells
sLattice.addPostProcessor(meta::id<MyCustomPostProcessor>{});
// Assign MyCustomPostProcessor to indicated cells
sLattice.addPostProcessor(indicatorF,
                          meta::id<MyCustomPostProcessor>{});
\end{lstlisting}
If the operator depends on non-cell-specific parameters, they can be declared by changing the \texttt{scope} to \texttt{OperatorScope::PerCellWithParameters} and modifying the \texttt{apply} template arguments.
\begin{lstlisting}[language=myc++, caption=Simple post processor implementation with parameters, label=lst:simplePostProcessorImplParams]
struct MyCustomPostProcessor {
  static constexpr OperatorScope scope = OperatorScope::PerCellWithParameters;

  using parameters = meta::list<OMEGA>;

  int getPriority() const {
    return 0;
  }

  template <typename CELL, typename PARAMETERS>
  void apply(CELL& cell, PARAMETERS& parameters) any_platform {
    // access parameter via `parameters.template get<OMEGA>()`
  }
};
\end{lstlisting}
The parameters are set in the same way as dynamics parameters.
\begin{lstlisting}[language=myc++]
// Set OMEGA parameter of all dynamics and post processors to 0.6
sLattice.setParameter<OMEGA>(0.6);
\end{lstlisting}

\subsection*{How can I write my own coupling operator?}\label{faq:coupling}
A coupling operator is any class that provides a scope and an \texttt{apply} method template. Different from single-lattice non-local operators we receive not one cell but a named tuple of them.
Consider for illustration a basic coupling between NSE and ADE lattices in Listing~\ref{lst:nseAdeCouplingImpl}.
\begin{lstlisting}[language=myc++, caption=Full implementation of basic NSE-ADE coupling, label=lst:nseAdeCouplingImpl]
struct NavierStokesAdvectionDiffusionCoupling {
  // Declare that we want cell-wise coupling with some global parameters
  static constexpr OperatorScope scope = OperatorScope::PerCellWithParameters;

  // Declare the two parameters custom to this coupling operator
  struct FORCE_PREFACTOR : public descriptors::FIELD_BASE<0,1> { };
  struct T0 : public descriptors::FIELD_BASE<1> { };

  // Declare which parameters are required
  using parameters = meta::list<FORCE_PREFACTOR,T0>;

  template <typename CELLS, typename PARAMETERS>
  void apply(CELLS& cells, PARAMETERS& parameters) any_platform
  {
    using V = typename CELLS::template value_t<names::NavierStokes>::value_t;
    using DESCRIPTOR = typename CELLS::template value_t<names::NavierStokes>::descriptor_t;

    // Get the cell of the NavierStokes lattice
    auto& cellNSE = cells.template get<names::NavierStokes>();
    // Get the cell of the Temperature lattice
    auto& cellADE = cells.template get<names::Temperature>();

    // Computation of the Bousinessq force
    auto force = cellNSE.template getFieldPointer<descriptors::FORCE>();
    auto forcePrefactor = parameters.template get<FORCE_PREFACTOR>();
    V temperatureDifference = cellADE.computeRho() - parameters.template get<T0>();
    for (unsigned iD = 0; iD < DESCRIPTOR::d; ++iD) {
      force[iD] = forcePrefactor[iD] * temperatureDifference;
    }
    // Velocity coupling
    V u[DESCRIPTOR::d] { };
    cellNSE.computeU(u);
    cellADE.template setField<descriptors::VELOCITY>(u);
  }

};
\end{lstlisting}
Coupling operators are instantiated using the \texttt{SuperLatticeCoupling} template by providing a list of names and assigned lattices. For the NSE-ADE coupling this will look similar to (see \eg \texttt{thermal/rayleighBenard(2,3)d} for a practical example):
\begin{lstlisting}[language=myc++]
SuperLatticeCoupling coupling(
  NavierStokesAdvectionDiffusionCoupling{},
  names::NavierStokes{}, NSlattice,
  names::Temperature{},  ADlattice);
coupling.setParameter<NavierStokesAdvectionDiffusionCoupling::T0>(...);
coupling.setParameter<NavierStokesAdvectionDiffusionCoupling::FORCE_PREFACTOR>(...);
\end{lstlisting}
As we can see the parameters are set for the specific coupling instance.

\subsection*{What are aliases used for?}
Aliases are used to keep the code simpler for users and allow them to get a faster overview over the code.
Aliases aren't used for all parts of the program.
Especially if the user likes to change the code for a special problem, alias-functions may not be available and therefore need to be created by the user himself or normal functions can be used.
To sum up, \textit{alias}-functions aren't necessary for the simulation to work, but allow to simplify the code for a better overview for the user.


\subsection*{Why do we use "constexpr" to define some functions?}
Constexpr allows you to get the output of the function at the time of compilation. The value returned is set to a constant expression and as a result the runtime can be reduced, due to the constant value. If the return value is part of an if-loop, it won't be checked more than once, as the result is already clear, after the first check. In the example below the rotation of a 2D-particle is calculated and therefore the output is set to a concrete value at compilation time, and this value stays constant.

\begin{lstlisting}[language=myc++,mathescape=true, caption=constexpr, label=lst:constexpr]
	static constexpr Vector<T,2> execute( Vector<T,2> input, Vector<T,4> rotationMatrix, Vector<T,2> rotationCenter = Vector<T,2>(0.,0.) )
	{
		Vector<T,2> dist = input - rotationCenter;
		return Vector<T,2>(
		dist[0]*rotationMatrix[0] +
		dist[1]*rotationMatrix[2],
		dist[0]*rotationMatrix[1] +
		dist[1]*rotationMatrix[3] );
	}
\end{lstlisting}

\begin{flushleft}
	\textbf{7. What is the difference between files ending with \textit{.h} and ending with \textit{.hh}?}
\end{flushleft}
The files ending with \path{.h} like \path{particleDynamics.h} are header-files, where classes and functions are only declared.
The specification (definition) happens in the \path{.hh} files. In the following two code-snippets of the same function, the differences in terms of specification can be compared.

\begin{lstlisting}[language=myc++,mathescape=true, caption=particleDynamics.h, label=lst:Dynamics_1]
	template<typename T, typename DESCRIPTOR>
	class VerletParticleDynamics : public ParticleDynamics<T,DESCRIPTOR> {
		public:
		/// Constructor
		VerletParticleDynamics( T timeStepSize );
		/// Procesisng step
		void process (Particle<T,DESCRIPTOR>& particle) override;
		private:
		T _timeStepSize;
	};
\end{lstlisting}

\begin{lstlisting}[language=myc++,mathescape=true, caption=particleDynamics.hh, label=lst:Dynamics_2]
	template<typename T, typename PARTICLETYPE>
	VerletParticleDynamics<T,PARTICLETYPE>::VerletParticleDynamics ( T timeStepSize )
	: _timeStepSize( timeStepSize )
	{
		this->getName() = "VerletParticleDynamics";
	}

	template<typename T, typename PARTICLETYPE>
	void VerletParticleDynamics<T,PARTICLETYPE>::process (
	Particle<T,PARTICLETYPE>& particle )
	{
		//Calculate acceleration
		auto acceleration = getAcceleration<T,PARTICLETYPE>( particle );
		//Check for angular components
		if constexpr ( providesAngle<PARTICLETYPE>() ) {
			//Calculate angular acceleration
			auto angularAcceleration = getAngAcceleration<T,PARTICLETYPE>( particle );
			//Verlet algorithm
			particles::dynamics::velocityVerletIntegration<T, PARTICLETYPE>(
			particle, _timeStepSize , acceleration, angularAcceleration );
			//Update rotation matrix
			updateRotationMatrix<T,PARTICLETYPE>( particle );
		} else {
			//Verlet algorithm without rotation
			particles::dynamics::velocityVerletIntegration<T, PARTICLETYPE>(
			particle, _timeStepSize , acceleration );
		}
	}
\end{lstlisting}

\section{List of Project Participants}
\label{sec: List of Project Participants}

Since 2006 the following persons have contributed source code to OpenLB:
\begin{description}
\item[Armani Arfaoui:] \textbf{core}: performance improvements for D3Q19 BGK collision operator
\item[Sam Avis (active):] \textbf{dynamics}: multicomponent free energy model
\item[Saada Badie:] \textbf{core}: performance improvements for D3Q19 BGK collision operator
\item[Lukas Baron:] \textbf{utilities}: (parallel) console output, time and performance measurement, \textbf{dynamics}: porous media model, \textbf{functors}: concept, div. functors implementation
\item[Fedor Bukreev (active):] 
\textbf{reaction}: adsorption and reaction models, \textbf{examples}: adsorption examples, \textbf{organization}: testing
\item[Vojtech Cvrcek:] \textbf{dynamics}: power law, \textbf{examples}: power law, updates, \textbf{functors}: 2D adaptation
\item[Davide Dapelo (active):] \textbf{core}: power-law unit converter, \textbf{dynamics}: Guo--Zhao porous, contributions on power-law, contributions on HLBM, \textbf{examples}: reactionFiniteDifferences2d, advectionDiffusion3d, advectionDiffusionPipe3d, \textbf{functors}: contributions on indicator and smooth indicator
\item[Tim Dornieden:] \textbf{functors}: smooth start scaling, \textbf{io}: vti writer
\item[Jonas Fietz:] \textbf{io}: configure file parsing based on XML, octree STL reader interface to CVMLCPP ($<$ release 0.9), \textbf{communication}: heuristic load balancer
\item[Benjamin F\"orster:] \textbf{core}: super data implementation \textbf{io}: new serializer and serializable implementation, vti writer, new vti reader, \textbf{functors}: new discrete indicator
\item[Max Gaedtke:] \textbf{core}: unit converter, \textbf{dynamics}: thermal, \textbf{examples}: thermal
\item[Simon Gro{\ss}mann:] \textbf{example}: poiseuille2dEOC, \textbf{io}: csv and gnuplot interface, \textbf{postprocessing}: eoc analysis
\item[Nicolas Hafen (active):] \textbf{particles}: core framework, surface resolved particles, coupling, dynamics, creator-functions, \textbf{dynamics}: moving porous media (HLBM), \textbf{examples}: surface resolved particle simulations, \textbf{particles}: particle framework refactoring, sub-grid scale refactoring
\item[Marc Haussmann:] \textbf{dynamics}: turbulence modeling, \textbf{examples}: tgv3d, \textbf{io}: gnuplot heatmap
\item[Thomas Henn:] \textbf{io}: voxelizer interface based on STL, \textbf{particles}: particulate flows
\item[Claudius Holeksa]: \textbf{postProcessor}: free surface, \textbf{example}: free surface
\item[Anna Husfeldt]: \textbf{functors}: signed distance surface framework
\item[Jonathan Jeppener-Haltenhoff:] \textbf{functors}: wall shear stress, \textbf{examples}: channel3d, poiseuille3d, \textbf{core}: contributions to define field, \textbf{documentation}: user guide
\item[Julius Jeßberger (active):] \textbf{core}: solver, template momenta concept, optimization, \textbf{examples}: poi\-seuil\-le2d, cavity2dSolver, porousPlate3dSolver, testFlow3dSolver, optimization examples, \textbf{postprocessing}: error analysis, \textbf{utilities}: algorithmic differentiation
\item[Fabian Klemens:] \textbf{functors}: flux, indicator-based functors \textbf{io}: gnuplot interface
\item[Jonas Kratzke:] \textbf{core}: unit converter, \textbf{io}: GUI interface based on description files and OpenGPI, \textbf{boundaries}: Bouzidi boundary condition
\item[Mathias J.\ Krause (active):] \textbf{core}: hybrid-parallelization approach, super structure,  \linebreak \textbf{communication}: OpenMP parallelization, cuboid data structure for MPI parallelization, load balancing, \textbf{general}: makefile environment for compilation, integration and maintenance of added components (since 2008), \textbf{boundaries}: Bouzidi boundary condition, convection, \textbf{geometry}: concept, parallelization, statistics, \textbf{io} new serializer and serializable concept, \textbf{functors}: concept, div. functors implementation, \textbf{examples}: venturi3d, aorta3d, optimization (-2020), \textbf{optimization:} (-2020), \textbf{organization}: integration and maintenance of added components (2008-2017), project management (2006-)
\item[Louis Kronberg:] \textbf{core}: ade unit converter, \textbf{dynamics}: KBC, entropic LB, Cumulant, \textbf{examples}: advectionDiffusion1d, advectionDiffusion2d, bstep2d
\item[Eliane Kummer:] \textbf{documentation:} user guide
\item[Adrian Kummerländer (active):] \textbf{core}: SIMD CPU support, CUDA GPU support, population and field data structure, propagation pattern, vector hierarchy, cell interface, field data interface, meta descriptors, automatic code generation, \textbf{dynamics}: new dynamics concept, dynamics tuple, momenta concept \textbf{communication}: block propagation, communication, \textbf{functors}: lp-norm, flux, reduction, lattice indicator, error norms, refinement quality criterion, composition, \textbf{boundaries}: new post processor concept, water-tightness testing and post-processor priority, \textbf{general}: CI maintenance, Nix environment
\item[Jonas Latt:] \textbf{core}: basic block structure, \textbf{communication}: basic parallel block lattice approach ($<$ release 0.9), \textbf{io}: vti writer, \textbf{general}: integration and maintenance of added components (2006-2008), \textbf{boundaries}: basic boundary structure, \textbf{dynamics}: basic dynamics structure, \textbf{examples}: numerous examples, which have been further developed in recent years, \textbf{organization}: integration and maintenance of added components (2006-2008), project management (2006-2008)
\item[Marie-Luise Maier:] \textbf{particles}: particulate flows, frame change
\item[Orestis Malaspinas:] \textbf{boundaries}: alternative boundary conditions (Inamuro, Zou/He, Nonlinear FD), \textbf{dynamics}: alternative LB models (Entropic LB, MRT)
\item[Jan E. Marquardt (active):] \textbf{particles}: surface resolved particles, coupling, creator-functions, \textbf{dynamics}: Euler-Euler particle dynamics, \textbf{functors}: signed distance surface framework, \textbf{utilities}: algorithmic differentiation
\item[Cyril Masquelier:] \textbf{functors}: indicator, smooth indicator
\item[Albert Mink:] \textbf{functors}: arithmetic, \textbf{io}: parallel VTK interface3, zLib compression for VTK data, GifWriter, \textbf{dynamics}: radiative transport, \textbf{boundary}: diffuse reflective boundary
\item[Markus Mohrhard:] \textbf{general}: makefile environment for parallel compilation, \textbf{organization}:  integration and maintenance of added components (2018-)
\item[Johanna Mödl:] \textbf{core}: convection diffusion reaction dynamics, \textbf{examples}: advectionDiffusionReaction2d
\item[Patrick Nathen:] \textbf{dynamics}: turbulence modeling (advanced subgrid-scale \linebreak models), \textbf{examples}: nozzle3d
\item[Aleksandra Pachalieva:] \textbf{dynamics}: thermal (MRT model), \textbf{examples}: thermal (MRT model)
\item[Maximilian Schecher (active)]: \textbf{postProcessor}: free surface, \textbf{example}: free surface
\item[Stephan Simonis (active):] \textbf{core}: ade unit converter \textbf{examples}: advectionDiffusion1d, advectionDiffusion2d, advectionDiffusion3d, advectionDiffusionPipe3d, binaryShearFlow2d, fourRollMill2d, \textbf{documentation}: user guide, \textbf{dynamics}: MRT, KBC, Cumulant, entropic LB, free energy model
\item[Lukas Springmann:] \textbf{particles}: user-guide, unit tests
\item[Bernd Stahl:] \textbf{communication}: 3D extension to MultiBlock structure for MPI parallelization ($<$ release 0.9), \textbf{core}: parallel version of (scalar or tensor-valued) data fields ($<$ release 0.9), \textbf{io}: VTK output of data ($<$ release 0.9)
\item[Dennis Teutscher (active)]: \textbf{visualisation}, \textbf{organization}: user guide
\item[Robin Trunk:] \textbf{dynamics}: parallel thermal, advection diffusion models, 3D HLBM, Euler-Euler particle, multicomponent free energy model
\item[Peter Weisbrod:] \textbf{dynamics}: parallel multi phase/component,   \textbf{examples}: structure and showcases, phaseSeparationXd
\item[Gilles Zahnd:] \textbf{functors}: rotating frame functors
\item[Asher Zarth:] \textbf{core}: vector implementation
\item[Simon Zimny:] \textbf{io}: pre-processing: automated setting of boundary conditions
\end{description}


\section{GNU Free Documentation License}\label{sec:license}
\centerline{\Large GNU Free Documentation License}

 \begin{center}

       Version 1.2, November 2002

 Copyright \copyright{} 2000,2001,2002  Free Software Foundation, Inc.
 
 \bigskip
 
     51 Franklin St, Fifth Floor, Boston, MA  02110-1301  USA
  
 \bigskip
 
 Everyone is permitted to copy and distribute verbatim copies
 of this license document, but changing it is not allowed.
\end{center}

\begin{center}
\large{\textbf{Preamble}}
\end{center}

The purpose of this License is to make a manual, textbook, or other
functional and useful document ``free'' in the sense of freedom: to
assure everyone the effective freedom to copy and redistribute it,
with or without modifying it, either commercially or noncommercially.
Secondarily, this License preserves for the author and publisher a way
to get credit for their work, while not being considered responsible
for modifications made by others.

This License is a kind of ``copyleft'', which means that derivative
works of the document must themselves be free in the same sense.  It
complements the GNU General Public License, which is a copyleft
license designed for free software.

We have designed this License in order to use it for manuals for free
software, because free software needs free documentation: a free
program should come with manuals providing the same freedoms that the
software does.  But this License is not limited to software manuals;
it can be used for any textual work, regardless of subject matter or
whether it is published as a printed book.  We recommend this License
principally for works whose purpose is instruction or reference.

\begin{center}
\large{\textbf{1. Applicability and definitions}}
\end{center}

This License applies to any manual or other work, in any medium, that
contains a notice placed by the copyright holder saying it can be
distributed under the terms of this License.  Such a notice grants a
world-wide, royalty-free license, unlimited in duration, to use that
work under the conditions stated herein.  The ``\textbf{Document}'', below,
refers to any such manual or work.  Any member of the public is a
licensee, and is addressed as ``\textbf{you}''.  You accept the license if you
copy, modify or distribute the work in a way requiring permission
under copyright law.

A ``\textbf{Modified Version}'' of the Document means any work containing the
Document or a portion of it, either copied verbatim, or with
modifications and/or translated into another language.

A ``\textbf{Secondary Section}'' is a named appendix or a front-matter section of
the Document that deals exclusively with the relationship of the
publishers or authors of the Document to the Document's overall subject
(or to related matters) and contains nothing that could fall directly
within that overall subject.  (Thus, if the Document is in part a
textbook of mathematics, a Secondary Section may not explain any
mathematics.)  The relationship could be a matter of historical
connection with the subject or with related matters, or of legal,
commercial, philosophical, ethical or political position regarding
them.

The ``\textbf{Invariant Sections}'' are certain Secondary Sections whose titles
are designated, as being those of Invariant Sections, in the notice
that says that the Document is released under this License.  If a
section does not fit the above definition of Secondary then it is not
allowed to be designated as Invariant.  The Document may contain zero
Invariant Sections.  If the Document does not identify any Invariant
Sections then there are none.

The ``\textbf{Cover Texts}'' are certain short passages of text that are listed,
as Front-Cover Texts or Back-Cover Texts, in the notice that says that
the Document is released under this License.  A Front-Cover Text may
be at most 5 words, and a Back-Cover Text may be at most 25 words.

A ``\textbf{Transparent}'' copy of the Document means a machine-readable copy,
represented in a format whose specification is available to the
general public, that is suitable for revising the document
straightforwardly with generic text editors or (for images composed of
pixels) generic paint programs or (for drawings) some widely available
drawing editor, and that is suitable for input to text formatters or
for automatic translation to a variety of formats suitable for input
to text formatters.  A copy made in an otherwise Transparent file
format whose markup, or absence of markup, has been arranged to thwart
or discourage subsequent modification by readers is not Transparent.
An image format is not Transparent if used for any substantial amount
of text.  A copy that is not ``Transparent'' is called ``\textbf{Opaque}''.

Examples of suitable formats for Transparent copies include plain
ASCII without markup, Texinfo input format, LaTeX input format, SGML
or XML using a publicly available DTD, and standard-conforming simple
HTML, PostScript or PDF designed for human modification.  Examples of
transparent image formats include PNG, XCF and JPG.  Opaque formats
include proprietary formats that can be read and edited only by
proprietary word processors, SGML or XML for which the DTD and/or
processing tools are not generally available, and the
machine-generated HTML, PostScript or PDF produced by some word
processors for output purposes only.

The ``\textbf{Title Page}'' means, for a printed book, the title page itself,
plus such following pages as are needed to hold, legibly, the material
this License requires to appear in the title page.  For works in
formats which do not have any title page as such, ``Title Page'' means
the text near the most prominent appearance of the work's title,
preceding the beginning of the body of the text.

A section ``\textbf{Entitled XYZ}'' means a named subunit of the Document whose
title either is precisely XYZ or contains XYZ in parentheses following
text that translates XYZ in another language.  (Here XYZ stands for a
specific section name mentioned below, such as ``\textbf{Acknowledgements}'',
``\textbf{Dedications}'', ``\textbf{Endorsements}'', or ``\textbf{History}''.)  
To ``\textbf{Preserve the Title}''
of such a section when you modify the Document means that it remains a
section ``Entitled XYZ'' according to this definition.

The Document may include Warranty Disclaimers next to the notice which
states that this License applies to the Document.  These Warranty
Disclaimers are considered to be included by reference in this
License, but only as regards disclaiming warranties: any other
implication that these Warranty Disclaimers may have is void and has
no effect on the meaning of this License.

\begin{center}
\large{\textbf{2. Verbatim Copying}}
\end{center}

You may copy and distribute the Document in any medium, either
commercially or noncommercially, provided that this License, the
copyright notices, and the license notice saying this License applies
to the Document are reproduced in all copies, and that you add no other
conditions whatsoever to those of this License.  You may not use
technical measures to obstruct or control the reading or further
copying of the copies you make or distribute.  However, you may accept
compensation in exchange for copies.  If you distribute a large enough
number of copies you must also follow the conditions in section~3.

You may also lend copies, under the same conditions stated above, and
you may publicly display copies.

\begin{center}
\large{\textbf{3. Copying in quantity}}
\end{center}

If you publish printed copies (or copies in media that commonly have
printed covers) of the Document, numbering more than 100, and the
Document's license notice requires Cover Texts, you must enclose the
copies in covers that carry, clearly and legibly, all these Cover
Texts: Front-Cover Texts on the front cover, and Back-Cover Texts on
the back cover.  Both covers must also clearly and legibly identify
you as the publisher of these copies.  The front cover must present
the full title with all words of the title equally prominent and
visible.  You may add other material on the covers in addition.
Copying with changes limited to the covers, as long as they preserve
the title of the Document and satisfy these conditions, can be treated
as verbatim copying in other respects.

If the required texts for either cover are too voluminous to fit
legibly, you should put the first ones listed (as many as fit
reasonably) on the actual cover, and continue the rest onto adjacent
pages.

If you publish or distribute Opaque copies of the Document numbering
more than 100, you must either include a machine-readable Transparent
copy along with each Opaque copy, or state in or with each Opaque copy
a computer-network location from which the general network-using
public has access to download using public-standard network protocols
a complete Transparent copy of the Document, free of added material.
If you use the latter option, you must take reasonably prudent steps,
when you begin distribution of Opaque copies in quantity, to ensure
that this Transparent copy will remain thus accessible at the stated
location until at least one year after the last time you distribute an
Opaque copy (directly or through your agents or retailers) of that
edition to the public.

It is requested, but not required, that you contact the authors of the
Document well before redistributing any large number of copies, to give
them a chance to provide you with an updated version of the Document.

\begin{center}
\large{\textbf{4. Modifications}}
\end{center}

You may copy and distribute a Modified Version of the Document under
the conditions of sections 2 and 3 above, provided that you release
the Modified Version under precisely this License, with the Modified
Version filling the role of the Document, thus licensing distribution
and modification of the Modified Version to whoever possesses a copy
of it.  In addition, you must do these things in the Modified Version:

\begin{itemize}
\item[A.] 
   Use in the Title Page (and on the covers, if any) a title distinct
   from that of the Document, and from those of previous versions
   (which should, if there were any, be listed in the History section
   of the Document).  You may use the same title as a previous version
   if the original publisher of that version gives permission.
   
\item[B.]
   List on the Title Page, as authors, one or more persons or entities
   responsible for authorship of the modifications in the Modified
   Version, together with at least five of the principal authors of the
   Document (all of its principal authors, if it has fewer than five),
   unless they release you from this requirement.
   
\item[C.]
   State on the Title page the name of the publisher of the
   Modified Version, as the publisher.
   
\item[D.]
   Preserve all the copyright notices of the Document.
   
\item[E.]
   Add an appropriate copyright notice for your modifications
   adjacent to the other copyright notices.
   
\item[F.]
   Include, immediately after the copyright notices, a license notice
   giving the public permission to use the Modified Version under the
   terms of this License, in the form shown in the Addendum below.
   
\item[G.]
   Preserve in that license notice the full lists of Invariant Sections
   and required Cover Texts given in the Document's license notice.
   
\item[H.]
   Include an unaltered copy of this License.
   
\item[I.]
   Preserve the section Entitled ``History'', Preserve its Title, and add
   to it an item stating at least the title, year, new authors, and
   publisher of the Modified Version as given on the Title Page.  If
   there is no section Entitled ``History'' in the Document, create one
   stating the title, year, authors, and publisher of the Document as
   given on its Title Page, then add an item describing the Modified
   Version as stated in the previous sentence.
   
\item[J.]
   Preserve the network location, if any, given in the Document for
   public access to a Transparent copy of the Document, and likewise
   the network locations given in the Document for previous versions
   it was based on.  These may be placed in the ``History'' section.
   You may omit a network location for a work that was published at
   least four years before the Document itself, or if the original
   publisher of the version it refers to gives permission.
   
\item[K.]
   For any section Entitled ``Acknowledgements'' or ``Dedications'',
   Preserve the Title of the section, and preserve in the section all
   the substance and tone of each of the contributor acknowledgements
   and/or dedications given therein.
   
\item[L.]
   Preserve all the Invariant Sections of the Document,
   unaltered in their text and in their titles.  Section numbers
   or the equivalent are not considered part of the section titles.
   
\item[M.]
   Delete any section Entitled ``Endorsements''.  Such a section
   may not be included in the Modified Version.
   
\item[N.]
   Do not retitle any existing section to be Entitled ``Endorsements''
   or to conflict in title with any Invariant Section.
   
\item[O.]
   Preserve any Warranty Disclaimers.
\end{itemize}

If the Modified Version includes new front-matter sections or
appendices that qualify as Secondary Sections and contain no material
copied from the Document, you may at your option designate some or all
of these sections as invariant.  To do this, add their titles to the
list of Invariant Sections in the Modified Version's license notice.
These titles must be distinct from any other section titles.

You may add a section Entitled ``Endorsements'', provided it contains
nothing but endorsements of your Modified Version by various
parties--for example, statements of peer review or that the text has
been approved by an organization as the authoritative definition of a
standard.

You may add a passage of up to five words as a Front-Cover Text, and a
passage of up to 25 words as a Back-Cover Text, to the end of the list
of Cover Texts in the Modified Version.  Only one passage of
Front-Cover Text and one of Back-Cover Text may be added by (or
through arrangements made by) any one entity.  If the Document already
includes a cover text for the same cover, previously added by you or
by arrangement made by the same entity you are acting on behalf of,
you may not add another; but you may replace the old one, on explicit
permission from the previous publisher that added the old one.

The author(s) and publisher(s) of the Document do not by this License
give permission to use their names for publicity for or to assert or
imply endorsement of any Modified Version.

\begin{center}
\large{\textbf{5. Combining documents}}
\end{center}

You may combine the Document with other documents released under this
License, under the terms defined in section~4 above for modified
versions, provided that you include in the combination all of the
Invariant Sections of all of the original documents, unmodified, and
list them all as Invariant Sections of your combined work in its
license notice, and that you preserve all their Warranty Disclaimers.

The combined work need only contain one copy of this License, and
multiple identical Invariant Sections may be replaced with a single
copy.  If there are multiple Invariant Sections with the same name but
different contents, make the title of each such section unique by
adding at the end of it, in parentheses, the name of the original
author or publisher of that section if known, or else a unique number.
Make the same adjustment to the section titles in the list of
Invariant Sections in the license notice of the combined work.

In the combination, you must combine any sections Entitled ``History''
in the various original documents, forming one section Entitled
``History''; likewise combine any sections Entitled ``Acknowledgements'',
and any sections Entitled ``Dedications''.  You must delete all sections
Entitled ``Endorsements''.

\begin{center}
\large{\textbf{6. Collections of documents}}
\end{center}

You may make a collection consisting of the Document and other documents
released under this License, and replace the individual copies of this
License in the various documents with a single copy that is included in
the collection, provided that you follow the rules of this License for
verbatim copying of each of the documents in all other respects.

You may extract a single document from such a collection, and distribute
it individually under this License, provided you insert a copy of this
License into the extracted document, and follow this License in all
other respects regarding verbatim copying of that document.

\begin{center}
\large{\textbf{7. Aggregation with independent works}}
\end{center}

A compilation of the Document or its derivatives with other separate
and independent documents or works, in or on a volume of a storage or
distribution medium, is called an ``aggregate'' if the copyright
resulting from the compilation is not used to limit the legal rights
of the compilation's users beyond what the individual works permit.
When the Document is included in an aggregate, this License does not
apply to the other works in the aggregate which are not themselves
derivative works of the Document.

If the Cover Text requirement of section~3 is applicable to these
copies of the Document, then if the Document is less than one half of
the entire aggregate, the Document's Cover Texts may be placed on
covers that bracket the Document within the aggregate, or the
electronic equivalent of covers if the Document is in electronic form.
Otherwise they must appear on printed covers that bracket the whole
aggregate.

\begin{center}
\large{\textbf{8. Translation}}
\end{center}

Translation is considered a kind of modification, so you may
distribute translations of the Document under the terms of section~4.
Replacing Invariant Sections with translations requires special
permission from their copyright holders, but you may include
translations of some or all Invariant Sections in addition to the
original versions of these Invariant Sections.  You may include a
translation of this License, and all the license notices in the
Document, and any Warranty Disclaimers, provided that you also include
the original English version of this License and the original versions
of those notices and disclaimers.  In case of a disagreement between
the translation and the original version of this License or a notice
or disclaimer, the original version will prevail.

If a section in the Document is Entitled ``Acknowledgements'',
``Dedications'', or ``History'', the requirement (section~4) to Preserve
its Title (section~1) will typically require changing the actual
title.

\begin{center}
\large{\textbf{9. Termination}}
\end{center}

You may not copy, modify, sublicense, or distribute the Document except
as expressly provided for under this License.  Any other attempt to
copy, modify, sublicense or distribute the Document is void, and will
automatically terminate your rights under this License.  However,
parties who have received copies, or rights, from you under this
License will not have their licenses terminated so long as such
parties remain in full compliance.

\begin{center}
\large{\textbf{10. Future revisions of this license}}
\end{center}

The Free Software Foundation may publish new, revised versions
of the GNU Free Documentation License from time to time.  Such new
versions will be similar in spirit to the present version, but may
differ in detail to address new problems or concerns. See
http://www.gnu.org/copyleft/.

Each version of the License is given a distinguishing version number.
If the Document specifies that a particular numbered version of this
License ``or any later version'' applies to it, you have the option of
following the terms and conditions either of that specified version or
of any later version that has been published (not as a draft) by the
Free Software Foundation.  If the Document does not specify a version
number of this License, you may choose any version ever published (not
as a draft) by the Free Software Foundation.


\chapter*{References} 
\addcontentsline{toc}{chapter}{References}

\begin{singlespace}

\printbibliography[
    heading=none, 
    keyword={a}, 
    title={References Involving OpenLB (Articles)}
    ]

\printbibliography[
    heading=none, 
    keyword={p}, 
    title={References Involving OpenLB (Proceedings)}
    ]

\printbibliography[
    heading=none, 
    keyword={s}, 
    title={References Involving OpenLB (Software)}
    ]

\printbibliography[
    heading=none, 
    keyword=external,
    title={Other References}   
    ]

\end{singlespace}

\end{document}